\newcommand{\transp}{\top}
\newcommand{\Cost}{\mathsf{Cost}}
\newcommand{\trace}{\Tr}
\newcommand{\vNE}{\mathsf{vNE}}
\newcommand{\trosp}{\mathbb{S}_{\ge 0}}
\newcommand{\QRL}{\mathsf{QRE}}
\newcommand{\QRE}{\QRL}
\newcommand{\wopts}{W^*}
\newcommand{\wopt}{W^*}
\newcommand{\mopt}{M^*}
\newcommand{\fan}[1][k]{\mathcal{F}_{#1}} 
\newcommand{\phib}[1][1]{\Phi_b (#1)}
\newcommand{\projalg}{\mathsf{FantopeProjection}}
\newcommand{\tipest}{\mathsf{InnerProductTrace}}
\newcommand{\ipest}{\mathsf{InnerProductEstimation}}
\newcommand{\solvalg}{\mathsf{PackingCoveringDecision}}
\newcommand{\packred}{\mathsf{ApproxCost}}
\newcommand{\wts}[1][t]{w^{(#1)}}
\newcommand{\Mt}[1][t]{M^{(#1)}}
\newcommand{\Wt}[1][t]{W^{(#1)}}
\newcommand{\wMt}[1][t]{\wt{M}^{(#1)}}
\newcommand{\wWt}[1][t]{\wt{W}^{(#1)}}
\newcommand{\St}[1][t]{S^{(#1)}}
\newcommand{\omt}[1][t]{\omega^{(t)}}
\newcommand{\thet}[1][t]{\theta^{(t)}}
\newcommand{\gt}[1][t]{\gamma^{(#1)}}
\newcommand{\bt}[1][t]{\beta^{(#1)}}
\newcommand{\bpt}[1][t]{\beta^{\prime (#1)}}
\newcommand{\st}[1][t]{\sigma^{(#1)}}
\newcommand{\taut}[1][t]{\tau^{(#1)}}
\newcommand{\Vt}[1][t]{V^{(#1)}}
\newcommand{\vt}[1][t]{v^{(#1)}}
\newcommand{\zt}[1][t]{z^{(#1)}}
\newcommand{\yt}[1][t]{y^{(#1)}}
\newcommand{\Ft}[1][t]{F^{(#1)}}
\newcommand{\Gt}[1][t]{G^{(#1)}}
\newcommand{\phit}[1][t]{\phi^{(#1)}}
\newcommand{\psit}[1][t]{\psi^{(#1)}}
\newcommand{\epd}{\eps^\dagger}
\newcommand{\dt}[1][t]{\xi^{(#1)}}
\newcommand{\dd}{\delta^\dagger}
\def\authornotes{1pt}
    \newcommand{\ynote}[1]{\footnote{\color{ForestGreen}Yeshwanth: #1}}
    \newcommand{\snote}[1]{\footnote{\color{Fuchsia}Sidhanth: #1}}
    \newcommand{\mnote}[1]{\footnote{\color{Periwinkle}Morris: #1}}
    \newcommand{\ynote}[1]{}
    \newcommand{\snote}[1]{}
    \newcommand{\mnote}[1]{}
\begin{document}

\title{List Decodable Mean Estimation in Nearly Linear Time}
\author{Yeshwanth Cherapanamjeri\thanks{EECS Department, University of California Berkeley.  \texttt{yeshwanth@berkeley.edu}.} \and Sidhanth Mohanty\thanks{EECS Department, University of California Berkeley.  \texttt{sidhanthm@cs.berkeley.edu}.  Supported by NSF grant CCF-1718695.} \and Morris Yau\thanks{EECS Department, University of California Berkeley.  \texttt{morrisyau@berkeley.edu}.  Supported by NSF grant CCF-1718695.}}
\date{\today}
\maketitle

\begin{abstract}
Learning from data in the presence of outliers is a fundamental problem in statistics.  Until recently, no computationally efficient algorithms were known to compute the mean of a high dimensional distribution under natural assumptions in the presence of even a small fraction of outliers.  In this paper, we consider robust statistics in the presence of overwhelming outliers where the majority of the dataset is introduced adversarially.  With only an $\alpha < 1/2$ fraction of ``inliers'' (clean data) the mean of a distribution is unidentifiable.  However, in their influential work, \cite{CSV17} introduces a polynomial time algorithm recovering the mean of distributions with bounded covariance by outputting a succinct list of $O(1/\alpha)$ candidate solutions, one of which is guaranteed to be close to the true distributional mean; a direct analog of 'List Decoding' in the theory of error correcting codes.  In this work, we develop an algorithm for list decodable mean estimation in the same setting achieving up to constants the information theoretically optimal recovery, optimal sample complexity, and in nearly linear time up to polylogarithmic factors in dimension.  Our conceptual innovation is to design a descent style algorithm on a nonconvex landscape, iteratively removing minima to generate a succinct list of solutions.  Our runtime bottleneck is a saddle-point optimization for which we design custom primal dual solvers for generalized packing and covering SDP's under Ky-Fan norms, which may be of independent interest.                            
\end{abstract}

\thispagestyle{empty}
\setcounter{page}{0}
\newpage

\tableofcontents

\thispagestyle{empty}
\setcounter{page}{0}
\newpage

\section{Introduction}
\label{sec:int}

Estimating the mean of data is a cardinal scientific task.  The population mean can be shifted arbitrarily by a single outlier, a problem which is compounded in high dimensions where outliers can conspire to destroy the performance of even sophisticated estimators of central tendency.  Robust statistics, beginning with the works of Tukey and Huber \cite{Tukey60,Huber64}, endeavors to design, model, and mitigate the effect of data deviating from statistical assumptions \cite{huber11}.        
         
A canonical model of data corruption is the Huber contamination model \cite{Huber64}.  Let $I(\mu)$ be a probability distribution parameterized by $\mu$.  We say a dataset $X_1,X_2,...,X_N$ is $\alpha$-Huber contaminated for some constant $\alpha \in [0,1]$ if it is drawn i.i.d from
$$X_1,X_2,...,X_N \sim \alpha \mathcal{I}(\mu) + (1 - \alpha)\mathcal{O}$$
where $\mathcal{O}$ is an arbitrary outlier distribution which can be  adversarial and dependent on $\mathcal{I}(\mu)$.  The goal is to estimate $\mu$ with an estimator $\hat{\mu}$ such that the two are close with respect to a meaningful metric.  The Huber contamination model captures the setting where only an $\alpha$ fraction of the dataset is subject to statistical assumptions.  One would hope to design estimators $\hat{\mu}$ for which $\alpha$ is as small as possible thereby tolerating the largest fraction of outliers--a quantity known as the breakdown point .  The study of estimators with large breakdown points is the focus of a long and extensive body of work, which we do not attempt to survey here.  For review see \cite{huber11,HRRS86}.  

A first observation, is that the breakdown point of a single estimator must be smaller than $\frac{1}{2}$.  For concreteness, consider the problem of estimating the mean of a standard normal. The adversary can set up a mixture of $\frac{1}{\alpha}$ standard normals for which the means of the mixture components are far apart.  This intrinsic difficulty also gives rise to a natural notion of recovery in the presence of overwhelming outliers.  Instead of outputting a single estimator, consider outputting a list of candidate estimators $\mathcal{L}  = \{\hat{\mu}_1, \hat{\mu}_2, ..., \hat{\mu}_{\frac{1}{\alpha}}\}$ with the guarantee that the true $\mu$ is amongst the elements of the list.  This is the setting of 'List Decodable Learning' \cite{BBV08,CSV17}, analogous to list decoding in the theory of error correcting codes.           

In their influential work \cite{CSV17} introduces list decodable learning in the context of robust statistics.  They consider the problem of estimating the mean $\mu$ of a $d$-dimensional distribution $\mathcal{I}(\mu)$ with a bounded covariance $\Cov(\mathcal{I}(\mu)) \preceq \sigma^2I$ for a constant $\sigma$ from $N = \frac{d}{\alpha}$ samples.  Their algorithm recovers a list $\mathcal{L}$ of $O(\frac{1}{\alpha})$ candidate means with the guarantee that there exists a $\hat{\mu}^* \in \mathcal{L}$ achieving the recovery guarantee $\norm{\hat{\mu}^* - \mu} \leq O\left(\sigma\sqrt{\frac{\log\left(\frac{1}{\alpha}\right)}{\alpha}}\right)$ with high probability $1 - \frac{1}{\poly(d)}$.  Furthermore, their algorithm is 'efficient', running in time $\poly(N,d,\frac{1}{\alpha})$ via the polynomial time solvability of ellipsoidal convex programming.  

\subsection{Results}
\label{ssec:results}

Our first contribution is an algorithm for list decodable mean estimation of covariance bounded distributions, which outputs a list $\mathcal{L}$ of length $O(\frac{1}{\alpha})$, achieving (up to constants) the information theoretically optimal recovery $O(\frac{\sigma}{\sqrt{\alpha}})$, with linear sample complexity $N = \frac{d}{\alpha}$, and running in nearly linear time  $\widetilde{O}(Nd\poly(\frac{1}{\alpha}))$ where $\widetilde{O}$ omits logarithmic factors in $d$.  For the matching minimax $\Omega(\frac{\sigma}{\sqrt{\alpha}})$ lower bound see \cite{DKSlist18}.  Formally, we state our main theorem. 

\begin{theorem} \label{thm:main-thm-list-dec-mean}
    Let $\mathcal{I}(\mu)$ be a distribution in $\R^d$ with unknown mean $\mu \in \R$ and bounded covariance $\Cov(\mathcal{I}(\mu)) \preceq \sigma^2I$ for a constant $\sigma \in \R^+$.  Let $\mathcal{I} := \{x_1,x_2,...,x_{\alpha N}\}$ be a dataset in $\R^d$ drawn i.i.d from $\mathcal{I}(\mu)$.  An adversary then selects an arbitrary dataset in $\R^d$ denoted $\mathcal{O} := \{x'_1, x'_2, ..., x'_{(1-\alpha)N}\}$ which in particular, may depend on $\mathcal{I}$.  The algorithm is presented with the full dataset $X := \mathcal{I} \cup \mathcal{O}$.  For any $N \geq \frac{d}{\alpha}$, \cref{alg:outputlist} outputs a list $\mathcal{L}  = \{\hat{\mu}_1, \hat{\mu}_2, ..., \hat{\mu}_{O(\frac{1}{\alpha})}\}$ of length $O(\frac{1}{\alpha})$ such that there exists a $\hat{\mu}^* \in \mathcal L$ satisfying $\norm{\hat{\mu}^* - \mu} \leq O(\frac{\sigma}{\sqrt{\alpha}})$ with high probability $1 - \frac{1}{d^{10}}$.  Furthermore, the algorithm runs in time $\tilde{O}(Nd\poly(\frac{1}{\alpha}))$.         
\end{theorem}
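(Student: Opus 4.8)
\noindent\emph{Proof approach.} The plan is to build the list by repeatedly running a descent procedure on a nonconvex landscape of nonnegative weightings $w$ of the dataset $X$ and, each time the descent converges to a certified local minimum, emitting a candidate mean and then removing that minimum from the landscape so the next run locates a different apparent cluster --- this is the ``iteratively removing minima'' of the abstract. The landscape is organized around the saddle-point value $\min_w \max_V \langle VV^{\transp}, M_w \rangle$, where $w$ ranges over weightings retaining an $\Omega(\alpha)$ fraction of their mass, $V$ over $d \times k$ orthonormal frames (so the inner objective is the Ky-Fan-$k$ norm of $M_w$), and $M_w := \sum_i w_i (x_i - \mu_w)(x_i - \mu_w)^{\transp}$ is the weighted covariance, normalized so that $\sum_i w_i = 1$. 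The structural backbone I would establish first is the covariance-to-mean transfer lemma: if $w$ keeps an $\Omega(\alpha)$ fraction of its mass on the clean set $\mathcal I$ and the top eigenvalue of $M_w$ is $O(\sigma^2)$, then $\norm{\mu_w - \mu} = O(\sigma/\sqrt\alpha)$, proved by splitting $\mu_w - \mu$ into a clean part (controlled by $\Cov(\mathcal I) \preceq O(\sigma^2) I$) and a corrupted part (controlled by testing $M_w$ along the unit vector in the direction $\mu_w - \mu$). A short concentration step --- matrix Bernstein together with a VC/uniform-convergence bound --- shows that for $N \ge d/\alpha$ the clean sample satisfies $\Cov(\mathcal I) \preceq O(\sigma^2) I$ and $\norm{\mu_{\mathcal I} - \mu} = O(\sigma/\sqrt\alpha)$ with probability $1 - d^{-10}$; hence a near-uniform weight on $\mathcal I$ is always feasible, and the \emph{only} obstruction to emitting a good candidate is getting stuck with the top eigenvalue of $M_w$ large --- exactly the regime in which some direction (or low-dimensional subspace) carries anomalous second moment and can be filtered away.

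\noindent\emph{The descent, its potential, and the list size.} One emission runs a mirror-descent / matrix-multiplicative-weights scheme: at inner step $t$ it (approximately) extracts the top-$k$ eigenspace of $M_w$ as a Fantope point $\Advice \in \trosp$ with $k = \Theta(1/\alpha)$, and soft-downweights each surviving point by a smooth function of the quadratic form $\langle \Advice,\, (x_i - \mu_w)(x_i - \mu_w)^{\transp} \rangle$, calibrated --- in the style of \cite{CSV17} --- so that the inlier weight removed never exceeds the outlier weight removed. The iteration count follows from the standard entropic-regularizer argument: the quantum-relative-entropy Bregman potential of the MMW state decreases each step in proportion to how much the top eigenvalue of $M_w$ exceeds $O(\sigma^2)$, so within $\poly(1/\alpha, \log d)$ steps the top eigenvalue has fallen to $O(\sigma^2)$ and we emit $\mu_w$ (in the borderline case where a filter step cannot be certified to remove more outlier than inlier mass, the procedure instead splits the weighting, a standard list-decoding maneuver). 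Removing the emitted basin --- deleting the mass that weighting concentrated on --- and recursing, each emission provably consumes a constant fraction of one apparent cluster while leaving the others essentially untouched, so $O(1/\alpha)$ emissions exhaust all of them and $|\mathcal L| = O(1/\alpha)$. Moreover the soft-filter calibration ensures the true inlier mass is never depleted faster than outlier mass, so across the $O(1/\alpha)$ emissions some emitted weighting still carries $\Omega(\alpha)$ of $\mathcal I$; for that emission the transfer lemma applies and yields a $\hat\mu^* = \mu_w$ with $\norm{\hat\mu^* - \mu} = O(\sigma/\sqrt\alpha)$.

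\noindent\emph{Near-linear time.} Done literally, each inner step costs $\Omega(Nd^2)$ just to form $M_w$, plus the cost of its top-$k$ eigenspace and Ky-Fan value, so the technical heart of the paper is to never materialize $M_w$. The per-step feasibility question --- is there a weighting retaining $\Omega(\alpha)$ mass whose weighted covariance has spectral norm $O(\sigma^2)$? --- is recast as a generalized packing--covering SDP under Ky-Fan-norm constraints and solved by a custom primal--dual solver $\solvalg$ running matrix multiplicative weights on the dual Fantope; bounding its width by $\poly(1/\alpha)$ --- using that inlier second moments are $O(\sigma^2)$-bounded and that the soft-filter invariant keeps the active matrices effectively bounded --- restricts it to $\poly(1/\alpha, \log d)$ iterations, each requiring only matrix--vector products $v \mapsto M_w v$, computable in $\tilde O(Nd)$ time because $M_w$ acts through the $N \times d$ data matrix. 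The dual matrix exponentials are replaced by low-degree polynomial (Lanczos) approximations, the Fantope projection is performed by $\projalg$, and every trace inner product $\langle M_w, X X^{\transp} \rangle$ and $\langle \Advice, (x_i - \mu_w)(x_i - \mu_w)^{\transp} \rangle$ is replaced by a Johnson--Lindenstrauss sketch estimate via $\tipest$ / $\ipest$, each driven to accuracy $\eps$ and failure probability $d^{-10}$ at a $\poly(1/\alpha, \log d)$ overhead. Since the transfer lemma tolerates additive $O(\eps\sigma^2)$ perturbations of $M_w$ and the soft filter tolerates $O(\eps)$ perturbations of the quadratic forms, a sufficiently small constant $\eps$ preserves the $O(\sigma/\sqrt\alpha)$ guarantee, and multiplying the per-step cost $\tilde O(Nd\,\poly(1/\alpha))$ by the $\poly(1/\alpha, \log d)$ inner steps and the $O(1/\alpha)$ emissions yields total running time $\tilde O(Nd\,\poly(1/\alpha))$.

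\noindent\emph{Main obstacle.} The crux is the design and analysis of $\solvalg$: one needs a packing--covering SDP solver whose iteration count and width are $\poly(1/\alpha, \log d)$ and, crucially, \emph{free of $d$ and of any condition number}, yet whose $\eps$-approximate primal and dual certificates are strong enough to drive the statistical descent. Off-the-shelf positive-SDP machinery does not apply directly, both because a Ky-Fan-$k$ constraint is not a single linear functional and because the weighted second moments can a priori have width $\Omega(d)$; the resolution is to interleave the solver's width analysis with the soft-filter invariant and the $O(\sigma^2)$ inlier bound so that the effective width is only $\poly(1/\alpha)$. The secondary difficulty is an error-composition audit: checking that inexact matrix exponentials, inexact Fantope projections, and randomized trace sketches, chained across $\poly(1/\alpha, \log d)$ inner steps and $O(1/\alpha)$ emissions, neither amplify the additive error past the tolerance of the transfer lemma nor the failure probability past $d^{-10}$.
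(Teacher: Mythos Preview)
Your proposal diverges from the paper's approach in a way that leaves a real gap.

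The paper's central move is to \emph{decouple} the center from the weights: rather than working with $M_w = \sum_i w_i(x_i - \mu_w)(x_i - \mu_w)^\top$ (nonconvex in $w$ because $\mu_w$ depends on $w$), it introduces a separate iterate $\nu \in \R^d$ and defines $\Cost_{X,b,\ell}(\nu) = \min_{w \in \Phi_b} \bigl\|\sum_i w_i(x_i - \nu)(x_i - \nu)^\top\bigr\|_\ell$ with $\ell = \Theta(1/\alpha)$. For fixed $\nu$ this is a genuine convex--concave saddle point, and the Ky-Fan packing solver computes it in near-linear time. The descent is then on $\nu$, not on $w$: project the data onto the top-$\ell$ eigenspace of the dual-weighted second-moment matrix centered at $\nu$, sample $O(\log(d)/\alpha)$ random projected data points, evaluate $\Cost$ at each, and move $\nu$ to the cheapest. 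The structural core is a win--win (Lemmas~\ref{lem:descent} and~\ref{lem:removal}): either this random-sample step halves the cost---because $\mu - \nu$ lies almost entirely in the top-$\ell$ subspace, so some sampled inlier's projection lands near $\mu$---or $\Cost(\nu) \le 0.4\|\nu - \mu\|^2$, in which case sorting projected points by norm and truncating the dual weights at total mass $0.5$ yields a ``sanitizing tuple'' with at most $\alpha/4$ inlier mass. The outer loop subtracts these weights from the budget and repeats. Note the invariant is \emph{not} ``remove more outlier than inlier per filter step''; it is the weaker ``per emission, remove $\ge 0.5$ total mass and $\le \alpha/4$ inlier mass whenever the emitted mean is bad,'' which over $O(1/\alpha)$ emissions still preserves a constant fraction of the inlier budget.

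Your scheme keeps the centering at $\mu_w$ and descends on $w$ via a soft filter. Two problems follow. First, the per-step question you phrase as a packing SDP---``is there a weighting retaining $\Omega(\alpha)$ mass with $\|M_w\|$ small?''---is not a packing SDP, since the constraint matrices $(x_i - \mu_w)(x_i - \mu_w)^\top$ depend on the optimization variable $w$; the paper's solver needs a fixed center, which is precisely why $\nu$ is introduced. Second, and more fundamentally, your filter invariant fails on the canonical instance of $1/\alpha$ well-separated bounded-covariance clusters: from uniform $w$, $\mu_w$ is the grand centroid, all clusters sit symmetrically in the top-$k$ eigenspace, and downweighting by $\langle \Advice, (x_i - \mu_w)(x_i - \mu_w)^\top\rangle$ penalizes the true cluster exactly as much as the adversarial ones. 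You acknowledge a ``split'' for this case but do not explain how a split-based route attains the optimal $O(\sigma/\sqrt{\alpha})$ error with list length $O(1/\alpha)$ and near-linear time. The paper's actual symmetry-breaking device---randomly sampling projected data points as candidate new centers $\nu$ and certifying progress by evaluating $\Cost$ there---is the idea you are missing.
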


For precise constants and failure probability see \cref{sec:algorithms}.  At a high level, we define a nonconvex cost function for which $\mu$ is an approximate minimizer and build a 'descent style' algorithm to find $\mu$.  As with most nonconvex algorithms, our approach is susceptible to falling in suboptimal minima. Our key algorithmic insight is that our algorithm fails to descend the cost function exactly when a corresponding dual procedure succeeds in "sanitizing" the dataset by removing a large fraction of outliers --- a win-win.  

\paragraph{Applications of List Decoding} 
First observed in \cite{CSV17}, the list decoding problem lends itself to applications for which our algorithm offers immediate improvements.  Firstly, it is perhaps surprising that a succinct list of estimators can be procured from a dataset overwhelmed by outliers.  Perhaps more surprising is that the optimal candidate mean can be isolated from the list $\mathcal L$ with additional access to a mere $\log(\frac{1}{\alpha})$ clean samples drawn from $\mathcal{I}(\mu)$.  This "semi-supervised" learning is compelling in settings where large quantities of data are collected from unreliable providers (crowdsourcing, multiple sensors, etc.). Although it is resource intensive to ensure the cleanliness of a large dataset, it is easier to audit a small, in our case $\log(\frac{1}{\alpha})$, set of samples for cleanliness.  Given access to this small set of samples as side information, our algorithm returns estimators for mean estimation with breakdown points higher than $\frac{1}{2}$ in nearly linear time.  

Faster list decodable mean estimation also accelerates finding planted partitions in semirandom graphs.  In particular, consider the problem where $G$ is a directed graph where the (outgoing) neighborhoods of an $\alpha$ fraction of vertices $S$ are random while the neighborhoods of the remaining vertices are arbitrary, and the goal is to output $O(1/\alpha)$ lists such that one of them is ``close'' to $S$.  Our algorithm for list decodable mean estimation implies a faster algorithm for this problem as well.

Lastly, list decodable mean estimation is a superset of learning mixture models of bounded covariance distributions with minimum mixture weight $\alpha$.  By treating a single cluster as the inliers, one can recover the list of means comprising the mixture model. Notably, this can be done without any separation assumptions between the mixture components and is robust to outliers.    

\paragraph{Fast Semidefinite Programming:}
Rapidly computing our cost function necessitates the design of new packing/covering solvers for Positive Semidefinite Programs (SDP) over general Fantopes (the convex hull of the projection matrices).  Positive SDP's have seen remarkable success in areas spanning quantum computing, spectral graph theory, and approximation algorithms (See \cite{ahk,ALO16,jambu} and the references therein).  Informally, a packing SDP computes the fractional number of ellipses that can be packed into a spectral norm ball which involves optimization over the spectrahedron.  A natural question is whether the packing concept can be extended to balls equipped with general norms, say the sum of the top $k$ eigenvalues (the Ky Fan norm), where for $k = 1$ we recover the oft studied spectral norm packing.  We use results from Loewner's theory of operator monotonicity and operator algebras to design fast, and as far as we know the first solvers for packing/covering positive SDP's under Ky Fan norms (See \cref{thm:solthm}). 

\subsection{Related Work}
\paragraph{Robust Statistics}

Robust statistics has a long history \cite{Tukey60,Tukey75,Huber64,Hampel71}.  This extensive body of work develops the theory of estimators with high breakdown points, of influence functions and sensitivity curves, and of designing robust M-estimators.  See \cite{huber11,HRRS86}.  However, little was understood about the computational aspects of robustness which features prominently in high dimensional settings.   

Recent work in theoretical computer science \cite{DKKLMS16,LRV16} designed the first algorithms for estimating the mean and covariance of high dimensional gaussians tolerating a constant fraction of outliers in polynomial time $poly(N,d,\frac{1}{1 - \alpha})$.  Since then, a flurry of work has emerged studying robust regression \cite{KKM18,DKSregression18}, sparse robust regression \cite{BDLS17,DKKPS19}, fast algorithms for robustly estimating mean/covariance \cite{CDG,DHL19,CDGW19}, statistical query hardness of robustness \cite{DKSquery17}, worst case hardness \cite{HL19}, robust graphical models \cite{CDKS18}, and applications of the sum of squares algorithm to robust statistics \cite{KSS18}.  See survey \cite{DKsurvey19} for an overview.  

\paragraph{List Decodable Learning}

Despite the remarkable progress in robust statistics for large $\alpha$ contamination, progress on the list decoding problem has been slower. This is partially owed to the intrinsic computational hardness of the problem.  Even for the natural question of list decoding the mean of a high dimensional gaussian, \cite{DKSlist18} exhibits a quasipolynomial time lower bound against Statistical Query algorithms for achieving the information theoretically optimal recovery of $\Theta\big(\sqrt{\log(\frac{1}{\alpha})}\big)$.  This stands in contrast to large $\alpha$ robust mean estimation where nearly linear time algorithms \cite{CDG} achieve optimal recovery.  

In light of this hardness, a natural question is to determine whether polynomial time algorithms can at least approach the optimal recovery for list decoding the mean of a gaussian.  In a series of concurrent works \cite{KS17} \cite{DKSlist18}, develop the first algorithms approaching the $\Theta\big(\sqrt{\log(\frac{1}{\alpha})}\big)$ recovery guarantee. At a high level, both papers achieve recovery $O(\frac{\sigma}{\alpha^{c/k}})$ for different fixed constants $c > 1$ in time $\poly(\frac{d}{\alpha})^{O(k)}$ for $k$ a positive integer greater than $2$. The \cite{DKSlist18} algorithm, known as the "multi-filter", is a spectral approach reasoning about high degree polynomials of the moments of data. Furthermore, the "low degree" multi-filter achieves a suboptimal $O\big(\sqrt{\frac{\log(\frac{1}{\alpha})}{\alpha}}\big)$ recovery guarantee for list decoding the mean of subgaussian distributions, which is fast and may be of practical value.  \cite{KS17} develop a convex hierarchy (sum of squares) style approach, which achieve similar  guarantees for more general distributional families satisfying a poincare inequality.  In particular for list decoding the mean of bounded covariance distributions they achieve the optimal $O(\frac{1}{\sqrt{\alpha}})$ guarantee via the polynomial time solvability of convex concave optimization.  Finally, \cite{DKSlist18,KS17} and a concurrent work \cite{HLmixture18} develop tools for reasoning about the high degree moments of data to break the longstanding  "single-linkage" barrier in clustering mixtures of spherical gaussians.           

In other statistical settings a series of concurrent works \cite{RY20,KKK19} demonstrate information theoretic impossibility for list decoding regression even under subguassian design.  Similar barriers arise in the context of list decodable subspace recovery \cite{RYsubspace20,BK20} where it is information theoretically impossible to list decode a dataset for which an $\alpha$ fraction is drawn from a subgaussian distribution in a subspace.  Indeed, since list decoding is a superset of learning mixture models, these hardness considerations stem from barriers in learning mixtures of linear regressions and subspace clustering.  On the other hand, the above works also construct polynomial time, $d^{\poly(\frac{1}{\alpha})}$, algorithms for regression and subspace recovery for Gaussian design and Gaussian subspaces respectively, which holds true for a larger class of "certifiably anticoncentrated" distributions.  

In this backdrop of computational and statistical hardness, and given the practical value of robust statistics, it is a natural challenge to design list decoding algorithms that are both fast and statistically optimal.  The current work is a step in this direction.    

\paragraph{SDP Solvers}
There has been much recent interest in designing fast algorithms for positive SDP solvers due to the ubiquity of their application in approximation algorithms. We do not attempt to survey the full breadth of these results and their applications in this section. We refer the interested reader to \cite{jambu,ALO16,peng2012faster,ahk} for more context on these developments. We will restrict ourselves to the following class of SDPs relevant to our work:
 
\begin{equation*}
    \begin{gathered}
    \max \sum_{i = 1}^n w_i \\
    \text{s.t } \sum_{i = 1}^n w_i A_i \preccurlyeq I \tag{Gen-Pack} \label{eq:genpack}\\
    \left\|\sum_{i = 1}^n w_i B_i\right\|_k \leq k
\end{gathered}
\end{equation*}
where $A_i \in \psd[l]$ and $B_i\in \psd[m]$. While existing fast solvers \cite{peng2012faster,ALO16,jambu} only apply to the above setting when $k = 1$, we generalize the approach of \cite{peng2012faster} to for all $k$ with running times scaling at most polynomiall in $k$. In particular, we show for small values of $k$, \ref{eq:genpack} can be solved in nearly linear time for a broad range of settings including ours and inherits the parallel, width-independent properties of \cite{peng2012faster}. See \cref{thm:solthm} for the exact statement of the result. However, carrying out this generalization brings with it a host of technical challenges which are explained in more detail in \cref{sec:techset} including a more refined analysis of the power method and a novel technique to bound errors incurred in a hard-thresholding operator due to approximate eigenvector computation.

\paragraph{Semirandom Graph Inference}  The study of problems that are typically computationally hard in the worst case in semirandom graph models was initiated by \cite{blum1995coloring} and perpetuated by \cite{feige2001heuristics}.  A specific problem of interest to us studied by \cite{feige2001heuristics} for which nearly optimal algorithms were given by \cite{mckenzie2020new} is the \emph{semirandom independent set} problem where the set of edges between a planted independent set and the remaining (adversarially chosen) graph come from a randomized model.  In a similar vein \cite{CSV17} studies a \emph{planted partition} where instead of an independent set the given graph is some other sparse random graph (albeit directed).  Our results improve upon the statistical guarantees of \cite{CSV17} as well as give faster algorithms, however both \cite{CSV17} and our work fall short of capturing the results of \cite{mckenzie2020new} due to the directed model we work in.  However, we believe the hurdle is a technical point rather than an inherent shortcoming of our approach.

\paragraph{Sample Complexity:} The following lemma of \cite{CSV17} achieves linear sample complexity which suffices for our algorithm.  

\begin{lemma}[{\cite[Proposition 1.1]{CSV17}}]
Suppose $\calI(\mu)$ is a distribution on $\R^d$ with mean $\mu$ and covariance $\Cov(I(\mu)) \preceq \sigma^2I$ for a constant $\sigma > 0$.  Then given $n \geq d$ samples from $\calI(\mu)$, with probability $1 - \exp(\frac{-n}{64})$ there exists a subset $\calI \in [n]$ of size $|\calI| \geq \frac{n}{2}$ such that $\norm{\frac{1}{|I|}\sum_{i \in \calI} (x_i - \mu)(x_i - \mu)^T} \preceq 24\sigma^2I$ 
\end{lemma}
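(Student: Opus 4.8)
This bound is imported from \cite{CSV17}; we indicate the shape of its proof. After translating by $\mu$ and rescaling by $1/\sigma$ we may assume $\mu = 0$ and $\sigma = 1$, so that $x_1,\dots,x_n$ are i.i.d.\ with mean $0$ and $\mathbb{E}[x_i x_i^\transp] \preceq I$, and the task is to exhibit $\calI \subseteq [n]$ with $|\calI| \ge n/2$ and $\tfrac{1}{|\calI|}\sum_{i \in \calI} x_i x_i^\transp \preceq 24 I$.

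First dispose of the heavy points. Since $\mathbb{E}\|x_i\|^2 = \trace(\Cov(\calI(\mu))) \le d$, Markov's inequality gives $\Pr[\|x_i\|^2 > 4d] \le \tfrac14$; the indicators $\mathbf{1}[\|x_i\|^2 > 4d]$ are i.i.d.\ Bernoulli with mean $\le \tfrac14$, so a Chernoff bound shows that with probability at least $1 - e^{-n/64}$ at most a third of the points are heavy. Discarding them leaves a set $G$ with $|G| \ge 2n/3$ on which every point obeys $\|x_i\|^2 \le 4d$; in particular $\trace\!\bigl(\tfrac{1}{|G|}\sum_{i \in G} x_i x_i^\transp\bigr) \le 4d$.

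It remains to delete a further constant fraction of $G$ so that the \emph{operator norm} (not merely the trace) of the empirical second moment drops to an absolute constant. This is genuinely necessary, since the second moment of $G$ can still have operator norm $\omega(1)$, for instance when the $x_i$ cluster on a coordinate subspace. The cleanest way to organize the deletion is through SDP duality: a minimax argument shows that there is a weighting $w \in \R^n$ supported on $G$ with $0 \le w_i \le 2/n$, $\sum_i w_i = 1$, and $\sum_i w_i x_i x_i^\transp \preceq c I$ if and only if
\[
  \max_{M \succeq 0,\ \trace M = 1}\ \Bigl(\text{average of the smallest }\lceil n/2\rceil\text{ of }\{x_i^\transp M x_i : i \in G\}\Bigr)\ \le\ c .
\]
For a \emph{fixed} test matrix $M$ this inequality holds with a dimension-free constant and failure probability $e^{-\Omega(n)}$: since $\mathbb{E}[x_i^\transp M x_i] = \trace\!\bigl(M\,\mathbb{E}[x_i x_i^\transp]\bigr) \le 1$, Markov gives $\Pr[x_i^\transp M x_i > c] \le 1/c$, and a Chernoff bound over the $n$ samples controls how many points exceed $c$.

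The crux --- and the step I expect to be the main obstacle --- is to upgrade this from a fixed $M$ to a bound that is \emph{uniform over all density matrices $M$}, while keeping the constant dimension-free and the sample size linear ($n \ge d$). The bound $\|x_i\|^2 \le 4d$ on $G$ makes $M \mapsto x_i^\transp M x_i$ Lipschitz, so one may union bound over an $\eps$-net of the relevant test directions, and here the rank-one structure of the $x_i x_i^\transp$ helps --- one nets the unit sphere rather than the full spectrahedron. The delicate point is to balance the net granularity against the per-direction Chernoff exponent and against $n$ so that the surviving constant is an absolute one such as $24$, after which the fractional weighting $w$ --- whose extreme points are the uniform distributions on subsets of size $\lceil n/2\rceil$ --- is rounded to an honest such subset. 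These quantitative steps, together with a union bound over the failure events above, are exactly what is carried out in \cite[Proposition 1.1]{CSV17}, to which we refer for the details.
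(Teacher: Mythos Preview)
The paper does not prove this lemma; it is quoted from \cite[Proposition~1.1]{CSV17} as a black box for the sample-complexity discussion, so there is no in-paper argument to compare your sketch against. (The paper does separately quote the more general \cite[Proposition~B.1]{CSV17} as \cref{lem:CSV-concentration}; specializing that statement to $\eps=\tfrac12$ and $m=n\ge d$ recovers exactly the constant $24\sigma^2$ here.) Your outline --- minimax reformulation, per-direction Markov plus Chernoff, then uniformize over test matrices and round the fractional weighting --- is a plausible reconstruction, and you correctly defer the quantitative endgame to the source.

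One caution on the sketch itself: the net step as you describe it is fragile. On the truncated set $G$ the map $v\mapsto\langle x_i,v\rangle^2$ has Lipschitz constant $O(\|x_i\|^2)=O(d)$, so carrying a constant-size bound from a net to the full sphere needs mesh $O(1/d)$, hence a net of cardinality $e^{\Theta(d\log d)}$; union-bounding against a per-direction Chernoff failure of $e^{-\Theta(n)}$ would then require $n=\Omega(d\log d)$ rather than merely $n\ge d$. The shape of the CSV bound, $\tfrac{4\sigma^2}{\eps}\bigl(1+\tfrac{d}{(1-\eps)m}\bigr)$, with its additive $d/m$ term, points to a trace-based accounting rather than a sphere covering. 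Since you are explicitly not claiming a self-contained proof, this is a remark about the accuracy of the sketch, not a gap in what you assert.
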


Taking $N = O\left(\frac{d}{\alpha}\right)$, for the rest of the paper we will adjust $\sigma$ by a constant and assume the inlier set $I$ satisfies $\norm{\frac{1}{|I|}\sum_{i \in \calI} (x_i - \mu)(x_i - \mu)^T} \preceq \sigma^2I$. 

\paragraph{Notation: } We will frequently use $\psd$ to denote the set of positive semidefinite matrices with dimension $n$. For $A \in \psd$, we will frequently write the ordered eigenvalue decomposition of $A = \sum_{i = 1}^n \lambda u_iu_i^\top$ with $\lambda_1 \geq \dots \geq \lambda_n$ and for any matrix, $M$, $\sigma_i(M)$ denotes its $i^{th}$ singular value. The \emph{Ky--Fan matrix $k$-norm} of a matrix, $M$, is defined as the sum of the top-$k$ singular values of $M$; i.e $\norm{M}_k := \sum_{i=1}^k \sigma_i(M)$.  Notably, $\norm{\cdot}_1$ is the operator norm and $\norm{\cdot}_d$ is the trace norm.  However, we will stick with $\|\cdot\|$ for operator norm and $\|\cdot\|_*$ for trace norm. Continuing along these lines, we also define the $\ell$-\emph{Fantope}, denoted by $\mc{F}_\ell$ and characterized as $\mc{F}_\ell = \{M \in \psd: \Tr M = \ell \text{ and } \norm{M} \leq 1\}$. Finally, given $\{b_i \geq 0\}_{i = 1}^N$, we define the set $\Phi_b (\gamma) = \{w_i \geq 0: \sum_{i}^N w_i = \gamma \text{ and } w_i \leq b_i\}$ and $\Phi_b = \Phi_b(1)$. For a set of vectors, $V = [v_1, \dots, v_k]$, we will use $\mc{P}_V^\perp$ to denote the projection onto the orthogonal subspace of the span of $v_i$.

\paragraph{Organization: } Our paper is organized as follows: In \cref{sec:techset}, we outline the key ideas underlying the design of our algorithm for list-decodable mean estimation, our solver for the generalized class of Packing/Covering SDPs considered in this paper and the technical challenges involved in doing so. Then, in \cref{sec:alg,sec:infgraph}, we formally describe and analyze our algorithm for list-decodable mean estimation and its application to the semirandom graph model considered in \cite{CSV17}. \cref{sec:pow,sec:genpcsolver,sec:fastproj} contain our refined power method analysis, a formal description and analysis of our solver and the hard thresholding based operator required to implement the solver in nearly-linear time. Finally, \cref{sec:appsketch,sec:miscellaneous,sec:fstsolv} contain supporting results required by the previous sections.
\section{Techniques}
\label{sec:techset}

First we present an inefficient algorithm for list decodable mean estimation.  Although it is inefficient, it captures the core ideas and foreshadows the difficulties encountered by our efficient algorithm.  At a high level, the inefficient algorithm greedily searches through the dataset for subsets of points with small covariance with the goal of finding the subset of inliers.          
 
\paragraph{Inefficient Algorithm:} Our inefficient algorithm is a list decoding analogy to the nonconvex weight minimization procedure first proposed in \cite{DKKLMS16} \footnote{There they directly design a separation oracle for the objective \cref{eq:objective} for $\alpha > \frac{2}{3}$, which yields polynomial time guarantees for robust mean estimation via the ellipsoid algorithm.  It is plausible that a similar approach could yield polynomial time algorithms for list decodable mean estimation, but use of the ellipsoid algorithm would preclude the possibility of fast algorithms so we do not pursue that avenue here. }. Let $\calL$ be a list initialized to be the empty set.  Let $b$ be a vector initialized to be $(\frac{2}{\alpha N}, \frac{2}{\alpha N}, ..., \frac{2}{\alpha N}) \in \R^N$.  The algorithm iterates the following loop for $\frac{2}{\alpha}$ iterations.  

\begin{enumerate}
    \item First, solve the optimization problem \begin{align} \label{eq:objective}
    \hat{w} = \argmin_{w \in \Phi_b}  \left\|\sum_{i = 1}^N w_i (x - \mu(w)) (x - \mu(w))^{\transp} \right\| \end{align}
    Where $\mu(w) = \sum_{i=1}^N w_ix_i$
    \item Second, append $\hat{\mu} = \sum_{i=1}^N \hat{w}_i x_i$ to $\calL$
    \item Third, update $b$ such that $b_i = b_i - \hat{w}_i$
\end{enumerate}

 We claim the algorithm outputs a list $\calL$ of length $\frac{2}{\alpha}$ and that there exists a $\hat{\mu}^* \in L$ satisfying $\norm{\hu^* - \mu} \leq O(\frac{\sigma}{\sqrt{\alpha}})$.
Next we outline the proof of correctness.  

\paragraph{Proof Outline:} We proceed by contradiction and assume $\norm{\hu - \mu} \geq \frac{10\sigma}{\sqrt \alpha}$ for all $\hu \in \calL$.  Consider the first iteration. The scaled indicator of the inliers $\frac{1}{\alpha N}\mathbb{1}[i \in \calI]$ is feasible for \cref{eq:objective}. Thus, we have $\norm{ \sum_{i = 1}^N \hat{w}_i (x - \mu(\hat w)) (x - \mu(\hat w))^{\transp} } \leq 4\sigma^2$.  It is a fact that given two subsets of the data that are both covariance bounded, if the means of the subsets are far apart then the subsets don't overlap substantially.  This fact extends beyond subsets and holds true even for the soft weights that we are considering here.  See \cref{fact:resilience}.  Applying this fact we conclude $\sum_{i \in \calI}\hat w \leq \frac{\alpha}{2}$.  By assumption, subsequent iterations of the algorithm continue to output $\hu$ far away from the true mean so a substantial fraction of the inlier weight is preserved enabling the above argument to go through repeatedly.  Formally, this would be argued inductively, see \cref{cor:maincorollary}.  Thus, at every iteration $\sum_{i \in \calI}\hat w \leq \frac{\alpha}{2}$. Notice that any algorithm that removes more outlier weight than inlier weight at a ratio $\sum_{i \in O}\hat{w}_i \geq \frac{2}{\alpha}\sum_{i \in I}\hat{w}_i$ will eventually remove all the outlier weight leaving more than $\frac{1}{2}$ of the inlier weight intact.  Since the total inlier weight is initialized to be $\sum_{i \in \calI}b_i = 2$, we have at the second to last iteration a dataset comprised entirely of inliers which implies $\norm{\hu - \mu} \leq \frac{10\sigma}{\sqrt \alpha}$, which is a contradiction.

\paragraph{Sanitizing the Dataset: } Abstracting the guarantees of our inefficient algorithm, we say that an algorithm "sanitizes" a dataset if it outputs a tuple $(\hu, \hat w)$ where $\sum_{i=1}^N\hat w \geq \Omega(1)$ satisfying the following conditions.  If $\norm{\hu - \mu} \geq O(\frac{\sigma}{\sqrt \alpha})$ then $\sum_{i \in O}\hat{w}_i \geq \frac{2}{\alpha}\sum_{i \in I}\hat{w}_i$.  Any algorithm that sanitizes the dataset iteratively, is guaranteed to succeed as a list decoding algorithm.  This is made formal in \cref{sec:algorithms}.

\paragraph{Descent Style Formulation:} The optimization problem \cref{eq:objective} is nonconvex and hard to solve directly.  A novel approach to minimizing \cref{eq:objective} is to replace $\mu(w)$ with a parameter $\nu$ and define a cost function $f(\nu)$.  First introduced in \cite{CDG} in the context of robust mean estimation and later in robust covariance estimation \cite{CDGW19} consider the function $f(\nu)$ defined as follows:
\begin{align*}
    f(\nu) \coloneqq \min_{w \in \Phi_b}  \left\|\sum_{i = 1}^N w_i (x - \nu) (x - \nu)^{\transp}\right\|
\end{align*}
where $b_i = \frac{1}{\alpha N}$ for all $i \in [N]$.  This formulation has two appealing aspects.  Firstly, the cost function can be computed efficiently via convex concave optimization.  Indeed, the operator norm can be replaced by the maximization over its associated fantope $\mathcal{F}_1$ 
\begin{align*}
    f(\nu) \coloneqq \min_{w \in \Phi_b} \max_{M \in \mathcal{F}_1} \left\langle M, \sum_{i = 1}^N w_i (x_i - \nu) (x_i - \nu)^{\transp} \right \rangle.
\end{align*}
Secondly, for $\alpha > \frac{2}{3}$ (robust mean estimation), a crucial insight of \cite{CDG} is that $f(\nu)$ approximates the squared distance from $\nu$ to the mean $\mu$.  Then a good estimate of the mean is the minimizer of the cost.  
\begin{align} \label{eq:descendcost}
    \hat{\mu} := \argmin_{\nu \in \R^d} f(\nu) \approx \argmin_{\nu \in \R^d} \norm{\nu - \mu}^2
\end{align}
In their setting the minimization in \cref{eq:descendcost} can be performed by a descent style algorithm.   

Substantial challenges arise when designing such a cost function for list decodable mean estimation.  Chiefly, the inliers are unidentifiable from the dataset so there is no function of the data that approximates the distance to the true mean. Our solution is to design a function that either approximates the distance to the true mean, or when the approximation is poor, prove there exists a corresponding dual procedure that sanitizes the dataset.  This win-win observation can be made algorithmic and is the subject of \cref{sec:algorithms}  

\subsection{Our Approach}
\label{ssec:ourapproach}
\paragraph{Designing Cost:} We make extensive use of the Fantope \cite{Dattoro05}, the convex hull of the rank $\ell$ projection matrices. This set of matrices, denoted $\calF_\ell$, is a tight relaxation for simultaneous rank and orthogonality constraints on the positive semidefinite cone.  This also makes it amenable to semidefinite optimization.  We define 
\begin{equation*}
    \mathcal{F}_\ell = \{M \in \R^{d \times d}: 0 \preceq M \preceq I \text{ and } \Tr (M) = \ell\}.
\end{equation*}
Optimization over the Fantope provides a variational characterization of the principal subspace of a symmetric matrix $B \in \R^{d \times d}$. Indeed the \emph{Ky Fan Theorem}, states that the \emph{Ky Fan Norm} defined to be the sum of the $\ell$ largest eigenvalues of a psd matrix is equal to 
\begin{align}
    \norm{B}_\ell := \sum_{i=1}^\ell \lambda_i(B) = \max_{Q^TQ = I_\ell}\langle B, QQ^T\rangle = \max_{M \in \calF_\ell} \langle B,M\rangle
 \end{align}
Here the first equality is an extremal property known as \emph{Ky Fan's Maximum Principle}, and the second equality follows because the rank $\ell$ projection matrices are extremal points of $\calF_\ell$.  See \cite{OW92}.  We use this principle to generalize the min-max problem considered in the previous section. Let $Cost_{X,b,\ell}(\nu): \R^d \rightarrow \R^+$ be defined 
\begin{equation}\label{eq:cost}
    Cost_{X,b,\ell}(\nu) = \min_{w \in \Phi_b} \max_{M \in \mathcal{F}_\ell} \inp{\sum_{i = 1}^N w_i (X_i - \nu)(X_i - \nu)^\top}{M} = \min_{w \in \Phi_b} \norm*{\sum_{i = 1}^N w_i (X_i - \nu)(X_i - \nu)^\top}_\ell
\end{equation}

We call the above min-max formulation the dual and the associated minimizer $w^*$ the dual minimizer or dual weights.  By Von Neumann's min max theorem we have 
\begin{equation*}
    Cost_{X,b,\ell}(\nu) = \max_{M \in \mathcal{F}_\ell} \min_{w \in \Phi_b}  \inp{\sum_{i = 1}^N w_i (X_i - \nu)(X_i - \nu)^\top}{M} 
\end{equation*}
Where we refer to the maximizer $M^*$ as the primal maximizer.  For the remainder of this section we will set $\ell = \frac{1}{\alpha}$ and $b = (\frac{1}{\alpha N}, ..., \frac{1}{\alpha N}) \in \R^N$ and drop the subscripts in $Cost(\cdot)$.

\paragraph{An Easier Problem: } To aid in exposition, we illustrate our algorithmic approach on the simpler and well understood problem of finding the $k = \frac{1}{\alpha}$ means of data drawn from a mixture of $k$ bounded covariance distributions. That is $x_1,...,x_N \thicksim \frac{1}{k}\sum_{i = 1}^k \mc{D}(\mu_i)$ for a distribution $D(\mu) \preceq I$ with means $\{\mu_i\}_{i=1}^k$ and let $T_i$ denote the set of points in each cluster $i \in [k]$. Consider a vector $\nu$ that is further than $O(\sqrt{k})$ away from all the means $\{\mu_i\}_{i=1}^k$. By standard duality arguments, we see that $Cost(\nu)$ is a good approximation to the distance to the closest cluster center denoted $\mu^*$ comprised of points $T^*$. Furthermore, $\mu^* - \nu$ is almost completely contained in the top-$O(k)$ singular subspace of $M^*$, denoted $V$. We may now project all the data points onto the affine subspace $V$ offset to $\nu$ forming the set $X' :=\{\Pi_V(x_i - \nu)\}_{i=1}^N$. The second observation is that a randomly chosen point $\bar x \in T^*$  satisfies $\norm{\Pi_V(\bar x - \mu^*)} \leq O(\sqrt{k})$  with constant probability. Due to the fact that $\mu^* - \nu$ is almost completely contained in $V$, we get by picking a set of $p = \tilde{O}(k)$ random data points $ R := \{x'_1, x'_2,..., x'_p\} \in X'$, that there exists a point $\hat{x}' \in R$ substantially closer to $\mu^*$ than $\nu$ with high probability $1 - \frac{1}{\poly(d)}$. We can efficiently certify this progress by computing the value of $Cost(\hat{x}')$. By iterating this procedure we converge to within $O(\sqrt{k})$ of the mean of a cluster center.    

\paragraph{List Decoding Main Lemma: } In analogy to clustering, one should hope that for any $\nu \in R^d$ further than $O(\frac{\sigma}{\sqrt{\alpha}})$ from $\mu$, that $Cost(\nu) \approx \norm{\nu - \mu}^2$.  Although this is impossible, it turns out that when it is false, there exists a corresponding "dual procedure" for outputting a sanitizing tuple.  More precisely, we claim that either $0.4\norm{\nu - \mu}^2 \leq Cost(\nu) \leq 1.1\norm{\nu - \mu}^2$, or a simple procedure outputs a set of weights $\hat w$ identifying vastly more outliers than inliers i.e $\sum_{i \in \calO}\hat{w}_i \geq \frac{\alpha}{2}\sum_{i \in \calI}\hat{w}_i$, or both.    

The dual procedure is as follows.  Let $\widehat{\Sigma} := \sum_{i=1}^n w^*_i (x_i - \nu)(x_i - \nu)^T$ be the weighted second moment matrix centered at $\nu$. Let $V$ be the top $O(\frac{1}{\alpha})$ eigenspace of $\widehat\Sigma$. We project the dataset onto the affine subspace $V$ with offset $\nu$. We then sort the points $\{\Pi_V(x_i - \nu)\}_{i=1}^N$ by Euclidean lengths.  This sorting determines an ordering of the weights $w^*_1,...,w^*_N$. We pass through the sorted list, and find the smallest $m \in [N]$ such that $\sum_{i=1}^m w^*_i \geq 0.5$.  We set $\hat{w}_i = w^*_i$ for $i = 1,...,m$ and $\hat{w}_i = 0$ for $i > m$. The following lemma guarantees $\sum_{i \in \calO}\hat{w}_i \geq \frac{\alpha}{2}\sum_{i \in \calI}\hat{w}_i$. 

\begin{lemma}\label{lem:nonalgorithmic} (Nonalgorithmic \cref{lem:dualprocedure} with Exact Cost Evaluation) Let $\nu \in \R^d$ be any vector satisfying $\norm{\nu - \mu} \geq O(\frac{\sigma}{\sqrt \alpha})$.  Let $Cost_{X,b,\ell}(\nu)$ be defined as in \cref{eq:cost} for $b = (\frac{1}{\alpha N}, ..., \frac{1}{\alpha N}) \in \R^N$ and $\ell = \frac{1}{\alpha}$. Let $w^* \in \Phi_b$ be the corresponding dual minimizer.  Let $\hat w$ be defined as follows
$$\hat{w} := \argmin\limits_{p_i \in [0,w^*_i] \text{ and }\norm{p}_1 \geq 0.5} \sum_{i=1}^N p_i\left\|\Pi_V(x_i-\nu)\right\|$$ 
for $V$ the subspace defined above.  Then either the cost is a constant factor approximation to the distance to the true mean, $0.4\norm{\nu - \mu}^2 \leq Cost_{X,b,\ell}(\nu) \leq 1.1\norm{\nu - \mu}^2$, or $\hat w$ identifies a set of weights with vastly more outliers than inliers, $\sum_{i \in \calO}\hat{w}_i \geq \frac{\alpha}{2}\sum_{i \in \calI}\hat{w}_i$ (or both). 
\end{lemma}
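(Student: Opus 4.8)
The plan is to prove the dichotomy by establishing two independent facts about $\Cost(\nu) := \Cost_{X,b,\ell}(\nu)$. First, the upper bound $\Cost(\nu) \le 1.1\,\norm{\nu-\mu}^2$ holds \emph{unconditionally}, for every $\nu$ with $\norm{\nu-\mu}\ge C\sigma/\sqrt\alpha$ ($C$ a large absolute constant). Second, if the weight inequality $\sum_{i\in\mathcal O}\hat w_i\ge\tfrac\alpha2\sum_{i\in\mathcal I}\hat w_i$ \emph{fails}, then $\Cost(\nu)\ge 0.4\,\norm{\nu-\mu}^2$. Contraposing the second fact and combining with the first yields exactly ``either $0.4\norm{\nu-\mu}^2\le\Cost(\nu)\le 1.1\norm{\nu-\mu}^2$ or the weight inequality holds.'' Throughout, set $b:=\norm{\nu-\mu}$ and $u:=(\nu-\mu)/b$, recall $\ell=1/\alpha$ and $b_i=1/(\alpha N)$, and use $\tfrac1{\alpha N}\sum_{i\in\mathcal I}(x_i-\mu)(x_i-\mu)^{\transp}\preceq\sigma^2 I$ and $b\ge C\sigma/\sqrt\alpha$, i.e.\ $\sigma^2\le\alpha b^2/C^2$.

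For the unconditional upper bound, feed the feasible weight $w^I_i=\tfrac1{\alpha N}\mathbb{1}[i\in\mathcal I]\in\Phi_b$ into the minimization defining $\Cost(\nu)$. Writing $x_i-\nu=(x_i-\mu)+(\mu-\nu)$ and $\bar x_I:=\tfrac1{\alpha N}\sum_{i\in\mathcal I}x_i$, the matrix $\tfrac1{\alpha N}\sum_{i\in\mathcal I}(x_i-\nu)(x_i-\nu)^{\transp}$ equals $\tfrac1{\alpha N}\sum_{i\in\mathcal I}(x_i-\mu)(x_i-\mu)^{\transp}+(\bar x_I-\mu)(\mu-\nu)^{\transp}+(\mu-\nu)(\bar x_I-\mu)^{\transp}+(\mu-\nu)(\mu-\nu)^{\transp}$; by the triangle inequality for $\norm{\cdot}_\ell$ its Ky--Fan norm is at most $\ell\sigma^2+4\norm{\bar x_I-\mu}\,b+b^2$, and $(\bar x_I-\mu)(\bar x_I-\mu)^{\transp}\preceq\tfrac1{\alpha N}\sum_{i\in\mathcal I}(x_i-\mu)(x_i-\mu)^{\transp}\preceq\sigma^2 I$ gives $\norm{\bar x_I-\mu}\le\sigma$. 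Since $\ell\sigma^2=\sigma^2/\alpha\le b^2/C^2$ and $\sigma b\le(\sqrt\alpha/C)b^2$, this is at most $(1+1/C^2+4\sqrt\alpha/C)b^2\le 1.1b^2$ for $C$ large and $\alpha$ below an absolute constant. Now assume $\Cost(\nu)<0.4b^2$; we must derive the weight inequality. Let $w^*$ be the dual minimizer, $\widehat\Sigma:=\sum_i w^*_i(x_i-\nu)(x_i-\nu)^{\transp}$, so $\norm{\widehat\Sigma}_\ell=\Cost(\nu)<0.4b^2$; in particular $\lambda_1(\widehat\Sigma)<0.4b^2$ and $\lambda_\ell(\widehat\Sigma)\le\tfrac1\ell\norm{\widehat\Sigma}_\ell<0.4\alpha b^2$. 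Let $\Pi_V$ be the projection onto the top-$\ell$ eigenspace $V$ of $\widehat\Sigma$ and $u_\perp:=(I-\Pi_V)u$.

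The crux is to show $\hat w$ carries only $O(\sqrt\alpha)$ inlier weight. For inliers, $\sum_{i\in\mathcal I}w^*_i\langle x_i-\mu,u\rangle^2\le\tfrac1{\alpha N}\sum_{i\in\mathcal I}\langle x_i-\mu,u\rangle^2=u^{\transp}\big(\tfrac1{\alpha N}\sum_{i\in\mathcal I}(x_i-\mu)(x_i-\mu)^{\transp}\big)u\le\sigma^2$, and $\sum_{i\in\mathcal I}w^*_i\langle x_i-\nu,u_\perp\rangle^2\le u_\perp^{\transp}\widehat\Sigma u_\perp\le\lambda_{\ell+1}(\widehat\Sigma)\norm{u_\perp}^2<0.4\alpha b^2$ as $u_\perp\in V^\perp$. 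Two Markov inequalities with threshold $\alpha^{1/4}b$, together with $\sigma^2\le\alpha b^2/C^2$, produce $G\subseteq\mathcal I$ with $\sum_{i\in G}w^*_i\ge W_I-O(\sqrt\alpha)$ (where $W_I:=\sum_{i\in\mathcal I}w^*_i$) such that every $i\in G$ has $|\langle x_i-\mu,u\rangle|\le\alpha^{1/4}b$ and $|\langle x_i-\nu,u_\perp\rangle|\le\alpha^{1/4}b$; hence $|\langle x_i-\nu,\Pi_V u\rangle|=|\langle x_i-\nu,u\rangle-\langle x_i-\nu,u_\perp\rangle|\ge(1-2\alpha^{1/4})b$, so $\norm{\Pi_V(x_i-\nu)}\ge|\langle x_i-\nu,\Pi_V u\rangle|/\norm{\Pi_V u}\ge(1-2\alpha^{1/4})b$ (if $\Pi_V u=0$ these displays already force $W_I=O(\sqrt\alpha)$ and we are done). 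On the other hand the total mass with large $V$-projection is controlled by $\sum_i w^*_i\norm{\Pi_V(x_i-\nu)}^2=\inp{\widehat\Sigma}{\Pi_V}=\sum_{j=1}^\ell\lambda_j(\widehat\Sigma)=\norm{\widehat\Sigma}_\ell<0.4b^2$, so by Markov the set $L:=\{i:\norm{\Pi_V(x_i-\nu)}\ge(1-2\alpha^{1/4})b\}$ has $\sum_{i\in L}w^*_i\le 0.4/(1-2\alpha^{1/4})^2<\tfrac12$ for $\alpha$ below an absolute constant. Since $G\subseteq L$, every point outside $L$ has strictly smaller $V$-projection than every point of $L$, and the weight outside $L$ exceeds $\tfrac12$, the support of $\hat w$ --- which by definition is $\argmin_{0\le p_i\le w^*_i,\ \norm{p}_1\ge 0.5}\sum_i p_i\norm{\Pi_V(x_i-\nu)}$, i.e.\ the lightest-projection points filled up to weight $\tfrac12$ --- is disjoint from $L$, hence from $G$, so $\hat w_i=0$ for $i\in G$. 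Therefore $\sum_{i\in\mathcal I}\hat w_i\le W_I-\sum_{i\in G}w^*_i=O(\sqrt\alpha)$, while $\norm{\hat w}_1\ge\tfrac12$, giving $\sum_{i\in\mathcal O}\hat w_i\ge\tfrac12-O(\sqrt\alpha)$ whereas $\tfrac\alpha2\sum_{i\in\mathcal I}\hat w_i=O(\alpha^{3/2})$; for $\alpha$ small the former exceeds the latter, which is the weight inequality, contradicting our assumption.

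The main obstacle is the competing demands on $\dim V$. Step one wants $\dim V\ge\ell$ so that $\lambda_{\dim V+1}(\widehat\Sigma)$ is tiny, forcing the typical inliers --- which lie in direction $-u$ from $\nu$ at length $\approx b$ --- to realize almost all of that length inside $V$ and thus land in $L$. Step two wants $\inp{\widehat\Sigma}{\Pi_V}=\norm{\widehat\Sigma}_{\dim V}$ to stay strictly below $\tfrac12 b^2$, which fails once $\dim V$ exceeds $\ell$ by more than a small constant factor; taking $\dim V=\ell$ exactly is what reconciles the two, and it is here that the hypothesis' constant $0.4$, being bounded away from $\tfrac12$, is genuinely load-bearing (and why the weak factor $\tfrac\alpha2$ in the conclusion is enough). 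The rest is bookkeeping: tracking the $O(\sqrt\alpha)$ slack from the two Markov steps and from the discretization of the prefix weight (each $w^*_i\le1/(\alpha N)$, so $\norm{\hat w}_1\le\tfrac12+1/(\alpha N)$), disposing of the degenerate case $\Pi_V u=0$, and invoking the standing assumption that $\alpha$ is below a fixed absolute constant --- all consistent with the $O(\sigma/\sqrt\alpha)$ and high-probability qualifiers in the statement.
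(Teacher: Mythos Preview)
Your argument is correct. The upper bound via the inlier indicator weight is the same as the paper's (\cref{lem:costupper}), and your treatment of the ``Cost $<0.4\|\nu-\mu\|^2$'' case is sound: the key inequalities $\lambda_{\ell+1}(\widehat\Sigma)\le\|\widehat\Sigma\|_\ell/\ell<0.4\alpha b^2$, the two Markov steps at threshold $\alpha^{1/4}b$, the inclusion $G\subseteq L$, and the Markov bound $\sum_{i\in L}w^*_i<0.5$ all check out. The phrase ``contradicting our assumption'' at the end is a slip of wording---you are doing a direct contrapositive, not a contradiction---but the logic is fine.

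Your route differs from the paper's in a meaningful way. The paper (via \cref{lem:descent} and \cref{lem:removal}) first case-splits on whether $\sum_{i\in\mathcal I}w^*_i\le\alpha/4$; if not, it uses that lower bound on inlier weight to prove \eqref{eq:dualprocedureproject}, namely that $\|\Pi_V^\perp(\mu-\nu)\|^2$ is small, and then argues along the direction $\rho=\Pi_V(\mu-\nu)/\|\Pi_V(\mu-\nu)\|$ with a Chebyshev step to get $\sum_{i\in\mathcal I}\hat w_i\le\alpha/4$. You bypass both the case split and \eqref{eq:dualprocedureproject} by bounding $\langle x_i-\nu,u_\perp\rangle^2$ directly through the eigenvalue bound on $V^\perp$; this is cleaner and, notably, works with $\ell=1/\alpha$ as stated in the nonalgorithmic lemma, whereas the paper's derivation of \eqref{eq:dualprocedureproject} needs the extra headroom $\ell=100/\alpha$ used in the algorithmic version. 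The tradeoff is quantitative: you obtain only $\sum_{i\in\mathcal I}\hat w_i=O(\sqrt\alpha)$ rather than the paper's $\alpha/4$. That amply suffices for the very weak inequality $\sum_{i\in\mathcal O}\hat w_i\ge(\alpha/2)\sum_{i\in\mathcal I}\hat w_i$ in this lemma, but would not by itself give the ``sanitizing tuple'' condition $\sum_{i\in\mathcal I}\hat w_i\le\alpha/4$ used downstream in \cref{thm:maintheorem}.
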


In \cref{lem:dualprocedure} we state the algorithmic version of the above lemma.  There it is important to take into account technicalities involving the approximate evaluation of $Cost_{X,b,\ell}(\cdot)$, and provide a procedure for making progress when the cost is a constant approximation $\norm{\nu - \mu}^2$.  This will be done in a manner akin to the procedure for clustering described earlier.  Nevertheless, \cref{lem:nonalgorithmic} captures the key guarantee that ensures our main algorithms in \cref{sec:algorithms} succeeds.

\subsection{Generalized Packing/Covering Solvers and Improved Power Method Analysis}
\label{ssec:genpacktech}

We start by considering the simpler problem of computing $Cost_{X, b, 1}(\nu)$. The approach taken in \cite{CDG} is to reduce the problem to a packing SDP via the introduction of an additional parameter $\lambda$; specifically, they solve the following packing SDP:

\begin{equation*}
    \label{eqn:packptz}
    \begin{gathered}
    \max_{w_i \geq 0} \sum_{i = 1}^N w_i \\
    \text{s.t } \left\|\sum_{i = 1}^N w_i (X_i - \nu)(X_i - \nu)^\top\right\| \leq \lambda \tag{Packing-SDP}\\
    w_i \leq b_i
\end{gathered}
\end{equation*}
for which there exist fast linear-time solvers \cite{peng2012faster,ALO16}. It can be shown that the value of the above program when viewed as a function of $\lambda$ is monotonic, continuous and attains the value $1$ precisely when $\lambda = Cost_{X, b, 1}(\nu)$. Therefore, by performing a binary search over $\lambda$, one obtains accurate estimates of $w^*$ and $Cost_{X, b, 1}(\nu)$. 

However, a similar approach for the problem of compute, $Cost_{X, b, \ell} (\nu)$ results in the following SDP:

\begin{gather*}
    \max_{w_i \geq 0} \sum_{i = 1}^N w_i \\
    \text{s.t } \norm{\sum_{i = 1}^N w_i (X_i - \nu)(X_i - \nu)^\top}_\ell \leq \lambda \\
    w_i \leq b_i
\end{gather*}
which does not fall into the standard class of packing SDPs. We extend and generalize fast linear time solvers for packing/covering SDPs from \cite{peng2012faster} to this broader class of problems. However, this generalization is not straightforward. 

To demonstrate the main difficulties, we will delve more deeply into the solver from \cite{peng2012faster} and state the packing/covering primal dual pairs they consider:

\begin{table}[h]
\centering
\begin{tabularx}{\textwidth}{ >{\centering\arraybackslash}X | >{\centering\arraybackslash}X }
  Covering (Primal) & Packing   (Dual) \\
  \begin{gather*}
      \min_{M} \Tr{M} \\
      \text{Subject to: } \inp{A_i}{M} \geq 1\\
      M \succcurlyeq 0
  \end{gather*} & 
  \begin{gather*}
      \max_{w_i \geq 0} \sum_{i = 1}^N w_i \\
      \text{Subject to: } \sum_{i = 1}^N w_i A_i \preccurlyeq I
  \end{gather*}
\end{tabularx}
\end{table}
\noindent where $A_i \in \psd[m]$. The solver of \cite{peng2012faster} start by initializing a set of weights, $w_i$, feasible for \ref{eqn:packptz}. Subsequently, in each iteration, $t$, they first compute the matrix 

\begin{equation*}
 P_1 = \frac{\exp (\sum_{i = 1}^N w_i A_i)}{\Tr{\exp (\sum_{i = 1}^N w_i A_i)}}.
\end{equation*}

The algorithm then proceeds to increment the weights of all $i$ such that $\inp{P_1}{A_i} \leq (1 + \eps)$ for a user defined accuracy parameter, $\eps$, by a multiplicative factor. Intuitively, these indices correspond to ``directions'', $A_i$, along which $\sum_{i = 1}^N w_i A_i$ is small and therefore, their weights can be increased in the dual formulation. By incorporating a standard regret analysis from \cite{DBLP:journals/jacm/AroraK16} for the matrices, $P_1$, they show that one either outputs a primal feasible, $M$, with $\Tr M \leq 1$ or a dual feasible $w$ with $\sum_{i = 1}^N w_i \geq (1 - \eps)$.

The construction of our solver follows the same broad outline as in \cite{peng2012faster}. While the regret guarantees we employ are a generalization of those used in \cite{peng2012faster}, they still follow straightforwardly from standard regret bounds for mirror descent based algorithms \cite{DBLP:journals/corr/abs-1909-05207}. Instead, the main challenge of our solver is computational. The matrix $P^{(t)}$ can be viewed as a maximizer to $f(X) = \inp{X}{F} - \inp{X}{\log X}$ where $F = \sum_{i = 1}^N w_i A_i$ in $\mc{F}_1$. In our setting, we instead are required to compute the maximizer of $f$ in $(\mc{F}_\ell / \ell)$ which we show is given by the following: Let $H$ and $\tau^*$ be defined as:

\begin{equation*}
    H = \exp(F) = \sum_{i = 1}^m \lambda_i u_i u_i^\top \text{ with } \lambda_1 \geq \lambda_2 \ldots \geq \lambda_m > 0 \text{ and } \tau^* = \max_{\tau > 0} \lbrb{\frac{\tau^*}{\sum_{i = 1}^m \min (\tau, \lambda_i)} = \frac{1}{\ell}}.
\end{equation*}
Then, we have:
\begin{equation*}
    P_\ell = \argmax_{\ell X \in \mc{F}_\ell} f(X) = \frac{1}{\sum_{i = 1}^m \min(\lambda_i, \tau^*)}\cdot \sum_{i = 1}^m \min(\lambda_i, \tau^*) u_iu_i^\top.
\end{equation*}
While the matrix, $P_1$, can be efficiently estimated by Taylor series expansion of the exponential function (see \cite{DBLP:journals/jacm/AroraK16}), we need to estimate $P_\ell$ which is given by a careful truncation operation on $\exp (F)$. Note that given access to the exact top-$\ell$ eigenvectors and eigenvalues of $\exp(F)$, one can efficiently obtain a good estimate of $P_\ell$. However, the main technical challenge is establishing such good estimates given access only to \emph{approximate} eigenvectors and eigenvalues of a truncated Taylor series approximation of $H$. 

One of the main insights of our approach is that instead of analyzing the truncation operator directly, we instead view the matrix. $P_\ell$ as being the maximizer of $g(X, F) = \inp{F}{X} - \inp{X}{\log X}$ with respect to $X$. We then subsequently show that maximizer of $g(X, \wt{F})$ is close to $P_\ell$ for some $\wt{F}$ close to $F$ which makes crucial use of the fact that $\log X$ is operator monotone. Our second main piece of insight is that if our approximate eigenvectors and eigenvalues, denoted by $(\wh{\lambda_i}, v_i)_{i = 1}^\ell$ satisfy:

\begin{equation*}
    (1 - \eps) \sum_{i = 1}^l \wh{\lambda_i} v_i v_i^\top + \mc{P}_V^\perp H \mc{P}_V^\perp \preccurlyeq H \preccurlyeq (1 + \eps) \sum_{i = 1}^l \wh{\lambda_i} v_i v_i^\top + \mc{P}_V^\perp H \mc{P}_V^\perp,
\end{equation*}
where $V$ is the subspace spanned by the $v_i$ and $\mc{P}^\perp_V$ is the projection onto the orthogonal subspace of $V$, then our approximate truncation operator can be viewed as the \emph{exact} maximizer of $g(X, \wt{F})$ for some $\wt{F}$ close to $F$. From the previous discussion, this means that our truncation operator operating on the approximate eigenvectors $v_i$ is a good estimate of $P_\ell$. However, standard analysis of methods for the computation of eigenvalues and eigenvectors do not yield such strong guarantees \cite{DBLP:conf/nips/ZhuL16,DBLP:conf/nips/MuscoM15}. The final contribution of our work is a refined analysis of the power method that yields the required stronger guarantees which is formally stated in \cref{thm:main-svd-guarantee}.\footnote{While our guarantees scale with $1 / \eps$ as opposed to $1 / \sqrt{\eps}$ as in \cite{DBLP:conf/nips/MuscoM15}, we suspect this dependence may be improved using techniques from \cite{DBLP:conf/nips/MuscoM15}.}
\section{Preliminaries}
\subsection{Linear algebra}

The following can be found in \cite[Example 13(iii)]{Cha15}:
\begin{fact}\label{fact:log-op-monotone}
    Suppose $A$ and $B$ are positive semidefinite matrices such that $A\psdge B$, then $\log A \psdge \log B$.
\end{fact}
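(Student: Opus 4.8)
The plan is to derive the operator monotonicity of $\log$ from the scalar integral representation
\[
\log x \;=\; \int_0^\infty \left( \frac{1}{1+t} - \frac{1}{x+t} \right) dt, \qquad x > 0,
\]
which I would check directly: the integrand has antiderivative $\log(1+t) - \log(x+t)$, so the integral over $[0,T]$ equals $\log\frac{1+T}{x+T} + \log x$, which tends to $\log x$ as $T\to\infty$. Since both sides are continuous on $(0,\infty)$, the spectral theorem (continuous functional calculus) applied to a positive definite matrix $M$ upgrades this to
\[
\log M \;=\; \int_0^\infty \left( \frac{1}{1+t}\, I - (M + tI)^{-1} \right) dt,
\]
where the integral is understood against an arbitrary fixed test vector. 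I will assume throughout that $A \psdge B \succ 0$, which is the only regime the paper uses ($H = \exp(F)$ is positive definite); if $B$ is merely positive semidefinite then $\log B$ is either undefined or the inequality holds in the obvious extended sense, and can be recovered by a routine limiting argument in $B + \eps I$.

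The one substantive ingredient is that matrix inversion reverses the semidefinite order on positive definite matrices: if $C \psdge D \succ 0$ then $D^{-1} \psdge C^{-1}$. I would prove this by congruence: conjugating $C \psdge D$ by $D^{-1/2}$ gives $D^{-1/2} C D^{-1/2} \psdge I$, so the positive definite matrix $D^{-1/2} C D^{-1/2}$ has all eigenvalues $\ge 1$; hence its inverse $D^{1/2} C^{-1} D^{1/2}$ has all eigenvalues $\le 1$, i.e. $D^{1/2} C^{-1} D^{1/2} \psdle I$, and conjugating back by $D^{-1/2}$ yields $C^{-1} \psdle D^{-1}$.

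With this lemma the argument closes quickly. For every $t \ge 0$ we have $A + tI \psdge B + tI \succ 0$, so the inversion lemma gives $(B+tI)^{-1} \psdge (A+tI)^{-1}$, hence the pointwise inequality
\[
\frac{1}{1+t}\, I - (A+tI)^{-1} \;\psdge\; \frac{1}{1+t}\, I - (B+tI)^{-1}
\]
holds for each $t$. Testing against an arbitrary vector $v$ and integrating the resulting nonnegative scalar function $\langle v, [(B+tI)^{-1} - (A+tI)^{-1}] v\rangle$ over $t \in [0,\infty)$ gives $\langle v, (\log A - \log B) v \rangle \ge 0$ for all $v$, i.e. $\log A \psdge \log B$.

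I expect the only real obstacle to be bookkeeping rather than ideas: verifying the scalar identity, justifying its transfer to matrices via functional calculus, and confirming that the semidefinite order survives the integral — each is elementary but is where care is needed. As an alternative that sidesteps the integral, one can invoke the L\"owner--Heinz inequality (operator monotonicity of $x \mapsto x^r$ for $r \in (0,1]$) together with $\log x = \lim_{r \to 0^+} (x^r - 1)/r$, passing to the limit in the operator inequalities $(A^r - I)/r \psdge (B^r - I)/r$; this trades the explicit computation for a citation of L\"owner--Heinz and a limiting step.
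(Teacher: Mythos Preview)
Your proof is correct and self-contained. The paper itself does not prove this fact at all: it simply cites it as \cite[Example 13(iii)]{Cha15} and moves on. So there is nothing to compare at the level of argument --- you have supplied a full proof where the paper supplies only a reference.

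For what it is worth, the route you chose (the integral representation $\log x = \int_0^\infty \big(\tfrac{1}{1+t} - \tfrac{1}{x+t}\big)\,dt$ combined with the order-reversing property of inversion on positive definite matrices) is one of the standard textbook proofs of operator monotonicity of $\log$, and your treatment of the positive-semidefinite boundary case via $B + \eps I$ is the right way to handle it. The alternative you sketch via L\"owner--Heinz and $\log x = \lim_{r\to 0^+}(x^r-1)/r$ is also standard and equally valid; the integral-representation route has the advantage of being fully elementary, requiring no appeal to a deeper theorem.
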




\subsection{Optimization}
\begin{definition}[Strongly convex]
    Let $\calS$ be any convex set.  We say a function $f:\calS\to \R$ is $\alpha$-strongly convex with respect to norm $\|\cdot\|_{\square}$ if
    \[
        f(y) \ge f(x) + \grad f(x)^{\top}(y-x) + \frac{\alpha}{2}\|y-x\|_{\square}^2.
    \]
    Similarly, we call $f$ $\alpha$-strongly concave with respect to $\|\cdot\|_{\square}$ if
    \[
        f(y) \le f(x) + \grad f(x)^{\top}(y-x) - \frac{\alpha}{2}\|y-x\|_{\square}^2.
    \]
\end{definition}

\begin{definition}
    We will use $\trosp$ to denote the set $\{X\psdge 0:\Tr(X)\le 1\}$ and $\trosp^m$ for its restriction to $m\times m$ matrices.
\end{definition}

\begin{definition}
    The \emph{von Neumann entropy} $\vNE:\trosp\to\R$ is defined as follows:
    \[
        \vNE(X) \coloneqq -\langle X,\log X\rangle.
    \]
\end{definition}
A key property of von Neumann entropy we use is:
\begin{fact}[{\cite[Corollary 3]{yu2013strong}}]
    \label{fac:vnconv}
    $\vNE$ is $1$-strongly concave with respect to the trace norm.
\end{fact}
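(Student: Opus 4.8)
Negating, the statement is equivalent to the assertion that the unnormalized negative entropy $g(X) := \inp{X}{\log X} = -\vNE(X)$ is $1$-strongly convex with respect to the trace norm on $\trosp$; equivalently, that its Bregman divergence
\[
  D_g(Y \,\|\, X) \;=\; \inp{Y}{\log Y - \log X} - \Tr Y + \Tr X
\]
satisfies $D_g(Y\|X) \ge \tfrac12\norm{Y - X}_*^2$ for all $X,Y \in \trosp$. Since $D_g$ is the (unnormalized) quantum relative entropy plus the linear correction $\Tr X - \Tr Y$, this is a Pinsker-type inequality. As $g$ is smooth on the relative interior $\{X \succ 0 : \Tr X < 1\}$, the Taylor-with-integral-remainder characterization of strong convexity reduces the claim to the Hessian bound $\inp{Z}{\nabla^2 g(X)[Z]} \ge \norm{Z}_*^2$ for every $X \succ 0$ with $\Tr X \le 1$ and every Hermitian $Z$; the boundary of $\trosp$ (singular $X$, or $\Tr X < 1$) is then handled by a routine approximation argument, using that $D_g(Y\|X) = +\infty$ precisely when $\operatorname{supp} Y \not\subseteq \operatorname{supp} X$, which is consistent with the inequality.

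\textbf{The commuting case.}
From $\log X = \int_0^\infty \big[(1+t)^{-1} I - (X + tI)^{-1}\big]\,dt$ one gets $\nabla^2 g(X)[Z] = \int_0^\infty (X+tI)^{-1} Z (X+tI)^{-1}\,dt$, so in the eigenbasis of $X$ (eigenvalues $\lambda_1,\dots,\lambda_n > 0$),
\[
  \inp{Z}{\nabla^2 g(X)[Z]} \;=\; \sum_{i,j} |Z_{ij}|^2 \,\frac{\log\lambda_i - \log\lambda_j}{\lambda_i - \lambda_j},
\]
with the diagonal summand read as $1/\lambda_i$. When $X$ and $Z$ commute this is $\sum_i z_i^2 / \lambda_i$, and Cauchy--Schwarz gives $\sum_i z_i^2/\lambda_i \ge (\sum_i |z_i|)^2 / \sum_i \lambda_i \ge \norm{Z}_*^2$, the last inequality using $\sum_i \lambda_i = \Tr X \le 1$. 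This is exactly where the normalization $\Tr X \le 1$ (rather than the full PSD cone) enters, and the scalar shadow $(t\log t - t)'' = 1/t \ge 1$ on $(0,1]$ is why the sharp constant is $1$.

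\textbf{Reduction to the commuting case.}
For general $X, Z$, let $\mathcal{E}$ be a pinching channel --- conjugation by a common refinement of the spectral projections of $Z$ and, block-by-block, of $X$ --- chosen so that $\mathcal{E}(Z) = Z$ and $\mathcal{E}(X)$ is diagonal in the same basis; $\mathcal{E}$ is completely positive and trace preserving. Monotonicity of $D_g$ under CPTP maps --- i.e.\ Lindblad's joint convexity, which is the operator-algebraic expression of the operator monotonicity of $\log$ (\cref{fact:log-op-monotone}) --- gives $D_g(\mathcal{E}(Y)\|\mathcal{E}(X)) \le D_g(Y\|X)$. Differentiating this along $Y = X + sZ$ at $s = 0$, where $D_g$ and its first variation vanish, forces $\inp{Z}{\nabla^2 g(\mathcal{E}(X))[Z]} \le \inp{Z}{\nabla^2 g(X)[Z]}$. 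Since $\mathcal{E}(X)$ commutes with $Z$ and $\Tr\mathcal{E}(X) = \Tr X \le 1$, the commuting case bounds the left-hand side below by $\norm{Z}_*^2$, which finishes the proof.

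\textbf{Main obstacle.}
The only genuinely delicate ingredient is the monotonicity of quantum relative entropy under pinching and its transfer to the second-order (Hessian) level with no loss in the constant, together with the approximation argument at non-full-rank $X \in \trosp$; all of this is carried out in \cite[Corollary 3]{yu2013strong}, whose argument I would follow.
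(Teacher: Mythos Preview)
The paper does not prove this fact at all; it is quoted verbatim as a black box from \cite{yu2013strong}. So there is no ``paper's own proof'' to compare against --- your sketch is strictly more than what the paper supplies.

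On the merits, your argument is sound. The reduction to the Hessian bound on the interior, the Cauchy--Schwarz computation in the simultaneously-diagonal case (where $\Tr X\le 1$ enters exactly as you say), and the pinching reduction all work. The second-order trick --- using monotonicity of $D_g$ under the pinching $\mathcal{E}$ in the eigenbasis of $Z$, then comparing the second derivatives of $s\mapsto D_g(X+sZ\,\|\,X)$ and $s\mapsto D_g(\mathcal{E}(X)+sZ\,\|\,\mathcal{E}(X))$ at $s=0$ --- is a clean way to transfer the commuting bound to the general case without losing the constant.

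Two small corrections. First, the boundary of $\trosp$ is where $X$ is singular \emph{or} $\Tr X = 1$, not $\Tr X < 1$; this is just a typo and your limiting argument handles both pieces. Second, the phrase ``Lindblad's joint convexity, which is the operator-algebraic expression of the operator monotonicity of $\log$'' overstates the connection: the data-processing inequality for quantum relative entropy under CPTP maps (which is what you actually use) is usually derived from Lieb's concavity theorem, and while operator monotonicity of $\log$ is related, it is not by itself equivalent to the monotonicity you invoke. Citing Lindblad or Uhlmann directly for this step would be cleaner than routing through \cref{fact:log-op-monotone}. Neither point affects correctness.
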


\begin{definition}
    The \emph{quantum relative entropy} $\QRL(X,Y)$ where $X\in\trosp$ and $Y$ is any PSD matrix is defined as
    \[
        \QRL(X,Y) \coloneqq \langle X, \log X\rangle - \langle X, \log Y\rangle.
    \]
\end{definition}
The following is an immediate consequence of \pref{fac:vnconv}:
\begin{fact}    \label{fact:vndiv-conc}
    Let $P$ be any positive semidefinite matrix. The function $f(X)\coloneqq\QRL(X,P)$ is $1$-strongly convex on $\trosp$.
\end{fact}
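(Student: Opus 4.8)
The plan is to read off the claim directly from \pref{fac:vnconv} by splitting $\QRL(\cdot, P)$ into a von Neumann entropy term and an affine term, and then invoking the (elementary) fact that strong convexity is preserved, with the same modulus, under addition of an affine function. Concretely, unfolding the definition,
\[
    f(X) = \QRL(X, P) = \langle X, \log X\rangle - \langle X, \log P\rangle = -\vNE(X) \;-\; \langle X, \log P\rangle ,
\]
so $f$ is the sum of $-\vNE$ and the linear functional $h(X) \coloneqq -\langle X, \log P\rangle$ (here $\log P$ is a fixed matrix; when $P$ is not positive definite we use the standard convention that $f(X) = +\infty$ unless the support of $X$ lies in that of $P$, in which case the inequality to be proved is vacuous, and otherwise we restrict to the support of $P$ where $\log P$ is well defined).

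\textbf{Steps, in order.} First, observe that negating a $1$-strongly concave function yields a $1$-strongly convex one: reversing the two displayed inequalities in the definition of strong convexity/concavity shows that $g \coloneqq -\vNE$ satisfies $g(Y) \ge g(X) + \grad g(X)^{\transp}(Y - X) + \tfrac{1}{2}\|Y - X\|_*^2$ for all $X, Y \in \trosp$, using \pref{fac:vnconv} for $\vNE$ and inheriting whatever differentiability convention is used there. Second, note $h$ is affine, hence $h(Y) = h(X) + \grad h(X)^{\transp}(Y - X)$ with \emph{equality} for all $X, Y$, and contributes no curvature term. Third, add these two relations; since $\grad f = \grad g + \grad h$, we obtain for all $X, Y \in \trosp$
\[
    f(Y) \;\ge\; f(X) + \grad f(X)^{\transp}(Y - X) + \tfrac{1}{2}\|Y - X\|_*^2,
\]
which is exactly $1$-strong convexity of $f$ on $\trosp$ with respect to the trace norm.

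\textbf{Main obstacle.} There is essentially no obstacle: all the content is in \pref{fac:vnconv} (equivalently, in the strong concavity of the von Neumann entropy established in \cite{yu2013strong}), and the present statement is just its reformulation after a linear shift together with the observation that an affine perturbation does not change the modulus of strong convexity. The only points requiring a word of care are bookkeeping ones --- handling the degenerate case where $P$ is singular via the support convention above, and making sure the gradient/differentiability conventions match those under which \pref{fac:vnconv} is stated --- neither of which affects the argument.
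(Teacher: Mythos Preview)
Your proposal is correct and matches the paper's approach: the paper simply states that \pref{fact:vndiv-conc} is ``an immediate consequence of \pref{fac:vnconv},'' and your decomposition $f(X) = -\vNE(X) - \langle X,\log P\rangle$ together with the observation that an affine shift preserves the modulus of strong convexity is exactly the intended one-line argument.
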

We emphasize that we slightly deviate from the convention that von Neumann entropy and quantum relative entropy are defined only on PSD matrices of trace \emph{exactly} $1$.
\section{Algorithm for List-Decodable Mean Estimation} \label{sec:algorithms}
\label{sec:alg}
To begin our discussion on algorithms, we must first state the formal error guarantees of evaluating $Cost_{X,b,\ell}(\nu)$.

\begin{lemma} \label{lem:informalapprox}(Cost Approximation Error) There exists an algorithm $ApproxCost_{X,b,\ell}(\nu)$ that outputs a triple $(\theta,\bar w)$ for $\theta \in \R$, and $\bar w \in \Phi_b(1 - \delta)$ satisfying $\theta := \max_{M \in \mathcal{F}_\ell} f(\bar w,M) \leq \min\limits_{w \in \Phi_b(1)} \max\limits_{M \in \calF_\ell} f(w,M)$ with failure probability $\epsilon$.  Furthermore, the algorithm runs in time $\widetilde O(Nd\poly(\ell,\frac{1}{\delta},\log(\frac{1}{\epsilon}), \log(N + d)))$.  
\end{lemma}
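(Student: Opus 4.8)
The plan is to reduce evaluating $Cost_{X,b,\ell}(\nu)$ to a short sequence of calls to the generalized packing/covering solver of \cref{thm:solthm} via a binary search over a scale parameter $\lambda$, in the style of the $\ell=1$ reduction of \cite{CDG}. Write $M_i := (X_i-\nu)(X_i-\nu)^\top$; since $\max_{M\in\mathcal{F}_\ell} f(w,M) = \norm{\sum_i w_iM_i}_\ell$ we have $Cost_{X,b,\ell}(\nu) = \min_{w\in\Phi_b(1)}\norm{\sum_i w_iM_i}_\ell$. For a threshold $\lambda>0$ consider the instance of \ref{eq:genpack} with $B_i := (\ell/\lambda)M_i$ and $k:=\ell$, where the box constraint $w_i\le b_i$ is encoded by taking $l=N$ and $A_i := b_i^{-1}e_ie_i^\top$ (diagonal, so the solver's per-iteration cost is unaffected); let $g(\lambda)$ denote its optimum. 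Then $g$ is continuous and nondecreasing, with $g(\lambda)=1$ for $\lambda\ge Cost_{X,b,\ell}(\nu)$ and $g(\lambda)<1$ strictly below it (for $\lambda\ge Cost$ the $\Phi_b(1)$-minimizer is feasible and attains the maximal mass $\sum_i b_i=1$; for $\lambda<Cost$ no $w\in\Phi_b(1)$ is feasible and $\Phi_b(1)$ is compact). The key additional fact is that scaling any feasible $w$ down by $1-t$ gives $g((1-t)\lambda)\ge(1-t)g(\lambda)$; in particular $g\big((1-\delta)\,Cost_{X,b,\ell}(\nu)\big)\ge 1-\delta$, equivalently $\min_{w\in\Phi_b(1-\delta)}\norm{\sum_i w_iM_i}_\ell\le(1-\delta)\,Cost_{X,b,\ell}(\nu)$.

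Next I would binary search for $\lambda$ over $[\lambda_{\min},\lambda_{\max}]$, with $\lambda_{\max}:=\norm{\tfrac1{\alpha N}\sum_i M_i}_\ell$ a valid a priori upper bound on $Cost$ and $\lambda_{\min}$ a $2^{-\poly(N,d)}$ floor; the degenerate case $Cost_{X,b,\ell}(\nu)=0$ (which occurs exactly when at least a $1-\delta$ fraction of the $b$-mass sits on points equal to $\nu$, detected by the solver succeeding as $\lambda\to 0$) is handled directly by returning $\theta=0$ with $\bar w$ supported on those points. At each midpoint call the solver of \cref{thm:solthm} with accuracy $\eps=\Theta(\delta)$ and per-call failure probability $\epsilon$ divided by a polylogarithmic factor. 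By the scaling inequality, the set $\{\lambda:g(\lambda)\in[1-\tfrac\delta2,1-\tfrac\delta4]\}$ contains the interval $[(1-\tfrac\delta4)\lambda_1,\lambda_1]$ with $\lambda_1:=\inf\{\lambda:g(\lambda)\ge 1-\tfrac\delta4\}$, hence has relative width $\Omega(\delta)$, so after $O(\log(\lambda_{\max}/\lambda_{\min})+\log(1/\delta))$ steps the search returns some $\lambda^*$ with $g(\lambda^*)\in[1-\tfrac\delta2,1-\tfrac\delta4]$. At that $\lambda^*$ the solver hands back a packing-feasible $\bar w$ with $\bar w_i\le b_i$, $\norm{\sum_i\bar w_iM_i}_\ell\le\lambda^*$ and $\sum_i\bar w_i\ge(1-\eps)g(\lambda^*)\ge 1-\delta$, together with a covering witness certifying $g(\lambda^*)\le(1+O(\eps))(1-\tfrac\delta4)<1$, so $\lambda^*<Cost_{X,b,\ell}(\nu)$. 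Rescaling $\bar w$ down to total mass exactly $1-\delta$ only shrinks entries and the norm, so $\bar w\in\Phi_b(1-\delta)$ and $\theta:=\max_{M\in\mathcal{F}_\ell}f(\bar w,M)=\norm{\sum_i\bar w_iM_i}_\ell\le\lambda^*<Cost_{X,b,\ell}(\nu)=\min_{w\in\Phi_b(1)}\max_{M\in\mathcal{F}_\ell}f(w,M)$, as required.

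For the running time, there are $O(\log(\lambda_{\max}/\lambda_{\min})+\log(1/\delta))=\widetilde{O}(1)$ solver calls (using the standard assumption of $\poly(N,d)$-bit inputs to bound $\log(\lambda_{\max}/\lambda_{\min})$), each costing $\widetilde{O}(Nd\,\poly(\ell,1/\eps,\log(1/\epsilon),\log(N+d)))$ by \cref{thm:solthm}; a union bound over the calls keeps the total failure probability at most $\epsilon$, and setting $\eps=\Theta(\delta)$ yields the claimed $\widetilde{O}(Nd\,\poly(\ell,1/\delta,\log(1/\epsilon),\log(N+d)))$. I expect the only genuine subtlety here to be the \emph{direction} of the error: the output $\theta$ must \emph{under}estimate $Cost_{X,b,\ell}(\nu)$, which forces $\lambda^*$ strictly below $Cost$ and is exactly what is bought by relaxing the feasible set from $\Phi_b(1)$ to $\Phi_b(1-\delta)$ in the statement, via the scaling inequality above; keeping the solver's multiplicative $\eps$-slack and the binary-search precision consistent with this gap is then routine bookkeeping. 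The genuinely hard content sits inside \cref{thm:solthm} itself — the generalized Ky--Fan packing/covering solver and its power-method and hard-thresholding primitives — which we invoke here as a black box.
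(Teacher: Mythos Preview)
Your reduction is essentially the paper's: binary-search over a scale parameter $\lambda$ and invoke the solver of \cref{thm:solthm} on the instance with $A_i = b_i^{-1}e_ie_i^\top$ and $B_i \propto (X_i-\nu)(X_i-\nu)^\top$; this is exactly how \cref{lem:minmaxmain} proceeds. Your scaling observation $g((1-t)\lambda)\ge(1-t)g(\lambda)$ is a clean way to see the binary-search target has relative width $\Omega(\delta)$. One small slip: $g(\lambda)=1$ for $\lambda\ge Cost_{X,b,\ell}(\nu)$ should read $g(\lambda)\ge 1$, since $\sum_i b_i>1$ in general.

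The real gap is in how you interface with the solver. \cref{thm:solthm} solves the $\eps$-\emph{decision} problem of \cref{prob:epsdec}: it returns \emph{either} a dual-feasible $w$ with $\sum_i w_i \ge 1-\eps$ \emph{or} a primal-feasible $(M,W)$ with $\Tr M+\Tr W\le 1+\eps$ --- never both, and it does not produce a $(1-\eps)$-approximation to $g(\lambda)$. So the claim ``the solver hands back a packing-feasible $\bar w$ with $\sum_i\bar w_i\ge(1-\eps)g(\lambda^*)$ \dots\ together with a covering witness certifying $g(\lambda^*)<1$'' cannot hold as written. More seriously, a primal return at $\lambda$ only certifies $g(\lambda)\le 1+\eps$, which is consistent with $g(\lambda)\ge 1$ and hence does \emph{not} certify $\lambda<Cost_{X,b,\ell}(\nu)$; without that certificate you cannot conclude $\theta\le\lambda^*<Cost_{X,b,\ell}(\nu)$, which is precisely the direction-of-error guarantee you correctly flag as the crux. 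The paper's fix (in the proof of \cref{lem:minmaxmain}) is to pre-scale both constraint families by $(1+\epd)$ --- relaxed budgets $(1+\epd)b_i$ and relaxed norm bound $(1+\epd)\lambda$ --- so that the rescaled packing optimum equals $(1+\epd)g(\lambda)$. Running the solver with accuracy $\epd/4$, a primal return then forces $(1+\epd)g(\lambda)\le 1+\epd/4$, i.e.\ $g(\lambda)<1$ and hence $\lambda<Cost_{X,b,\ell}(\nu)$; a dual return $w$ yields $\tilde w = w/(1+\epd)$ feasible for the original budgets with $\sum_i \tilde w_i \ge (1-\epd/4)/(1+\epd)\ge 1-O(\epd)$. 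With this rescaling your binary search goes through; without it the argument is incomplete.
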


\begin{remark} If $\delta = 0$ we would compute cost exactly.  This is computationally expensive so we take $\delta$ to be $0.01$.  For ease of reading, one can first set $\delta = 0$ with the understanding that the algorithmic lemmas succeed for small $\delta$.       
\end{remark}

\begin{definition}[Sanitizing Tuple]
    For a dataset $X = \{x_1,\dots,x_N\}$, an inlier set $I\subseteq [N]$ with $|I|=\alpha N$, budgets $b_1,\dots,b_N\in\left[0,\frac{2}{\alpha N}\right]$, we say that $(\hat{\mu}, \hat{w})$ is a \emph{sanitizing tuple} for $(X, I, b)$ if the it satisfies:
    \begin{enumerate}
        \item If $\norm{\hat{\mu} - \mu} \geq 2 \cdot 10^3 \frac{\sigma}{\sqrt{\alpha}}$, then $\sum_{i \in I} \hat{w}_i \leq \frac{\alpha}{4}$.
        \item For all $i$: $0\le \hat{w}_i \le b_i$, $\|\hat{w}\|_1\ge 0.5$.
    \end{enumerate}
\end{definition}

In a slight abuse of notation, we will often use ApproxCost$_{X,b,\ell}(\nu)$ to refer to the corresponding $\theta$.  We defer the proof of \pref{lem:informalapprox} to \cref{lem:minmaxmain}.  Our goal in this section will be to prove that the algorithm $OutputList(X,b)$ satisfies our list decoding guarantees and that the algorithm $DescendCost(X,b)$ outputs a sanitizing tuple $(\hu,\hat w)$.  

\begin{theorem} \label{thm:maintheorem} 
    (Descend Cost Sanitizes Dataset) Let $X = \{x_1,...,x_N\}$ be a dataset with an inlier set $I$ of size $|I| = \alpha N$ satisfying $Cov_I(x) \preceq \sigma^2I$. Let $\mu = \sum_{i \in I}x_i$.  Let $\sum_{i \in I}b_i \geq 1$ and $b_i \in [0, \frac{2}{\alpha N}]$.  Then for ApproxCost$_{X,b,\ell}(\cdot)$ satisfying the guarantees of \pref{lem:informalapprox}, DescendCost$(X,b)$ outputs a sanitizing tuple $(\hu,\hat w)$  with probability at least $1 - \frac{1}{d^{10}}$.
\end{theorem}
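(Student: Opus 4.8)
The algorithm $\mathsf{DescendCost}(X,b)$ runs a loop maintaining an iterate $\nu_t$, starting from a warm start $\nu_1$. At step $t$ it calls $\mathsf{ApproxCost}_{X,b,\ell}(\nu_t)$ (with $\ell=\tfrac1\alpha$) to obtain an estimate $\theta_t$ of $\Cost_{X,b,\ell}(\nu_t)$ together with near-optimal dual weights $w^{(t)}\in\Phi_b(1-\delta)$ and a near-optimal primal matrix $M^{(t)}\in\calF_\ell$; recall from \pref{lem:informalapprox} that $\theta_t=\norm{\wh\Sigma_t}_\ell$ for $\wh\Sigma_t:=\sum_i w^{(t)}_i(x_i-\nu_t)(x_i-\nu_t)^\top$ and that $\theta_t\le\Cost_{X,b,\ell}(\nu_t)$. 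If $\theta_t$ falls below a threshold $\tau=\Theta(\sigma^2/\alpha)$ the algorithm halts and returns $(\nu_t,w^{(t)})$ (``mode (i)''). Otherwise it forms the top-$\Theta(\ell)$ eigenspace $V$ of $\wh\Sigma_t$ (the support of $M^{(t)}$), tries the $p=\Theta(\log d/\alpha)$ candidates $\nu_t+\mc P_V(x_j-\nu_t)$ for random $j\in[N]$, and moves to the one, $\nu_{t+1}$, of smallest estimated cost provided this is $\le 0.9\,\theta_t$; if no candidate achieves this it halts and returns $(\nu_t,\bar w)$, where $\bar w$ is the hard-thresholded vector of \cref{lem:dualprocedure} --- $\bar w_i=w^{(t)}_i$ on the points carrying the $\le 0.5$ shortest $\mc P_V(x_i-\nu_t)$ and $0$ elsewhere (``mode (ii)''). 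The plan is to show that in either halting mode the output is a sanitizing tuple, and that the loop halts within $O(\log\poly(Nd/\alpha))$ iterations with the claimed probability.

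\paragraph{Mode (i): small cost.}
Suppose the loop halts with $\theta_t\le\tau$ and set $\bar w:=w^{(t)}$. Then $0\le\bar w_i\le b_i$ and $\norm{\bar w}_1\ge 1-\delta\ge 0.5$, so the second sanitizing condition holds. For the first, assume $\norm{\nu_t-\mu}\ge 2\cdot 10^3\,\sigma/\sqrt\alpha$. Since the Ky--Fan $\ell$-norm dominates the operator norm, $\norm{\sum_i\bar w_i(x_i-\nu_t)(x_i-\nu_t)^\top}\le\theta_t\le\tau=\Theta(\sigma^2/\alpha)$; together with $\mathrm{Cov}_I(x)\preceq\sigma^2 I$ this lets us invoke the resilience fact \cref{fact:resilience} (as in \cref{cor:maincorollary}): two soft sets of mass $\ge 0.5$ and $\ge 1$ respectively, each of small covariance, whose weighted centers are $\Omega(\sigma/\sqrt\alpha)$ apart overlap very little, so for an appropriate choice of $\tau$ and of the constant $2\cdot 10^3$ we get $\sum_{i\in I}\bar w_i\le\alpha/4$. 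Hence $(\nu_t,\bar w)$ is a sanitizing tuple.

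\paragraph{Mode (ii): failed descent.}
Suppose the loop halts at $\nu_t$ because no random candidate reduced the estimated cost below $0.9\,\theta_t$, and let $\bar w$ be the returned thresholded vector. By construction $0\le\bar w_i\le w^{(t)}_i\le b_i$ and $\norm{\bar w}_1\ge 0.5$, so the second condition holds; assume therefore $\norm{\nu_t-\mu}\ge 2\cdot 10^3\,\sigma/\sqrt\alpha$ and verify the first. By the algorithmic dichotomy \cref{lem:dualprocedure} (the approximate-cost version of \cref{lem:nonalgorithmic}), either (a) $\sum_{i\in I}\bar w_i\le\alpha/4$ already --- and we are done --- or (b) $\theta_t$ is a constant-factor approximation to $\norm{\nu_t-\mu}^2$ and, by the analysis underlying that lemma, $\norm{\mc P_V^\perp(\mu-\nu_t)}=O(\sigma/\sqrt\alpha)$. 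I claim case (b) is impossible here, because it forces the descent step to succeed. Indeed, for a uniformly random $j\in I$,
\[
    \tfrac1{|I|}\sum_{j\in I}\norm{\mc P_V(x_j-\mu)}^2=\Tr\!\left(\mc P_V\,\mathrm{Cov}_I(x)\,\mc P_V\right)\le\sigma^2\dim V=\Theta(\sigma^2/\alpha),
\]
so by Markov a random inlier has $\norm{\mc P_V(x_j-\mu)}=O(\sigma/\sqrt\alpha)$ with probability $\ge 99/100$; since a random point of $X$ lies in $I$ with probability $\alpha$, among $p=\Theta(\log d/\alpha)$ trials we hit such a $j$ with probability $\ge 1-d^{-15}$. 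For that $j$, writing $\nu':=\nu_t+\mc P_V(x_j-\nu_t)$, we have $\nu'-\mu=\mc P_V(x_j-\mu)-\mc P_V^\perp(\mu-\nu_t)$, hence $\norm{\nu'-\mu}=O(\sigma/\sqrt\alpha)$ by case (b). A weight vector supported on $I$ is feasible (using $\sum_{i\in I}b_i\ge 1$, $b_i\le\tfrac2{\alpha N}$, $\ell=\tfrac1\alpha$) and its centered Ky--Fan $\ell$-norm at $\nu'$ is $O(\sigma^2/\alpha)+\norm{\nu'-\mu}^2=O(\sigma^2/\alpha)$, so $\theta'\le\Cost_{X,b,\ell}(\nu')=O(\sigma^2/\alpha)$, whereas $0.9\,\theta_t\ge\Omega\!\left((2\cdot 10^3)^2\sigma^2/\alpha\right)=\Omega(10^6\sigma^2/\alpha)$, giving $\theta'\le 0.9\,\theta_t$ --- contradicting that no candidate beat $0.9\,\theta_t$. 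Hence case (a) holds and $(\nu_t,\bar w)$ is a sanitizing tuple.

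\paragraph{Iteration count, probability, and the main obstacle.}
A standard warm start gives $\Cost_{X,b,\ell}(\nu_1)\le\poly(Nd/\alpha)\cdot\sigma^2$ (since $\norm{\cdot}_\ell\le\norm{\cdot}_*$ and $\sum_i w_i\le 1$), and in mode (ii) every successful step multiplies $\theta_t$ by at most $0.9$, so after $T=O(\log\poly(Nd/\alpha))$ iterations $\theta_t$ drops below $\tau$ and mode (i) fires; thus the loop always halts. Setting the $\mathsf{ApproxCost}$ failure probability to $\epsilon=d^{-20}$ in each of the $O(Tp)$ calls, and using the $\ge 1-d^{-15}$ success of each random-candidate step above, a union bound gives overall failure probability $\le d^{-10}$; on the complement, the two cases show $\mathsf{DescendCost}$ returns a sanitizing tuple. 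The main obstacle is making the informal dichotomy of \cref{lem:nonalgorithmic} robust to approximation: $\mathsf{ApproxCost}$ delivers only near-optimal $(\theta_t,w^{(t)},M^{(t)})$, so in case (b) one must show that $\mu-\nu_t$ is still captured up to $O(\sigma/\sqrt\alpha)$ leakage by the top-$\Theta(\ell)$ eigenspace of the \emph{approximate} $\wh\Sigma_t$, and push the $\delta$-slack in $\norm{w^{(t)}}_1$ and the $\epsilon$-additive error in $\theta_t$ through every inequality without spoiling the constant $2\cdot 10^3$; this is precisely the content of \cref{lem:dualprocedure}, while everything else here (warm start, geometric decay, the resilience application, and the random-inlier computation) is routine.
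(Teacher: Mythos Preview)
Your overall structure---split on the two termination conditions, resilience for the small-cost exit, and a ``descent would have succeeded'' contradiction for the failed-descent exit---matches the paper's proof, which just cites \cref{lem:basecase} and \cref{lem:dualprocedure}. But your mode~(ii) argument has a genuine quantitative gap.

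You assert in case~(b) that $\|\mc P_V^\perp(\mu-\nu_t)\|=O(\sigma/\sqrt{\alpha})$, and from that deduce $\|\nu'-\mu\|=O(\sigma/\sqrt{\alpha})$. This is false: the leakage bound that actually comes out of the analysis (the paper's \cref{eq:dualprocedure3}) is
\[
    \|\Pi_V^\perp(\mu-\nu_t)\|^2 \;\le\; \frac{4k}{\ell}\,\|\mu-\nu_t\|^2 + c_1\ell\sigma^2,
\]
which retains a term proportional to $\|\mu-\nu_t\|^2$, shrunk only by the factor $4k/\ell$. This is precisely why the paper takes $\ell=100k$, not $\ell=k=1/\alpha$ as you wrote; with your $\ell=1/\alpha$ the factor $4k/\ell$ equals $4$ and you get no shrinkage at all. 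With the correct bound and $\ell=100k$, one obtains $\theta^{(t+1)}\le 0.04\|\mu-\nu_t\|^2+O(\ell\sigma^2)\le 0.2\|\mu-\nu_t\|^2$ under $\|\mu-\nu_t\|\ge r\sigma/\sqrt{\alpha}$ (this is \cref{lem:descent}); then descent failure $\theta^{(t+1)}\ge 0.5\,\theta^{(t)}$ forces $\theta^{(t)}\le 0.4\|\mu-\nu_t\|^2$, and the paper closes with \cref{lem:removal} to show the thresholded weights carry at most $\alpha/4$ inlier mass. Your ``case~(a)'' is exactly the conclusion of \cref{lem:removal}, so your dichotomy is fine as a black-box citation of \cref{lem:dualprocedure}, but the inline argument you give for case~(b) does not stand on its own: you need both the correct leakage bound and the lower bound $\theta^{(t)}\ge 0.4\|\mu-\nu_t\|^2$ to derive the contradiction, and the latter is what you get \emph{from} descent failure, not from the dichotomy itself. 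A minor related point: the paper's termination threshold is $\sigma^2$, not $\Theta(\sigma^2/\alpha)$, and the descent ratio is $0.5$, not $0.9$.
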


\begin{proof} Recall that when the algorithm terminates $\theta^{(t)}\le\sigma^2$ or $\theta^{(t+1)}\ge 0.5\cdot\theta^{(t)}$.  We prove that $(\hat{\mu}, \hat{w})$ is a sanitizing tuple when $\theta^{(t)}\le\sigma^2$ in \pref{lem:basecase}, and when $\theta^{(t+1)}\ge 0.5\cdot\theta^{(t)}$ in \pref{lem:dualprocedure}.
\end{proof}

\begin{restatable}[Main Corollary]{corollary}{maincorollary}
\label{cor:maincorollary} 
Let $X = \{x_1,...,x_N\}$ be a dataset with an inlier set $I$ of size $|I| = \alpha N$ satisfying $Cov_I(x) \preceq \sigma^2I$. Let $\mu = \sum_{i \in I}x_i$.  $OutputList(X,\alpha)$ returns a list $\calL$ of length $O(\frac{1}{\alpha})$ such that there exists $\hu^* \in \calL$ satisfying $\norm{\hu^* - \mu} \leq r\frac{\sigma}{\sqrt{\alpha}}$ with high probability $1 - \frac{1}{d^{10}}$
\end{restatable}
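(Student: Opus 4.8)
The plan is to deduce \cref{cor:maincorollary} from \cref{thm:maintheorem} by a budget-accounting argument of exactly the kind sketched for the inefficient algorithm in \cref{sec:techset}. Recall that $OutputList(X,\alpha)$ initializes $\calL=\emptyset$ and a budget vector $b^{(1)}$ with $b^{(1)}_i=\frac{2}{\alpha N}$ (so $\sum_i b^{(1)}_i=\frac2\alpha$ and $\sum_{i\in I}b^{(1)}_i=2$), and for $\frac4\alpha$ rounds repeatedly sets $(\hu^{(t)},\hat w^{(t)})\gets DescendCost(X,b^{(t)})$, appends $\hu^{(t)}$ to $\calL$, and updates $b^{(t+1)}=b^{(t)}-\hat w^{(t)}$. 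The list length $|\calL|=\frac4\alpha=O(1/\alpha)$ is then immediate, so the content is the recovery guarantee, which I would prove by contradiction: assume \emph{every} $\hu^{(t)}\in\calL$ satisfies $\norm{\hu^{(t)}-\mu}\ge r\frac{\sigma}{\sqrt\alpha}$, where $r=2\cdot10^3$ is the constant from the definition of a sanitizing tuple.

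The heart of the argument is a single induction on $t$ proving the invariant that at the start of round $t$ one has $b^{(t)}_i\in[0,\frac{2}{\alpha N}]$ for every $i$ and $\sum_{i\in I}b^{(t)}_i\ge 1$. The coordinatewise bound is automatic since $\hat w^{(t)}\in\Phi_{b^{(t)}}$ forces $0\le b^{(t+1)}_i\le b^{(t)}_i$. For the inlier bound: the invariant at round $t$ is precisely the hypothesis of \cref{thm:maintheorem}, so that theorem gives that $(\hu^{(t)},\hat w^{(t)})$ is a sanitizing tuple (w.h.p.\ --- see below); item~1 of the sanitizing-tuple definition together with the contradiction assumption then yields $\sum_{i\in I}\hat w^{(t)}_i\le\frac\alpha4$, hence $\sum_{i\in I}b^{(t+1)}_i\ge\sum_{i\in I}b^{(t)}_i-\frac\alpha4$, and unrolling from $\sum_{i\in I}b^{(1)}_i=2$ gives $\sum_{i\in I}b^{(t+1)}_i\ge 2-t\cdot\frac\alpha4\ge 1$ for all $t\le\frac4\alpha$, closing the induction. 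Simultaneously, item~2 gives $\|\hat w^{(t)}\|_1\ge 0.5$, so $\sum_i b^{(t+1)}_i\le\sum_i b^{(t)}_i-0.5$; after all $\frac4\alpha$ rounds this forces $\sum_i b^{(4/\alpha+1)}_i\le\frac2\alpha-\frac4\alpha\cdot0.5=0$, whereas the invariant gives $\sum_{i\in I}b^{(4/\alpha+1)}_i\ge 1$. Since $\sum_{i\in I}b\le\sum_i b$ always, this is a contradiction, so some $\hu^*\in\calL$ must satisfy $\norm{\hu^*-\mu}\le r\frac{\sigma}{\sqrt\alpha}$.

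For the high-probability statement, each call to $DescendCost$ succeeds with probability at least $1-1/d^{10}$ by \cref{thm:maintheorem}; since only $\frac4\alpha=O(1/\alpha)$ calls occur, a union bound suffices once the per-call failure probability is pushed down to $O(\alpha/d^{11})$, which costs only polylogarithmic factors by taking the accuracy parameter $\epsilon$ of \cref{lem:informalapprox} accordingly small, giving overall success probability $1-1/d^{10}$. The one subtlety --- the ``main obstacle'', though it is bookkeeping rather than mathematical depth --- is the apparent circularity inside the induction: applying \cref{thm:maintheorem} at round $t$ needs the precondition $\sum_{i\in I}b^{(t)}_i\ge1$, but lower-bounding $\sum_{i\in I}b^{(t)}_i$ needs the sanitizing guarantees of the earlier rounds. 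This is dissolved by running the whole argument under the contradiction assumption, which makes item~1 of the sanitizing definition available at \emph{every} round and so lets the budget invariant and the correctness of each $DescendCost$ call be established together in the same induction. All the genuine work --- that $DescendCost$ does output a sanitizing tuple --- is already packaged in \cref{thm:maintheorem}, i.e.\ in \cref{lem:basecase} and \cref{lem:dualprocedure}.
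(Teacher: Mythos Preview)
Your proposal is correct and follows essentially the same approach as the paper's proof: a contradiction argument combined with an induction maintaining the invariant that the inlier budget stays at least $1$, so that \cref{thm:maintheorem} applies at every round, while the total budget drops by at least $0.5$ per round and is exhausted after $4/\alpha$ iterations. The only cosmetic difference is that the paper's \textsf{OutputList} is written as a \texttt{while $\|b\|_1>0$} loop rather than a fixed $4/\alpha$ rounds, but your invariant $\sum_{i\in I}b^{(t)}_i\ge1$ already guarantees the loop continues long enough for the contradiction to fire, so this does not affect the argument.
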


The proof of the corollary is elementary and similar to the proof of correctness for the inefficient algorithm.  For example, see \cref{sec:miscellaneous}.

\paragraph{ Algorithm Description} In the ensuing sections we will drop iteration indices when they are clear from context.  The OutputList algorithm takes as input $(X,\alpha)$ and iteratively runs the DescendCost subroutine.  DescendCost sanitizes the dataset and returns a tuple $(\hu, \hat w)$ comprised of a candidate mean $\hu \in \R^d$ and a set of weights $\hat w$ to be removed in the next iteration of OutputList.  DescendCost takes as input $(X,b)$ where $b$ is a weight upper bound vector initialized to be $b_i = \frac{2}{\alpha N}$ for all $i \in [N]$.  The inlier weight is initialized to be $\sum_{i \in \calI} b_i = 2$ as an overconservative way of dealing with the fact that up to an $\frac{\alpha}{4}$ fraction of the inlier weight can be removed per iteration of OutputList.  The first subroutine within DescendCost is a WarmStart procedure which takes in $(X,\alpha)$ and outputs $(\theta,\nu,\bar{w})$ where $ApproxCost_{X,b,\ell}(\nu) = \theta \leq O(d\sigma^2)$ and $\bar{w}$ is the set of dual weights. 

In the main loop of DescendCost we shift the dataset $X$ so that the origin is at $\nu^{(t)}$ and let $X^{(t)}$ denote this shifted dataset.  Let $\hat\Sigma$ be the weighted empirical covariance according to the weights $\bar{w}$ output by $ApproxCost_{X,b,\ell}(\nu^{(t)})$.  We let $V$ be the top $\ell := O(k)$ eigenspace of $\hat{\Sigma}$ and let $\Pi_V$ denote the corresponding projection operator.  If there exists a point $\nu^{(t+1)}$ in $V$ with a substantially smaller cost i.e $\theta^{(t+1)} \leq 0.5\theta^{(t)}$, then a brute force search through $V$ would lower cost at an exponential rate and in no more than $O(\log(d))$ iterations the cost would be smaller than $\sigma^2$.  The algorithm then terminates, outputting the final iterate $\nu^{(t)}$ and the dual weights $\bar{w}^{(t)}$.  It is simple to show that $(\nu^{(t)},\bar{w}^{(t)})$ is a sanitizing tuple which we prove in \pref{lem:basecase}.  

A brute force search through $V$ is inefficient.  Instead, we project the dataset $X^{(t)}$ onto $V$.  We evaluate the cost at $p = O(\frac{\log(d)}{\log(\frac{1}{1 - \alpha})})$ randomly chosen projected datapoints.  We choose $\nu^{(t+1)}$ to be the projected datapoint with the smallest cost.  Although the projected datapoints are by no means an exhaustive search of the subspace $V$, if this procedure fails to make sufficient progress i.e $\theta^{(t+1)} \geq 0.5\theta^{(t)}$ then a corresponding procedure uses the dual weights to output a sanitizing tuple.  

The weight removal procedure is as follows.   We sort the vectors $\{\Pi_V(x_i-\nu^{(T)})\}_{i=1}^N$ by euclidean norm.  This sorting determines an ordering of the weights $\bar{w}^{(T)}_1,...,\bar{w}^{(T)}_N$. We pass through the sorted list, and find the smallest $m \in [N]$ such that $\sum_{i=1}^m \bar{w}^{(T)}_i \geq 0.5$.  We set $\hat{w}_i = \bar{w}^{(T)}_i$ for $i = 1,...,m$ and $\hat{w}_i = 0$ for $i > m$.  Finally we output $(\nu^{(T)},\hat w)$ as the sanitizing tuple.  

\begin{algorithm}
\SetAlgoLined
\KwIn{Set of points  $X = \{x_1,...,x_N\}$ in $\R^d$, inlier fraction $\alpha \in [0,\frac{1}{2}]$}
    $b_i := \frac{2}{\alpha n} \text{ }\forall i \in [N], \;  L := \{ \}$\;
    \While{$\norm{b}_1 > 0$} {
    $(\hu, \hat w) =  \textsf{DescendCost}(X, b)$\;
    $L = L\cup\{\hu\}$\;
    $b =  b - \hat w$\;
    }
\KwOut{$L = \{\hu_1,...,\hu_q\}$ a list of $q \leq \frac{4}{\alpha}$ candidate means}

\caption{OutputList($X,\alpha$)}
\label{alg:outputlist}
\end{algorithm}

\begin{restatable}[Unsuccessful Descent Implies Weight Removal Succeeds]{lemma}{dualprocedure}
\label{lem:dualprocedure} 
 \textsf{DescendCost}(X,b) is given dataset $X$ and weight upper bound $b$ satisfying the assumptions of \pref{thm:maintheorem}.  If at iteration $t$, $\theta^{(t+1)} \geq 0.5 \theta^{(t)}$ then $DescendCost(X,b)$ outputs  
$(\hu, \hat{w})$ satisfying $\|\hu - \mu\| \le r\frac{1}{\sqrt{\alpha}}$ or $\sum_{i \in I}\hat{w}_i \le \frac{\alpha}{4}$.
\end{restatable}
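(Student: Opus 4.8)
The plan is to prove the contrapositive. Write $\nu := \nu^{(t)}$, $\bar w := \bar w^{(t)}$ and $\theta := \theta^{(t)}$ for the iterate, dual weights and approximate cost at the iteration where the descent stalls, and assume $\norm{\nu - \mu} > r/\sqrt{\alpha}$; we must show $\sum_{i\in I}\hat w_i \le \alpha/4$. Since the truncation step already produces $\hat w$ with $0\le\hat w_i\le b_i$ and $\norm{\hat w}_1 \ge 0.5$, this alone certifies that $(\nu,\hat w)$ is a sanitizing tuple for $(X,I,b)$, which with \cref{thm:maintheorem} is what \cref{alg:outputlist} needs. The argument is the algorithmic, approximate-cost analogue of \cref{lem:nonalgorithmic} and has the same bones: either the approximate cost at $\nu$ is a constant-factor estimate of $\norm{\nu-\mu}^2$, in which case the descent step \emph{must} have made progress --- contradicting the stalling hypothesis --- or it is not, in which case the truncation of $\bar w$ along $V$ strips away essentially all of the inlier mass. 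The two complications beyond \cref{lem:nonalgorithmic} are (i) the cost is reached only through $\mathsf{ApproxCost}$, with the one-sided error of \cref{lem:informalapprox}, so every comparison must be loosened by the slack $\delta$; and (ii) the descent queries the cost only at $p = \Theta\!\left(\log d / \log\tfrac1{1-\alpha}\right)$ uniformly random projected data points rather than over the whole affine slice $\nu + V$, so ``progress is possible'' must be upgraded to ``progress is found with probability $1 - d^{-10}$''.

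First I would fix the ambient estimates. A feasible weight vector $w'$ supported on the inliers exists since $\sum_{i\in I}b_i\ge 1$; its weighted second moment about $\nu$ is $\preceq 2\sigma^2 I + (\mu'-\nu)(\mu'-\nu)^\transp$ where $\mu'$ is the $w'$-weighted mean (which satisfies $\norm{\mu'-\mu}\le \sqrt 2\,\sigma$ by the covariance bound and Cauchy--Schwarz), so $\theta \le Cost_{X,b,\ell}(\nu) \le 2\ell\sigma^2 + \norm{\mu'-\nu}^2 \le 1.05\,\norm{\nu-\mu}^2$ for $r$ a large enough constant and $\ell = \Theta(1/\alpha)$, while \cref{lem:informalapprox} supplies the matching lower bound $\theta \ge (1-\delta)\,Cost_{X,b,\ell}(\nu)$. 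Let $\hat\Sigma := \sum_i \bar w_i (x_i - \nu)(x_i - \nu)^\transp$, let $V$ be its top-$\ell$ eigenspace with projector $\Pi_V$, and record the two facts I will lean on: $\sum_i \bar w_i\norm{\Pi_V(x_i-\nu)}^2 = \Tr(\Pi_V\hat\Sigma\Pi_V) = \norm{\hat\Sigma}_\ell = \theta$, and the primal maximizer $M^*$ over $\mathcal F_\ell$ in \cref{eq:cost} satisfies $M^*\preceq \Pi_V$ (up to the usual perturbation when $\lambda_\ell(\hat\Sigma) = \lambda_{\ell+1}(\hat\Sigma)$). Substituting $w'$ into the primal form of \cref{eq:cost} and using $\langle A, M^*\rangle \le \norm{A}_\ell$ for $A\succeq 0$ then gives $\norm{\Pi_V(\nu-\mu)}^2 \ge (\nu-\mu)^\transp M^*(\nu-\mu) \ge Cost_{X,b,\ell}(\nu) - O(\ell\sigma^2)$, i.e. $\nu-\mu$ lies almost entirely inside $V$ once the cost is comparable to $\norm{\nu-\mu}^2$.

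Next I would carry out the dichotomy. Suppose $Cost_{X,b,\ell}(\nu) \ge c_0\norm{\nu-\mu}^2$ for a constant $c_0<1$ to be fixed. By Markov applied to $\tfrac1{\alpha N}\sum_{i\in I}\norm{\Pi_V(x_i-\mu)}^2 = \Tr(\Pi_V\Cov_I(x)\Pi_V) \le \ell\sigma^2$, at least half the inliers $\bar x$ satisfy $\norm{\Pi_V(\bar x - \mu)}^2 = O(\ell\sigma^2)$; for such an inlier the point $\Pi_V(\bar x - \nu) + \nu$ has squared distance to $\mu$ equal to $\norm{\Pi_V(\bar x-\mu)}^2 + \norm{\Pi_V^\perp(\nu-\mu)}^2 \le O(\ell\sigma^2) + \big(1.05\,\norm{\nu-\mu}^2 - Cost_{X,b,\ell}(\nu) + O(\ell\sigma^2)\big)$ by the previous paragraph, and hence (using $\ell\sigma^2 \le \norm{\nu-\mu}^2/r^2$) $\mathsf{ApproxCost}(\Pi_V(\bar x-\nu)+\nu) \le Cost_{X,b,\ell}(\Pi_V(\bar x-\nu)+\nu) < 0.5\,\theta$ once $c_0$ is close enough to $1$ and $r$ large. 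Since these ``good'' inliers form an $\tfrac{\alpha}{2}$-dense subset of $X$, the probability that none of the $p$ samples hits one is $(1-\alpha/2)^p \le d^{-10}$, so with that probability the descent outputs $\theta^{(t+1)} \le 0.5\,\theta^{(t)}$, contradicting stalling. Therefore we are in the other case, $\theta \le Cost_{X,b,\ell}(\nu) < c_0\norm{\nu-\mu}^2$, and here I would run the weight-removal argument of \cref{lem:nonalgorithmic}: the dual minimizer can only drive $\norm{\hat\Sigma}_\ell$ this far below $\norm{\nu-\mu}^2$ either by placing $\le\alpha/4$ total mass on inliers to begin with, or --- when $\nu-\mu$ is forced mostly into $V$ --- by concentrating its inlier mass on points with $\norm{\Pi_V(x_i-\nu)}$ close to $\norm{\nu-\mu}$, which the truncation to the smallest-$\norm{\Pi_V(\cdot-\nu)}$ half of the weight discards (recall $\sum_{\mathrm{kept}}\hat w_i\norm{\Pi_V(x_i-\nu)}^2 \le \theta$, so every kept point has $\norm{\Pi_V(x_i-\nu)}^2 \le 2\theta$); invoking the resilience inequality \cref{fact:resilience} on the projected, truncated weights together with $\norm{\nu-\mu}\ge r/\sqrt{\alpha}$ then yields $\sum_{i\in I}\hat w_i\le\alpha/4$.

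The step I expect to be the main obstacle is pinning the constants so that these two regimes actually meet: the crude primal estimate only forces $\nu-\mu$ to have a constant fraction of its energy in $V$, so $c_0$ (the analogue of the $0.4$ in \cref{lem:nonalgorithmic}) must be chosen simultaneously large enough that ``$Cost\ge c_0\norm{\nu-\mu}^2 \Rightarrow$ descent succeeds'' and small enough that ``$Cost < c_0\norm{\nu-\mu}^2 \Rightarrow$ truncation clears the inliers'' both hold, all while absorbing the $\delta$-slack in $\mathsf{ApproxCost}$, the $d^{-10}$ sampling failure, and the small fraction of atypical inliers whose $\norm{\Pi_V(\cdot-\mu)}$ is large. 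I would handle this by carrying every constant symbolically through the inequalities above and fixing the radius constant $r$ last, large enough that each term of the form $\ell\sigma^2/\norm{\nu-\mu}^2$ is negligible; the most delicate point inside the application of \cref{fact:resilience} is that the projected second-moment bound is only $\theta = O(\norm{\nu-\mu}^2)$ rather than $O(\sigma^2)$, which is exactly why being in the regime $\theta < c_0\norm{\nu-\mu}^2$ is essential.
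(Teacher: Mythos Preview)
Your overall dichotomy and structure match the paper: the proof of \cref{lem:dualprocedure} there is a short wrapper that assumes WLOG $\|\nu-\mu\| > r\sigma/\sqrt{\alpha}$ and $\sum_{i\in I}\bar w_i > \alpha/4$, invokes \cref{lem:descent} to get $\theta^{(t+1)} \le 0.2\|\nu-\mu\|^2$, deduces from stalling that $\theta^{(t)} \le 0.4\|\nu-\mu\|^2$, and finishes with \cref{lem:removal}. You are essentially expanding both auxiliary lemmas inline.

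The genuine gap is your derivation of the containment $\|\Pi_V(\nu-\mu)\|^2 \ge \Cost - O(\ell\sigma^2)$. You claim the primal maximizer $M^*$ of \cref{eq:cost} satisfies $M^* \preceq \Pi_V$, but $M^*$ is the saddle-point matrix of the \emph{exact} min-max game, whereas $\Pi_V$ is the top-$\ell$ eigenprojector of $\hat\Sigma$ built from the \emph{approximate} dual weights $\bar w$ returned by $\mathsf{ApproxCost}$. These need not compare in the Loewner order: $\bar w$ is only guaranteed to satisfy $\bigl\|\sum_i \bar w_i (x_i-\nu)(x_i-\nu)^\transp\bigr\|_\ell \le \Cost$ and $\|\bar w\|_1 = 1-\delta$, which does not constrain its top eigenspace. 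In particular, if $\bar w$ happens to place negligible mass on inliers, $V$ can be orthogonal to $\mu-\nu$ even when $\Cost \approx \|\nu-\mu\|^2$, so the inequality you want is false without further hypotheses. Both your descent branch and your removal branch lean on this step, so as written neither goes through.

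The paper's route to the same containment (\cref{eq:dualprocedureproject}) avoids $M^*$ entirely. It \emph{uses} the otherwise-idle assumption $\sum_{i\in I}\bar w_i \ge \alpha/4$: for the unit vector $\varphi = \Pi_V^\perp(\mu-\nu)/\|\Pi_V^\perp(\mu-\nu)\|$ one has $\varphi^\transp\hat\Sigma\varphi \le \sigma_\ell(\hat\Sigma) \le \|\hat\Sigma\|_\ell/\ell$, while expanding $\varphi^\transp\hat\Sigma\varphi \ge \sum_{i\in I}w_i\langle\varphi,x_i-\nu\rangle^2$ and using the inlier covariance bound plus $\sum_{i\in I}w_i\ge \tfrac{1}{4k}$ forces $\|\Pi_V^\perp(\mu-\nu)\|^2 \le \tfrac{4k}{\ell}\|\mu-\nu\|^2 + O(k\sigma^2)$. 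This is a uniform bound (independent of $\Cost$), which is why the paper can take $\ell = 100k$ and fix $c_0 = 0.4$ cleanly. Once you swap your $M^*$ step for this Rayleigh-quotient argument, the rest of your outline (Markov on projected inliers for descent, Markov on $\bar w$-mass for removal) lines up with the paper's \cref{lem:descent} and \cref{lem:removal}, and the constant-juggling you anticipate is resolved by $r = 2\cdot 10^3$.

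A smaller issue: in the removal branch you assert ``every kept point has $\|\Pi_V(x_i-\nu)\|^2 \le 2\theta$'', but $\sum_{\mathrm{kept}}\hat w_i\|\Pi_V(x_i-\nu)\|^2 \le \theta$ only bounds the weighted average, not each term. The paper instead runs Markov in the other direction: it shows $\Pr_{i\in I}\bigl[\|\Pi_V(x_i-\nu)\|^2 < 0.8\|\nu-\mu\|^2\bigr]\le \alpha/4$ directly from the inlier covariance bound and \cref{eq:dualprocedureproject}, without needing a per-point threshold on the kept set.
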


Proving \pref{lem:dualprocedure} is the primary objective of the remainder of this section.  We will make use of the following two lemmas and defer their proofs to the appendix. The first \pref{lem:descent} states that if $(\nu^{(t)}, \bar{w}^{(t)})$ is not a sanitizing tuple then the descent procedure succeeds.

\begin{restatable}[Unsanitized Tuple Implies Success in Descending Cost]{lemma}{descent}
\label{lem:descent}
Let $(X,b)$ satisfying the assumptions of \pref{thm:maintheorem}. If at iteration $t$ of DescendCost(X,b), the tuple $(\nu^{(t)}, \bar{w}^{(t)})$ is not a sanitizing tuple, then $\theta^{(t+1)} \leq 0.04\norm{\mu - \nu^{(t)}}^2 + c k\sigma^2$ for $c = 10^5$ with high probability $1 - \frac{1}{d^{10}}$.

\end{restatable}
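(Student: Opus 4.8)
\emph{Proof plan.} The plan is to show that one of the data points that \textsf{DescendCost} projects onto $V$ and re-scores at iteration $t$ lands much closer to $\mu$ than $\nu^{(t)}$ does, so that its cost --- hence $\theta^{(t+1)}$ --- is small. Write $v := \mu - \nu^{(t)}$ and $\hat\Sigma := \sum_i \bar w^{(t)}_i(x_i-\nu^{(t)})(x_i-\nu^{(t)})^{\transp}$, so that $\theta^{(t)} = \|\hat\Sigma\|_\ell$ and $V$ is the span of the top $\ell$ eigenvectors of $\hat\Sigma$. Since every $\bar w^{(t)} \in \Phi_b(1-\delta)$ automatically satisfies the second requirement in the definition of a sanitizing tuple, the hypothesis that $(\nu^{(t)}, \bar w^{(t)})$ is \emph{not} sanitizing hands us exactly two facts: $\|v\| \ge 2\cdot 10^3\,\sigma/\sqrt\alpha$, and the dual weights retain inlier mass $p := \sum_{i\in I}\bar w^{(t)}_i > \alpha/4$.

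First I would record two estimates on $\hat\Sigma$. For the \emph{upper} estimate: since $\sum_{i\in I}b_i \ge 1$ and $b_i \le 2/(\alpha N)$ there is $w\in\Phi_b(1)$ supported on $I$, and the usual decomposition of $\sum_i w_i(x_i-\nu_0)(x_i-\nu_0)^{\transp}$ into its $w$-covariance plus a rank-one term, using $Cov_I(x)\preceq\sigma^2 I$, $w_i\le 2/(\alpha N)$ and $\|\sum_i w_ix_i - \mu\|\le\sqrt2\,\sigma$ (Cauchy--Schwarz against $Cov_I$), yields $Cost(\nu_0) \le 2\|\mu-\nu_0\|^2 + 3\ell\sigma^2$ for every center $\nu_0$. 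In particular $\theta^{(t)}\le Cost(\nu^{(t)}) \le 2\|v\|^2+3\ell\sigma^2$, whence $\lambda_{\ell+1}(\hat\Sigma) \le \theta^{(t)}/\ell \le 0.001\,\alpha\|v\|^2$ once $\ell = \Theta(1/\alpha)$ is taken with a large enough constant (absorbing the $\sigma^2$ term via $\sigma^2 \le \alpha\|v\|^2/(4\cdot 10^6)$). For the \emph{lower} estimate: dropping the PSD outlier terms and applying matrix Jensen, $\tfrac1p uu^{\transp}\preceq \sum_{i\in I}\bar w^{(t)}_i(x_i-\mu)(x_i-\mu)^{\transp}$ with $u := \sum_{i\in I}\bar w^{(t)}_i(x_i-\mu)$, then completing the square gives $\hat\Sigma \succeq p\,(v+u/p)(v+u/p)^{\transp}$; since $u/p$ is the barycenter of a probability measure on $I$ of density $\le \tfrac{2}{p\,|I|}$, resilience of a covariance-bounded set (\cref{fact:resilience}) gives $\|u/p\| \le \sigma\sqrt{2/p} \le 3\sigma/\sqrt\alpha \le \|v\|/500$, so $\hat\Sigma \succeq 0.2\,\alpha\|v\|^2\,\hat\omega\hat\omega^{\transp}$ for a unit vector $\hat\omega$ within $0.01$ of $v/\|v\|$.

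The crux is to deduce from these that $v$ lies almost entirely inside $V$, i.e.\ $\|\mathcal{P}_V^\perp v\| \le 0.1\|v\|$. Because $V$ is spanned by eigenvectors of $\hat\Sigma$, that matrix is block diagonal across $V\oplus V^\perp$, so $\mathcal{P}_V^\perp \hat\Sigma\, \mathcal{P}_V^\perp \preceq \lambda_{\ell+1}(\hat\Sigma)\,I$; conjugating the lower estimate gives $\mathcal{P}_V^\perp \hat\Sigma\, \mathcal{P}_V^\perp \succeq 0.2\,\alpha\|v\|^2\,(\mathcal{P}_V^\perp\hat\omega)(\mathcal{P}_V^\perp\hat\omega)^{\transp}$; comparing top eigenvalues of these two matrices against $\lambda_{\ell+1}(\hat\Sigma) \le 0.001\,\alpha\|v\|^2$ forces $\|\mathcal{P}_V^\perp\hat\omega\|^2 \le 0.005$, and adding the $0.01$ angular slack between $\hat\omega$ and $v/\|v\|$ gives $\|\mathcal{P}_V^\perp v\| \le 0.1\|v\|$. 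I expect this step to be the main obstacle: it determines how large the absolute constant inside $\ell = \Theta(1/\alpha)$ must be, and the eigenvalue comparison must be made against the norm of the $V^\perp$-block $\lambda_{\ell+1}(\hat\Sigma)$; comparing against $\lambda_1(\hat\Sigma)$ would only give the useless bound $\|\mathcal{P}_V^\perp v\|^2 \le (1-\Omega(\alpha))\|v\|^2$.

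Finally, the probabilistic and bookkeeping part. A uniformly random point of $X$ is an inlier with probability $\alpha$, and conditioned on that it satisfies $\|\Pi_V(x_i-\mu)\|^2 \le 2\ell\sigma^2$ with probability $\ge \tfrac12$ by Markov applied to $\mathbb{E}_{i\sim\mathrm{unif}(I)}\|\Pi_V(x_i-\mu)\|^2 = \Tr(\Pi_V\, Cov_I(x)) \le \ell\sigma^2$; so among the $\Theta\big(\log d/\log(1/(1-\alpha))\big)$ points sampled at iteration $t$, with probability $\ge 1-d^{-11}$ some sampled $x_j$ is such a ``good inlier''. For the candidate $\nu' := \nu^{(t)} + \Pi_V(x_j-\nu^{(t)})$ that \textsf{DescendCost} scores, the identity $\nu'-\mu = \Pi_V(x_j-\mu) - \mathcal{P}_V^\perp v$ together with the good-inlier bound and $\|\mathcal{P}_V^\perp v\|\le 0.1\|v\|$ gives $\|\nu'-\mu\|^2 \le 2\big(2\ell\sigma^2 + 0.01\|v\|^2\big)$; feeding this into the upper estimate of the second paragraph at the center $\nu'$ gives $Cost(\nu') \le 0.04\|v\|^2 + O(\ell\sigma^2)$, and since $\theta^{(t+1)}$ is the minimum of $\packred$ over the scored candidates, $\theta^{(t+1)} \le \packred_{X,b,\ell}(\nu') \le Cost(\nu') \le 0.04\|\mu-\nu^{(t)}\|^2 + c\,k\sigma^2$ with $c=10^5$ (using $\ell = \Theta(k)$). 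A union bound over the $O(\log d)$ invocations of $\packred$ (each failing with probability $\eps$) and the sampling step gives the claimed probability $1-d^{-10}$. The remaining subtleties --- that $\packred$ returns a weight of mass $1-\delta$ rather than $1$, and that $V$ comes from an approximate eigendecomposition --- perturb all of the above only at lower order and are absorbed into the constants.
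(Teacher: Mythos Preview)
Your proposal is correct and follows the paper's outline exactly: bound $Cost$ from above at any center (your ``upper estimate'' is the paper's \cref{lem:costupper}), show $v=\mu-\nu^{(t)}$ lies mostly in $V$, exhibit a random inlier $x_j$ with $\|\Pi_V(x_j-\mu)\|^2=O(\ell\sigma^2)$, and re-score. The one genuine difference is how you show $\|\Pi_V^\perp v\|$ is small. The paper works \emph{scalarly}: it sets $\varphi=\Pi_V^\perp v/\|\Pi_V^\perp v\|$, uses $\varphi^\top\hat\Sigma\varphi\le \sigma_\ell(\hat\Sigma)\le \theta/\ell$, expands $\varphi^\top\hat\Sigma\varphi\ge\sum_{i\in I}w_i\langle\varphi,x_i-\nu\rangle^2$, and applies $2ab\le a^2/c^2+c^2b^2$ to isolate $\|\Pi_V^\perp v\|^2$ with the clean coefficient $4k/\ell$. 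You instead work \emph{matricially}: the rank-one Jensen bound $\hat\Sigma\succeq p\,(v+u/p)(v+u/p)^\top$, conjugated by $\Pi_V^\perp$ and compared to $\lambda_{\ell+1}(\hat\Sigma)\le\theta/\ell$, yields $\|\Pi_V^\perp(v+u/p)\|^2\le(4k/\ell)\|v\|^2+O(k\sigma^2)$. Both routes land on essentially the same $(4k/\ell)\|v\|^2$ bound; yours is arguably cleaner conceptually (signal-versus-tail eigenvalue), the paper's makes the $\ell/k$ dependence drop out more transparently.

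One quantitative quibble: with the algorithm's fixed $\ell=100k$, your eigenvalue comparison gives $\|\Pi_V^\perp v\|\lesssim 0.2\|v\|$, not the $0.1\|v\|$ you claim; the slack you invoke (``once $\ell=\Theta(1/\alpha)$ is taken with a large enough constant'') is not actually available. This is harmless, however: since $\Pi_V(x_j-\mu)\perp\Pi_V^\perp v$, the factor of $2$ you inserted in $\|\nu'-\mu\|^2\le 2(\cdots)$ is unnecessary, and using \cref{lem:costupper} (coefficient $1$ rather than your $2$) recovers $\theta^{(t+1)}\le 0.04\|v\|^2+O(k\sigma^2)$ on the nose. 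Your invocation of \cref{fact:resilience} for $\|u/p\|$ is also a slight misnomer---the bound you want is just Jensen against the inlier covariance, as you in fact sketch---but the calculation is correct.
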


We will also need \pref{lem:removal} which states that if the cost is a constant factor smaller than $\norm{\mu - \nu^{(t)}}$ then the weight removal procedure outputs a sanitizing tuple. 

\begin{restatable}[Small Cost Implies Weight Removal Succeeds]{lemma}{removal}
\label{lem:removal}
Let $(X,b)$ satisfy the conditions of \pref{thm:maintheorem}. Let $\nu^{(t)}$ be the iterate at iteration $t$.  If $ApproxCost_{X,b,\ell}(\nu^{(t)}) \leq \zeta \norm{\mu - \nu^{(t)}}^2$ for $\zeta \in [0,0.4]$.  Then the weight removal procedure outputs $\hat w$ satisfying $\norm{\hat w}_1 \geq 0.5$ and $\sum_{i \in I}\hat{w}_i \leq \frac{\alpha}{4}$. 

\end{restatable}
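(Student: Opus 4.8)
The plan is to argue by contradiction. Suppose the weights $\hat w$ produced by the weight-removal step satisfy $\sum_{i\in I}\hat w_i > \tfrac{\alpha}{4}$; I will contradict the fact that $\norm{\mu-\nu^{(t)}}\ge 2\cdot 10^3\,\sigma/\sqrt\alpha$, which holds in the regime in which this lemma is invoked inside \pref{lem:dualprocedure} (if it fails, the first alternative $\|\nu^{(t)}-\mu\|\le r/\sqrt\alpha$ of \pref{lem:dualprocedure} already holds and there is nothing to prove). Write $\nu=\nu^{(t)}$, let $\bar w\in\Phi_b(1-\delta)$ be the dual weights returned together with the value $\theta=ApproxCost_{X,b,\ell}(\nu)$, and set $\hat\Sigma=\sum_{i=1}^N\bar w_i(x_i-\nu)(x_i-\nu)^\top$ with top-$\ell$ eigenspace $V$ and projector $\Pi_V$; by \pref{lem:informalapprox}, $\theta=\max_{M\in\mc{F}_\ell}\inp{\hat\Sigma}{M}=\norm{\hat\Sigma}_\ell=\inp{\Pi_V}{\hat\Sigma}\le\zeta\norm{\mu-\nu}^2$. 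Put $a_i:=\norm{\Pi_V(x_i-\nu)}$, so $\sum_i\bar w_i a_i^2=\inp{\Pi_V}{\hat\Sigma}=\theta$. The claim $\norm{\hat w}_1\ge 0.5$ is immediate since $\norm{\bar w}_1=1-\delta\ge 0.5$. By Markov, the $\bar w$-mass on $\{a_i^2>T\}$ is at most $\theta/T$, so for $T$ slightly above $\tfrac{\zeta}{0.5-\delta}\norm{\mu-\nu}^2$ this mass is $<0.5-\delta$, hence the $\bar w$-mass on $\{a_i^2\le T\}$ exceeds $0.5$, so the last retained point $m$ has $a_m^2\le T$, i.e. $\hat w$ is supported on $\{i:a_i\le R\}$ with $R^2\le\rho^2\norm{\mu-\nu}^2$ for a fixed constant $\rho<1$ (concretely $\rho^2=\tfrac{\zeta}{0.5-\delta}<1$ since $\zeta\le0.4$, $\delta=0.01$). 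Writing $J:=\{i\in I:a_i\le R\}$, the contradiction hypothesis gives $\sum_{i\in J}\bar w_i\ge\sum_{i\in I}\hat w_i>\tfrac\alpha4$; since $\bar w_i\le b_i\le\tfrac2{\alpha N}$ and $|I|=\alpha N$, this forces $|J|>\tfrac\alpha8|I|$, and in particular $W_I:=\sum_{i\in I}\bar w_i>\tfrac\alpha4$.

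The key step is to show that the offset $\mu-\nu$ lies almost entirely inside $V$, namely $\norm{\Pi_V(\mu-\nu)}\ge\beta\norm{\mu-\nu}$ for a constant $\beta>\rho$. Let $\hat g=(\mu-\nu)/\norm{\mu-\nu}$ and $p=\norm{\Pi_V^\perp\hat g}$; assuming $p>0$, let $\hat h=\Pi_V^\perp\hat g/p\in V^\perp$. Since the top-$\ell$ eigenvalues of $\hat\Sigma$ sum to $\theta$, $\hat h^\top\hat\Sigma\hat h\le\lambda_{\ell+1}(\hat\Sigma)\le\theta/\ell$. On the other hand $\hat\Sigma\succeq\sum_{i\in I}\bar w_i(x_i-\nu)(x_i-\nu)^\top$, and using $\inp{\hat h}{x_i-\nu}=\inp{\hat h}{x_i-\mu}+p\norm{\mu-\nu}$ together with $\norm{\sum_{i\in I}\bar w_i(x_i-\mu)}\le\sqrt{2W_I}\,\sigma$ — which follows from $\sum_{i\in I}\bar w_i(x_i-\mu)(x_i-\mu)^\top\preceq 2\sigma^2I$ (as $\bar w_i\le\tfrac2{\alpha N}$) and Cauchy--Schwarz — one gets $\hat h^\top\hat\Sigma\hat h\ge p^2W_I\norm{\mu-\nu}^2-2p\sqrt{2W_I}\,\sigma\norm{\mu-\nu}$. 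Combining the two bounds and dividing by $\norm{\mu-\nu}^2$ yields the quadratic inequality $p^2W_I-2p\sqrt{2W_I}\,\tfrac{\sigma}{\norm{\mu-\nu}}\le\tfrac{\zeta}{\ell}$; solving it and using $W_I>\tfrac\alpha4$, $\sigma/\norm{\mu-\nu}\le\sqrt\alpha/(2\cdot10^3)$, and $\ell=\Theta(1/\alpha)$ with a sufficiently large hidden constant gives $p\le 1-\beta$ with $\beta>\rho$ (the case $p=0$ is trivial, and when $p$ is too small for the lower bound to be nontrivial, $p=O(\sqrt\alpha)$ is already negligible). Hence $\norm{\Pi_V(\mu-\nu)}^2=(1-p^2)\norm{\mu-\nu}^2\ge\beta^2\norm{\mu-\nu}^2$.

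To close, I combine this with resilience of the inliers. Since $J\subseteq I$ with $|J|>\tfrac\alpha8|I|$ and $Cov_I(x)\preceq\sigma^2I$, \cref{fact:resilience} gives $\norm{\mu_J-\mu}\le\sigma\sqrt{8/\alpha}$ for the empirical mean $\mu_J$ of $J$. But every $i\in J$ has $\norm{\Pi_V(x_i-\nu)}\le R$, so by averaging $\norm{\Pi_V(\mu_J-\nu)}\le R\le\rho\norm{\mu-\nu}$, and therefore $\beta\norm{\mu-\nu}\le\norm{\Pi_V(\mu-\nu)}\le\norm{\mu-\mu_J}+\norm{\Pi_V(\mu_J-\nu)}\le\sigma\sqrt{8/\alpha}+\rho\norm{\mu-\nu}$. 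Thus $(\beta-\rho)\norm{\mu-\nu}\le\sigma\sqrt{8/\alpha}$, i.e. $\norm{\mu-\nu}\le\tfrac{2\sqrt2}{\beta-\rho}\cdot\tfrac{\sigma}{\sqrt\alpha}$, contradicting $\norm{\mu-\nu}\ge 2\cdot10^3\,\sigma/\sqrt\alpha$ since $\beta-\rho$ is a fixed positive constant. This completes the argument.

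The main obstacle is the middle step — showing the offset $\mu-\nu$ is essentially contained in the top-$\ell$ eigenspace of the dual-weighted second-moment matrix. This is precisely where the Ky--Fan (rather than operator) norm in the cost pays off: the bound $\lambda_{\ell+1}(\hat\Sigma)\le\theta/\ell$ is what lets a small Ky--Fan value $\theta$ force a small transverse component, and obtaining $\beta>\rho$ requires choosing the constant in $\ell=\Theta(1/\alpha)$ large enough (and the contradiction hypothesis is what supplies the needed lower bound $W_I>\tfrac\alpha4$ on the inlier dual mass). The remaining pieces — Markov's inequality for the removal threshold, resilience for $\mu_J$, and the closing triangle inequality — are routine.
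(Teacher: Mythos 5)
Your proof is correct and rests on the same two pillars as the paper's: (i) Markov's inequality applied to $\sum_i \bar w_i\|\Pi_V(x_i-\nu)\|^2=\|\hat\Sigma\|_\ell\le\tfrac{\zeta}{1-\delta}\|\mu-\nu\|^2$, showing every retained point satisfies $\|\Pi_V(x_i-\nu)\|\le\rho\|\mu-\nu\|$ for a constant $\rho<1$; and (ii) the fact that $\mu-\nu$ lies almost entirely in $V$ --- your quadratic inequality in $p\sqrt{W_I}$ is the same spectral argument as \pref{eq:dualprocedureproject}, which the paper proves inside \pref{lem:descent} via $\sigma_\ell(\hat\Sigma)\le\theta/((1-\delta)\ell)$ and an AM--GM expansion. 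Where you diverge is the finish: the paper shows pointwise that at most an $\alpha/4$ fraction of inliers can have $\|\Pi_V(x_i-\nu)\|^2<0.8\|\mu-\nu\|^2$, by lower-bounding the projection along $\rho=\Pi_V(\mu-\nu)/\|\Pi_V(\mu-\nu)\|$ and applying Markov to $\langle x_i-\mu,\rho\rangle^2$; you instead assume the retained inlier mass exceeds $\alpha/4$, aggregate the retained inliers into their empirical mean $\mu_J$, and close with a triangle inequality. Both work, and your contradiction framing has the small advantage that the hypothesis $\sum_{i\in I}\hat w_i>\alpha/4$ directly supplies the lower bound $W_I>\alpha/4$ needed by the spectral step (the paper has to assume separately, via a footnote, that $(\nu,\bar w)$ is not already a sanitizing tuple). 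Your explicit flagging of the implicit hypothesis $\|\mu-\nu\|\ge 2\cdot 10^3\sigma/\sqrt\alpha$ is also appropriate; the paper's own proof uses it in the same way.

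One step needs patching: \pref{fact:resilience} applied to the pair (uniform on $I$, uniform on $J$) does \emph{not} give $\|\mu_J-\mu\|\le\sigma\sqrt{8/\alpha}$. The covariance of $J$ about $\mu_J$ is only bounded by $(|I|/|J|)\sigma^2\le(8/\alpha)\sigma^2$, so that fact, as stated, yields only $O(\sigma/\alpha)$ --- too weak to contradict $\|\mu-\nu\|\ge 2\cdot10^3\sigma/\sqrt\alpha$. The bound you want is nonetheless true by a one-line Cauchy--Schwarz: for any unit $u$,
\[
\langle u,\mu_J-\mu\rangle=\frac{1}{|J|}\sum_{i\in J}\langle u,x_i-\mu\rangle\le\frac{1}{\sqrt{|J|}}\Bigl(\sum_{i\in I}\langle u,x_i-\mu\rangle^2\Bigr)^{1/2}\le\sigma\sqrt{|I|/|J|}\le\sigma\sqrt{8/\alpha}.
\]
With that substitution the argument goes through as written.
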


\begin{algorithm}
\SetAlgoLined
\KwIn{Dataset $X$, weight budget $b$}
\KwResult{A tuple $(\hu,\hat w)$ for $\hu \in \R^d$ and $\hat w_i \in [0,b_i]$ for all $i \in [N]$ and $\norm{\hat w}_1 \geq 0.5$}
\textbf{Fixed Constants: }$k = 1/\alpha,~\ell = 100\cdot k ,~p=\frac{10\log(d)}{\log(\frac{1}{1 - \alpha})}$\;
$\theta_0 = \infty$\;
$(\theta^{(1)}, \nu^{(1)}, \bar{w}^{(1)}) = \text{WarmStart}(X,\alpha)$\;
$t = 1$\;
\While{ $\theta^{(t)}  \geq \sigma^2 \text{ and } \theta^{(t)} \leq 0.5\cdot \theta^{(t-1)}$}{
    \text{Comment: Descent Procedure}\;
    $\hat{\Sigma}^{(t)} =  \frac{1}{\norm{\bar{w}^{(t)}}_1}\sum_{i\in [N]} \bar{w}_i^{(t)} \left(x_i - \nu^{(t)}\right)\left(x_i - \nu^{(t)}\right)^{\transp}$\;
    $V^{(t)} = \text{top }\ell\text{ eigenspace of } \hat{\Sigma}, \quad \Pi_{V^{(t)}} \text{ be associated projection operator}$\;
    $Y^{(t)} = \{\nu^{(t)}+\Pi_{V^{(t)}}(x_i-\nu^{(t)}):i\in [N]\}$\;
    Pick $y_1,\dots,y_p$ independently and uniformly from $Y^{(t)}$\;
    $(\theta^j,\bar{w}^j) = \text{ApproxCost}_{X,b,\ell}(y_j)$ for all $j \in [p]$\;
    $j^* = \argmin_{j \in [p]} \theta^j$\;
    $(\theta^{(t+1)},\bar{w}^{(t+1)}) = (\theta^{j^*},\bar{w}^{j^*})$\;
    $\nu^{(t+1)} = y_{j^*}$\;
    $t=t+1$\;
    }
\eIf{$\theta^{(t)} < \sigma^2$}{
\text{Comment: Base Case for Successful Descent}\;
 $\hat w = \bar{w}^{(t)}$\;
 $\hu = \nu^{(t)}$\;
}{
\text{Comment: Weight Removal Procedure for Failed Descent}\;
$T=t-1$\;
$\text{Let $x_1,\dots,x_n$ be an ordering of $X$ so that }$\;
$\left\|\Pi_{V^{(T)}}(x_1-\nu^{(T)})\right\| \le \left\|\Pi_{V^{(T)}}(x_2-\nu^{(T)})\right\| \le \ldots \left\|\Pi_{V^{(T)}}(x_n-\nu^{(T)})\right\|$\;
$i^* = \min \{i > 0: \sum_{j = 1}^i \bar{w}_i^{(T)} \geq 0.5\}$\;
$\hat{w}_i = \bm{1} \{i \leq i^*\} \bar{w}_i^{(T)}$\;
$\hu = \nu^{(T)}$\;
}
\KwOut{$(\hu,\hat{w})$}
\caption{DescendCost($X,b$)}
\label{alg:descendcost}
\end{algorithm}

Using the above two lemmas we prove \pref{lem:dualprocedure}.
\begin{proof}(Proof of \pref{lem:dualprocedure})
Firstly, we observe that at any given iteration, if $\norm{\nu^{(t)} - \mu} \leq r\frac{\sigma}{\sqrt{\alpha}}$ then either the descent makes progress or the weight removal procedure outputs a sanitizing tuple.  Likewise, if $\sum_{i \in I}\bar{w}^{(t)}_i \leq \frac{\alpha}{4}$ then either descent makes progress or weight removal outputs a sanitizing tuple.  So without loss of generality, we assume $\norm{\nu^{(t)} - \mu} \geq r\frac{\sigma}{\sqrt{\alpha}}$ and  $\sum_{i \in I}\hat{w}^{(t)}_i \geq \frac{\alpha}{4}$. 

Using \pref{lem:descent} for $\ell > 100k$ and $\norm{\mu - \nu^{(t)}} \geq r\frac{\sigma}{\sqrt{\alpha}}$ we have $\theta^{(t+1)} \leq 0.2\norm{\mu - \nu^{(t)}}^2$.  Thus for $\theta^{(t)} \geq 0.4\norm{\mu - \nu^{(t)}}^2$ we would make sufficient progress for $DescendCost(X,b)$ to progress to the next iteration.  Otherwise,the termination condition $\theta^{(t+1)} \geq 0.5 \theta^{(t)}$ implies $\theta^{(t)} \leq 0.4\norm{\mu - \nu^{(t)}}^2$.  Using \pref{lem:removal} we conclude that the weight removal procedure outputs a sanitizing tuple.  This concludes the proof of correctness for the $DescendCost$ algorithm.   
\end{proof}

\subsection{Analysis I: Descending Cost}
In this section we prove \cref{lem:descent}.  

\descent*
\begin{proof} 

Moving forward we drop the iteration indices unless otherwise specified. We let $\nu$ refer to $\nu^{(t)}$ and let $(\theta, \bar{w}) := ApproxCost_{X,b,k}(\nu^{(t)})$.  We define $\hat{\Sigma}$, $V$, and $\Pi_V$ accordingly. Note that since $\bar{w} \in \Phi_b(1 - \delta)$ we have $\theta \leq \norm{\hat\Sigma}_\ell \leq \frac{\theta}{1 - \delta}$. 

Our first step is to prove that $\mu$ has a large component in $V$ i.e
\begin{align} \label{eq:dualprocedureproject}
\|\Pi_{V}(\mu-\nu) \|^2 \ge (1 - \frac{4k}{\ell}) \|(\mu-\nu)\|^2 - c_1\ell\sigma^2    
\end{align}
For some fixed constant $c_1$.  By definition of projection we have 
\begin{align} \label{eq:dualprocedure1}
     \norm{\Pi_{V}(\mu - \nu)}^2 = \norm{\mu - \nu}^2 - \norm{\Pi^\perp_{V}(\mu - \nu)}^2
\end{align}
so it suffices to upper bound $\norm{\Pi^\perp_{V}(\mu - \nu)}^2$ by $\frac{4k}{\ell}\norm{\mu - \nu}^2$.  Let $\vphi \coloneqq \frac{\Pi_V^\perp(\mu-\nu)}{\|\Pi_V^\perp(\mu-\nu)\|}$.  Towards these ends, we observe the following 

\begin{align*}
    \norm{\mu - \nu}^2 \geq \theta - 10\ell\sigma^2 \geq (1 - \delta)\ell \cdot \sigma_\ell(\hat{\Sigma}) - 10\ell\sigma^2 \geq (1 - \delta)\ell \cdot \vphi^{\transp}\hat{\Sigma}\vphi - 10\ell\sigma^2
\end{align*}

Where the first inequality follows by \pref{lem:costupper}, the second inequality follows by the fact that $\theta \leq \norm{\hat\Sigma}_\ell \leq \frac{\theta}{1 - \delta}$, the third inequality is because the $\ell$'th singular value of an $\ell$-SVD captures greater variance in the covariance matrix than any vector orthogonal to the top $\ell$ eigenspace.  Let $w = \frac{1}{\norm{\bar w}_1} \bar w$. Continuing to lower bound, we obtain 
\begin{align*}
    &= (1 - \delta)\ell \cdot \sum_{i=1}^N w_i \langle \vphi, x_i-\nu\rangle^2- 10\ell\sigma^2
    \\& \ge (1 - \delta)\ell \cdot \sum_{i \in I} w_i \langle \vphi, x_i-\nu\rangle^2 - 10\ell\sigma^2
    \\&= (1 - \delta)\ell \cdot \sum_{i \in I}^N w_i \langle \vphi, x_i-\mu + \mu -\nu\rangle^2 - 10\ell\sigma^2
    \\&= (1 - \delta)\ell \cdot \sum_{i \in I}^N w_i \left(\langle\vphi,x_i-\mu\rangle^2 - 2\langle\vphi, \mu - x_i\rangle\langle\vphi,\mu-\nu\rangle + \langle\vphi,\mu-\nu\rangle^2\right)- 10\ell\sigma^2
\end{align*}
Here, the inequality follows by dropping squared terms. Now using the inequality $2ab \leq \frac{a^2}{c^2} + c^2b^2$ for $a = \langle \vphi, \mu - \nu\rangle$, $b = \langle \vphi, \mu - x_i\rangle$, $c = 10$ we obtain   
\begin{align*}
&\geq (1 - \delta)\ell \cdot \sum_{i \in I}w_i\left( 0.99\left\|\Pi_{V}^{\perp}(\mu-\nu)\right\|^2 - 99\langle\vphi,x_i-\mu\rangle^2\right) - 10\ell\sigma^2
\\&= (1 - \delta)\ell \cdot (\sum_{i \in I} w_i)(0.99\left\|\Pi_{V}^{\perp}(\mu-\nu)\right\|^2) - (1 - \delta)\ell \cdot 99\sum_{i \in I} w_i \langle \vphi, x_i - \mu \rangle^2 - 10\ell\sigma^2
\end{align*}

We use the fact that $\sum_{i \in I}w_i \geq \sum_{i \in I}\bar{w}_i \geq \frac{1}{4k}$ to lower bound the first term.  And we use the fact $w_i \leq \frac{2}{(1-\delta)\alpha N}$ to lower bound the second term to obtain 

\begin{align} \label{eq:dualprocedure2}
\geq 0.99\frac{\ell}{(1 - \delta)4k}\left\|\Pi_{V}^{\perp}(\mu-\nu)\right\|^2 - (1 - \delta)\ell \cdot 99\sum_{i \in I}\frac{2}{(1-\delta)\alpha N} \langle \vphi, x_i - \mu \rangle^2 - 10\ell\sigma^2
\end{align}

Consider the second term $\sum_{i \in I}\frac{1}{\alpha N} \langle \vphi, x_i - \mu \rangle^2$.  We can upper bound it by the fact that the inliers are covariance bounded $\sum_{i \in I}\frac{1}{\alpha N} \langle \vphi, x_i - \mu \rangle^2\leq \sigma^2$. Plugging this bound into \pref{eq:dualprocedure2} we obtain 

\begin{align*}
\norm{\mu - \nu}^2 \geq \pref{eq:dualprocedure2} \geq 0.99 \frac{\ell }{(1-\delta)4k}\left\|\Pi_{V}^{\perp}(\mu-\nu)\right\|^2 - 600\ell\sigma^2 - 10\ell\sigma^2 \geq \frac{\ell }{4k}\left\|\Pi_{V}^{\perp}(\mu-\nu)\right\|^2 - c_1\ell\sigma^2
\end{align*}

For a fixed constant $c_1$.  Rearranging the LHS and RHS we upper bound 
\begin{align} \label{eq:dualprocedure3}
 \left\|\Pi_{V}^{\perp}(\mu-\nu)\right\|^2 \leq \frac{4k}{\ell }\norm{\mu - \nu}^2 + c_1\ell\sigma^2     
\end{align}
Plugging into \pref{eq:dualprocedure1} we obtain 
$ \|\Pi_{V}(\mu-\nu) \|^2 \ge (1 - \frac{4k}{\ell}) \|(\mu-\nu)\|^2 - c_1\ell\sigma^2$ 
as desired.  

 Now consider a set of $M := \alpha N$ inliers $\calI$. given a rank $\ell$ projection of the inliers, the number of points within a $10\sigma\sqrt{\ell}$ ball of $\Pi_V(\mu - \nu)$ is larger than $\frac{99}{100}M$ by Chebyshev's inequality.  Selecting $p := O(\frac{\log(d)}{\log(\frac{1}{1 - \alpha})})$ random datapoints $R := \{x'_j\}_{j=1}^p \subset X$, there exists an $x' \in R$ satisfying 
 \begin{align} \label{eq:vempala-wang}
     \norm{ \Pi_V(x' - \mu)} \leq 10\sigma\sqrt{\ell}
 \end{align}
 with probability greater than $1 - \frac{1}{d^{10}}$.  Here we aim for a $\frac{1}{d^{10}}$ failure probability so that by union bound over the $O(\frac{1}{\alpha}\log(d))$ iterations of the algorithm we continue to succeed with high probability. Thus we have 

\begin{align*}
 \theta^{(t+1)} &= \min_{x_i \in R} \text{ApproxCost}(\nu + \Pi_V(x_i - \nu)) \\
 & \leq  \norm{\nu + \Pi_V(x' - \nu)  - \mu}^2 + 10\ell\sigma^2 \\
    &= \norm{\Pi_V(x' - \nu)  - (\mu - \nu)}^2 + 10\ell\sigma^2 \\ 
    &=  \norm{\Pi_V(x' - \nu) - \Pi_V(\mu - \nu) + \Pi_V(\mu - \nu)  - (\mu - \nu)}^2 + 10\ell\sigma^2\\
    &= \norm{\Pi_V(x' - \mu) - \Pi^\perp_V(\mu - \nu)}^2 + 10\ell\sigma^2
\end{align*}
Where the first equality is by definition of $\theta^{(t+1)}$, and the inequality follows by \pref{lem:costupper}.  The above is then:
\begin{multline*}
= \norm{\Pi_V (x' - \mu)}^2 +  \norm{\Pi^\perp_V(\mu - \nu)}^2 +  10\ell\sigma^2 \\
\leq  100\ell\sigma^2 + \frac{4k}{\ell}\norm{\mu - \nu}^2 + c_1\ell\sigma^2 + 10\ell\sigma^2 
= \frac{4k}{\ell}\norm{\mu - \nu}^2 + c_3\ell\sigma^2
\end{multline*}
 Here, the second inequality follows by applying \pref{eq:vempala-wang} to the first term and \pref{eq:dualprocedure3} to the second term.  The last equality follows for $c_3 = 1000$.  Thus we have concluded  $\theta^{(t+1)} \leq \frac{4k}{\ell}\norm{\mu - \nu^{(t)}}^2 + c\ell\sigma^2$ for some fixed constant $c$ with high probability $1 - \frac{1}{d^{10}}$ as desired .   
\end{proof}   

\subsection{Analysis II: Removing Weights}
In this section we prove \cref{lem:removal}
\removal*

\begin{proof}
 Let $w := \frac{\bar{w}}{\norm{\bar{w}}_1}$ which can be viewed as a probability distribution over $i \in [N]$ for which we can further define the expectation $\mathbb{E}_w[\cdot]$ and probability $\Pr_w[\cdot]$ associated with $w$.  We also use the notation $\Pr_{i \in I}[\cdot]$ and $\mathbb{E}_{i \in I}[\cdot]$ to denote fraction of inliers satisfying an event and the average of a function over the inliers. 
 
We need to prove two facts. Firstly, $\Pr_w[\norm{\Pi_V(x_i - \nu)}^2 < 0.8\norm{\nu - \mu}^2] \geq 0.5$, and secondly $\Pr_{i \in I}[\norm{\Pi_V(x_i - \nu)}^2 < 0.8\norm{\nu - \mu}^2] \leq \frac{\alpha}{4}$.  Taken together, this implies that a sort of the list $\{\norm{\Pi_V(x_i - \nu)}\}_{i = 1}^N$ succeeds in isolating at least $0.5$ weight where $\sum_{i \in I}\hat{w}_i \leq \frac{\alpha}{4}$.  We use Markov's inequality to prove the first statement.    

\begin{align} \label{eq:dualprocedure4}
\Pr_w[\norm{\Pi_V(x_i - \nu)}^2 < 0.8\norm{\nu - \mu}^2]  = 1 - \Pr_w[\norm{\Pi_V(x_i - \nu)}^2 \geq 0.8\norm{\nu - \mu}^2]
\geq 1 - \frac{ \mathbb{E}_{w}\|\Pi_V(x_i-\nu)\|^2}{0.8\norm{\nu - \mu}^2}
\end{align} 
We can upper bound the expectation by 
\begin{align}
    \mathbb{E}_{w}\|\Pi_V(x_i-\nu)\|^2 &= \sum_{i = 1}^N w_i\left\|\Pi_V(x_i-\nu)\right\|^2 = \trace\left(\Pi_V\hat{\Sigma}\Pi_V\right) = \sum_{i=1}^\ell \sigma_i\left(\hat{\Sigma}\right) = \norm{\hat\Sigma}_\ell
    \\ &\leq \frac{1}{1 - \delta}\theta \le \frac{0.4}{1 - \delta}\|\nu-\mu\|^2
\end{align}
Where the first inequality is  $\theta \leq \norm{\hat\Sigma}_\ell \leq \frac{\theta}{1 - \delta}$ and the second inequality is by assumption \newline $\theta^{(t)} \leq \zeta\norm{\mu - \nu}^2$ for $\zeta \leq 0.4$.
Plugging into \pref{eq:dualprocedure4} we obtain $\Pr_w[\norm{\Pi_V(x_i - \nu)}^2 < 0.8\norm{\nu - \mu}^2] \geq 0.5$

Now we prove the second statement $\Pr_{i \in I}[\norm{\Pi_V(x_i - \nu)}^2 < 0.8\norm{\nu - \mu}^2] \leq \frac{\alpha}{4}$.  Let $x \in I$ be an inlier.  Let $\rho := \frac{\Pi_V(\mu - \nu)}{\norm{\Pi_V(\mu - \nu)}}$.  We have that $\|\Pi_V(x-\nu)\|^2$ is lower bounded by 
\begin{align*}
\|\Pi_V(x-\nu)\|^2 &\geq \langle x-\nu,\rho\rangle^2 = \langle x-\mu,\rho\rangle^2 - 2\langle \mu - x, \rho\rangle\langle \mu-\nu,\rho\rangle + \langle\mu-\nu, \rho\rangle^2\\
&\ge 0.99\langle\mu-\nu, \rho\rangle^2 - 99\langle x-\mu,\rho\rangle^2  
\end{align*}

Where the first inequality follows because $\rho$ is a unit vector in $V$.  The second inequality follows by the fact that $2ab \leq \frac{a^2}{c^2} + c^2b^2$ for $a = \langle \mu - \nu,\rho\rangle$, $b = \langle \mu - x, \rho\rangle$, $c = 10$.  We further lower bound by
\begin{align*}
&\geq 0.99\|\Pi_{V}(\mu-\nu) \|^2 - 99\langle x-\mu,\rho\rangle^2\\
&\ge 0.99(1 - \frac{4k}{\ell}) \|\mu-\nu\|^2 - c_1\ell\sigma^2 - 99\langle x-\mu,\rho\rangle^2\\ 
&\ge  0.9\|\mu-\nu\|^2 - 99\langle x-\mu,\rho\rangle^2
\end{align*}
The first inequality follows by definition of $\rho$, the second inequality follows from \pref{eq:dualprocedureproject} \footnote{A fine point is that we need the assumption that $(\nu, \bar w)$ is not a sanitizing tuple to apply \pref{eq:dualprocedureproject}. This holds because if $(\nu, \bar w)$ were a sanitizing tuple, our lemma would be trivially true}.  Moving on, the last inequality follows from plugging in parameters $\ell = 100 k$ and the given $\|\mu-\nu\| \geq r \frac{\sigma}{\sqrt{\alpha}}$ for $r = 2\cdot 10^3$.    

Plugging this lower bound for $\|\Pi_V(x-\nu)\|^2 $ into  $\Pr_{i \in I}[\norm{\Pi_V(x_i - \nu)}^2 < 0.8\norm{\nu - \mu}^2]$ we obtain 
\begin{align*}
\Pr_{i \in I}[\norm{\Pi_V(x_i - \nu)}^2 < 0.8\norm{\nu - \mu}^2]
\leq  \Pr_{i \in I}[0.9\|(\mu-\nu)\|^2 - 100\langle x_i-\mu,\rho\rangle^2 < 0.8\norm{\nu - \mu}^2]  \\
\end{align*}
Rearranging we obtain 
\begin{align*}
= \Pr_{i \in I}[\langle x_i-\mu,\rho\rangle^2 > 10^{-3}\norm{\nu - \mu}^2] 
\end{align*}
Via Markov's inequality: 
\begin{align*}
    \Pr_{i \in I}[\langle x_i-\mu,\rho\rangle^2 > 10^{-3}\norm{\nu - \mu}^2] \leq \frac{\mathbb{E}_{i \in I}[\langle x_i-\mu,\rho\rangle^2] }{10^{-3}\norm{\nu - \mu}^2}
    \leq \frac{\sigma^2}{10^{-3}\norm{\nu - \mu}^2} \leq \frac{\alpha}{4}
\end{align*}
Where the second inequality follows from applying the bounded covariance of the inliers.  The last inequality follows from the assumption that  $\norm{\nu - \mu} \geq r\frac{\sigma}{\sqrt \alpha}$ for $r = 2\cdot 10^3$.  
\end{proof} 

 \section{Fantope optimization in nearly linear time}  
\label{sec:genpcsolver}

In this section, we will design a solver for solving the following class of generalized packing/covering SDPs that we will need to solve in the course of our algorithm.
\begin{table}[h]
\centering
\begin{tabularx}{\textwidth}{ >{\centering\arraybackslash}X | >{\centering\arraybackslash}X }
  Covering (Primal) & Packing   (Dual) \\
  \begin{gather*}
      \min_{M, N} \Tr{M} + \Tr{W} \\
      \text{Subject to: } \inp{M}{A_i} + \inp{N}{B_i} \geq 1\\
      M \succcurlyeq 0,\ W \succcurlyeq 0\\
      \norm{W} \leq \frac{\Tr{W}}{k}
  \end{gather*} & 
  \begin{gather*}
      \max_{w_i} \sum_{i = 1}^n w_i \\
      \text{Subject to: } \sum_{i = 1}^n w_i A_i \preccurlyeq I \\
      \norm*{\sum_{i = 1}^n w_i B_i}_k \leq k \\
      0 \leq w_i
  \end{gather*} \\
\end{tabularx}
\end{table}

In the above setup, we have $A_i, M \in \psd[l]$ and $B_i, W \in \psd[m]$ and furthermore, we will assume that the matrices are given to us in factorized form; that is, $A_i = C_i C_i^\top$ with $C_i \in \mathbb{R}^{l \times r_i}$ and $B_i = D_iD_i^\top$ with $D_i \in \mathbb{R}^{m \times s_i}$ which will allow us to bound the runtimes of our procedure. The analysis will proceed along the lines of \cite{peng2012faster} however, as opposed to the standard multiplicative weights regret bound from \cite{DBLP:journals/jacm/AroraK16}, we will need to use a new regret bound to account for the nature of the sets we have chosen. As in \cite{peng2012faster}, we restrict ourselves to solving the $\eps$-decision version of the above pair of problems:

\begin{problem}
    \label{prob:epsdec}
    Find either a dual feasible, $w$, with $\sum_{i = 1}^n w_i \geq (1 - \epsilon)$ or a dual feasible $(M, W)$, satisfying $\Tr M + \Tr W \leq 1 + \epsilon$. 
\end{problem}

\subsection{The Regret Guarantee}
\label{ssec:reg}

In this subsection, we will establish a regret guarantee useful for designing fast solvers for our class of SDPs. First, let $\mc{S}$ defined as:

\begin{equation*}
    \mc{S} = \{(M, W) \in (\psd[l], \psd[m]): M \succcurlyeq 0,\ W \succcurlyeq 0,\ \Tr{M} + \Tr{W} = 1,\ \norm{W} \leq \Tr{W} / k\}.
\end{equation*}

The game takes place over $T$ rounds where for each round $t \in 1, \dots, T$:

\begin{enumerate}
    \item The player plays two psd matrices $(M_t, W_t) \in \mc{S}$. 
    \item The environment then reveals two gain matrices $(F_t, G_t)$ with $\norm{F_t} \leq 1$ and $\norm{G_t} \leq 1$ and the player achieves a gain of $\inp{F_t}{M_t} + \inp{G_t}{W_t}$. 
    \item The goal of the player is to minimize their total regret:
    \begin{equation*}
        \mc{R}(T) = \max_{(M, W) \in \mc{S}} \sum_{t = 1}^T \inp{F_t}{M} + \inp{G_t}{W} - \sum_{t = 1}^T \inp{F_t}{M_t} + \inp{G_t}{W_t}.
    \end{equation*}
\end{enumerate}

We will first provide a regret guarantee for the following strategy where in each iteration $(M_t, W_t)$ are defined for $\eta > 0$ by:
\begin{equation}
    \label{eq:mdUp}
    (M_t, W_t) = \argmax_{(M, W) \in \mc{S}} \eta \sum_{i = 1}^{t - 1} \inp{F_i}{M} + \inp{G_i}{W} + \vNE (M) + \vNE (W)
\end{equation}
Before we move on to the regret bound, we will require \cref{fac:vnconv}:
\begin{corollary}
    \label{cor:vnconvgen}
    The function $f(M, W) = \vNE(M) + \vNE(W)$ is $1$-strongly concave with respect to the following norm:
    \begin{equation*}
        \norm{(M, W)}_{\mc{S}} = \norm{M}_\ast + \norm{W}_\ast
    \end{equation*}
    on the set $\mc{S}$.
\end{corollary}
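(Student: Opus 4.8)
The plan is to deduce \cref{cor:vnconvgen} from the single-matrix strong concavity of $\vNE$ recorded in \cref{fac:vnconv}, by lifting a pair $(M,W)$ to the block-diagonal matrix $Z := M\oplus W$ of size $(l+m)\times(l+m)$. The key point is that this embedding turns the trace norm on the big space into exactly the norm $\norm{(M,W)}_{\mc S} = \norm{M}_\ast + \norm{W}_\ast$, so the sharp constant $1$ is preserved.

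First I would record three elementary facts about the map $(M,W)\mapsto Z := M\oplus W$ applied to a pair in $\mc S$: (i) $Z\succcurlyeq 0$ and $\Tr Z = \Tr M + \Tr W = 1$, so $Z\in\trosp^{l+m}$ and \cref{fac:vnconv} is applicable to it; (ii) since $Z$ is block diagonal, $\log Z = \log M \oplus \log W$ (with the convention $0\log 0 = 0$ on the kernels), whence $\vNE(Z) = \vNE(M) + \vNE(W) = f(M,W)$ and $\grad\vNE(Z) = -(\log Z + I) = \grad\vNE(M)\oplus\grad\vNE(W)$; (iii) for a second pair with associated big matrix $Z'$, the difference $Z' - Z = (M'-M)\oplus(W'-W)$ is again block diagonal, so its singular values are the union of the singular values of the two blocks, giving $\norm{Z'-Z}_\ast = \norm{M'-M}_\ast + \norm{W'-W}_\ast = \norm{(M'-M,\, W'-W)}_{\mc S}$.

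Given these, the corollary follows by applying the $1$-strong concavity inequality from \cref{fac:vnconv} to $Z$ and $Z'$ (both legitimate points of $\trosp^{l+m}$ by (i)), namely $\vNE(Z') \le \vNE(Z) + \langle \grad\vNE(Z),\, Z'-Z\rangle - \frac12\norm{Z'-Z}_\ast^2$, and translating each term back to $(M,W)$-coordinates using (ii) and (iii). The inner-product term splits as $\langle\grad\vNE(M),\, M'-M\rangle + \langle\grad\vNE(W),\, W'-W\rangle = \langle \grad f(M,W),\, (M'-M,\,W'-W)\rangle$ because both matrices are supported on the diagonal blocks, and the quadratic term becomes $\frac12\norm{(M'-M,\,W'-W)}_{\mc S}^2$. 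This is exactly the defining inequality for $1$-strong concavity of $f$ on $\mc S$. Note that the constraint $\norm{W}\le\Tr W/k$ plays no role: strong concavity automatically passes to the subset $\mc S$ of the larger domain $\{(M,W): M,W\succcurlyeq 0,\ \Tr M + \Tr W = 1\}$.

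I expect the only thing requiring care is to avoid the lossy shortcut of adding the two single-matrix inequalities for $\vNE(M)$ and $\vNE(W)$ separately and then combining the quadratics via $a^2 + b^2 \ge \frac12(a+b)^2$, which would only give $\frac12$-strong concavity; the block-diagonal embedding is precisely what yields the sharp constant, since the trace norm of a block-diagonal matrix is the $\ell_1$- (not the $\ell_2$-) sum of the block trace norms. Verifying (ii) and (iii) — additivity of $\log$, $\vNE$, and $\grad\vNE$ under $\oplus$, together with the union-of-singular-values identity for block-diagonal matrices — is the whole content of the argument and is routine.
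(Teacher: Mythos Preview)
Your proposal is correct and takes essentially the same approach as the paper: embed $(M,W)$ as the block-diagonal matrix $M\oplus W$ in $\trosp^{l+m}$, observe that $f$ and $\norm{\cdot}_{\mc S}$ coincide with $\vNE$ and the trace norm under this embedding, and invoke \cref{fac:vnconv}. Your write-up is in fact more detailed than the paper's two-line sketch, and your remark about why naively adding the two single-matrix inequalities loses a factor of $2$ is a nice clarification.
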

\begin{proof}
    This is easily seem by embedding the set $\mc{S}$ in $\psd[l + m]$ by the following mapping 
    $$(M,W) \to \begin{bmatrix}M & 0 \\ 0 & W\end{bmatrix}$$
    and subsequently noting that the $\norm{\cdot}_{\mc{S}}$ and $f$ coincide with the trace norm and the von Neumann entropy on $\psd[l + m]$.
\end{proof}

We will now state a standard regret guarantee (See, for example, Theorem 5.2 from \cite{DBLP:journals/corr/abs-1909-05207}) for the update rule defined in Equation~\ref{eq:mdUp}:

\begin{lemma}
    \label{lem:regG}
    For a sequence of gain matrices, $\{(F_t, G_t)\}_{t = 1}^T$ satisfying $\norm{F_t} \leq 1$ and $\norm{G_t} \leq 1$, the update rule defined in Equation~\ref{eq:mdUp} satisfies:
    \begin{equation*}
        \max_{(M, W) \in \mc{S}} \sum_{t = 1}^T \inp{F_t}{M} + \inp{G_t}{W} - \sum_{t = 1}^T \inp{F_t}{M_t} + \inp{G_t}{W_t} \leq 2\eta (T) + \frac{\log (l + m)}{\eta}.
    \end{equation*}
\end{lemma}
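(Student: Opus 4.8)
This is a standard mirror-descent regret bound, so the plan is to invoke the general "Follow-the-Regularized-Leader" / mirror-descent regret theorem (Theorem 5.2 of \cite{DBLP:journals/corr/abs-1909-05207}) with the regularizer $R(M,W) = -\vNE(M) - \vNE(W)$ on the feasible set $\mc{S}$, and then just check that the hypotheses of that theorem are met with the right constants. Concretely, the update rule in Equation~\ref{eq:mdUp} is exactly the FTRL iterate $ (M_t, W_t) = \argmin_{(M,W)\in\mc S} \left( R(M,W) - \eta\sum_{i=1}^{t-1}(\inp{F_i}{M}+\inp{G_i}{W}) \right)$, where we maximize the negative of $R$ plus the linear gain terms.

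**Key steps, in order.** First, I would record that by \cref{cor:vnconvgen} the function $f(M,W) = \vNE(M)+\vNE(W)$ is $1$-strongly concave on $\mc S$ with respect to $\norm{(M,W)}_{\mc S} = \norm{M}_\ast + \norm{W}_\ast$; equivalently $R = -f$ is $1$-strongly convex in that norm. Second, I would bound the "diameter" term: since $(M,W)\in\mc S$ has $\Tr M + \Tr W = 1$ and both are PSD, the von Neumann entropy of a trace-$\le 1$ PSD matrix in dimension $n$ is at most $\log n$ (maximized at the maximally mixed state), so $f$ ranges over an interval of length at most $\log l + \log m \le \log(l+m)$ on $\mc S$ — this gives the $\frac{\log(l+m)}{\eta}$ term. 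Third, I would bound the per-step "local norm of the gain": the gain vector at step $t$ is $(F_t, G_t)$, and the dual norm to $\norm{\cdot}_{\mc S}$ is $\max(\norm{F_t}, \norm{G_t}) \le 1$ by the assumption on the gain matrices; since the regularizer is $1$-strongly convex, each step contributes at most $\eta \cdot 1$ (or $2\eta$, absorbing the usual factor-of-two slack in the stability estimate for the FTRL iterates) to the regret. Fourth, summing over $T$ rounds and adding the diameter term yields $\mc R(T) \le 2\eta T + \frac{\log(l+m)}{\eta}$, which is exactly the claimed bound.

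**Main obstacle.** There is no deep obstacle here — this is a textbook application — but the one genuine point of care is the "local norms" bookkeeping: making sure the factor $2$ in front of $\eta T$ is correct rather than $1$ or $\tfrac12$, and that the strong-convexity constant of $\vNE$ with respect to the trace norm is being used in the same normalization as the dual-norm bound $\norm{F_t}\le 1$ on the operator norm (the trace norm and operator norm are dual, which is precisely why \cref{cor:vnconvgen} is stated in the trace norm). I would simply defer all of this to the cited theorem by matching parameters: regularizer strong-convexity modulus $=1$, Bregman diameter $\le \log(l+m)$, gain dual-norm bound $\le 1$, learning rate $\eta$; plugging into the theorem's conclusion $\mc R(T) \le \frac{D}{\eta} + \eta \sum_t \|g_t\|_*^2 + (\text{slack})$ and noting $\sum_t \|g_t\|_*^2 \le T$ gives the stated inequality, so the "proof" is a one-paragraph verification of hypotheses plus a citation.
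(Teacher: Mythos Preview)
Your approach is essentially the same as the paper's: the paper's proof is a single sentence, ``The lemma follows immediately from Theorem 5.2 in \cite{DBLP:journals/corr/abs-1909-05207},'' so you are doing exactly what the paper does, only with the hypothesis-checking spelled out.

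One small slip in your diameter computation: the inequality $\log l + \log m \le \log(l+m)$ is backwards (for $l,m\ge 2$ one has $\log(lm)\ge \log(l+m)$). The correct bound $\sup_{\mc S} f - \inf_{\mc S} f \le \log(l+m)$ comes directly from the embedding in \cref{cor:vnconvgen}: the pair $(M,W)$ is identified with the block-diagonal trace-$1$ PSD matrix $\mathrm{diag}(M,W)\in\psd[l+m]$, and $\vNE(M)+\vNE(W)$ equals the von Neumann entropy of that block matrix, which is at most $\log(l+m)$. With that correction your verification of the hypotheses is fine.
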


\begin{proof}
    The lemma follows immediately from Theorem 5.2 in \cite{DBLP:journals/corr/abs-1909-05207}.
\end{proof}

We will use the following corollary in the analysis of our solver:

\begin{corollary}
    \label{cor:regapx}
    Let $\{(F_t, G_t)\}_{t = 1}^T$ be any sequence of gain matrices satisfying $\norm{F_t} \leq 1$ and $\norm{G_t} \leq 1$ and let $(M_t, W_t)$ be defined as in Equation~\ref{eq:mdUp} and suppose that $(\Mt{}, \Wt{})$ satisfy:
    \begin{equation*}
        \norm{\Mt{} - M_t} + \norm{\Wt{} - W_t} \leq \eps.
    \end{equation*}
    Then, we have:
    \begin{equation*}
        \max_{(M, W) \in \mc{S}} \sum_{t = 1}^T \inp{F_t}{M} + \inp{G_t}{W} - \sum_{t = 1}^T \inp{F_t}{\Mt{}} + \inp{G_t}{\Wt{}} \leq (2\eta + \eps) (T) + \frac{\log (l + m)}{\eta}.
    \end{equation*}
\end{corollary}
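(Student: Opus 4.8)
The plan is to reduce directly to \cref{lem:regG}, treating the use of the approximate iterates $(\Mt{}, \Wt{})$ in place of the exact mirror-descent iterates $(M_t, W_t)$ as a controlled perturbation of the player's cumulative gain. Concretely, I would write
\[
\sum_{t=1}^T \inp{F_t}{\Mt{}} + \inp{G_t}{\Wt{}} = \sum_{t=1}^T \big(\inp{F_t}{M_t} + \inp{G_t}{W_t}\big) + \sum_{t=1}^T \big(\inp{F_t}{\Mt{} - M_t} + \inp{G_t}{\Wt{} - W_t}\big),
\]
and control the second sum term by term.

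For each $t$, Hölder's inequality for the operator/trace-norm pair (von Neumann's trace inequality) gives $\inp{F_t}{\Mt{} - M_t} \ge -\norm{F_t}\,\norm{\Mt{} - M_t}_\ast \ge -\norm{\Mt{} - M_t}_\ast$ since $\norm{F_t} \le 1$, and likewise $\inp{G_t}{\Wt{} - W_t} \ge -\norm{\Wt{} - W_t}_\ast$. Summing over $t$ and invoking the hypothesis $\norm{\Mt{} - M_t}_\ast + \norm{\Wt{} - W_t}_\ast \le \eps$ shows the perturbation sum is at least $-\eps T$. Hence, for every comparator $(M, W) \in \mc{S}$,
\[
\sum_{t=1}^T \inp{F_t}{M} + \inp{G_t}{W} - \sum_{t=1}^T \inp{F_t}{\Mt{}} + \inp{G_t}{\Wt{}} \le \sum_{t=1}^T \inp{F_t}{M} + \inp{G_t}{W} - \sum_{t=1}^T \inp{F_t}{M_t} + \inp{G_t}{W_t} + \eps T.
\]
Taking the maximum over $(M, W) \in \mc{S}$ and applying \cref{lem:regG} to the first two sums on the right yields $2\eta T + \log(l+m)/\eta + \eps T = (2\eta + \eps)(T) + \log(l+m)/\eta$, which is exactly the claim.

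There is no substantive obstacle; the proof is a one-line perturbation argument layered on top of \cref{lem:regG}. The only point deserving care is that the norm appearing in the hypothesis should be read as the trace norm (equivalently, the component norm of $\norm{\cdot}_{\mc{S}}$ from \cref{cor:vnconvgen}), since that is precisely what pairs with the operator-norm bounds $\norm{F_t}, \norm{G_t} \le 1$ through Hölder; with the operator norm on the difference the estimate would lose a dimension factor. It is also worth noting that $(\Mt{}, \Wt{})$ need not themselves lie in $\mc{S}$, but this is immaterial, since \cref{lem:regG} is invoked only for the exact iterates and the comparator, and the perturbation estimate above uses only the norm bound.
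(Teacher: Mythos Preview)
Your proposal is correct and follows essentially the same approach as the paper: decompose the approximate gain into the exact gain plus a perturbation, bound the perturbation term by term via matrix H\"older (operator norm paired with trace norm) to get $-\eps$ per round, and then invoke \cref{lem:regG}. Your remark that the norm in the hypothesis must be read as the trace norm is apt and matches what the paper actually uses in its proof.
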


\begin{proof}
    The corollary follows from the fact that for each $1 \leq t \leq T$, we have:
    
    \begin{align*}
        \inp{F_t}{\Mt{}} + \inp{G_t}{\Wt{}} &= \inp{F_t}{M_t} + \inp{G_t}{W_t} + \inp{F_t}{\Mt{} - M_t} + \inp{G_t}{\Wt{} - W_t} \\
        &\geq \inp{F_t}{M_t} + \inp{G_t}{W_t} - \norm{\Mt{} - M_t}_{\ast} - \norm{\Wt{} - W_t}_{\ast} \\
        &\geq \inp{F_t}{M_t} + \inp{G_t}{W_t} - \eps
    \end{align*}
    
    where the first inequality follows from Matrix-H\"olders inequality. 
\end{proof}

\subsection{Analysis of the Solver}
\label{ssec:solvana}

In this subsection, we formally introduce our solver and incorporate the regret analysis from the previous subsection into its analysis. We first introduce the following notation:

\begin{align*}
    \psit{} &= \sum_{i \in [n]} \wts_i A_i\qquad \phit{} = \sum_{i \in [n]} \wts_i B_i && \text{Weighted Constraint matrices}
\end{align*}

Our algorithm and its subsequent analysis follow along the lines of \cite{peng2012faster}:

\begin{algorithm}[H]
\SetAlgoLined

\KwIn{Constraint Matrices $A_i = C_iC_i^\top$ \text{ and }  $B_i = D_iD_i^\top$, Tolerance $\eps$, Failure Probability $\delta$}
\KwOut{Either primal feasible, $(M^*, W^*)$ or dual feasible $x^*$ satisfying \cref{prob:epsdec}}

$K \gets \frac{1 + \log (n + l + m)}{\eps}, \alpha \gets \frac{\epd}{(1 + 10 \eps)Kk}, R \gets \frac{512 \log (n + l + m) K k}{\epd\eps}, \eps^\prime \gets \frac{\eps^2}{2048 k \log (n + l + m)} $, $\dd \gets \delta / (5R)$\;
$t \gets 0, x_i^0 \gets 1 / (n(\Tr A_i + \Tr B_i))$

\While{$\norm{\wts{}}_1 \leq K$ and $t < R$} {
    $t \gets t + 1$\;
    $\omt{} \gets (\psit[t-1] - \psit[0]),\ \thet{} \gets (\phit[t-1] - \phit[0])$\;
    $(\gt{}, \bt{}, \bpt{}, \taut, \{(\st_i, \vt_i)\}_{i = 1}^k) \gets \projalg (\omt{}, 2Kk, \thet{}, \epd, \dd)$ \label{lin:proj}\;
    $\Vt{} \gets [\vt_1, \dots, \vt_k]$\;
    
    $\yt{} \gets \ipest(\omt{}, 2Kk, \{A_i = C_i C_i^\top\}, \epd, \dd)$\;\ 
    $\zt{} \gets \ipest(\thet{}, 2Kk, \{\mc{P}_{\Vt{}}^\perp D_i D_i^\top \mc{P}_{\Vt{}}^\perp\}, \epd, \dd)$\;
    
    $\St{} \gets \{i \in [n]: \gt{} \yt_i + \bpt{} \zt_i + \bt{} \sum_{j = 1}^k \min(\st_j, \taut{}) (\vt_j)^\top D_iD_i^\top \vt_j \leq (1 + \eps)\}$\label{lin:iprod} \;
    $\wts \gets \wts[t - 1] + \alpha \wts[t - 1]_{\St{}}$\;
}

\uIf{$\norm{\wts}_1 \geq K$} {
    \KwRet{$w^* = \frac{1}{(1 + 10 \eps)K} \cdot \wts{}$ as dual feasible point}\;
}
\Else{
    $M^* \gets t^{-1} \sum_{i = 1}^t \gt{} \exp (\psit[t - 1])$\;
    $W^* \gets t^{-1} \sum_{i = 1}^t (\bt{}\sum_{i = 1}^k \min (\st_i, \taut{}) \vt_i (\vt_i)^\top + \bpt{} \mc{P}_{\Vt{}}^\perp \exp (\phit[t - 1]) \mc{P}_{\Vt{}}^\perp)$\;
    \KwRet{$(M^*, W^*)$ as primal feasible point}\;
}
\caption{PackingCoveringSolver}
\label{alg:packsol}
\end{algorithm}

For the rest of the proof, we will assume that the algorithm terminates at the end of the $T^{th}$ loop for some $T \leq R$. For ease of exposition, we now define the following variables:

\begin{align*}
    \Ft{} &= \frac{\alpha}{\epd} \sum_{i \in \St{}} \wts[t - 1]_i A_i\qquad \Gt{} = \frac{\alpha}{\epd} \sum_{i \in \St{}} \wts[t - 1]_i B_i && \text{Gain matrices} \\
    \dt{} &= \alpha \wts[t-1]_{\St{}} \implies \wts = \wts[0] + \sum_{i = 1}^t \dt[i] && \text{Update variables}  \\
    \Mt{} &= \gt{} \exp (\psit[t - 1]) && \text{Approximate $M$ Projection} \\
    \Wt{} &= \bt{}\sum_{i = 1}^k \min (\taut{}, \st{}) \vt_i(\vt_i)^\top + \bpt{} \mc{P}_{\Vt{}}^\perp \exp (\phit[t - 1]) \mc{P}_{\Vt{}}^\perp && \text{Approximate $W$ Projection} \\
    (\wMt{}, \wWt{}) &= \argmax_{(M, W) \in \mathcal{S}} \inp{\omt{}}{M} + \inp{\thet{}}{W} + \vNE(M) + \vNE(W) && \text{True Projections}
\end{align*}

Note that $\Mt{}$ and $\Wt{}$ are meant to be approximations to $\wMt{}$ and $\wWt{}$ respectively and the correctness of these projections is guaranteed by \cref{thm:fast-projection-main}. Also, observe that $\omt{} = \epd\sum_{i = 1}^{t - 1}\Ft[i]$ and $\thet{} = \epd\sum_{i = 1}^{t - 1}\Gt[i]$. In the next few lemmas proving the correctness of \cref{alg:packsol}, we will simplify presentation by making the following assumptions. We will prove in the main theorem of the section that these assumptions hold with the desired probability.

\begin{assumption}
    \label{as:solvasm}
    We assume the following about the running of \cref{alg:packsol} for all $t \in [T]$:
    
    \begin{enumerate}
        \item The projections $(\Mt{}, \Wt{})$ satisfy $\norm{\Wt{}} \leq \Tr(\Wt{}) / k$ and satisfy:
        
        \begin{equation*}
            \norm{\Mt{} - \wMt{}}_\ast + \norm{\Wt{} - \wWt{}}_\ast \leq \epd.
        \end{equation*}
        
        \item The estimates, $\yt{}$ and $\zt{}$ satisfy for all $i \in [n]$:
        
        \begin{gather*}
            (1 - \epd) \inp{\Mt{}}{A_i} \leq \gt{} \yt_i \leq (1 + \epd) \inp{\Mt{}}{A_i} \\
            (1 - \epd) \inp{\Wt{}}{B_i} \leq \bpt{} \zt_i + \bt{} \sum_{j = 1}^k \min(\st_j, \taut{}) (\vt_j)^\top D_iD_i^\top \vt_j \leq (1 + \epd) \inp{\Wt{}}{B_i}.
        \end{gather*}
        
    \end{enumerate}
\end{assumption}

We also make the following non-probabilistic assumptions about the problem:
\begin{assumption}
\label{as:probasm}
    We assume that the problem instance satisfies:
    \begin{enumerate}
            
            \item For each, $i$, we have:
            
            \begin{equation*}
                \Tr (A_i) \leq (l + m + n)^5 \text{ and } \Tr (B_i) \leq (l + m + n)^5.
            \end{equation*}
            
            \item $T > 0$
    \end{enumerate}
\end{assumption}

The following three claims are analogues of Claims 3.3-3.5 from \cite{peng2012faster}:

\begin{claim}
    \label{clm:init}
    Assume \cref{as:probasm}. Then, we have:
    
    \begin{equation*}
        \norm{\psit[0]} \leq 1 \qquad \norm{\phit[0]}_k \leq k.
    \end{equation*}
\end{claim}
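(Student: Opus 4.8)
The plan is to unwind the definitions in \cref{alg:packsol} and reduce the claim to two elementary spectral inequalities for positive semidefinite matrices. Recall that the solver initializes the weights by $\wts[0]_i = \frac{1}{n(\Tr A_i + \Tr B_i)}$ for each $i \in [n]$, so that $\psit[0] = \sum_{i \in [n]} \wts[0]_i A_i$ and $\phit[0] = \sum_{i \in [n]} \wts[0]_i B_i$. My first step would be to bound $\norm{\psit[0]}$ by the triangle inequality for the operator norm, giving $\norm{\psit[0]} \leq \sum_{i \in [n]} \wts[0]_i \norm{A_i}$; then I would use that $\norm{A_i} \leq \Tr(A_i)$ for $A_i \succcurlyeq 0$ (the top eigenvalue is dominated by the sum of all eigenvalues), together with $\Tr(B_i) \geq 0$, to conclude that each summand satisfies $\wts[0]_i \norm{A_i} \leq \frac{\Tr(A_i)}{n(\Tr A_i + \Tr B_i)} \leq \frac{1}{n}$, and hence $\norm{\psit[0]} \leq 1$.

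The second step is entirely parallel. Since the Ky--Fan $k$-norm $\norm{\cdot}_k$ is a genuine norm, the triangle inequality yields $\norm{\phit[0]}_k \leq \sum_{i \in [n]} \wts[0]_i \norm{B_i}_k$, and for $B_i \succcurlyeq 0$ we have $\norm{B_i}_k = \sum_{j = 1}^k \lambda_j(B_i) \leq \sum_j \lambda_j(B_i) = \Tr(B_i)$. The same arithmetic as before then gives $\norm{\phit[0]}_k \leq \sum_{i \in [n]} \frac{\Tr(B_i)}{n(\Tr A_i + \Tr B_i)} \leq 1 \leq k$, where the final step uses $k \geq 1$. (This route in fact proves the slightly stronger bound $\norm{\phit[0]}_k \leq 1$, but $\leq k$ is exactly the invariant needed in the rest of the analysis.)

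The only subtle point — and the closest thing to an obstacle — is the well-definedness of $\wts[0]_i$, which requires $\Tr A_i + \Tr B_i > 0$. If $\Tr A_i + \Tr B_i = 0$ then $A_i = B_i = 0$, and the $i$-th covering constraint $\inp{M}{A_i} + \inp{W}{B_i} \geq 1$ can never be satisfied, so such trivial constraints may be discarded and we may assume without loss of generality that $\Tr A_i + \Tr B_i > 0$ for every $i$ (this positivity is implicit in the solver's use of the initialization $x_i^0$). Beyond this bookkeeping, the claim is immediate from the choice of initialization and the two spectral inequalities above.
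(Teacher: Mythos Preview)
Your proof is correct and follows essentially the same approach as the paper: both bound the operator (resp.\ Ky--Fan $k$-) norm of a PSD matrix by its trace and then use $\wts[0]_i(\Tr A_i + \Tr B_i) = 1/n$. The paper applies $\norm{\cdot} \le \Tr(\cdot)$ directly to the sum rather than invoking the triangle inequality first, but this is a cosmetic difference; your added remark on the well-definedness of $\wts[0]_i$ is a nice point of rigor that the paper leaves implicit.
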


\begin{proof}
    We have:
    \begin{align*}
        \norm{\psit[0]} &\leq \Tr{\psit[0]} \leq \sum_{i = 1}^n \frac{1}{n \Tr A_i} \Tr A_i = 1. \\
        \norm{\phit[0]}_k &\leq \Tr{\phit[0]} \leq \sum_{i = 1}^n \frac{1}{n \Tr B_i} \Tr B_i = 1.
    \end{align*}
\end{proof}

\begin{claim}
    \label{clm:deltabnd}
    Assume \cref{as:solvasm,as:probasm}. Then, for $t = 1, \dots, T$:
    \begin{equation*}
        \inp{\Ft{}}{\Mt{}} + \inp{\Gt{}}{\Wt{}} \leq \frac{1 + 2\eps}{\epd} \cdot \norm{\dt{}}_1.
    \end{equation*}
\end{claim}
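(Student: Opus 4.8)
\textbf{Proof plan for \cref{clm:deltabnd}.}
The plan is to expand both inner products using the definitions of the gain matrices, then bound the per-coordinate contributions using the estimation guarantees of \cref{as:solvasm} together with the defining inequality of the active set $\St{}$. Concretely, from $\Ft{} = \frac{\alpha}{\epd}\sum_{i\in\St{}}\wts[t-1]_i A_i$ and $\Gt{} = \frac{\alpha}{\epd}\sum_{i\in\St{}}\wts[t-1]_i B_i$, linearity of the trace inner product gives
\begin{equation*}
\inp{\Ft{}}{\Mt{}} + \inp{\Gt{}}{\Wt{}} = \frac{\alpha}{\epd}\sum_{i\in\St{}}\wts[t-1]_i\Big(\inp{A_i}{\Mt{}} + \inp{B_i}{\Wt{}}\Big).
\end{equation*}

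Next I would invoke the second part of \cref{as:solvasm}. Its lower bounds rearrange to $\inp{\Mt{}}{A_i} \le \frac{1}{1-\epd}\,\gt{}\yt_i$ and $\inp{\Wt{}}{B_i} \le \frac{1}{1-\epd}\big(\bpt{}\zt_i + \bt{}\sum_{j=1}^k \min(\st_j,\taut{})(\vt_j)^\top D_iD_i^\top \vt_j\big)$. Adding these, the quantity $\inp{A_i}{\Mt{}} + \inp{B_i}{\Wt{}}$ is at most $\frac{1}{1-\epd}$ times exactly the expression appearing in the definition of $\St{}$ on \cref{lin:iprod}. Since every $i\in\St{}$ satisfies that expression $\le 1+\eps$, we get $\inp{A_i}{\Mt{}} + \inp{B_i}{\Wt{}} \le \frac{1+\eps}{1-\epd}$ for each $i\in\St{}$. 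Substituting back and using $\norm{\dt{}}_1 = \alpha\sum_{i\in\St{}}\wts[t-1]_i$ (as $\dt{} = \alpha\wts[t-1]_{\St{}}$ has nonnegative entries) yields
\begin{equation*}
\inp{\Ft{}}{\Mt{}} + \inp{\Gt{}}{\Wt{}} \le \frac{1}{\epd}\cdot\frac{1+\eps}{1-\epd}\cdot\alpha\sum_{i\in\St{}}\wts[t-1]_i = \frac{1+\eps}{1-\epd}\cdot\frac{\norm{\dt{}}_1}{\epd}.
\end{equation*}

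The only remaining point is the elementary numerical check that the parameter choices force $\frac{1+\eps}{1-\epd} \le 1+2\eps$; this is equivalent to $\epd(1+2\eps)\le \eps$, which holds for the chosen $\epd = \eps^{\prime 2}/(2048 k\log(n+l+m))$ (or whatever small value $\epd$ is set to) since $\epd \le \eps/3$ when $\eps\le 1$. This is the step to be careful with but it is purely routine constant-chasing; there is no genuine obstacle in the argument, which is a direct transcription of Claim~3.4 of \cite{peng2012faster} adapted to the two-block set $\mc{S}$. I would end by noting that all appeals to \cref{as:solvasm} are legitimate since the claim assumes it.
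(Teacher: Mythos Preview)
Your proposal is correct and follows essentially the same approach as the paper's proof: expand the inner products via the definitions of $\Ft{}$ and $\Gt{}$, use the lower bounds in part~(2) of \cref{as:solvasm} to pass from $\inp{A_i}{\Mt{}}+\inp{B_i}{\Wt{}}$ to the estimated quantities (picking up a factor $(1-\epd)^{-1}$), invoke the defining inequality of $\St{}$ to bound those by $1+\eps$, and finish with the numerical check $\frac{1+\eps}{1-\epd}\le 1+2\eps$. The only minor slip is in your identification of $\epd$ with the expression for $\eps'$; they are distinct parameters in the paper, but this does not affect the argument since all that is needed is $\epd\le \eps/3$.
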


\begin{proof}
    We have from the definition of $\St$ and \cref{as:solvasm}:
    \begin{align*}
        \inp{\Ft{}}{\Mt{}} + \inp{\Gt{}}{\Wt{}} &= \frac{\alpha}{\epd} \sum_{i \in \St{}} \wts[t - 1]_i (\inp{A_i}{\Mt{}} + \inp{B_i}{\Wt{}}) \leq \frac{(1 + \eps)}{\epd} \sum_{i \in \St{}} \dt_i \\
        &\leq (1 - \epd)^{-1} \frac{1 + \eps}{\epd} \cdot \norm{\dt{}}_1 \leq \frac{(1 + 2 \eps)}{\epd} \cdot \norm{\dt{}}_1.
    \end{align*}
\end{proof}

\begin{claim}
    \label{clm:dualbnd}
    Assume \cref{as:solvasm,as:probasm}. Then, for $t = 0, \dots, T$:
    \begin{equation*}
        \norm{\wts{}}_1 \leq (1 + \eps)K.
    \end{equation*}
\end{claim}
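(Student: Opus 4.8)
The plan is to extract the bound directly from the loop structure of \cref{alg:packsol}, with the case $t = T$ treated separately from $t < T$. Recall we are assuming the algorithm terminates at the end of its $T$-th iteration with $1 \le T \le R$ (the lower bound being \cref{as:probasm}). Since the \textbf{while} loop enters its body only when its guard is satisfied, the fact that iteration $j$ ran, for each $j \in \{1, \dots, T\}$, means $\norm{\wts[j-1]}_1 \le K$ held at the top of that iteration. Hence $\norm{\wts{}}_1 \le K \le (1 + \eps)K$ for every $t \in \{0, 1, \dots, T - 1\}$ --- a nonempty range since $T \ge 1$, and in particular this settles $t = 0$ without any appeal to the exact initialization.

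Next I would handle $t = T$. Here the update reads $\wts[T] = \wts[T-1] + \alpha\, \wts[T-1]_{\St[T]}$, and because every coordinate of $\wts[T-1]$ is nonnegative (the weights start positive and are only ever increased), $\norm{\wts[T-1]_{\St[T]}}_1 = \sum_{i \in \St[T]} \wts[T-1]_i \le \norm{\wts[T-1]}_1$. Combining this with the bound from the first paragraph applied to $t = T - 1$ gives
\begin{equation*}
    \norm{\wts[T]}_1 = \norm{\wts[T-1]}_1 + \alpha \norm{\wts[T-1]_{\St[T]}}_1 \le (1 + \alpha)\norm{\wts[T-1]}_1 \le (1 + \alpha)K,
\end{equation*}
and it remains to note $\alpha \le \eps$: from $\alpha = \epd / ((1 + 10\eps)Kk)$ with $K \ge 1$, $k \ge 1$, $1 + 10\eps \ge 1$, and $\epd \le \eps$, one gets $\alpha \le \epd \le \eps$, so $(1 + \alpha)K \le (1 + \eps)K$. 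That covers all cases $t = 0, \dots, T$.

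I do not expect any real obstacle here: the statement is a direct consequence of the loop termination condition together with the multiplicative nature of the weight update. The only points that need a moment of care are (i) the loop semantics --- correctly translating ``iteration $j$ executed'' into ``$\wts[j-1]$ obeys the guard'', which is what confines the nontrivial work to $t = T$ --- and (ii) the elementary quantitative check $\alpha \le \eps$, which is what makes the $(1 + \eps)$ slack in the statement (as opposed to a bare $(1 + \alpha)$) cost nothing.
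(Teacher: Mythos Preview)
Your proposal is correct and follows essentially the same approach as the paper: both handle $t < T$ via the loop guard and $t = T$ via the multiplicative update bound $(1+\alpha)\norm{\wts[T-1]}_1 \le (1+\eps)K$. Your version is slightly more explicit in verifying $\alpha \le \eps$ from the parameter settings, which the paper leaves implicit.
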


\begin{proof}
    It suffices to prove the claim for $t = T$ as for $t < T$, the claim is true from the fact that the while loop continued till the next iteration. Now, we have from the fact that $\dt_i \leq \alpha \wts[t - 1]_i$:
    
    \begin{equation*}
        \sum_{i = 1}^n \wts[T]_i = \sum_{i = 1}^n (\wts[T - 1]_i + \dt[T]_i) \leq \sum_{i = 1}^n (1 + \alpha) \wts[T - 1]_i \leq (1 + \alpha) \norm{\wts[T - 1]}_1 \leq (1 + \eps) K.
    \end{equation*}
\end{proof}

We start with the following decomposition of $\psit{}$:
\begin{equation}
    \label{eq:psidec}
    \psit{} = \sum_{i = 1}^n \wts_i A_i = \sum_{i = 1}^n \wts[0]_i A_i + \sum_{i = 1}^n \sum_{j = 1}^t \dt[j]_i A_i = \sum_{i = 1}^n \wts[0]_i A_i + \sum_{j = 1}^t \sum_{i = 1}^n \dt[j]_i A_i = \psit[0] + \epd \sum_{j = 1}^t \Ft[j].
\end{equation}
Similarly, we get for $\phit{}$

\begin{equation}
    \label{eq:phidec}
    \phit{} = \sum_{i = 1}^n \wts_i B_i = \sum_{i = 1}^n \wts[0]_i B_i + \sum_{i = 1}^n \sum_{j = 1}^t \dt[j]_i B_i = \sum_{i = 1}^n \wts[0]_i B_i + \sum_{j = 1}^t \sum_{i = 1}^n \dt[j]_i B_i = \psit[0] + \epd \sum_{j = 1}^t \Gt[j].
\end{equation}

\begin{lemma}
\label{lem:specbnd}
    Under \cref{as:solvasm,as:probasm}, we have for every $t = 0, \dots, T$:
    \begin{equation*}
        \psit{} \preccurlyeq (1 + 10 \eps) K \qquad \phit{} \preccurlyeq (1 + 10 \eps) K \cdot k.
    \end{equation*}
\end{lemma}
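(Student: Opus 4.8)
The plan is to prove both spectral bounds simultaneously by induction on $t$. The base case $t=0$ is immediate from \cref{clm:init}: it gives $\psit[0]\preccurlyeq\norm{\psit[0]}\,I\preccurlyeq I$ and $\phit[0]\preccurlyeq\norm{\phit[0]}_k\,I\preccurlyeq kI$, and both $1\le K$ and $k\le Kk$ hold since $K=(1+\log(n+l+m))/\eps\ge 1$. For the inductive step I would assume $\psit[s]\preccurlyeq(1+10\eps)K$ and $\phit[s]\preccurlyeq(1+10\eps)Kk$ for all $0\le s\le t-1$. The first observation is that this hypothesis is exactly what makes the regret machinery applicable: from \cref{eq:psidec,eq:phidec} the gain matrices are $\Ft[j]=\frac{\alpha}{\epd}\sum_{i\in\St[j]}\wts[j-1]_iA_i\preccurlyeq\frac{\alpha}{\epd}\psit[j-1]$ and $\Gt[j]\preccurlyeq\frac{\alpha}{\epd}\phit[j-1]$ (dropping the nonnegative terms outside $\St[j]$), so with $\alpha=\frac{\epd}{(1+10\eps)Kk}$ and the inductive hypothesis applied at indices $j-1\le t-1$ we obtain $\Ft[j]\preccurlyeq\frac1k I$ and $\Gt[j]\preccurlyeq I$, hence $\norm{\Ft[j]}\le 1$ and $\norm{\Gt[j]}\le 1$ for every $j\le t$.

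Next I would invoke the regret bound. Since $\omt{}=\epd\sum_{i=1}^{t-1}\Ft[i]$ and $\thet{}=\epd\sum_{i=1}^{t-1}\Gt[i]$, the true projections $(\wMt[j],\wWt[j])$ are exactly the mirror-descent iterates of \cref{eq:mdUp} with step size $\eta=\epd$ run on the gain sequence $(\Ft[i],\Gt[i])$, and by \cref{as:solvasm} the computed $(\Mt[j],\Wt[j])$ satisfy $\norm{\Mt[j]-\wMt[j]}_\ast+\norm{\Wt[j]-\wWt[j]}_\ast\le\epd$. As the previous paragraph gives $\norm{\Ft[j]},\norm{\Gt[j]}\le 1$, \cref{cor:regapx} (with horizon $t$, $\eta=\epd$, approximation error $\epd$) yields, for every $(M,W)\in\mc{S}$,
\begin{equation*}
    \sum_{j=1}^t\left(\inp{\Ft[j]}{M}+\inp{\Gt[j]}{W}\right)\le\sum_{j=1}^t\left(\inp{\Ft[j]}{\Mt[j]}+\inp{\Gt[j]}{\Wt[j]}\right)+3\epd t+\frac{\log(l+m)}{\epd}.
\end{equation*}
I would then bound the right-hand side using \cref{clm:deltabnd} (each term of the first sum is at most $\frac{1+2\eps}{\epd}\norm{\dt[j]}_1$) together with $\sum_{j=1}^t\norm{\dt[j]}_1=\norm{\wts{}}_1-\norm{\wts[0]}_1\le(1+\eps)K$ from \cref{clm:dualbnd} (valid because $\wts{}=\wts[0]+\sum_j\dt[j]$ has nonnegative coordinates), $\log(l+m)<\eps K$ by the definition of $K$, and $3\epd^2 t\le 3\epd^2 R\le\eps K$ by the choice of $R$ and the accuracy parameter $\epd$. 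Multiplying by $\epd$ this shows $\epd\sum_{j=1}^t(\inp{\Ft[j]}{M}+\inp{\Gt[j]}{W})\le(1+7\eps)K$ for all $(M,W)\in\mc{S}$.

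To conclude I would read off each bound by choosing a suitable test element of $\mc{S}$. For a unit vector $v\in\R^l$ the pair $(vv^\top,0)$ lies in $\mc{S}$, so by \cref{eq:psidec} and $v^\top\psit[0]v\le\norm{\psit[0]}\le 1$,
\begin{equation*}
    v^\top\psit{}v=v^\top\psit[0]v+\epd\sum_{j=1}^t v^\top\Ft[j]v\le 1+(1+7\eps)K\le(1+10\eps)K,
\end{equation*}
and since $v$ was arbitrary $\psit{}\preccurlyeq(1+10\eps)K$. For a unit vector $v\in\R^m$ I would take $Q$ to be any rank-$(k-1)$ orthogonal projection onto a subspace of $v^\perp$ (which exists since $k\le m$) and set $W:=\frac1k(vv^\top+Q)$; then $\Tr W=1$ and $\norm{W}=\frac1k=\frac{\Tr W}{k}$, so $(0,W)\in\mc{S}$, and $\inp{\Gt[j]}{W}\ge\frac1k v^\top\Gt[j]v$ because $\Gt[j],Q\succcurlyeq 0$, whence by \cref{eq:phidec} and $v^\top\phit[0]v\le\norm{\phit[0]}_k\le k$,
\begin{equation*}
    v^\top\phit{}v=v^\top\phit[0]v+\epd\sum_{j=1}^t v^\top\Gt[j]v\le k+k\cdot\epd\sum_{j=1}^t\inp{\Gt[j]}{W}\le k+k(1+7\eps)K\le(1+10\eps)Kk,
\end{equation*}
which gives $\phit{}\preccurlyeq(1+10\eps)Kk$ and closes the induction.

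The hard part is the apparent circularity in the induction: \cref{cor:regapx} can be invoked only after one knows $\norm{\Ft[j]},\norm{\Gt[j]}\le 1$, and that boundedness is itself a consequence of the spectral bound one iteration earlier, so the spectral bound must be maintained as an invariant through the induction rather than derived in one shot. The one other point that needs attention is the asymmetry between the two claims: the $\psi$-bound is exposed by testing the regret game against a rank-one matrix, whereas the $\phi$-bound must be tested against a rank-$k$ weight matrix obeying the Fantope constraint $\norm{W}\le\Tr W/k$ built into $\mc{S}$, and that constraint is exactly what produces the extra factor of $k$ on the right-hand side of the $\phi$-bound (and forces the mild assumption $k\le m$).
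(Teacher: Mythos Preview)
Your proof is correct and follows essentially the same strategy as the paper: strong induction on $t$, using the inductive hypothesis to certify $\norm{\Ft[j]},\norm{\Gt[j]}\le 1$ so that \cref{cor:regapx} applies, then combining the regret bound with \cref{clm:deltabnd,clm:dualbnd} and the decompositions \cref{eq:psidec,eq:phidec}. The only presentational difference is in the final extraction step: the paper observes directly that $\norm{\sum_j\Ft[j]}\le\max_{(M,W)\in\mc{S}}\sum_j(\inp{\Ft[j]}{M}+\inp{\Gt[j]}{W})$ and $\norm{\sum_j\Gt[j]}_k/k\le\max_{(M,W)\in\mc{S}}\sum_j(\inp{\Ft[j]}{M}+\inp{\Gt[j]}{W})$ via the variational characterizations of the operator and Ky--Fan norms, and then adds the $t=0$ term by triangle inequality on norms; you instead fix a unit vector $v$ and construct the witnessing $(M,W)\in\mc{S}$ explicitly for each bound. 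These are the same argument.
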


\begin{proof}
    As in the proof of Lemma~3.2 in \cite{peng2012faster}, we will prove the claim via strong induction on $t$. We have from the definitions of $\alpha$ and $\epd$:
    \begin{align*}
        \Ft{} &= \frac{1}{\epd} \sum_{i = 1}^n \dt_i A_i \preccurlyeq \frac{\alpha}{\epd} \sum_{i = 1}^n \wts[t - 1]_i A_i = \frac{\alpha}{\epd} \psit[t - 1] \preccurlyeq I \\ 
        \Gt{} &= \frac{1}{\epd} \sum_{i = 1}^n \dt_i B_i \preccurlyeq \frac{\alpha}{\epd} \sum_{i = 1}^n \wts[t - 1]_i B_i = \frac{\alpha}{\epd} \phit[t - 1] \preccurlyeq I.
    \end{align*}
    We can now apply the results of \cref{cor:regapx} and the definition of $\St{}$ along with \cref{as:solvasm} to obtain:
    \begin{align*}
        \norm{\sum_{j = 1}^t \Ft[j]} &\leq \max_{(M, W) \in \mc{S}} \sum_{j = 1}^t \inp{\Ft[j]}{M} + \inp{\Gt[j]}{W} \leq 3\epd (T) + \frac{\log (l + m)}{\epd} + \sum_{j = 1}^T \inp{\Ft[j]}{\Mt[j]} + \inp{\Gt[j]}{\Wt[j]} \\
        &= 3\epd (T) + \frac{\log (l + m)}{\epd} + \frac{\alpha}{\epd} \sum_{j = 1}^T \inp{\sum_{i \in \St[j]} \wts[j-1]_i A_i}{\Mt[j]} + \inp{\sum_{i \in \St[j]} \wts[j-1]_i B_i}{\Wt[j]} \\
        &= 3\epd (T) + \frac{\log (l + m)}{\epd} + \frac{1}{\epd} \sum_{j = 1}^T \sum_{i \in \St[j]} \alpha \wts[j-1]_i  (\inp{A_i}{\Mt[j]} + \inp{B_i}{\Wt[j]}) \\
        &\leq 3\epd (T) + \frac{\log (l + m)}{\epd} + \frac{(1 + \eps)(1 + \epd)}{\epd} \sum_{j = 1}^T \sum_{i \in \St[j]} \dt[j]_i \\
        &= 3\epd (T) + \frac{\log (l + m)}{\epd} + \frac{(1 + 2\eps)}{\epd} \norm{\wts{}}_1.
    \end{align*}
    Similarly, we get:
    
    \begin{equation*}
        \frac{\norm{\sum_{j = 1}^t \Gt[j]}_k}{k} \leq \max_{(M, W) \in \mc{S}} \sum_{j = 1}^t \inp{\Ft[j]}{M} + \inp{\Gt[j]}{W} \leq 3\epd (T) + \frac{\log (l + m)}{\epd} + \frac{(1 + 2\eps)}{\epd} \norm{\wts{}}_1.
    \end{equation*}
    From the previous two inequalities, we get for $\psit{}$ from Equation~\ref{eq:psidec}:
    
    \begin{equation*}
        \norm{\psit{}} \leq \norm{\psit[0]} + \epd \norm{\sum_{j = 1}^t \Ft[j]} \leq (1 + 10\eps)K.
    \end{equation*}
    Finally, we get for $\phit{}$ from Equation~\ref{eq:phidec}:
    \begin{equation*}
        \norm{\phit{}}_k \leq \norm{\phit[0]}_k + \epd \norm{\sum_{j = 1}^t \Gt[j]}_k \leq (1 + 10\eps)K \cdot k.
    \end{equation*}
\end{proof}

\begin{lemma}
    \label{lem:priexit}
    Under \cref{as:solvasm,as:probasm}, \cref{alg:packsol} terminates with $\norm{\wts[R]}_1 \leq K$, we have for all $i \in [n]$:
    \begin{equation*}
        \inp{A_i}{M^*} + \inp{B_i}{W^*} \geq 1.
    \end{equation*}
\end{lemma}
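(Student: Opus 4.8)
The plan is an averaging argument over the $R$ rounds of \cref{alg:packsol}, in the spirit of the primal-feasibility step of \cite{peng2012faster}. First I would observe that since the while loop is exited with $\norm{\wts[R]}_1 \le K$, it cannot have been exited because $\norm{\wts{}}_1$ overshot $K$; hence $T = R$, the algorithm enters the \textbf{else} branch, and
\[
M^* = \frac1T\sum_{t=1}^T \Mt{}, \qquad W^* = \frac1T\sum_{t=1}^T \Wt{}
\]
with $\Mt{}, \Wt{}$ the approximate $M$- and $W$-projections defined just before the claims. I would then note that $\Mt{} \succcurlyeq 0$ and $\Wt{} \succcurlyeq 0$ by construction (the scalars $\gt{}, \bt{}, \bpt{}, \taut{}$ and the eigenvalues $\st{}$ produced by $\projalg$ are nonnegative), so for each fixed $i \in [n]$ every term $\inp{A_i}{\Mt{}} + \inp{B_i}{\Wt{}}$ is $\geq 0$.

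Next, for a fixed $i$ I would let $n_i := |\{t \in [T] : i \in \St{}\}|$ count the rounds in which $\wts_i$ is incremented. Since $\wts_i$ equals $\wts[t-1]_i$ when $i \notin \St{}$ and $(1+\alpha)\wts[t-1]_i$ when $i \in \St{}$, one gets the telescoping identity $\wts[T]_i = \wts[0]_i (1+\alpha)^{n_i}$. Bounding $\wts[T]_i \le \norm{\wts[T]}_1 \le K$ and $\wts[0]_i = \big(n(\Tr A_i + \Tr B_i)\big)^{-1} \ge \big(2n(l+m+n)^5\big)^{-1}$ via \cref{as:probasm} yields $(1+\alpha)^{n_i} \le 2Kn(l+m+n)^5$, hence
\[
n_i \le \frac{\log\!\big(2Kn(l+m+n)^5\big)}{\log(1+\alpha)} .
\]
Plugging in the settings $\alpha = \epd/\big((1+10\eps)Kk\big)$ and $T = R = 512\log(n+l+m)Kk/(\epd\eps)$, using $\log(1+\alpha) \ge \alpha/2$, and invoking the standing assumption that $\eps$ is at least inverse-polynomial in $n+l+m$ (so that $\log(2Kn(l+m+n)^5) = O(\log(n+l+m))$ and the constant $512$ in $R$ has enough room), a short computation should give $n_i \le (\eps - \epd)\,T/(1+\eps)$.

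Then I would handle the rounds $t$ with $i \notin \St{}$: by the defining inequality of $\St{}$ on \cref{lin:iprod}, for such $t$ the quantity $\gt{}\yt_i + \bpt{}\zt_i + \bt{}\sum_{j=1}^k \min(\st_j, \taut{})(\vt_j)^\top D_iD_i^\top \vt_j$ exceeds $1 + \eps$, while \cref{as:solvasm} bounds this same quantity above by $(1+\epd)\big(\inp{\Mt{}}{A_i} + \inp{\Wt{}}{B_i}\big)$; hence $\inp{\Mt{}}{A_i} + \inp{\Wt{}}{B_i} > (1+\eps)/(1+\epd)$ in every such round. Discarding the remaining $n_i$ (nonnegative) terms and using the bound on $n_i$,
\[
\inp{A_i}{M^*} + \inp{B_i}{W^*} = \frac1T\sum_{t=1}^T \big(\inp{\Mt{}}{A_i} + \inp{\Wt{}}{B_i}\big) \geq \frac{T - n_i}{T}\cdot\frac{1+\eps}{1+\epd} \geq 1,
\]
where the last inequality is exactly $n_i \le (\eps-\epd)T/(1+\eps)$. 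As $i$ is arbitrary, this is the claim.

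I expect the main obstacle to be the quantitative step in the second paragraph: extracting $n_i \le (\eps-\epd)T/(1+\eps)$ from the weight-growth inequality forces one to track all of $K, R, \alpha, \epd$ simultaneously and to verify carefully that $\log(2Kn(l+m+n)^5)$ is genuinely $O(\log(n+l+m))$ — this is where the lower bound on $\eps$ (and the smallness of $\epd$ relative to $\eps$) is used. Everything else is routine: the telescoping of $\wts_i$, the use of \cref{as:solvasm} to pass between the measured inner products and $\inp{\Mt{}}{A_i}+\inp{\Wt{}}{B_i}$, and the final averaging over rounds all follow the template of the analogous primal-feasibility argument in \cite{peng2012faster}.
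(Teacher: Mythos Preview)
Your argument is correct and is the direct (contrapositive) version of the paper's proof: the paper assumes $\inp{A_i}{M^*}+\inp{B_i}{W^*}<1$, applies Markov to lower-bound the number of rounds with $\inp{A_i}{\Mt{}}+\inp{B_i}{\Wt{}}\le(1+\eps)/(1+\epd)$, observes each such round forces $i\in\St{}$, and derives $\wts[R]_i\ge(n+l+m)^4$ as a contradiction. Both arguments rest on the identical mechanism---the telescoping $\wts[R]_i=\wts[0]_i(1+\alpha)^{n_i}$ together with $\wts[R]_i\le K$ caps $n_i$, and the threshold defining $\St{}$ plus \cref{as:solvasm} lower-bounds the contribution of the other $T-n_i$ rounds---so this is essentially the same proof; your need for the lower bound on $\eps$ (to control $\log K$ inside $\log(2Kn(l+m+n)^5)$) mirrors the paper's implicit use of the same bound to make $(n+l+m)^4>K$ an actual contradiction.
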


\begin{proof}
    Suppose for the sake of contradiction, that there exists $i \in [n]$ such that:
    \begin{equation*}
        \inp{A_i}{M^*} + \inp{B_i}{W^*} < 1 \implies \frac{1}{R} \cdot \sum_{j = 1}^R \inp{A_i}{\Mt[j]} + \inp{B_i}{\Wt[j]} < 1.
    \end{equation*}
    Now, let $U$ denote the steps in algorithm where the dual variable, $\wts_i$ was incremented. From \cref{as:solvasm}, we get that $\wts_i$ is at least incremented for every iteration in the set $Y$ defined as:
    \begin{equation*}
        Y = \{j: \inp{A_i}{\Mt[j]} + \inp{B_i}{\Wt[j]} \leq (1 + \eps)(1 + \epd)^{-1}\}
    \end{equation*}
    By Markov's inequality and the definition of $\epd$, we must have $\abs{Y} \geq \frac{\eps}{2(1 + \eps)} \cdot R$. We must have as $\wts_i$ is incremented by a factor of $(1 + \alpha)$ each time:
    \begin{equation*}
        \wts[R]_i \geq \wts[0]_i (1 + \alpha)^{\abs{U}} \geq \wts[0]_i (1 + \alpha)^{\abs{Y}} \geq \wts[0]_i (1 + \alpha)^{\frac{\eps}{2(1 + \eps)} \cdot R} \geq \wts[0]_i \exp \lbrb{\frac{\alpha}{4} \abs{Y}} \geq (n + l + m)^4,
    \end{equation*}
    which is a contradiction. This concludes the proof of the lemma.
\end{proof}

\begin{lemma}
    \label{lem:asumpholds}
    Assume \cref{as:probasm}. Then, \cref{as:solvasm} holds in the running of \cref{alg:packsol} with probability at least $1 - \delta$. Furthermore, the total runtime of \cref{alg:packsol} is at most:
    \begin{equation*}
        O\lprp{(t_C + t_D + l + m) \poly\lprp{k, \frac{1}{\eps}, \log \frac{1}{\delta}, \log (l + m + n)}}
    \end{equation*}
    where $t_{C_i}$ and $t_{D_i}$ denote the time taken to compute one matrix vector multiplication with $C_i$ and $D_i$ respectively, $t_C = \sum_{i = 1}^n t_{C_i}$ and $t_D = \sum_{i = 1}^n t_{D_i}$.
\end{lemma}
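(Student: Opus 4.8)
The plan is to fix, up front, the random bits of all at most $3R$ invocations of $\projalg$ and $\ipest$ that \cref{alg:packsol} may make, let $\mc{E}$ be the event that each such invocation either received an input outside the norm regime its guarantee covers or returned a valid approximation, and then argue (a) $\Pr[\mc{E}] \ge 1 - \delta$ and (b) on $\mc{E}$, \cref{as:solvasm} holds for all $t \in [T]$, while the running time is bounded deterministically. For (a): each of the $R$ iterations makes exactly three randomized sub-calls --- one to $\projalg$ on $(\omt{}, \thet{})$ and two to $\ipest$ (on $\omt{}$ against $\{A_i\}$, and on $\thet{}$ against $\{\mc{P}_{\Vt{}}^\perp D_i D_i^\top \mc{P}_{\Vt{}}^\perp\}$), each at failure parameter $\dd$ --- and since the coins of the round-$t$ call are independent of its input (which depends only on the coins of earlier rounds), for every fixed in-regime input that call fails with probability at most $\dd$; a union bound over the at most $3R$ calls with $\dd = \delta / (5R)$ gives $\Pr[\mc{E}] \ge 1 - \delta$.

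The heart of the argument is (b), which I would prove on $\mc{E}$ by a simultaneous induction on $t$ establishing, for each round: (i) the inputs $\omt{}, \thet{}$ lie in the norm regime the sub-routine guarantees cover; (ii) the round-$t$ calls therefore succeed; (iii) \cref{as:solvasm} holds at round $t$; (iv) \cref{lem:specbnd} holds at round $t$. The inductive step proceeds in this order: given (iii) for all rounds $< t$, the proof of \cref{lem:specbnd} --- itself an induction that uses \cref{as:solvasm} only up to the round in question --- yields $\norm{\psit[t-1]} \le (1 + 10\eps)K$ and $\norm{\phit[t-1]} \le (1 + 10\eps)Kk$, and since $\omt{} = \psit[t-1] - \psit[0]$ and $\thet{} = \phit[t-1] - \phit[0]$ with $\norm{\psit[0]} \le 1$ and $\norm{\phit[0]} \le k$ by \cref{clm:init}, this bounds $\norm{\omt{}}, \norm{\thet{}}$ by $O(Kk)$, which is exactly the norm against which we tune the Taylor-series degree inside $\projalg$ and $\ipest$ --- this is (i); on $\mc{E}$, (i) forces (ii); (iii) then follows by unwinding the outputs of the successful calls (next paragraph); and (iv) is \cref{lem:specbnd}'s own induction closing at round $t$, which is legitimate now that \cref{as:solvasm} is known through round $t$. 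The base case $t = 1$ is immediate since then $\omt{} = \psit[0] - \psit[0] = 0$ and $\thet{} = \phit[0] - \phit[0] = 0$.

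To obtain \cref{as:solvasm} at round $t$ from the success of the round-$t$ calls: Part 1 is exactly the conclusion of \cref{thm:fast-projection-main}, since $(\Mt{}, \Wt{})$ is the candidate that $\projalg$ reconstructs from its outputs $(\gt{}, \bt{}, \bpt{}, \taut{}, \{(\st_i, \vt_i)\})$, and that theorem certifies both $\norm{\Wt{}} \le \Tr(\Wt{}) / k$ and $\norm{\Mt{} - \wMt{}}_\ast + \norm{\Wt{} - \wWt{}}_\ast \le \epd$, where $(\wMt{}, \wWt{})$ is the exact entropic maximizer over $\mc{S}$ with gain pair $(\omt{}, \thet{})$. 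For Part 2, $\Mt{}$ is $\gt{}$ times $\exp(\omt{})$, so $\inp{\Mt{}}{A_i} = \gt{} \inp{\exp(\omt{})}{A_i}$ with the inner product nonnegative (it equals $\norm{\exp(\omt{}/2) C_i}_F^2$), and the first $\ipest$ call returns $\yt_i$ multiplicatively $\epd$-approximating $\inp{\exp(\omt{})}{A_i}$, which is the first line; for the second line I would split $\inp{\Wt{}}{B_i} = \bt{} \sum_{j = 1}^k \min(\st_j, \taut{}) (\vt_j)^\top D_i D_i^\top \vt_j + \bpt{} \inp{\mc{P}_{\Vt{}}^\perp \exp(\thet{}) \mc{P}_{\Vt{}}^\perp}{B_i}$, observe that the first summand is computed exactly (a sum of $k$ exact squared norms, namely the term already appearing in the definition of $\St{}$) while the second is multiplicatively $\epd$-approximated by the second $\ipest$ call, and conclude --- using nonnegativity of both summands --- that $\bpt{} \zt_i + \bt{} \sum_{j = 1}^k \min(\st_j, \taut{}) (\vt_j)^\top D_i D_i^\top \vt_j$ lies within a $(1 \pm \epd)$ factor of $\inp{\Wt{}}{B_i}$.

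For the running time: the Taylor-series degree used inside $\projalg$ and $\ipest$ is capped at $O(Kk + \log(1/\epd)) = \poly(k, 1/\eps, \log(l+m+n))$ --- a fixed quantity, so the per-iteration cost is deterministically bounded --- and feeding this degree, together with the fact that matrix-vector products with $\psit[t-1], \phit[t-1], \omt{}, \thet{}$ cost $O(t_C)$ resp. $O(t_D)$ since these are kept implicitly as weighted sums of the factors $C_i C_i^\top$ and $D_i D_i^\top$, into the runtime guarantees of $\projalg$ and $\ipest$, and using that $1/\epd$, $1/\epp$, and $\log(1/\dd)$ are each bounded by $\poly(k, 1/\eps, \log(1/\delta), \log(l+m+n))$, each iteration costs $O\big((t_C + t_D + l + m) \cdot \poly(k, 1/\eps, \log(1/\delta), \log(l+m+n))\big)$ plus $O(t_C + t_D + n)$ bookkeeping (building $\St{}$, rescaling weights, projecting onto $\Vt{}$); multiplying by $T \le R = \poly(k, 1/\eps, \log(l+m+n))$ yields the stated total. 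The main obstacle is getting the simultaneous induction of (b) exactly right: \cref{lem:specbnd}, which supplies the polylogarithmic spectral bounds needed both to keep the sub-routines accurate (step (i)) and to keep the runtime nearly linear, itself presupposes \cref{as:solvasm}, so one must interleave the two inductions and verify that they close without circularity; the remainder is the routine matching of \cref{alg:packsol}'s named quantities to the objects of \cref{as:solvasm}.
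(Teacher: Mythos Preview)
Your proposal is correct and follows essentially the same approach as the paper: both argue by induction on $t$, using \cref{lem:specbnd} at round $t-1$ (which itself only needs \cref{as:solvasm} through round $t-1$) to certify the norm bounds on $\omt{},\thet{}$, and then invoke the guarantees of $\projalg$ and $\ipest$ to close the induction, with the runtime following from the per-iteration costs of those subroutines times $R$. Your framing---fixing all random bits upfront and working deterministically on a single good event $\mc{E}$---is a slightly cleaner packaging of the same probabilistic induction the paper carries out, and your explicit decomposition of $\inp{\Wt{}}{B_i}$ into an exactly-computed part and a multiplicatively-approximated part is a detail the paper leaves implicit.
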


\begin{proof}
    We will prove that \cref{as:solvasm} hold by induction on the number of steps of the Algorithm. Our induction hypothesis will be that \cref{as:solvasm} hold with probability $\dd (3t)$ up to iteration $t$. The hypothesis is trivially true at $t = 0$. Now, we will inductively prove that the assumptions hold true when $t = q + 1$ given that they hold at $t = 1 \dots q$. We start by computing a bound on the matrices $\omt{}$ and $\thet{}$. We have by the application of \cref{lem:specbnd} up to iteration $q$ that:
    \begin{equation*}
        \norm*{\omt[q + 1]} \leq \norm*{\psit[q]} \leq 2K.
    \end{equation*}
    Similarly, we have for $\thet{}$:
    \begin{equation*}
        \norm*{\thet[q + 1]} \leq \norm*{\phit[q]} \leq 2Kk.
    \end{equation*}
    Therefore, the upper bounds computed on $\norm{\omt[q + 1]}$ and $\norm{\omt[q + 1]}$ remain valid even in iteration $q + 1$. Therefore, conditioned on \cref{as:solvasm} holding true for iteration $q$, the conclusions of \cref{thm:fast-projection-main,lem:inpest} hold for Algorithms $\projalg$ and $\ipest$ for iteration $q + 1$ with probability at least $1 - 3\dd$. Hence, \cref{as:probasm} hold for iteration $q + 1$ with probability at least $(1 - 3\dd)(1 - 3\dd q) \geq 1 - (3\dd)(q + 1)$. 
    
    The runtime guarantees follow from the runtime guarantees in \cref{lem:inpest,thm:fast-projection-main} along with the fact that $2Kk$ is $O(\poly (\frac{1}{\eps}, \log (l + m + n), k))$ and matrix-vector multiplies with $\omt{}$ and $\thet{}$ can be implemented in time $O(t_C)$ and $O(t_D)$ respectively. And furthermore, a matrix vector product for all the $C_i$ and $\mc{P}_{\Vt{}}^\perp D_i$ required by \cref{lem:inpest} can be implemented in time $O(t_C)$ and $O(mk + t_D)$ respectively as for any vector $v$, computing $v^\top \mc{P}_{\Vt{}}^\perp$ takes $O(mk)$ time and subsequently, the resultant is multiplied with each of the $D_i$.
\end{proof}

We now conclude with the main theorem of the section.

\begin{theorem}
    \label{thm:solthm}
    There exists an Algorithm, $\solvalg$, which when given an instance of \cref{prob:epsdec}, with $A_i = C_i C_i^\top$, $B_i = D_iD_i^\top$, error tolerance $\eps \geq \frac{1}{n^2}$ and failure probability $\delta$, runs in time:
    
    \begin{equation*}
        O\lprp{(t_C + t_D + l + m) \poly\lprp{k, \frac{1}{\eps}, \log \frac{1}{\delta}, \log (l + m + n)}}
    \end{equation*}
    where $t_{C_i}$ and $t_{D_i}$ are the time taken to perform a matrix-vector product with $C_i$ and $D_i$ respectively and $t_C = \sum_{i = 1}^n t_{C_i}$ and $t_D = \sum_{i = 1}^n t_{D_i}$, and outputs a correct answer to \cref{prob:epsdec} with probability at least $1 - \delta$.
\end{theorem}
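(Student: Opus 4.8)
The plan is to assemble the lemmas of this section; essentially all of the real work has been done, and \cref{thm:solthm} is an assembly step. \cref{alg:packsol} has exactly two exit points --- it halts early once the dual iterate satisfies $\norm{\wts{}}_1\ge K$, or it runs all $R$ rounds with $\norm{\wts[R]}_1\le K$ --- and I would show that on the event that \cref{as:solvasm} holds, each exit produces a correct answer to \cref{prob:epsdec}; the probability of this event and the running time are then quoted directly from \cref{lem:asumpholds}.

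The only step needing an argument beyond bookkeeping is discharging \cref{as:probasm}. Part (2) is immediate: if the \textbf{while} loop never runs then $\norm{\wts[0]}_1\ge K\ge 1$, and \cref{clm:init} shows $\wts[0]$ is already feasible for the packing SDP, so it can be returned as a dual solution of value $\ge 1-\eps$. For part (1), note that in any packing-feasible $w$ one has $\wts_iA_i\preccurlyeq I$ and, by monotonicity of the Ky--Fan norm under $\preccurlyeq$, $\wts_i\norm{B_i}_k\le\norm{\sum_j\wts_jB_j}_k\le k$, which forces $\wts_i$ to be at most a negligible $(l+m+n)^{-4}$ whenever $\Tr A_i$ or $\Tr B_i$ exceeds $(l+m+n)^5$. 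Deleting all such indices --- and repairing the corresponding covering constraints by adding the negligible-trace multiples $\frac{1}{\Tr A_i}I$ or $\frac{1}{\Tr B_i}I$ of the identity to $M^*$ or $W^*$, which remains compatible with $\norm{W^*}\le\Tr(W^*)/k$ --- changes the value of the $\eps$-decision problem by at most $(l+m+n)^{-3}\le\eps^{3/2}$ (using $\eps\ge n^{-2}$), so it suffices to run the solver on the pruned instance with a tolerance a constant factor smaller than $\eps$, affecting the running time only by a constant factor.

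Now condition on \cref{as:solvasm}. If the algorithm exits early, I would verify $w^*=\frac{1}{(1+10\eps)K}\wts{}$ is feasible for the packing SDP: nonnegativity is clear, and $\sum_iw^*_iA_i=\frac{1}{(1+10\eps)K}\psit{}\preccurlyeq I$ and $\norm{\sum_iw^*_iB_i}_k=\frac{1}{(1+10\eps)K}\norm{\phit{}}_k\le k$ follow from \cref{lem:specbnd}; its value is $\sum_iw^*_i\ge\frac{1}{1+10\eps}\ge 1-10\eps$, which is $\ge 1-\eps$ after the rescaling above. If instead the algorithm runs all $R$ rounds (so $\norm{\wts[R]}_1\le K$), I would verify $(M^*,W^*)$ is feasible for the covering SDP: $M^*,W^*\succcurlyeq 0$ as averages of the PSD matrices $\Mt{},\Wt{}$ (using that $\projalg$ returns $\gt{},\bt{},\bpt{}\ge 0$); $\norm{W^*}\le\Tr(W^*)/k$ by averaging the per-round bound $\norm{\Wt{}}\le\Tr(\Wt{})/k$ of \cref{as:solvasm}; the constraints $\inp{A_i}{M^*}+\inp{B_i}{W^*}\ge 1$ are exactly the content of \cref{lem:priexit}; and, since the true projection $(\wMt{},\wWt{})$ lies in $\mc{S}$ and therefore has $\Tr\wMt{}+\Tr\wWt{}=1$, the trace-norm estimate of \cref{as:solvasm} gives $\Tr\Mt{}+\Tr\Wt{}\le 1+\epd$ in each round, hence $\Tr M^*+\Tr W^*\le 1+\epd\le 1+\eps$ after averaging. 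The failure probability $\delta$ and the running time then come straight from \cref{lem:asumpholds}.

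Since the substantive ingredients --- the regret guarantee \cref{lem:regG}, the spectral bound \cref{lem:specbnd}, the potential-function/Markov argument of \cref{lem:priexit}, and especially the fast approximate Fantope projection that powers \cref{lem:asumpholds} --- are already established, the only mildly delicate part of this proof is the preprocessing that secures \cref{as:probasm}, namely checking that the identity corrections used to repair deleted covering constraints respect $\norm{W^*}\le\Tr(W^*)/k$; everything else is direct.
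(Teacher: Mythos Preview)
Your proposal is correct and follows essentially the same approach as the paper: discard large-trace indices to secure \cref{as:probasm}, run \cref{alg:packsol} with a slightly tightened tolerance, and then on the event that \cref{as:solvasm} holds (probability and runtime supplied by \cref{lem:asumpholds}) verify the dual exit via \cref{lem:specbnd} and the primal exit via \cref{lem:priexit}, repairing the discarded covering constraints by adding a small multiple of the identity and checking this preserves $\norm{W}\le\Tr(W)/k$.

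One small remark: your argument that large-trace indices are forced to carry negligible weight in any packing-feasible $w$ is correct but unnecessary. The paper handles the dual side more directly: it simply sets $\hat w_i=0$ on the discarded indices, which is trivially feasible for the original problem and has the same objective value as the solver's output on the pruned problem. So no ``change of value'' estimate is needed on the dual side; the identity correction is only used to repair the primal covering constraints, exactly as you describe.
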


\begin{proof}
    We first discard $A_i$ and $B_i$ for those indices $i$ satisfying,
    
    \begin{equation*}
        \Tr (A_i) \geq (n + l + m)^5 \text{ or } \Tr (B_i) \geq (n + l + m)^5.
    \end{equation*}
    
    We will now run \cref{alg:packsol} instantiated with error parameter set to $\eps / 20$ and failure probability $\delta$. We first quickly address the case where $T = 0$. In this case, it must be that $\norm{\wts[0]}_1 > \frac{20(1 + \log (n + m + l))}{\eps}$. In this case, $w^*$ returned by the algorithm satisfies by definition $\sum_{i = 1}^n w_i \geq (1 - \eps/2)$ and furthermore, by \cref{clm:init}, is a valid dual solution. In this case, we can simply output $\hat{w} = w^*$ as a valid answer to \cref{prob:epsdec}.
    
    Now, after discarding the above two cases, we have that \cref{as:probasm} hold for the input passed to \cref{alg:packsol}. We have from \cref{lem:asumpholds} that \cref{alg:packsol} runs in time:
    
    \begin{equation*}
        O\lprp{(t_C + t_D + l + m) \poly\lprp{k, \frac{1}{\eps}, \log \frac{1}{\delta}, \log (l + m + n)}}
    \end{equation*}
    and that \cref{as:solvasm} hold in the running of \cref{alg:packsol} with probability at least $1 - \delta$. Conditioned on this event, we consider two possible cases:
    \begin{enumerate}
        \item The algorithm returns a dual solution, $w^\ast$.
        \item The algorithm returns a primal solution, $(M^\ast, W^\ast)$.
    \end{enumerate}
    In the first case, we have by \cref{lem:specbnd} and the definition of $w^\ast$ that $w^\ast$ is a feasible dual solution and furthermore, that $\sum_{i = 1}^n w^\ast \geq 1 - \eps / 2$ from our setting of the arguments to \cref{alg:packsol}. In this case, we simply define $\hat{w} = w^\ast$ for indices that are included in the input to \cref{alg:packsol} and $0$ for the discarded indices. Clearly, $\hat{w}$ is feasible dual solution to the original $\eps$-decision problem.

    In the second case, we construct a new primal solution, $(\wh{M}, \wh{W}) = (M^* + I / (n + l + m)^5, W^* / (n + l + m)^5)$. Note that for our bounds on $\eps$ and $n$, the trace of $(\wh{M}, \wh{W})$ from \cref{as:solvasm} is at most:
    \begin{align*}
        \Tr{\wh{M}} + \Tr{\wh{W}} &= \Tr{W^*} + \Tr{W^*} + 2 (n + l + m)^{-5} = T^{-1} \left(\sum_{t = 1}^T \Tr{\Mt{}} + \Tr{\Wt{}}\right) + 2 (n + l + m)^{-5} \\
        &\leq 1 + T^{-1} \lprp{\sum_{t = 1}^T \norm{\Mt{} - \wMt{}}_\ast + \norm{\Mt{} - \wMt{}}_\ast} \leq 1 + \eps
    \end{align*}
    and furthermore, from \cref{lem:priexit}, $(\wh{M}, \wh{W})$ satisfies all the primal constraints for the indices passed to \cref{alg:packsol} and finally for any discarded index, $i$, we have:
    \begin{equation*}
        \inp{A_i}{\hat{M}} + \inp{B_i}{\hat{W}} \geq \Tr (A_i) / (n + l + m)^5 + \Tr (B_i) / (n + l + m)^5 \geq 1.
    \end{equation*}
    Furthermore, from \cref{as:solvasm} since $W^*$ satisfied $\norm{W^*} \leq \Tr W^* / k$, we have 
    \begin{equation*}
        \norm{\hat{W}} = \norm{W^*} + (n + l + m)^{-5} \leq \frac{\Tr W^*}{k} + (n + l + m)^{-5} \leq \frac{\Tr \wh{W}}{k}.
    \end{equation*}
    Therefore, $(\hat{M}, \hat{W})$ is a valid primal solution to the original $\eps$-decision problem. Now, the run time guarantees follow from the fact that the run-time is dominated by the running of \cref{alg:packsol} and the probabilistic guarantees follow from the fact that \cref{alg:packsol} runs correctly with probability at least $1 - \delta$ as established previously. 
\end{proof}
\section{Power Method Analysis}
\label{sec:pow}

\begin{algorithm}[H]
\SetAlgoLined


\KwIn{PSD Matrix $A$, Accuracy $\eps$, Failure Probability $\delta$}
\KwOut{$\norm{v} = 1$}

$t \gets O((\log d + \log 1 / \delta + \log 1/\eps) / \eps)$\;

$\bg \gets \mc{N}(0, \Id)$\;
$v \gets A^t \bg / \norm{A^t \bg}$\;

\KwRet{$v$}
\caption{PowerMethod}
\label{alg:pow}
\end{algorithm}

\begin{algorithm}[H]
\SetAlgoLined

\KwIn{PSD Matrix $A$, Number of Components $m$, Accuracy $\eps$, Failure Probability $\delta$}
\KwOut{$\{(v_i, \lambda_i)\}_{i = 1}^m$ with $\wt{A} = \sum_{i = 1}^m \lambda_i v_iv_i^\top + \mc{P}_{V}^\perp A \mc{P}_{V}^\perp$ with $V = [v_1, \dots, v_m]$}

$A_0 \gets A$\;
\For{$i = 1:m$}{
    $v_i \gets \text{PowerMethod}(A_{i - 1}, \eps / (2m), \delta / (2m))$\;
    $\lambda_i \gets v_i^\top A_{i - 1}v_i$\;
    $A_i \gets \mc{P}_{v_i}^\perp A_{i - 1} \mc{P}_{v_i}^\perp$\;
}

\KwRet{$\{(v_i, \lambda_i)\}_{i = 1}^m$}
\caption{PCA}
\label{alg:pca}
\end{algorithm}

\noindent In this section $A$ is a $d\times d$ positive semidefinite matrix with eigenvalues $\lambda_1\ge\dots\ge\lambda_d \geq 0$ and a corresponding basis of eigenvectors $\phi_1,\dots,\phi_d$.  Let $\wt{\phi}_1,\dots,\wt{\phi}_{\ell}$ be an orthogonal basis of unit vectors obtained as the output of running the \cref{alg:pca} on matrix $A$, dimension $\ell\in[d]$, accuracy $\eps>0$, and failure probability $\delta>0$.  Let $\wt{A}$ be the matrix defined as follows:
\[
    \wt{A} \coloneqq \sum_{i=1}^{\ell}\langle\wt{\phi}_i,A\wt{\phi_i}\rangle\wt{\phi}_i\wt{\phi}_i^{\top} + \left(\Id-\sum_{i=1}^{\ell}\wt{\phi}\wt{\phi}^{\top}\right)A\left(\Id-\sum_{i=1}^{\ell}\wt{\phi}\wt{\phi}^{\top}\right)
\]
In this section we will prove:
\begin{theorem} \label{thm:main-svd-guarantee}
    With probability at least $1-O(\ell\delta)$, $(1-\eps)^{\ell}\wt{A}\psdle A\psdle (1+\eps)^{\ell}\wt{A}$.
\end{theorem}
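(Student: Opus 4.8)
\emph{Plan.} The proof has two pieces: a single-step guarantee for one call of \cref{alg:pow} inside \cref{alg:pca}, and a telescoping argument combining the $\ell$ calls. Set $A_0 = A$, and for $i = 1,\dots,\ell$ let $v_i$ be the output of $\text{PowerMethod}(A_{i-1}, \eps/(2\ell), \delta/(2\ell))$, $\lambda_i = v_i^\top A_{i-1} v_i$, and $A_i = \mc{P}_{v_i}^\perp A_{i-1}\mc{P}_{v_i}^\perp$. Since the iterate produced by \cref{alg:pow} on $A_{i-1}$ lies in $\mathrm{range}(A_{i-1})$, which is orthogonal to $v_1,\dots,v_{i-1}$, the vectors $v_1,\dots,v_\ell$ are orthonormal; hence $A_i = \mc{P}_{V_i}^\perp A\mc{P}_{V_i}^\perp$ with $V_i = \mathrm{span}(v_1,\dots,v_i)$, $\lambda_i = v_i^\top A v_i$, and $A_\ell = (\Id - \sum_{i=1}^\ell v_iv_i^\top)A(\Id - \sum_{i=1}^\ell v_iv_i^\top)$, so $\wt A = \sum_{i=1}^\ell \lambda_i v_iv_i^\top + A_\ell$ in the notation of the theorem. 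The heart of the argument is the \emph{single-step bound}
\[
    (1 - \eps/(2\ell))\big(\lambda_i v_iv_i^\top + A_i\big)\ \psdle\ A_{i-1}\ \psdle\ (1 + \eps/(2\ell))\big(\lambda_i v_iv_i^\top + A_i\big),
\]
holding with probability at least $1 - \delta/(2\ell)$ over the $i$-th call (conditionally on the earlier calls). Granting this for all $i$: every matrix in sight is positive semidefinite, so $X\psdge 0$ gives $X + (1-c)Y \psdge (1-c)(X+Y)$ and $X + (1+c)Y\psdle (1+c)(X+Y)$ for $c=\eps/(2\ell)\ge 0$; substituting the bound for $A_i$ into the one for $A_{i-1}$ and iterating yields, after a union bound over the $\ell$ calls, $(1-\eps/(2\ell))^\ell\wt A \psdle A \psdle (1+\eps/(2\ell))^\ell\wt A$ with probability at least $1-\delta/2 = 1-O(\ell\delta)$. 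Finally, for $\eps\le 1$ and $\ell\ge 1$ one has $(1-\eps/(2\ell))^\ell \ge 1-\eps/2 \ge (1-\eps)^\ell$ and $(1+\eps/(2\ell))^\ell \le e^{\eps/2}\le 1+\eps \le (1+\eps)^\ell$, which gives the theorem.

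\emph{Reducing the single step to a scalar inequality.} Fix $i$ and write $M = A_{i-1}$, $\eta = \eps/(2\ell)$, $v = v_i$, $\mu = v^\top M v$, $Q = \Id - vv^\top$, $C = QMQ$, $b = QMv$, and $\wt M = \mu vv^\top + QMQ = \lambda_i v_iv_i^\top + A_i$. In an orthonormal basis with first vector $v$, $M = \left(\begin{smallmatrix}\mu & b^\top \\ b & C\end{smallmatrix}\right)$ and $\wt M = \left(\begin{smallmatrix}\mu & 0 \\ 0 & C\end{smallmatrix}\right)$, so $\eta\wt M \mp (M-\wt M) = \left(\begin{smallmatrix}\eta\mu & \mp b^\top \\ \mp b & \eta C\end{smallmatrix}\right)$. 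Since the iterate lies in $\mathrm{range}(M)$ we have $\mu>0$, and the Schur-complement criterion shows both of these matrices are positive semidefinite exactly when $bb^\top \psdle \eta^2\mu\, C$, i.e.\ (as $bb^\top$ has rank one) when $b\in\mathrm{range}(C)$ and $b^\top C^\dagger b \le \eta^2\mu$. For $v\in\mathrm{range}(M)$ one checks directly that $b\in\mathrm{range}(C)$ automatically and that $\min\{u^\top M u : \langle u,v\rangle = 1\}$ equals both $\mu - b^\top C^\dagger b$ (minimizing $u = v + x$, $x\perp v$) and $1/(v^\top M^\dagger v)$ (Lagrange multipliers on $\mathrm{range}(M)$); hence the single-step bound reduces to
\[
    (v^\top M v)\,(v^\top M^\dagger v)\ \le\ \frac{1}{1-\eta^2}.
\]

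\emph{The gap-free power-method estimate.} Write $M = \sum_j \lambda_j \phi_j\phi_j^\top$ over the positive eigenvalues, $g_j = \langle\phi_j,\bg\rangle$ for the Gaussian $\bg$ of \cref{alg:pow}, and $S_k = \sum_j \lambda_j^k g_j^2$. The iterate $v = M^t\bg/\norm{M^t\bg}$ satisfies $(v^\top M v)(v^\top M^\dagger v) = S_{2t+1}S_{2t-1}/S_{2t}^2$, and with $w_j := \lambda_j^{2t}g_j^2/S_{2t}$,
\[
    \frac{S_{2t+1}S_{2t-1}}{S_{2t}^2}\ =\ 1 + \tfrac12 \sum_{j,j'} w_j w_{j'}\,\frac{(\lambda_j - \lambda_{j'})^2}{\lambda_j \lambda_{j'}}.
\]
Split the indices into $T = \{j : \lambda_j \ge (1-\eta/2)\lambda_{\max}\}$ and its complement $B$. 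The $T\times T$ part is at most $(\sum_{j\in T}w_j)^2\cdot(\eta/2)^2/(1-\eta/2)^2 \le \eta^2$ since $\eta \le 1/2$. For the parts involving $B$, use that $w_j$ damps small eigenvalues extremely fast: on the high-probability event (probability $\ge 1-\delta/(2\ell)$ by standard Gaussian tail bounds) that $|g_j| = O(\sqrt{\log(\ell d/\delta)})$ for all $j$, $\sum_j g_j^2 = O(d + \log(\ell/\delta))$, and $g_{(1)}^2 = \Omega((\delta/\ell)^2)$ for one fixed top eigenvector, one gets $S_{2t}\ge \Omega((\delta/\ell)^2)\lambda_{\max}^{2t}$, and a short computation then bounds the $T\times B$, $B\times T$ and $B\times B$ parts together by $O\big((1-\eta/2)^{2t}\, d\,(\ell/\delta)^2\big)$. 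The iteration count of \cref{alg:pow} invoked with accuracy $\eta=\eps/(2\ell)$ and failure probability $\delta/(2\ell)$ is $\Theta\big((\log d + \log(\ell/\delta) + \log(\ell/\eps))/\eta\big)$, which for a suitable absolute constant exceeds $\tfrac2\eta\log\tfrac{d(\ell/\delta)}{\eta}$, so $(1-\eta/2)^{2t}\le e^{-\eta t}$ is small enough that the whole sum is at most $2\eta^2$. Hence $S_{2t+1}S_{2t-1}/S_{2t}^2 \le 1 + \eta^2 \le 1/(1-\eta^2)$ on this event, which establishes the single-step bound and completes the proof.

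\emph{Expected main obstacle.} The delicate point is the gap-free estimate $S_{2t+1}S_{2t-1}/S_{2t}^2 \le 1/(1-\eta^2)$: the small eigenvalues of $M$ blow up $v^\top M^\dagger v$, so one must show the power iteration suppresses them fast enough that the full product (a statement strictly stronger than ``$v$ is an approximate top eigenvector'') stays within $1/(1-\eta^2)$. Keeping the iteration count linear in $1/\eps$ rather than $1/\eps^2$ forces the band width in the $T/B$ split to be $\Theta(\eta)$ — precisely the budget the $T\times T$ term can afford — rather than $\Theta(\eta^2)$; getting this balance right, together with the Gaussian-coefficient lower bound entering $S_{2t}$, is the technical crux.
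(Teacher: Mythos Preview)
Your proof is correct, and the outer structure---reducing to a single-step bound and telescoping through the $\ell$ calls---is exactly the paper's argument (their \cref{lem:top-sing-vector} is the $\ell=1$ case, and the proof of \cref{thm:main-svd-guarantee} then inducts as you do). Where you genuinely differ is in how you prove the single-step bound.

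The paper proceeds by direct computation: it expands $x^\top(A-\wt A)x$ in the eigenbasis of $A$, arrives at the expression $\bigl(\sum_i\alpha_ic_i\bigr)\bigl(\sum_ic_i\alpha_i(\lambda_i-\wt\phi_1^\top A\wt\phi_1)\bigr)$, splits both factors along $S_\eps=\{i:\lambda_i\ge(1-\eps)\lambda_1\}$ and its complement, and bounds each of the four resulting products with Cauchy--Schwarz, feeding in Propositions~5.4--5.7 for the power-method coefficients. Your route is structurally cleaner: the Schur-complement step turns $(1-\eta)\wt M\psdle M\psdle(1+\eta)\wt M$ into the \emph{equivalent} scalar criterion $(v^\top Mv)(v^\top M^\dagger v)\le 1/(1-\eta^2)$, which you then recognise as $S_{2t+1}S_{2t-1}/S_{2t}^2$ and rewrite via the exact identity $1+\tfrac12\sum_{j,j'}w_jw_{j'}(\lambda_j-\lambda_{j'})^2/(\lambda_j\lambda_{j'})$. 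Your $T/B$ split ends up invoking the same power-method decay as the paper (your bound on $\sum_{j'\in B}w_{j'}\lambda_{\max}/\lambda_{j'}$ is their Proposition~5.6 on $\sum_{i\notin S_\eps}c_i^2\,\lambda_1/\lambda_i$), so the analytic content is equivalent. What your packaging buys is a sharp if-and-only-if reduction and a single Kantorovich-type quantity to control, at the cost of a little extra setup (Schur complement plus the constrained-minimisation identity $\mu-b^\top C^\dagger b=1/(v^\top M^\dagger v)$); the paper's approach is more pedestrian but avoids that machinery.
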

We will first prove \cref{thm:main-svd-guarantee} when $\ell=1$ and then use it to prove the theorem for general $\ell$. Thus, for the rest of this section we will assume $\wt{A}$ is equal to:

\[
    \langle\wt{\phi_1},A\wt{\phi}_1\rangle\wt{\phi}_1\wt{\phi}_1^{\top} + (\Id-\wt{\phi}_1\wt{\phi}_1^{\top})A(\Id-\wt{\phi}_1\wt{\phi}_1^{\top}).
\]

We will now prove the following lemma.
\begin{lemma}   \label{lem:top-sing-vector}
    Suppose $\ell = 1$, with probability at least $1-O(\delta)$, $(1-\eps)\wt{A}\psdle A\psdle (1+\eps)\wt{A}$.
\end{lemma}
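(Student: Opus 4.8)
First I would reduce the Loewner sandwich to a single scalar, Kantorovich‑type inequality about the output vector. Write $v\coloneqq\wt\phi_1$, $P\coloneqq vv^\transp$, $P^\perp\coloneqq\Id-vv^\transp$, and $\bar\lambda\coloneqq\langle v,Av\rangle$, so that $\wt A=PAP+P^\perp AP^\perp=\bar\lambda\,vv^\transp+P^\perp AP^\perp$ and hence $A-\wt A=PAP^\perp+P^\perp AP$ is exactly the ``off‑diagonal'' part of $A$ with respect to the orthogonal splitting $\R^d=\operatorname{span}(v)\oplus v^\perp$; note also that $v=A^tg/\norm{A^tg}$ with $t\ge1$ forces $v\in\operatorname{range}(A)$. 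Decomposing an arbitrary $x=av+y$ with $y\perp v$ and setting $r\coloneqq Av-\bar\lambda v=P^\perp Av$, a direct computation gives
\[
    \langle x,(A-\wt A)x\rangle=2a\langle r,y\rangle,\qquad \langle x,\wt Ax\rangle=\bar\lambda\,a^2+\langle y,Ay\rangle .
\]
Cauchy--Schwarz for the PSD form $A$ gives $\langle r,y\rangle^2\le(r^\transp A^\dagger r)(y^\transp Ay)$ (legitimate since $r\in\operatorname{range}(A)$), and then weighted AM--GM shows $-\eps\wt A\psdle A-\wt A\psdle\eps\wt A$ \emph{provided} $r^\transp A^\dagger r\le\eps^2\bar\lambda$. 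Using $v\in\operatorname{range}(A)$ and $\norm v=1$, one checks $r^\transp A^\dagger r=\bar\lambda\big(\bar\lambda(v^\transp A^\dagger v)-1\big)$, so the whole lemma reduces to establishing the near‑equality $(v^\transp Av)(v^\transp A^\dagger v)\le1+\eps^2$.

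Next I would translate this into a statement about moments of the spectrum. Expanding $g=\sum_ig_i\phi_i$ and setting $S_k\coloneqq\sum_{i:\lambda_i>0}g_i^2\lambda_i^k$, the definition of $v$ gives $v^\transp Av=S_{2t+1}/S_{2t}$ and $v^\transp A^\dagger v=S_{2t-1}/S_{2t}$, so the target becomes the sharp log‑convexity bound $S_{2t+1}S_{2t-1}\le(1+\eps^2)S_{2t}^2$. With $\mu_s$ the probability measure assigning mass $\propto g_i^2\lambda_i^s$ to $\lambda_i$, the function $\phi(s)\coloneqq\log S_s$ is smooth and convex with $\phi''(s)=\operatorname{Var}_{\mu_s}(\log\lambda)$, and the elementary second‑difference estimate $\phi(2t+1)-2\phi(2t)+\phi(2t-1)=\int_0^1[\phi'(2t+u)-\phi'(2t-u)]\,du\le\max_{s\in[2t-1,2t+1]}\phi''(s)$ reduces the task to proving $\operatorname{Var}_{\mu_s}(\log\lambda)\le\eps^2$ (up to an absolute constant) for every $s\in[2t-1,2t+1]$.

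The probabilistic inputs are standard: with probability $1-O(\delta)$ the Gaussian $g$ satisfies (i) $g_1^2\ge c\delta^2$ by anti‑concentration along the top eigendirection (or the top eigenspace, if $\lambda_1$ is repeated), (ii) $\norm g^2\le C(d+\log(1/\delta))$, and (iii) the partial sums of the $g_i^2$ used below concentrate around their means; condition on this event. The heart of the proof is the bound on $\operatorname{Var}_{\mu_s}(\log\lambda)$, and this is where the only real difficulty lies — there is \emph{no spectral gap} to exploit, so a naive estimate replacing $\sum_ig_i^2(\cdots)$ by $\norm g^2\max_i(\cdots)$ would cost a $\poly(d)$ factor and force $t\sim\sqrt d$ rather than $\operatorname{polylog}(d)$. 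Instead I would bucket indices by scale: put $u_i\coloneqq\log(\lambda_1/\lambda_i)\ge0$ and $B_j\coloneqq\{i:u_i\in[j/s,(j+1)/s)\}$ with $n_j\coloneqq\sum_{i\in B_j}g_i^2$. Using $\operatorname{Var}_{\mu_s}(\log\lambda)\le\mathbb{E}_{\mu_s}[(\log\lambda-\log\lambda_1)^2]=\big(\sum_ig_i^2e^{-su_i}u_i^2\big)\big/\big(\sum_ig_i^2e^{-su_i}\big)$ together with $u_i^2<(j+1)^2/s^2$ and $e^{-(j+1)}<e^{-su_i}\le e^{-j}$ on $B_j$, the numerator is $<s^{-2}\sum_jn_j(j+1)^2e^{-j}$ and the denominator is $>e^{-1}\sum_jn_je^{-j}\ge e^{-1}g_1^2$. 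The key observation is that buckets with $j\gtrsim\log(\norm g^2/\delta^2)$ are $e^{-j}$‑suppressed relative to $B_0$ and so contribute negligibly to \emph{both} sums; hence the ratio $\big(\sum_jn_j(j+1)^2e^{-j}\big)\big/\big(\sum_jn_je^{-j}\big)$ — a weighted average of $(j+1)^2$ over the non‑negligible buckets — is $O\!\big(\log^2(d/\delta)\big)$, giving $\operatorname{Var}_{\mu_s}(\log\lambda)\lesssim\log^2(d/\delta)/s^2$. For $s\in[2t-1,2t+1]$ with $t=\Theta\big((\log d+\log(1/\delta)+\log(1/\eps))/\eps\big)$ as in \cref{alg:pow}, this is at most $\eps^2$, which finishes the proof. (If $A=0$ or $A^tg=0$ the statement is trivial; these are probability‑zero degeneracies, and genuine spectral degeneracies are harmless since the bucketing never assumed a gap.)

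I expect the scale‑bucketing step to be the main obstacle: it is the one place where the absence of a spectral gap bites, and where one must keep the $d$‑dependence polylogarithmic — so that the number of near‑top eigenvalues, which enters both the numerator and the denominator of $\operatorname{Var}_{\mu_s}$, cancels up to the $(j+1)^2$ weight — in order to obtain the advertised $1/\eps$ rather than $1/\eps^2$ iteration count. A cruder two‑band split (``within relative distance $\sim\eps^2$ of $\lambda_1$'' versus the rest) already yields the lemma with $\eps$ weakened to $O(\sqrt\eps)$, which would suffice for the downstream application but not for the sharp constant stated here.
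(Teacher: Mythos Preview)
Your argument is correct and takes a genuinely different route from the paper. The paper expands $x^\top(A-\wt A)x$ coordinatewise in the eigenbasis of $A$, splits indices into a ``near-top'' set $S_\eps=\{i:\lambda_i\ge(1-\eps)\lambda_1\}$ and its complement, and bounds four cross terms by Cauchy--Schwarz together with three preparatory propositions (small projection of $\wt\phi_1$ onto the low eigenspace, a bound on $\sum_{i\notin S_\eps}c_i^2\lambda_1/\lambda_i$, and $(1-2\eps)\lambda_1\le\wt\phi_1^\top A\wt\phi_1$). You instead reduce the entire Loewner sandwich to the single Kantorovich-type inequality $(v^\top Av)(v^\top A^\dagger v)\le 1+\eps^2$ via Cauchy--Schwarz in the $A$-inner product and AM--GM, recognise this as a log-convexity statement $S_{2t+1}S_{2t-1}\le(1+\eps^2)S_{2t}^2$ for the power sums, and then bound the second difference of $s\mapsto\log S_s$ by the variance of $\log\lambda$ under the tilted measure $\mu_s$. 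Your scale-bucketing step and the paper's $S_\eps$ split are morally the same device for handling the no-gap regime; what your route buys is a cleaner, coordinate-free reduction and a reusable moment framework, while the paper's route is more explicit and its intermediate propositions (in particular the analogue of $\wt\phi_k^\top A\wt\phi_k\ge(1-2\eps)\|\cdot\|$) are reused in the induction for general $\ell$. One minor note: your condition~(iii) on concentration of partial sums of the $g_i^2$ is never actually used---your bucketing only needs $n_j\le\|g\|^2$ and $n_0\ge g_1^2\ge c\delta^2$, which are exactly~(i) and~(ii).
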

Proving \cref{lem:top-sing-vector} amounts to showing for any $x\in\R^d$:
\[
    |x^{\top}(A-\wt{A})x| \le \eps x^{\top}Ax. \numberthis \label{eq:goal-svd-rayleigh}
\]
Thus, we analyze the left hand side of the above expression.  A short calculation reveals that
\[
    A-\wt{A} = (\Id-\wt{\phi}_1\wt{\phi}_1^{\top})A\wt{\phi}_1\wt{\phi}_1^{\top} + \wt{\phi}_1\wt{\phi}_1^{\top}A(\Id-\wt{\phi}_1\wt{\phi}_1^{\top}).
\]
Let $x$ be any vector in $\R^d$.  We write $x$ and $\wt{\phi}_1$ in the basis of eigenvectors of $A$ as follows:
\begin{align*}
    \wt{\phi}_1 &= \sum_{i=1}^d c_i\phi_i\\
    x &= \sum_{i=1}^d \alpha_i\phi_i.
\end{align*}
Then
\begin{align*}
    \frac{x^{\top}(A-\wt{A})x}{2} &= (x^{\top}A\wt{\phi}_1\wt{\phi}_1^{\top}x - x^{\top}\wt{\phi}_1\wt{\phi}_1^{\top}A\wt{\phi}_1\wt{\phi}_1^{\top}x)\\
    &= \left(\sum_{i=1}^d\alpha_i c_i\right)\left(\sum_{i=1}^d\lambda_ic_i\alpha_i\right) - \left(\sum_{i=1}^d\alpha_ic_i\right)^2\left(\sum_{i=1}^d\lambda_ic_i^2\right)\\
    &= \left(\sum_{i=1}^d\alpha_i c_i\right)\left(\sum_{i=1}^d\lambda_ic_i\alpha_i-\left(\sum_{i=1}^d\alpha_ic_i\right)\left(\sum_{i=1}^d\lambda_ic_i^2\right)\right)\\
    &= \left(\sum_{i=1}^d\alpha_i c_i\right)\left(\sum_{i=1}^dc_i\alpha_i\left(\lambda_i- \sum_{j=1}^d \lambda_jc_j^2\right)\right)\\
    &= \left(\sum_{i=1}^d\alpha_i c_i\right)\left(\sum_{i=1}^dc_i\alpha_i\left(\lambda_i- \wt{\phi}_1^\top A\wt{\phi}_1\right)\right) \numberthis \label{eq:rayleigh-diff}
\end{align*}
So far we have not used the fact that $\wt{\phi}_1$ is the output of \cref{alg:pca}.  In particular, our progress so far which is recorded in \cref{eq:rayleigh-diff} holds true for \emph{arbitrary} $\wt{\phi}_1$.  We now discuss and prove the relevant properties of $\wt{\phi}_1$ we use for showing \cref{lem:top-sing-vector} (more specifically, for showing \cref{eq:goal-svd-rayleigh}).

\subsection{Probability and norm bounds}
Recall that $\phi_1,\dots,\phi_d$ is the basis of eigenvectors of $A$.
\begin{definition}
    We call a vector $g\in\R^d$ \emph{$\delta$-tempered} if $|\langle g, \phi_1\rangle|\ge\delta$.
\end{definition}

\begin{proposition} \label{prop:gauss-temper}
    Let $\bg\sim\calN(0,\Id)$.  $\bg$ is $\delta$-tempered except with probability $C\delta$ for some absolute constant $C$.
\end{proposition}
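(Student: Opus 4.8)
The plan is to reduce the statement to a one-dimensional anti-concentration bound for a standard Gaussian. First I would observe that since $\phi_1$ is a unit vector (it is an eigenvector of $A$ taken from an orthonormal basis), the random variable $Z \coloneqq \langle \bg, \phi_1\rangle$ is a linear functional of a standard Gaussian vector with coefficient vector of unit Euclidean norm; hence $Z \sim \calN(0,1)$. (Equivalently, one may complete $\phi_1$ to an orthonormal basis and use rotational invariance of $\calN(0,\Id)$.) The event that $\bg$ fails to be $\delta$-tempered is exactly $\{|Z| < \delta\}$.

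Next I would bound $\Pr[|Z| < \delta]$ by controlling the density of $Z$. Since the density of $\calN(0,1)$ is $p(z) = \frac{1}{\sqrt{2\pi}}e^{-z^2/2} \le \frac{1}{\sqrt{2\pi}}$ for all $z$, integrating over the interval $(-\delta, \delta)$ of length $2\delta$ gives
\[
    \Pr[|Z| < \delta] = \int_{-\delta}^{\delta} p(z)\,dz \le \frac{2\delta}{\sqrt{2\pi}}.
\]
Taking $C = \frac{2}{\sqrt{2\pi}} < 1$ (or simply $C = 1$) completes the argument, so $\bg$ is $\delta$-tempered except with probability at most $C\delta$.

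There is essentially no obstacle here; the only point requiring a moment's care is the first step — confirming that $\langle \bg, \phi_1\rangle$ is genuinely a \emph{standard} (unit-variance) Gaussian, which hinges on $\norm{\phi_1} = 1$. Everything else is the textbook density bound for the Gaussian.
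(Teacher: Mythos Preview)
Your proof is correct and follows essentially the same approach as the paper: both observe that $\langle \bg, \phi_1\rangle$ is a standard one-dimensional Gaussian and then bound $\Pr[|Z|<\delta]$ by the density bound $\tfrac{1}{\sqrt{2\pi}}$ times the interval length. Your write-up is in fact slightly more careful, giving the explicit constant $C = \tfrac{2}{\sqrt{2\pi}}$.
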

\begin{proof}
    $\langle\bg,\phi_1\rangle$ is distributed as a scalar standard Gaussian random variable and hence
    \[
        \Pr[\langle\bg,\phi_1\rangle\in[-\delta,\delta]]\le\frac{1}{\sqrt{2\pi}}\delta.
    \]
\end{proof}
The following can be found in \cite[Theorem 2.1.12]{Tao11}:
\begin{lemma}   \label{lem:gauss-conc}
    Let $\bg\sim\calN(0,\Id_d)$.  Then except with probability $C\exp(-ct^2)$ for absolute constants $C,c>0$,
    \[
        \|\bg\| \in [\sqrt{d}-t,\sqrt{d}+t].
    \]
\end{lemma}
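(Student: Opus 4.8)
The plan is to reduce Lemma~\ref{lem:gauss-conc} to two standard facts: sub-Gaussian concentration of Lipschitz functions of a Gaussian vector, and a crude two-sided estimate of the mean $m_d := \mathbb{E}\norm{\bg}$. The map $x \mapsto \norm{x}$ on $\R^d$ is $1$-Lipschitz, so by Gaussian concentration (Borell--TIS, i.e.\ Gaussian isoperimetry) we get, for every $s \ge 0$,
\[
    \Pr\!\bigl[\,\bigl|\norm{\bg} - m_d\bigr| \ge s\,\bigr] \le 2\exp(-s^2/2).
\]
This already pins down $\norm{\bg}$ around $m_d$ with sub-Gaussian tails; it then suffices to show $m_d = \sqrt d + O(1)$ so that the interval $[m_d - s, m_d + s]$ can be replaced by $[\sqrt d - t, \sqrt d + t]$ after adjusting $s$ by a constant and enlarging the absolute constants $C, c$.

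For the mean: the upper bound $m_d \le \sqrt{\mathbb{E}\norm{\bg}^2} = \sqrt d$ is Jensen's inequality, using that the coordinates are independent standard Gaussians so $\mathbb{E}\norm{\bg}^2 = d$. For the lower bound, write $\mathrm{Var}(\norm{\bg}) = \mathbb{E}\norm{\bg}^2 - m_d^2 = d - m_d^2$ and bound $\mathrm{Var}(\norm{\bg}) \le 1$ — either by the Gaussian Poincar\'e inequality applied to the $1$-Lipschitz function $\norm{\cdot}$ (whose gradient $x/\norm{x}$ has norm $1$ a.e.), or simply by integrating the tail bound above. Hence $m_d^2 \ge d - 1$, so $m_d \ge \sqrt{d-1} \ge \sqrt d - 1$ for $d \ge 1$, and altogether $\sqrt d - 1 \le m_d \le \sqrt d$. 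On the event $\bigl|\norm{\bg} - m_d\bigr| < s$ we then have $\norm{\bg} < m_d + s \le \sqrt d + s$ and $\norm{\bg} > m_d - s \ge \sqrt d - 1 - s$; taking $s = t - 1$ (valid once $t \ge 1$) and absorbing the unit shift into the constants gives $\norm{\bg} \in [\sqrt d - t, \sqrt d + t]$ except with probability $2\exp(-(t-1)^2/2) \le C\exp(-ct^2)$, while for $0 \le t < 1$ the statement is made vacuous by enlarging $C$.

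An alternative route that avoids invoking Gaussian isoperimetry is to use $\norm{\bg}^2 \sim \chi^2_d$, whose moment generating function is $(1 - 2\theta)^{-d/2}$ for $\theta < 1/2$; a Chernoff bound yields the Laurent--Massart estimates $\Pr[\norm{\bg}^2 \ge d + 2\sqrt{dx} + 2x] \le e^{-x}$ and $\Pr[\norm{\bg}^2 \le d - 2\sqrt{dx}] \le e^{-x}$, and taking $x$ proportional to $t^2$ and square-rooting extracts $\norm{\bg} \in [\sqrt d - t, \sqrt d + t]$ with probability $1 - Ce^{-ct^2}$. Either way, the only mildly delicate point is the bookkeeping: ensuring the gap between $m_d$ and $\sqrt d$ is a genuine constant uniform in $d$, and noting that for $t$ of order $\sqrt d$ or larger the lower endpoint $\sqrt d - t$ is nonpositive so the claim is automatic; everything else is routine.
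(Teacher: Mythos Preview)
Your proof is correct: the Lipschitz-concentration argument together with the estimate $\sqrt{d}-1\le \mathbb{E}\|\bg\|\le \sqrt{d}$ is the standard route, and your handling of the small-$t$ regime by enlarging $C$ is fine. The paper, however, does not prove this lemma at all; it simply cites it as \cite[Theorem~2.1.12]{Tao11}. So there is nothing to compare against beyond noting that your argument is essentially the textbook proof one would find behind that citation (and your alternative $\chi^2$/Laurent--Massart route is equally standard).
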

From \cref{alg:pow} $\wt{\phi}_1$ is equal to $\frac{A^{t}g}{\|A^tg\|}$ for $t=\Theta\left(\frac{\log d+\log\frac{1}{\delta}+\log\frac{1}{\eps}}{\eps}\right)$.   Additionally, from \cref{prop:gauss-temper}, \cref{lem:gauss-conc} $g$ is a $\delta$-tempered vector and
\[
    \|g\|\in\left[\sqrt{d}-\zeta\sqrt{\log(1/\delta)}, \sqrt{d}+\zeta\sqrt{\log(1/\delta)}\right]
\]
except with probability $C\delta$ for some absolute constant $\zeta > 0$.  For the rest of this section we will assume that $g$ is indeed $\delta$-tempered and has norm in the above range.  For $S\subseteq\R$ let $\Pi_{S}$ be the projection matrix onto the eigenspace of eigenvalues in $S$.  In particular,
\[
    \Pi_S = \sum_{i:\lambda_i\in S} \phi_i\phi_i^{\top}.
\]
\begin{proposition}\label{prop:small-proj-low-eigenspace}
    $\|\Pi_{[0,(1-\eps)\lambda_1]}\wt{\phi}_1\|\le\eps$.
\end{proposition}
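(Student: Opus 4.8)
The plan is to work entirely in the eigenbasis $\phi_1,\dots,\phi_d$ of $A$. Recall that $\wt{\phi}_1 = A^t g/\|A^t g\|$ with $t = \Theta\big((\log d + \log(1/\delta) + \log(1/\eps))/\eps\big)$, and that by the discussion preceding the proposition we may assume $g$ is $\delta$-tempered and satisfies $\|g\| \le \sqrt{d} + \zeta\sqrt{\log(1/\delta)}$. Writing $g = \sum_{i=1}^d c_i\phi_i$ with $c_i = \langle g,\phi_i\rangle$, we have $A^t g = \sum_i \lambda_i^t c_i\phi_i$, and hence $\Pi_{[0,(1-\eps)\lambda_1]} A^t g = \sum_{i:\,\lambda_i \le (1-\eps)\lambda_1} \lambda_i^t c_i \phi_i$.

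The first step is to bound the numerator of $\|\Pi_{[0,(1-\eps)\lambda_1]}\wt{\phi}_1\| = \|\Pi_{[0,(1-\eps)\lambda_1]} A^t g\|/\|A^t g\|$. Every index surviving the projection has $\lambda_i \le (1-\eps)\lambda_1$, so
\[
    \big\|\Pi_{[0,(1-\eps)\lambda_1]} A^t g\big\|^2 = \sum_{i:\,\lambda_i \le (1-\eps)\lambda_1} \lambda_i^{2t} c_i^2 \le \big((1-\eps)\lambda_1\big)^{2t}\|g\|^2.
\]
The second step is to lower bound the denominator using only the leading eigenvector together with the $\delta$-tempered property $|c_1| = |\langle g,\phi_1\rangle| \ge \delta$:
\[
    \|A^t g\|^2 = \sum_{i} \lambda_i^{2t} c_i^2 \ge \lambda_1^{2t} c_1^2 \ge \lambda_1^{2t}\delta^2.
\]
Combining the two bounds gives $\|\Pi_{[0,(1-\eps)\lambda_1]}\wt{\phi}_1\| \le (1-\eps)^t\,\|g\|/\delta$.

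It remains to verify that the prescribed $t$ makes this quantity at most $\eps$, i.e.\ that $(1-\eps)^t \le \eps\delta/\|g\|$. Taking logarithms and using $\log\tfrac{1}{1-\eps} \ge \eps$, it suffices to have $t \ge \frac{1}{\eps}\log\big(\|g\|/(\eps\delta)\big)$; and since $\|g\| \le \sqrt{d} + \zeta\sqrt{\log(1/\delta)}$, we have $\log\big(\|g\|/(\eps\delta)\big) = O\big(\log d + \log(1/\eps) + \log(1/\delta)\big)$, which lies in the regime defining $t$ once its hidden constant is taken large enough. I do not anticipate any real obstacle: the only care needed is to fix that constant so as to absorb the $\log(\|g\|/(\eps\delta))$ term, while recalling that we are already conditioning on the probability-$(1-O(\delta))$ event (from \cref{prop:gauss-temper} and \cref{lem:gauss-conc}) on which $g$ is $\delta$-tempered with norm in the stated range.
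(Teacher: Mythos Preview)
Your proof is correct and follows essentially the same approach as the paper: expand $g$ in the eigenbasis, upper bound the low-eigenspace numerator by $((1-\eps)\lambda_1)^{2t}\|g\|^2$, lower bound $\|A^t g\|^2$ by $\lambda_1^{2t}\delta^2$ via the $\delta$-tempered property, and then choose the constant in $t$ large enough. One minor caveat: the paper already reserves $c_i$ for the components of $\wt{\phi}_1$ in the eigenbasis, so you should rename your coefficients (the paper uses $\hat g_i$) to avoid a clash.
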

\begin{proof}
    We start by expressing $g$ in the basis $\{\phi_1,\dots,\phi_d\}$ as
    \[
        g = \sum_{i=1}^d \hat{g}_i\phi_i,
    \]
    which means
    \[
        A^t g = \sum_{i=1}^d \lambda_i^t\hat{g}_i\phi_i.  
    \]
    Thus,
    \begin{align*}
        \|\Pi_{[0,(1-\eps)\lambda_1]}\wt{\phi_1}\|^2 &= \frac{\sum_{i:\lambda_i\le(1-\eps)\lambda_1}\lambda_i^{2t}\hat{g}^2}{\|A^t g\|^2}\\
        &\le \frac{\lambda_1^{2t}\sum_{i:\lambda_i\le(1-\eps)\lambda_1}\hat{g}_i^2}{\lambda_1^{2t}\hat{g}_1^2} \\
        &\le (1-\eps)^{2t}\frac{\|g\|^2}{\delta^2}\\
        &\le (1-\eps)^{2t}\frac{2d+2\zeta^2\log(1/\delta)}{\delta^2}
    \end{align*}
    Since $t=L\frac{\log d+\log\frac{1}{\delta}+\log\frac{1}{\eps}}{\eps}$, we can choose constant $L$ large enough so that the above is bounded by $\eps^2$.
\end{proof}

Let $S_\eps = \{i: \lambda_i \geq (1 - \eps) \lambda_1\}$ and $T = \{i: \lambda_i > 0\}$. We now establish the following result:

\begin{proposition}
    \label{prop:small-ev-bnd}
    We have:
    \begin{equation*}
        \sum_{i \in T \setminus S_\eps} c_i^2 \cdot \abs*{\frac{\lambda_1}{\lambda_i}} \leq \frac{\eps^2}{d^2}.
    \end{equation*}
\end{proposition}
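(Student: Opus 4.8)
The plan is to expand $\wt{\phi}_1$ in the eigenbasis of $A$ using the power-method formula $\wt{\phi}_1 = A^t g/\|A^t g\|$ and to track the contribution of each small eigenvalue explicitly. First I would write $g = \sum_{i=1}^d \hat{g}_i\phi_i$, so that $A^t g = \sum_{i=1}^d \lambda_i^t\hat{g}_i\phi_i$ and hence $c_i = \lambda_i^t\hat{g}_i/\|A^t g\|$. Because $g$ is $\delta$-tempered we have $|\hat{g}_1|\ge\delta$, which yields the denominator bound $\|A^t g\|^2 \ge \lambda_1^{2t}\hat{g}_1^2 \ge \lambda_1^{2t}\delta^2$, exactly as in the proof of \cref{prop:small-proj-low-eigenspace}.

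Next, for each index $i \in T\setminus S_\eps$ I would compute
\[
    c_i^2\cdot\abs*{\frac{\lambda_1}{\lambda_i}} = \frac{\lambda_i^{2t}\hat{g}_i^2}{\|A^t g\|^2}\cdot\frac{\lambda_1}{\lambda_i} \le \frac{\hat{g}_i^2}{\delta^2}\cdot\lprp{\frac{\lambda_i}{\lambda_1}}^{2t-1},
\]
using the denominator bound. The key observation is that $i\notin S_\eps$ forces $0<\lambda_i<(1-\eps)\lambda_1$, so $(\lambda_i/\lambda_1)^{2t-1}\le(1-\eps)^{2t-1}$; in particular the potentially large factor $\lambda_1/\lambda_i$ is harmless since it is cancelled by just one of the $2t$ factors of $\lambda_i/\lambda_1<1$, leaving $2t-2$ factors to drive the geometric decay. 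Summing over $i\in T\setminus S_\eps$ and using $\sum_i\hat{g}_i^2 = \|g\|^2 \le 2d + 2\zeta^2\log(1/\delta)$ (the norm bound recorded earlier), I obtain
\[
    \sum_{i\in T\setminus S_\eps} c_i^2\cdot\abs*{\frac{\lambda_1}{\lambda_i}} \le \frac{(1-\eps)^{2t-1}}{\delta^2}\lprp{2d + 2\zeta^2\log(1/\delta)}.
\]
Finally, since $1-\eps\le e^{-\eps}$ and $t = L(\log d + \log(1/\delta) + \log(1/\eps))/\eps$, the right-hand side is at most $e^{-\eps(2t-1)}\cdot\delta^{-2}(2d + 2\zeta^2\log(1/\delta))$, and choosing the absolute constant $L$ large enough (exactly as in \cref{prop:small-proj-low-eigenspace}, with a slightly larger constant to absorb the extra $1/d^2$) makes this at most $\eps^2/d^2$.

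I do not expect a genuine obstacle; the only subtlety is the interplay between the unbounded factor $\lambda_1/\lambda_i$ (as $\lambda_i\to 0^+$) and the geometric decay, which is resolved by the bookkeeping above — one power of $\lambda_i/\lambda_1<1$ is ``spent'' cancelling $\lambda_1/\lambda_i$, and the remaining $2t-2\ge 1$ powers still produce decay. It is worth double-checking that $t\ge 1$ for the stated choice of parameters so that $2t-1\ge 1$, which holds for $L$ a sufficiently large absolute constant and $\eps<1$.
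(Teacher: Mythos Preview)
Your proof is correct and follows essentially the same approach as the paper: expand $c_i$ via the power-method formula, use the $\delta$-tempered lower bound $\|A^tg\|^2\ge\lambda_1^{2t}\delta^2$, absorb the $\lambda_1/\lambda_i$ factor into the exponent to get $(\lambda_i/\lambda_1)^{2t-1}\le(1-\eps)^{2t-1}$, and choose $L$ large enough. The only cosmetic difference is that the paper bounds each individual term by $\eps^2/d^3$ (using $\hat g_i^2\le\|g\|^2$ termwise) and then sums over at most $d$ indices, whereas you sum $\sum_i\hat g_i^2=\|g\|^2$ directly before applying the decay bound; your version is marginally tighter but the two arguments are otherwise identical.
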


\begin{proof}
    First fix one particular $i \in T \setminus S_\eps$ and a in the proof of \cref{prop:small-proj-low-eigenspace}:
    
    \begin{align*}
        c_i^2 \abs*{\frac{\lambda_1}{\lambda_i}} \leq \abs*{\frac{\hat{g}_i}{\hat{g}_1}}^2 \abs*{\frac{\lambda_i}{\lambda_1}}^{2t - 1} \leq C\lprp{\frac{\sqrt{d} + \zeta \sqrt{\log 1 / \delta}}{\delta}}^2 (1 - \eps)^{2t - 1} \leq \frac{\eps^2}{d^3}
    \end{align*}
    
    where the first inequality follows from the fact that $\abs{c_i} \leq \abs*{\frac{\lambda_i^t \hat{g}_i}{\lambda_1^t \hat{g}_1}}$, the second inequality follows from the assumption that $\hat{g}$ is $g$-tempered and has a bounded norm and the final inequality from our definition of $t$. By summing up over all the terms, the statement of the proposition follows.
\end{proof}

\begin{proposition} \label{prop:approx-eigval-top-sing}
    $(1-2\eps)\lambda_1 \le \wt{\phi}_1^{\top}A\wt{\phi}_1 \le \lambda_1$.
\end{proposition}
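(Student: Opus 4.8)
The plan is to expand $\wt{\phi}_1$ in the eigenbasis of $A$ and view $\wt{\phi}_1^{\top}A\wt{\phi}_1$ as a convex combination of the eigenvalues of $A$. Writing $\wt{\phi}_1 = \sum_{i=1}^d c_i\phi_i$, and recalling from \cref{alg:pow} that $\wt{\phi}_1 = A^t g / \|A^t g\|$ is a unit vector, we have $\sum_i c_i^2 = 1$ and $\wt{\phi}_1^{\top}A\wt{\phi}_1 = \sum_{i=1}^d \lambda_i c_i^2$.

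The upper bound is immediate: every $\lambda_i$ is nonnegative and at most $\lambda_1$, so $\sum_i \lambda_i c_i^2 \le \lambda_1 \sum_i c_i^2 = \lambda_1$.

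For the lower bound I would invoke the mass bound already established in \cref{prop:small-proj-low-eigenspace}, namely $\|\Pi_{[0,(1-\eps)\lambda_1]}\wt{\phi}_1\|^2 \le \eps^2$, which is exactly the statement $\sum_{i:\lambda_i\le(1-\eps)\lambda_1} c_i^2 \le \eps^2$. Setting $S_\eps = \{i:\lambda_i\ge(1-\eps)\lambda_1\}$, its complement is contained in $\{i:\lambda_i\le(1-\eps)\lambda_1\}$, so $\sum_{i\in S_\eps} c_i^2 \ge 1-\eps^2$. Restricting the sum $\sum_i \lambda_i c_i^2$ to $S_\eps$ and bounding each $\lambda_i$ there below by $(1-\eps)\lambda_1$ gives $\wt{\phi}_1^{\top}A\wt{\phi}_1 \ge (1-\eps)(1-\eps^2)\lambda_1$, and the elementary inequality $(1-\eps)(1-\eps^2) = 1 - 2\eps + \eps(1-\eps+\eps^2) \ge 1-2\eps$ for $\eps\in[0,1]$ finishes it.

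There is essentially no obstacle; the only point requiring mild care is the boundary case $\lambda_i = (1-\eps)\lambda_1$, which lies simultaneously in $S_\eps$ and in the projected subspace, but since all the inequalities above are used in the favorable direction this causes no issue. I also note that the sharper \cref{prop:small-ev-bnd} (with the $\lambda_1/\lambda_i$ weighting) is not needed here — the plain mass bound of \cref{prop:small-proj-low-eigenspace} suffices.
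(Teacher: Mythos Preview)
Your proof is correct and follows essentially the same approach as the paper: both use \cref{prop:small-proj-low-eigenspace} to conclude that at least $1-\eps^2$ of the mass of $\wt{\phi}_1$ lies in the eigenspace with eigenvalues at least $(1-\eps)\lambda_1$, and then multiply through to obtain the lower bound $(1-\eps)(1-\eps^2)\lambda_1 \ge (1-2\eps)\lambda_1$. The only difference is that you spell out the argument explicitly in terms of the coefficients $c_i$, whereas the paper writes it in projection notation.
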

\begin{proof}
    The upper bound follows from $\lambda_1$ being the maximum eigenvalue of $A$.  As a consequence of \cref{prop:small-proj-low-eigenspace} and the fact that $\|\wt{\phi}_1\|=1$,
    \[
        \|\Pi_{[(1-\eps)\lambda_1,\lambda_1]}\wt{\phi}_1\|\ge \sqrt{1-\eps^2}.
    \]
    and thus
    \[
        \wt{\phi}_1^{\top}A\wt{\phi}_1 \ge (1-\eps)\lambda_1\left(1-\eps^2\right) \ge (1-2\eps)\lambda_1.
    \]
\end{proof}

\begin{remark}  \label{rem:eig-lower-bound}
    The same proof as \cref{prop:approx-eigval-top-sing} also shows that:
    \[
        \wt{\phi}_k^{\top} A\wt{\phi}_k \ge (1-2\eps)\left\|\left(\Id-\sum_{i=1}^{k-1}\wt{\phi}_i\wt{\phi}_i^{\top}\right)A\left(\Id-\sum_{i=1}^{k-1}\wt{\phi}_i\wt{\phi}_i^{\top}\right)\right\|
    \]
    and since $\left\|\left(\Id-\sum_{i=1}^{k-1}\wt{\phi}_i\wt{\phi}_i^{\top}\right)A\left(\Id-\sum_{i=1}^{k-1}\wt{\phi}_i\wt{\phi}_i^{\top}\right)\right\|$ is decreasing in $k$, it must be true that for $k=1,\dots,\ell$:
    \[
        \wt{\phi}_k^{\top} A\wt{\phi}_k \ge (1-2\eps)\left\|\left(\Id-\sum_{i=1}^{\ell}\wt{\phi}_i\wt{\phi}_i^{\top}\right)A\left(\Id-\sum_{i=1}^{\ell}\wt{\phi}_i\wt{\phi}_i^{\top}\right)\right\|.
    \]
\end{remark}

\subsection{Wrapup and proof of \cref{thm:main-svd-guarantee}}
In this section we will first prove \cref{lem:top-sing-vector} and then prove \cref{thm:main-svd-guarantee}.
\begin{proof}[Proof of \cref{lem:top-sing-vector}]
    As in \cref{prop:small-ev-bnd}, we will define the sets $S_{\eps}\coloneqq\{i:\lambda_i\ge(1-\eps)\lambda_1\}$ and $T = \{i: \lambda_i > 0\}$.
    Recall that it suffices to prove:
    \[
        |x^{\top}(A-\wt{A})x|\le 8\eps x^{\top}Ax.
    \]
    Note that we may write $\wt{\phi}_1 = \sum_{i \in T} c_i \phi_i$ as we run at least one iteration of the power method which ensures that $\wt{\phi}_1$ is in the row/column space of $A$. Therefore, we can assume that the sums in \cref{eq:rayleigh-diff} only go over the elements in $T$. Using this as our starting point, we have:
    \begin{align*}
        \frac{1}{2}|x^{\top}(A-\wt{A})x| &= \left|\left(\sum_{i\in T}\alpha_i c_i\right)\left(\sum_{i\in T} c_i\alpha_i\left(\lambda_i- \wt{\phi}_1^\top A\wt{\phi}_1\right)\right)\right| \\
        &\le \left|\sum_{i \in S_\eps} \alpha_ic_i \sum_{i\in S_{\eps}}\alpha_ic_i(\lambda_i-\wt{\phi}_1^{\top}A\wt{\phi}_1)\right| + \left|\sum_{i \in S_\eps}\alpha_ic_i \sum_{i\in T \setminus S_{\eps}}\alpha_ic_i(\lambda_i-\wt{\phi}_1^{\top}A\wt{\phi}_1)\right| \\
        &\qquad + \left|\sum_{i \in T \setminus S_\eps}\alpha_ic_i \sum_{i\in S_{\eps}}\alpha_ic_i(\lambda_i-\wt{\phi}_1^{\top}A\wt{\phi}_1)\right| + \left|\sum_{i \in T \setminus S_\eps}\alpha_ic_i \sum_{i\in T \setminus S_{\eps}} \alpha_ic_i(\lambda_i-\wt{\phi}_1^{\top}A\wt{\phi}_1)\right| \numberthis \label{eq:split-sum}
    \end{align*}
    
    We start by bounding the first term in \cref{eq:split-sum}:
    
    \begin{align*}
        \abs*{\sum_{i \in S_\eps} \alpha_ic_i \sum_{i\in S_{\eps}}\alpha_ic_i(\lambda_i-\wt{\phi}_1^{\top}A\wt{\phi}_1)} &\leq 2\eps \lambda_1 \lprp{\sum_{i \in S_\eps} \alpha_i c_i}^2\\
        &\leq 2\eps \lambda_1 \lprp{\sum_{i \in S_\eps} \frac{c_i^2}{\lambda_i}} \lprp{\sum_{i \in S_\eps} \alpha_i^2 \lambda_i} \\
        &\leq 2\frac{\eps}{1 - \eps} (\sum_{i \in S_\eps} c_i^2) \lprp{\sum_{i \in T} \alpha_i^2 \lambda_i}\\
        &\leq 3\eps x^\top A x
    \end{align*}
    where the first inequality follows from \cref{prop:approx-eigval-top-sing} and the definition of the set $S_\eps$, the second inequality follows from Cauchy-Schwarz, the third follows again from the definition of the set $S_\eps$ and the final inequality from the fact that $\sum_i c_i^2 = 1$. 
    
    For the next term in \cref{eq:split-sum}, we have:
    
    \begin{align*}
        \abs*{\sum_{i \in S_\eps}\alpha_ic_i \sum_{i\in T \setminus S_{\eps}}\alpha_ic_i(\lambda_i-\wt{\phi}_1^{\top}A\wt{\phi}_1)} &= \abs*{\sum_{i \in S_\eps}\alpha_ic_i} \cdot \abs*{\sum_{i\in T \setminus S_{\eps}}\alpha_ic_i(\lambda_i-\wt{\phi}_1^{\top}A\wt{\phi}_1)}\\
        &\leq \lambda_1 \lprp{\sum_{i \in S_\eps} \abs{\alpha_i c_i}} \lprp{\sum_{i \in T \setminus S_\eps} \abs{\alpha_i c_i}} \\
        &= \lprp{\sum_{i \in S_\eps} \sqrt{\lambda_1} \abs{\alpha_i c_i}} \lprp{\sum_{i \in T \setminus S_\eps} \sqrt{\lambda_1} \abs{\alpha_i c_i}}\\
        &\leq \lprp{\sum_{i \in S_\eps} \sqrt{\lambda_1} \abs{\alpha_i c_i}} \lprp{\sum_{i \in T \setminus S_\eps} \sqrt{\lambda_1} \abs{\alpha_i c_i}} \\
        &\leq \lprp{\sum_{i \in S_\eps} \alpha_i^2 \lambda_i}^{1/2} \lprp{\sum_{i \in S_\eps} c_i^2 \frac{\lambda_1}{\lambda_i}}^{1/2} \lprp{\sum_{i \in T \setminus S_\eps} \alpha_i^2 \lambda_i}^{1/2} \lprp{\sum_{i \in T\setminus S_\eps} c_i^2 \frac{\lambda_1}{\lambda_i}}^{1/2} \\
        &\leq x^\top A x (1 - \eps)^{-1} \frac{\eps}{d}\\
        & \leq \frac{\eps}{4} x^\top Ax
    \end{align*}
    where the second inequality follows from Cauchy-Schwarz and the final inequality follows from the definition of $S_\eps$ and \cref{prop:small-ev-bnd}. 
    
    For the third term in \cref{eq:split-sum}, we have:
    
    \begin{equation*}
        \abs*{\sum_{i \in T \setminus S_\eps}\alpha_ic_i \sum_{i\in S_{\eps}}\alpha_ic_i(\lambda_i-\wt{\phi}_1^{\top}A\wt{\phi}_1)} \leq \lambda_1 \lprp{\sum_{i \in T \setminus S_\eps} \abs{\alpha_i c_i}} \lprp{\sum_{i \in S_\eps} \abs{\alpha_i c_i}}
    \end{equation*}
    
    and the proof proceeds as before. For the final term, we have from Cauchy-Schwarz and \cref{prop:small-ev-bnd}:
    
    \begin{align*}
        \abs*{\sum_{i \in T \setminus S_\eps}\alpha_ic_i \sum_{i\in T \setminus S_{\eps}}\alpha_ic_i(\lambda_i-\wt{\phi}_1^{\top}A\wt{\phi}_1)} &\leq \lambda_1 \lprp{\sum_{i \in T \setminus S_\eps} \abs{\alpha_i c_i}} \lprp{\sum_{i \in T\setminus S_\eps} \abs{\alpha_i c_i}} \\
        &\leq \lprp{\sum_{i \in T \setminus S_\eps} c_i^2 \lprp{\frac{\lambda_1}{\lambda_i}}} \lprp{\sum_{i \in T \setminus S_\eps} \alpha_i^2 \lambda_i} \\
        &\leq x^\top Ax \cdot \frac{\eps^2}{d^2}.
    \end{align*}
    
    Putting the bounds on the four terms on \cref{eq:split-sum}, we get the desired result.
\end{proof}

\begin{proof}[Proof of \cref{thm:main-svd-guarantee}]
    Our proof proceeds by induction.  When $\ell=1$, \cref{lem:top-sing-vector} gives us the desired statement.  Suppose we wish to prove the statement for $\ell = m$ and suppose our goal statement is true for $\ell = m-1$.  Recall that the algorithm first computes vectors $\wt{\phi}_1,\dots,\wt{\phi}_{m-1}$, and then runs \pref{alg:pca} on
    \[
        E\coloneqq\left(\Id-\sum_{i=1}^{m-1}\wt{\phi}_i\wt{\phi}_i\right)A\left(\Id-\sum_{i=1}^{m-1}\wt{\phi}_i\wt{\phi}_i\right)
    \]
    to obtain $\wt{\phi}_m$.  Let us define:
    \[
        \wt{E}\coloneqq \langle\wt{\phi}_m,E\wt{\phi}_m\rangle\wt{\phi}_m\wt{\phi}_m^{\top}+ \left(\Id-\wt{\phi}_m\wt{\phi}_m^{\top}\right)E\left(\Id-\wt{\phi}_m\wt{\phi}_m^{\top}\right)
    \]
    From \cref{lem:top-sing-vector}:
    \[
        (1-\eps)\wt{E}\psdle E \psdle (1+\eps)\wt{E}    \numberthis \label{eq:psd-ind-hyp}
    \]
    We now observe that
    \[
        \langle\wt{\phi}_m,E\wt{\phi}_m\rangle = \langle\wt{\phi}_m,A\wt{\phi}_m\rangle
    \]
    since $\wt{\phi}_m$ is orthogonal to the space spanned by $\wt{\phi}_1,\dots,\wt{\phi}_{m-1}$, and further note that
    \[
        \left(\Id-\wt{\phi}_m\wt{\phi}_m^{\top}\right)E\left(\Id-\wt{\phi}_m\wt{\phi}_m^{\top}\right) = \left(\Id-\sum_{i=1}^{m}\wt{\phi}_i\wt{\phi}_i\right)A\left(\Id-\sum_{i=1}^{m}\wt{\phi}_i\wt{\phi}_i\right)
    \]
    which lets us rewrite $\wt{E}$ as
    \[
        \wt{E} = \langle\wt{\phi}_m,A\wt{\phi}_m\rangle\wt{\phi}_m\wt{\phi}_m^{\top}+ \left(\Id-\sum_{i=1}^m\wt{\phi}_i\wt{\phi}_i^{\top}\right)A\left(\Id-\sum_{i=1}^m\wt{\phi}_i\wt{\phi}_i^{\top}\right).
    \]
    Now,
    \[
        \wt{A} = \wt{E} + \sum_{i=1}^{m-1}\wt{\phi}_i\wt{\phi}_i^{\top}\langle\wt{\phi}_i,A\wt{\phi}_i\rangle.
    \]
    Define $F$ as
    \[
        F \coloneqq E + \sum_{i=1}^{m-1}\wt{\phi}_i\wt{\phi}_i^{\top}\langle\wt{\phi}_i,A\wt{\phi}_i\rangle.
    \]
    Adding the PSD inequality
    \[
        (1-\eps)\sum_{i=1}^{m-1}\wt{\phi}_i\wt{\phi}_i^{\top}\langle\wt{\phi}_i,A\wt{\phi}_i\rangle \psdle \sum_{i=1}^{m-1}\wt{\phi}_i\wt{\phi}_i^{\top}\langle\wt{\phi}_i,A\wt{\phi}_i\rangle \psdle \sum_{i=1}^{m-1}\wt{\phi}_i\wt{\phi}_i^{\top}\langle\wt{\phi}_i,A\wt{\phi}_i\rangle
    \]
    to \cref{eq:psd-ind-hyp} along with the definitions of $\wt{A}$ and $F$ gives us
    \[
        (1-\eps)\wt{A} \psdle F \psdle (1+\eps)\wt{A} \label{eq:penult-IH}
    \]
    From our induction hypothesis,
    \[
        (1-\eps)^{m-1}F\psdle A \psdle (1+\eps)^m F
    \]
    from which we can deduce
    \[
        (1-\eps)^m\wt{A}\psdle A \psdle (1+\eps)^m\wt{A}.
    \]
    Hence our induction is complete and our goal statement is proved.
    \end{proof}

\section{Fast Projection on Fantopes}
\label{sec:fastproj}

In this section we define $\calS$ as follows
\[
    \mc{S} = \{(M, W) \in (\psd[\ell], \psd[m]): M \succcurlyeq 0,\ W \succcurlyeq 0,\ \Tr{M} + \Tr{W} = 1,\ \norm{W} \leq \Tr{W} / k\}.
\]
We will be concerned with solving the following optimization problem:
\begin{equation}
    \label{eq:pcomp-orig}
    (\mopt, \wopt) = \argmax_{(M, W) \in \mc{S}} \inp{F}{M} + \inp{G}{W} + \vNE (M) + \vNE (W).
\end{equation}

\begin{algorithm}[H]
\SetAlgoLined


\KwIn{Gain Matrices $F, G$}
\KwOut{$(M^*, W^*) = \argmax_{(M, W) \in \mathcal{S}} \inp{F}{M} + \inp{G}{W} + \vNE (M) + \vNE (W)$}

$Q \gets \exp (F)$\;
$H \gets \exp (G)$\;

$(v_i, \sigma_i)_{i = 1}^k \gets \mc{PCA}_k (H)$\;

$t \gets \Tr\lprp{\mc{P}_{V_k}^\perp H \mc{P}_{V_k}^\perp}$\;
Let $\tau^*$ be such that $\frac{\tau^*}{t + \sum_{i = 1}^k \min(\sigma_i, \tau^*)} = \frac{1}{k}$\;

$Z_1 = \Tr (Q),\ Z_2 = t + \sum_{i = 1}^k \min (\sigma_i, \tau^*)$\;
$\wh{M} = Z_1^{-1} Q,\ \wh{W} = Z_2^{-1} (\sum_{i = 1}^k \min(\tau^*, \sigma_i) v_iv_i^\top + \mc{P}_{V_k}^\perp H \mc{P}_{V_k}^\perp)$\;

$\gamma = \log Z_1,\ \zeta = \log Z_2 + k^{-1} \sum_{i = 1}^k (\log \sigma_i - \log \min (\sigma_i, \tau^*))$\;

$(M^*, W^*) = \lprp{\frac{e^\gamma}{e^\gamma + e^\zeta} \wh{M}, \frac{e^\zeta}{e^\gamma + e^\zeta} \wh{W}}$

\KwRet{Main output: $(M^*, W^*)$, Ancillary output: $\gamma,\zeta,\tau,(v_i,\sigma_i)_{i=1}^k$}
\caption{FastProjection} 
\label{alg:projI}
\end{algorithm}

In this subsection, we will prove that \pref{alg:projI} correctly computes the optimizer to \pref{eq:pcomp-orig}.  To do this, we will first analyze the following simpler problem:

\begin{equation}
    \label{eq:psimp}
    \wopts = \argmax_{\substack{W \succcurlyeq 0, \Tr{W} = 1 \\ \norm{W} \leq 1 / k}} \inp{G}{W} + \vNE (W).
\end{equation}

\begin{remark}  \label{rem:proj-def-p}
    Henceforth, we use $p(G)$ to denote $W^*$.
\end{remark}

We first prove that the optimizer, $\wopts$, of \pref{eq:psimp} has the same eigenvectors as that of $G$. 

\begin{lemma}
    \label{lem:psimpevec}
    Given $G \succcurlyeq 0$, the optimizer, $\wopts$, of \pref{eq:psimp} has the same eigenvectors as $G$.
\end{lemma}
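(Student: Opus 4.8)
The plan is to argue by a spectral rearrangement argument combined with strict concavity. First observe that the feasible region $\mc{C} \coloneqq \{W \succcurlyeq 0 : \Tr W = 1,\ \norm{W} \leq 1/k\}$ is convex and compact, and that by \cref{fac:vnconv} the map $W \mapsto \inp{G}{W} + \vNE(W)$ is $1$-strongly concave on $\mc{C} \subseteq \trosp$ (the linear term $\inp{G}{\cdot}$ does not affect strong concavity). Hence \pref{eq:psimp} has a \emph{unique} maximizer $\wopts$, and it suffices to exhibit one maximizer that shares an eigenbasis with $G$.

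Next, fix an eigendecomposition $G = \sum_{i=1}^m g_i u_i u_i^\top$ with $g_1 \geq \dots \geq g_m$, and let $W \in \mc{C}$ be arbitrary with eigenvalues $w_1 \geq \dots \geq w_m$. Define $W' \coloneqq \sum_{i=1}^m w_i u_i u_i^\top$. Since $W'$ has the same spectrum as $W$, we get $W' \succcurlyeq 0$, $\Tr W' = 1$, and $\norm{W'} \leq 1/k$, so $W' \in \mc{C}$; moreover $\vNE$ is a function of the spectrum alone, so $\vNE(W') = \vNE(W)$. Finally, by von Neumann's trace inequality for symmetric matrices ($\Tr(GW) \leq \sum_i \lambda_i(G)\lambda_i(W)$, with equality when the two matrices are simultaneously diagonalized with eigenvalues listed in matching order), we obtain $\inp{G}{W'} = \sum_i g_i w_i \geq \inp{G}{W}$. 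Therefore the objective value at $W'$ is at least that at $W$, and in the maximization over $\mc{C}$ nothing is lost by restricting attention to matrices of the form $\sum_i w_i u_i u_i^\top$.

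Combining the two steps, there is a maximizer of \pref{eq:psimp} that is diagonal in the eigenbasis $\{u_i\}$ of $G$; by uniqueness this maximizer is $\wopts$, so $\wopts$ shares eigenvectors with $G$, as claimed. Points requiring care: (i) invoking strong concavity, where one should note $\mc{C} \subseteq \trosp$ so that \cref{fac:vnconv} applies and yields genuine uniqueness; and (ii) the degenerate case where $G$ has repeated eigenvalues, in which ``the same eigenvectors'' should be read as ``$\wopts$ commutes with $G$'' — the argument above still produces such a $\wopts$ since we are free to choose the eigenbasis within each eigenspace of $G$. A secondary, purely bookkeeping issue is that von Neumann's trace inequality is not recorded in the preliminaries; it can be cited as standard or established in one line by noting that $\inp{G}{\cdot}$, restricted to matrices with a fixed spectrum, is maximized at a sorted simultaneous diagonalization. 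I expect the second step — checking $W' \in \mc{C}$ and that the rearrangement cannot decrease $\inp{G}{\,\cdot\,}$ — to be the conceptual crux, though it is short; the rest is routine.
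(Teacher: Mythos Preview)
Your proposal is correct and takes essentially the same approach as the paper: both arguments use von Neumann's trace inequality together with the fact that $\vNE$ depends only on the spectrum to show that replacing any feasible $W$ by the matrix with the same eigenvalues aligned to $G$'s eigenvectors can only increase the objective. Your version is in fact slightly more careful than the paper's, since you explicitly invoke strong concavity (via \cref{fac:vnconv}) to establish uniqueness and you handle the degenerate case of repeated eigenvalues, whereas the paper simply asserts that the optimizer must share eigenvectors with $G$ after stating the inequality.
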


\begin{proof}
    Let $\sigma_1 \geq \dots \geq \sigma_m \geq 0$ and $\lambda_1 \geq \dots \lambda_m \geq 0$ denote the eigenvalues of $\wopts$ and $G$ respectively. Now, we have the von Neumann's trace inequality:
    
    \begin{equation*}
        \inp{\wopts}{G} - \inp{\wopts}{\log \wopts} \leq \sum_{i = 1}^m (\sigma_i \lambda_i - \sigma_i \log \sigma_i)
    \end{equation*}
    with equality when the eigenvectors of $\wopts$ corresponding to the eigenvalue $\sigma_i$ coincide with the eigenvectors of $G$ for the eigenvalue $\lambda_i$. Therefore, the optimizer $\wopts$ must share the same set of eigenvectors as $G$.
\end{proof}

\begin{lemma}
    \label{lem:psimpeval}
    Given, $G \succcurlyeq 0$ and let $H = \exp G$ with eigenvalue decomposition $H = \sum_{i = 1}^m \lambda_i u_iu_i^\top$. Let $\nu^*$ be defined as follows:
    
    \begin{equation*}
        \nu^* = \min \lbrb{\nu > 0: \frac{\nu}{\sum_{i = 1}^m \min (\nu, \lambda_i)} \leq \frac{1}{k}}.
    \end{equation*}
    
    Then, the optimizer, $\wopts$, of \pref{eq:psimp} is given by:
    
    \begin{equation*}
        \wopts = \frac{\sum_{i = 1}^m \min (\nu^*, \lambda_i) u_iu_i^\top}{\sum_{i = 1}^m \min (\nu^*, \lambda_i)}.
    \end{equation*}
\end{lemma}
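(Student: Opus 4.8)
The plan is to reduce \cref{eq:psimp} to a scalar ``water-filling'' problem using \cref{lem:psimpevec}, solve that problem with its KKT conditions, and read off the threshold $\nu^*$.

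\textbf{Step 1 (reduction to eigenvalues).} By \cref{lem:psimpevec}, $\wopts$ shares an eigenbasis with $G$, hence with $H = \exp G = \sum_{i=1}^m \lambda_i u_iu_i^\top$. Writing $G = \log H = \sum_i(\log\lambda_i)\,u_iu_i^\top$ (each $\log\lambda_i\ge 0$ since $G\succeq 0$, so $\lambda_i\ge 1>0$) and parametrizing $W=\sum_i\sigma_i u_iu_i^\top$, the objective of \cref{eq:psimp} becomes $\inp{G}{W}+\vNE(W)=\sum_i\sigma_i\log\lambda_i-\sum_i\sigma_i\log\sigma_i$, and the feasible region becomes $\{\sigma\in\R^m:\sigma_i\ge 0,\ \sum_i\sigma_i=1,\ \sigma_i\le 1/k\}$, which is nonempty precisely when $m\ge k$. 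So it suffices to maximize $\sum_i\sigma_i\log(\lambda_i/\sigma_i)$ over this polytope.

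\textbf{Step 2 (KKT conditions).} This objective is strictly concave on the compact convex feasible set, so the maximizer exists and is unique. Since $\partial_{\sigma_i}\big(\sigma_i\log(\lambda_i/\sigma_i)\big)=\log(\lambda_i/\sigma_i)-1\to+\infty$ as $\sigma_i\to 0^+$ (using $\lambda_i>0$), no coordinate vanishes at the optimum, i.e.\ the constraints $\sigma_i\ge 0$ are inactive. Introducing a multiplier $\beta\in\R$ for $\sum_i\sigma_i=1$ and multipliers $\mu_i\ge 0$ for $\sigma_i\le 1/k$, stationarity reads $\log\lambda_i-\log\sigma_i-1-\beta-\mu_i=0$. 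Setting $c:=e^{-1-\beta}>0$: if $\sigma_i<1/k$ then $\mu_i=0$ and $\sigma_i=c\lambda_i$; if $\sigma_i=1/k$ then $\mu_i\ge 0$ forces $c\lambda_i\ge 1/k$. In every case $\sigma_i=\min(c\lambda_i,\ 1/k)$.

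\textbf{Step 3 (identifying $\nu^*$, and the obstacle).} Put $\nu:=1/(ck)$, so $\sigma_i=c\min(\lambda_i,\nu)$; the normalization $\sum_i\sigma_i=1$ gives $c=\big(\sum_i\min(\lambda_i,\nu)\big)^{-1}$, and substituting back into $\nu=1/(ck)$ yields $\nu/\sum_i\min(\nu,\lambda_i)=1/k$. To see this pins $\nu$ down, observe that $h(\nu):=\nu/\sum_i\min(\nu,\lambda_i)$ is continuous on $(0,\infty)$, equals $1/m\le 1/k$ for $\nu$ below the smallest $\lambda_i$, tends to $\infty$ as $\nu\to\infty$, and is non-decreasing since, away from breakpoints, $h'(\nu)=\big(\sum_{i:\lambda_i\le\nu}\lambda_i\big)\big/\big(\sum_i\min(\nu,\lambda_i)\big)^2\ge 0$; hence $h$ attains the level $1/k$ at the value $\nu^*$ of the lemma. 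Substituting $\sigma_i=\min(\nu^*,\lambda_i)\big/\sum_j\min(\nu^*,\lambda_j)$ into $W=\sum_i\sigma_i u_iu_i^\top$ gives the claimed formula for $\wopts$. There is no deep difficulty here; the care is all bookkeeping: checking that the nonnegativity constraints are genuinely inactive (so KKT has the clean form above), and handling degenerate cases in the definition of $\nu^*$ — ties among the $\lambda_i$ at the threshold, or $m=k$ where the trace and norm constraints already force $W=\tfrac1m I$ — where monotonicity of $h$ together with strict concavity of the objective shows all admissible choices yield the same $W$.
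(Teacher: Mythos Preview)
Your proposal is correct and follows essentially the same approach as the paper: reduce to the scalar problem via \cref{lem:psimpevec}, apply KKT to obtain $\sigma_i = \min(c\lambda_i, 1/k)$, then identify the threshold through monotonicity of the ratio $\nu/\sum_i\min(\nu,\lambda_i)$. The only cosmetic differences are your parametrization via $c$ and $\nu = 1/(ck)$ versus the paper's Lagrange multipliers $(\alpha_i^*,\beta^*)$, and your derivative-blowup argument for $\sigma_i > 0$ versus the paper's exchange argument; both routes arrive at the same conclusion.
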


\begin{proof}
    From \pref{lem:psimpevec}, we know that the eigenvectors for $\wopts$ and $G$ and hence, $H$, coincide. Let $\sigma_1^*, \dots, \sigma_m^*$ denote the eigenvalues of $\wopts$ corresponding to the eigenvectors $u_1, \dots, u_m$. Then, we see from \pref{eq:psimp} that:
    
    \begin{gather*}
        (\sigma_1^*, \dots, \sigma_m^*) = \argmax_{(\sigma_1, \dots, \sigma_m) \geq 0} \sum_{i = 1}^m \sigma_i \log \lambda_i - \sigma_i \log \sigma_i \\
        \sum_{i = 1}^m \sigma_i = 1 \\
        \sigma_i \leq \frac{1}{k} \tag{Prog} \label{eq:psprog}
    \end{gather*}
    Since, the above optimization problem is convex, we compute its Lagrangian (Note that we must set $\alpha_i \geq 0$):
    \begin{equation*}
        \mc{L} (\{\sigma_i\}, \{\alpha_i\}, \beta) = \sum_{i = 1}^m \sigma_i (\log \lambda_i - \log \sigma_i + \beta - \alpha_i) - \beta + \sum_{i = 1}^m \alpha_i / k.
    \end{equation*}
    Now, picking $\beta' < - \max_{i \in [m]} \abs{\log \lambda_i}$. Note that $\log \lambda_i$ are the eigenvalues of $G$ and hence $\beta$ is finite. We now have:
    \begin{equation*}
        \max_{\{\sigma_i\} \geq 0} \mc{L} (\{\sigma_i\}, 0, \beta') = \max_{\{\sigma_i\} \geq 0} \sum_{i = 1}^m \sigma_i (\log \lambda_i - \log \sigma_i + \beta') - \beta' \leq \frac{m}{e} - \beta'
    \end{equation*}
    by noting that $-x\log x$ is maximized at $x = 1 / e$. Note that the above conclusion holds true for $\alpha$ satisfying $\max_{i} \alpha_i \leq - \max_{i \in [m]} \abs{\log \lambda_i} - \beta'$. Therefore, Slaters' condition holds for both the primal problem, \ref{eq:psprog}, and its dual. Furthermore, the optimal value of \ref{eq:psprog} is bounded as both \ref{eq:psprog} and its dual have a feasible point with finite objective value. Therefore, strong duality holds for \ref{eq:psprog} and its dual and their optimal value is attained. Let $\{\sigma_i^*\}$ and $(\{\alpha_i^* \geq 0\}, \beta^*)$ denote the primal and dual optimal points respectively. Note that by a simple exchange argument $\sigma^*_i \neq 0$. Therefore, the KKT conditions apply to \ref{eq:psprog} and we get:
    \begin{equation*}
        \log \lambda_i - \log \sigma_i^* + \beta^* - \alpha_i^* = 0 \implies \sigma_i^* = e^{\beta^* - \alpha^*_i} \lambda_i.
    \end{equation*}
    From the condition of primal feasibility, we get that $e^{\beta^*} = (\sum_{i = 1}^m e^{-\alpha^*_i} \lambda_i)^{-1}$. Also, note that we get from complementary slackness that $\alpha^*_i > 0$ implies that $\sigma^*_i = 1 / k$. Additionally, from complementary slackness, we obtain that $\sigma^*_i \geq \sigma^*_j$ for $i \geq j$. 
    
    Let $l = \#\{i: \alpha^*_i > 0\}$. We first tackle the case where $l = 0$. In this case, the optimizer is simply $\wopts = H / \Tr (H)$ and the statement of the lemma is true. 
    
    Now assume that $l > 0$. Let us now consider the function, $f$, defined as:
    \begin{equation*}
        f(\nu) = \frac{\nu}{\sum_{i = 1}^m \min (\lambda_i, \nu)}.
    \end{equation*}
    When $\lambda_m > 0$ which holds in this case, $f(\nu)$ is a strictly increasing, continuous function of $\nu$ in the interval $[\lambda_m, \infty)$ and its value increases from $1 / m$ to $\infty$. For $i, j \in [l]$, we have $\sigma^*_i = \sigma^*_j = 1/k$ by complementary slackness and therefore $e^{-\alpha^*_i} \lambda_i = e^{-\alpha^*_j} \lambda_j = \wh{\nu}$. For $i \notin [l]$, we have $\sigma^*_i = e^{\beta^*} \min(\lambda_i, \wh{\nu})$ as we have $\sigma^*_i = e^{\beta^*} \lambda_i \leq e^{\beta^*} \wh{\nu} = 1/k$. From the previous two statements, we have $\sigma^*_i = e^{\beta^*} \min(\wh{\nu}, \lambda_i)$ for all $i \in [m]$. Finally, we have $f(\wh{\nu}) = 1 / k$ from complementary slackness which implies that $\wh{\nu} = \nu^*$ as $f$ is strictly increasing and continuous. Which implies that the optimal value of $\sigma^*_i$ is given by $\sigma^*_i = \min(\lambda_i, \nu^*) / (\sum_{j = 1}^m \min (\lambda_j, \nu^*))$, thus proving the lemma.
    
\end{proof}

Finally, we will now show how to use solutions to \pref{eq:psimp} to obtain solutions to the following:

\begin{equation}
    \label{eq:pcomp}
    (\mopt, \wopt) = \argmax_{(M, W) \in \mc{S}} \inp{F}{M} + \inp{G}{W} + \vNE (M) + \vNE (W).
\end{equation}
The result is detailed in the following lemma:
\begin{lemma}
    \label{lem:pcomp}
    Let $F, G \succcurlyeq 0$ and let $Q = \exp (F)$, $Z_1 = \Tr (\exp (F))$, $H = \exp (G)$ with eigenvalue decomposition $H = \sum_{i = 1}^m \lambda_i u_iu_i^\top$ and $\nu^*$ and $Z_2$ be defined as:
    
    \begin{equation*}
        \nu^* = \max \lbrb{\nu > 0: \frac{\nu}{\sum_{i = 1}^m \min(\nu, \lambda_i)} \leq \frac{1}{k}}, \qquad Z_2 = \sum_{i = 1}^m \min(\lambda_i, \nu^*).
    \end{equation*}
    Then, the optimizers, $(\mopt, \wopt)$, of Equation~\pref{eq:pcomp} are given by:
    \begin{equation*}
        \mopt = \frac{e^{\gamma}}{e^{\gamma} + e^{\zeta}} \cdot \wh{M},~ \wopt = \frac{e^{\zeta}}{e^{\gamma} + e^{\zeta}} \cdot \wh{W}
    \end{equation*}
    where $\gamma = \log Z_1$, $\zeta = \log (Z_2) + k^{-1} \sum_{i = 1}^k (\log (\lambda_i) - \log \min (\lambda_i, \nu^*))$ and $\wh{W}$  and $\wh{M}$ are defined as:
    
    \begin{equation*}
        \wh{M} = \frac{Q}{\Tr (Q)} \text{ and } \wh{W} = \frac{\sum_{i = 1}^m \min(\lambda_i, \nu^*) u_i u_i^\top}{\sum_{i = 1}^m \min(\lambda_i, \nu^*)}.
    \end{equation*}
\end{lemma}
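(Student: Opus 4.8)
The plan is to reduce \pref{eq:pcomp} to the single-block problem \pref{eq:psimp} by parametrizing the feasible set $\mc{S}$ according to how the unit of trace is split between $M$ and $W$. Concretely, write $\Tr M = a$ and $\Tr W = 1-a$ for $a \in [0,1]$, so that $M = a\wt M$ and $W = (1-a)\wt W$ with $\Tr \wt M = \Tr \wt W = 1$ and $\wt M,\wt W \succeq 0$; crucially, the constraint $\norm{W} \le \Tr W / k$ becomes simply $\norm{\wt W} \le 1/k$ since it is scale invariant, so for a fixed $a$ the two blocks decouple entirely. Using the entropy scaling identity $\vNE(cX) = c\,\vNE(X) - c\log c\cdot\Tr(X)$ (immediate from $\log(cX) = \log X + (\log c)I$), the objective of \pref{eq:pcomp} rewrites as
\[
    a\big(\inp{F}{\wt M} + \vNE(\wt M)\big) + (1-a)\big(\inp{G}{\wt W} + \vNE(\wt W)\big) - a\log a - (1-a)\log(1-a),
\]
and since the feasible region is a product set and the first two brackets depend only on $\wt M$ and $\wt W$ respectively, $\max_{(M,W)\in\mc{S}}(\cdots)$ equals $\max_{a\in[0,1]}\big(a\gamma + (1-a)\zeta + H_2(a)\big)$, where $\gamma$ and $\zeta$ are the optimal values of the two inner blocks and $H_2(a) = -a\log a - (1-a)\log(1-a)$. (The degenerate endpoints $a \in \{0,1\}$ are consistent with the convention $0\log 0 = 0$, $\vNE(0)=0$.)

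Next I would evaluate the two inner maxima. The $M$-block is the Gibbs variational principle: $\max_{\Tr \wt M = 1,\ \wt M \succeq 0}\big(\inp{F}{\wt M} + \vNE(\wt M)\big) = \log\Tr\exp(F) = \gamma$, attained at $\wh M = \exp(F)/\Tr\exp(F) = Q/\Tr Q$. The $W$-block is exactly \pref{eq:psimp}, so by \pref{lem:psimpevec} and \pref{lem:psimpeval} its maximizer is $\wh W = \tfrac{\sum_i \min(\lambda_i,\nu^*)u_iu_i^\top}{\sum_i \min(\lambda_i,\nu^*)}$; here one reconciles the "$\min$" definition of $\nu^*$ in \pref{lem:psimpeval} with the "$\max$" definition stated in the lemma using that $\nu\mapsto \nu/\sum_i\min(\nu,\lambda_i)$ is nondecreasing and, because $F,G\succeq 0$ forces $H=\exp(G)\succeq I$ (so $\lambda_m>0$), strictly increasing past $\lambda_m$, making the relevant level set an interval. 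Its optimal value is then obtained by plugging the eigenvalues $\sigma_i^* = \min(\lambda_i,\nu^*)/Z_2$ of $\wh W$, with $Z_2 = \sum_i\min(\lambda_i,\nu^*)$, into $\inp{G}{\wh W} + \vNE(\wh W) = \sum_i \sigma_i^*(\log\lambda_i - \log\sigma_i^*)$ (using that the eigenvalues of $G=\log H$ are $\log\lambda_i$); this equals $\log Z_2 + \sum_i \sigma_i^*\big(\log\lambda_i - \log\min(\lambda_i,\nu^*)\big)$. The terms with $\lambda_i \le \nu^*$ vanish, and for $\lambda_i > \nu^*$ one has $\sigma_i^* = \nu^*/Z_2 = 1/k$ by definition of $\nu^*$, so — since at most $k$ eigenvalues can exceed $\nu^*$ (otherwise $Z_2 \ge (k+1)\nu^*$ contradicts $\nu^*/Z_2 = 1/k$) — extending the sum to $i=1,\dots,k$ only appends zeros, giving $\zeta = \log Z_2 + k^{-1}\sum_{i=1}^k(\log\lambda_i - \log\min(\lambda_i,\nu^*))$.

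Finally, the problem collapses to the one-dimensional $\max_{a\in[0,1]}\big(a\gamma + (1-a)\zeta + H_2(a)\big)$, which is strictly concave in $a$ since $H_2''<0$; setting the derivative $\gamma - \zeta + \log\frac{1-a}{a}$ to zero yields $a^* = e^{\gamma}/(e^{\gamma}+e^{\zeta}) \in (0,1)$ and $1-a^* = e^{\zeta}/(e^{\gamma}+e^{\zeta})$, so $\mopt = a^*\wh M = \tfrac{e^{\gamma}}{e^{\gamma}+e^{\zeta}}\wh M$ and $\wopt = (1-a^*)\wh W = \tfrac{e^{\zeta}}{e^{\gamma}+e^{\zeta}}\wh W$, as claimed. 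I expect the only real obstacle to be bookkeeping: carefully matching the closed form for $\zeta$ to the value of the $W$-block (the correct truncation of the sum, and the "$\min$" versus "$\max$" definitions of $\nu^*$) and dispatching the degenerate cases (no eigenvalue exceeds $\nu^*$, so the $W$-block optimum is simply $H/\Tr H$, or regimes where the Ky--Fan constraint on a trace-$1$ matrix is vacuous). The single conceptual idea — fixing the trace split decouples the blocks, and entropy scaling converts the split into a clean binary-entropy term — is what makes everything else a routine computation on top of \pref{lem:psimpeval} and the Gibbs variational principle.
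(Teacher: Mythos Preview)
Your proposal is correct and follows essentially the same route as the paper: fix the trace split $a=\Tr M$, use the scaling identity for von Neumann entropy to decouple the two blocks, identify the $M$-block optimum via the Gibbs variational principle and the $W$-block optimum via \pref{lem:psimpeval}, compute the resulting values $\gamma$ and $\zeta$, and then optimize the one-dimensional binary-entropy expression over $a$. The paper's proof does exactly this (with the same chain of manipulations and the same handling of the truncated sum in $\zeta$); your extra care about the $\min$/$\max$ discrepancy in the definition of $\nu^*$ and the degenerate endpoints is bookkeeping the paper glosses over but does not treat differently in substance.
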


\begin{proof}
    Let $(\mopt, \wopt)$ denote the solutions of \pref{eq:pcomp} and let $\alpha = \Tr{\mopt}$. Then, we must have:
    
    \begin{equation*}
        \mopt = \argmax_{\substack{M \succcurlyeq 0 \\ \Tr (M) = \alpha}} \inp{F}{M} - \inp{M}{\log M} \text{ and } \wopt = \argmax_{\substack{W \succcurlyeq 0 \\ \Tr (W) = 1 - \alpha \\ \norm{W} \leq \Tr W / k}} \inp{G}{W} - \inp{W}{\log W}.
    \end{equation*}
    Now consider the case where $\alpha > 0$. For the first equation, we have:
    \begin{align*}
        \mopt &= \argmax_{\substack{M \succcurlyeq 0 \\ \Tr (M) = \alpha}} \inp{F}{M} - \inp{M}{\log M} \\
        &= \argmax_{\substack{M \succcurlyeq 0 \\ \Tr (M) = \alpha}} \inp{F}{M / \alpha} - \inp{M / \alpha}{\log M} \\
        &= \argmax_{\substack{M \succcurlyeq 0 \\ \Tr (M) = \alpha}} \inp{F}{M / \alpha} - \inp{M / \alpha}{\log (M/\alpha)} + \log 1 / \alpha\\
        &= \argmax_{\substack{M \succcurlyeq 0 \\ \Tr (M) = \alpha}} \inp{F}{M / \alpha} - \inp{M / \alpha}{\log (M/\alpha)} \\
        &= \alpha \argmax_{\substack{M \succcurlyeq 0 \\ \Tr (M) = 1}} \inp{F}{M} - \inp{M}{\log M} = \alpha \wh{M}.
    \end{align*}
    When $\alpha = 0$, the conclusion of the previous manipulation is trivially true. By a similar manipulation, from \pref{lem:psimpeval} we have $\wopt = (1 - \alpha) \wh{W}$. We now have:
    \begin{align*}
        \inp{\alpha \wh{M}}{F} - \inp{\alpha \wh{M}}{\log \alpha \wh{M}} = \alpha Z_1^{-1} (\inp{Q}{F} - \inp{Q}{F}) + \alpha \log Z_1 - \alpha \log \alpha = \alpha \log Z_1 - \alpha \log \alpha.
    \end{align*}
    
    We now proceed for a similar computation for $\wh{W}$:
    \begin{align*}
        &\inp{(1 - \alpha) \wh{W}}{G} - \inp{(1 - \alpha) \wh{W}}{\log (1 - \alpha) \wh{W}} \\
        &= (1 - \alpha) Z_2^{-1} \lprp{\sum_{i = 1}^m \min(\lambda_i, \nu^*) \log \lambda_i - \min (\lambda_i, \nu^*) \log (\min(\lambda_i, \nu^*))} \\
        &\qquad + (1 - \alpha) \log Z_2 - (1 - \alpha) \log (1 - \alpha) \\
        &= (1 - \alpha) Z_2^{-1} \lprp{\sum_{i = 1}^k \min(\lambda_i, \nu^*) \log \lambda_i - \min (\lambda_i, \nu^*) \log (\min(\lambda_i, \nu^*))} \\
        &\qquad + (1 - \alpha) \log Z_2 - (1 - \alpha) \log (1 - \alpha) \\
        &= (1 - \alpha) \frac{1}{k} \lprp{\sum_{i = 1}^k \log \lambda_i - \log (\min(\lambda_i, \nu^*))} + (1 - \alpha) \log Z_2 - (1 - \alpha) \log (1 - \alpha)
    \end{align*}
    where the second-to-last equality follows because at most $k$ of the $\lambda_i$ are greater than $\nu^*$ and the final inequality follows from the fact that $\lambda_i \geq \nu^*$ implies that $\min(\lambda_i, \nu^*) = \nu^* / Z_2 = 1 / k$. By putting the previous two results together, we get that:
    \begin{equation*}
        \alpha = \argmax_{\beta \in [0, 1]} \beta \gamma  + (1 - \beta) \zeta - \beta \log \beta - (1 - \beta) \log (1 - \beta)
    \end{equation*}
    whose optimal value is given by $\alpha = e^{\gamma} / (e^{\gamma} + e^{\zeta})$ which concludes the proof the lemma.
\end{proof}

\subsection{Fast Approximate Projection}
\label{ssec:appxproj}
It is unclear how to \emph{exactly} solve the optimization problem \pref{eq:pcomp-orig} fast, so we give an algorithm that outputs a solution close to the exact optimizer in trace norm.  As a first step, we give an algorithm to approximately solve the optimization problem \pref{eq:psimp}.  In the algorithm below, not all matrices are explicitly computed and we obtain an implicit representation of $W^*$ rather than an explicit $m\times m$ matrix.  For simplicity of exposition, we defer the details of this implicit representation to later subsections.  In the algorithm below, when we say $\Tr_{\eps}(H)$, we mean running the trace estimation algorithm from \pref{cor:esttr} from $H$, which with high probability produces a $(1\pm\eps)$-approximation of the trace.

\begin{algorithm}[H]
\SetAlgoLined


\KwIn{Gain Matrix $G$}
\KwOut{$W^*\approx_{\eps}\argmax_{W\psdge 0,\Tr(W)=1,\|W\|\le1/k} \inp{G}{W} \vNE (W)$}

$\wh{W} \gets \exp (G)$\;

$(v_i, \sigma_i)_{i = 1}^k \gets \text{PCA}_k (\wh{W})$\;

$H \gets (1-2\eps)\calP_{V_k}^{\perp}\wh{W}\calP_{V_k}^{\perp}$\;

$T \gets \Tr_{\eps}\lprp{H}$\;
$\wt{\tau} \gets \text{ solution to }kt = (1-\eps)T+\sum_{i=1}^k\min\{\sigma_i,t\}~~~t\in[\sigma_k,\infty)$\;

$W^* = \frac{1-4k\eps}{k\wt{\tau}}\left(\sum_{i=1}^k\min\{\sigma_i,\wt{\tau}\}v_iv_i^{\top} + H\right)$

\KwRet{$\text{Main output: $W^*$, Ancillary output: $\wt{\tau},(v_i,\sigma_i)_{i=1}^k$}$}
\caption{SimpleApproximateProjection}
\label{alg:projApp}
\end{algorithm}

Our first goal is to show that the output of \pref{alg:projApp} on input $G$ is close in trace norm to $p(G)$ where $p$ is as defined in \pref{rem:proj-def-p}.  Concretely, we prove:
\begin{theorem}[Simplified version of \pref{thm:main-thm-simp-proj}]
    Let $G$ be a positive semidefinite matrix, and let $W^*$ be the output of \pref{alg:projApp} on input $G$.  Then:
    \[
        \|p(G)-W^*\|_* \le 4\sqrt{k\eps} + 9k\eps.
    \]
\end{theorem}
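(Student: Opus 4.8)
Write $\bar W^{*} := W^{*}/\Tr(W^{*})$ for the normalized output and $H := \exp(G)$, and recall from \pref{lem:psimpeval} that the exact answer is $p(G) = \frac{1}{k\nu^{*}}\sum_{i}\min(\nu^{*},\lambda_{i})u_{i}u_{i}^{\top}$, where $H=\sum_{i}\lambda_{i}u_{i}u_{i}^{\top}$, $\nu^{*}$ is the unique threshold with $\nu^{*}/\sum_{i}\min(\nu^{*},\lambda_{i})=1/k$, and at most $k$ of the $\lambda_{i}$ exceed $\nu^{*}$. Observing that $f_{G}(W):=\inp{G}{W}+\vNE(W)=-\QRE(W,H)$, the optimization \pref{eq:psimp} is exactly the minimization of $\QRE(\cdot,H)$ over the convex feasible set, with minimizer $p(G)$. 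The plan is: (i) show $\bar W^{*}$ is feasible and $\norm{W^{*}-\bar W^{*}}_{*}=O(k\eps)$; (ii) show $\bar W^{*}$ is $O(k\eps)$-suboptimal for $\QRE(\cdot,H)$; (iii) convert near-optimality into trace-norm closeness via strong convexity of $\QRE$.

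\textbf{Step (i): feasibility and the normalization cost.} Conditioning on the high-probability events that the internal $\mathrm{PCA}_{k}$ call meets the guarantee of \pref{thm:main-svd-guarantee} (run with accuracy $\eps$) and that $\Tr_{\eps}(\cdot)$ returns a $(1\pm\eps)$-approximation, a direct computation using the root-finding equation $k\wt\tau=(1-\eps)T+\sum_{i}\min(\sigma_{i},\wt\tau)$ gives $\Tr(W^{*})=(1-4k\eps)\bigl(1+\tfrac{\Tr(H)-(1-\eps)T}{k\wt\tau}\bigr)\in[1-4k\eps,\,1]$. Moreover every eigenvalue of $W^{*}$ along a $v_{i}$ is at most $\tfrac{1-4k\eps}{k}$ (since $\min(\sigma_{i},\wt\tau)\le\wt\tau$), and every tail eigenvalue is at most $\tfrac{1-4k\eps}{k\wt\tau}\cdot(1-2\eps)\cdot\tfrac{\sigma_{k}}{1-2\eps}\le\tfrac{1-4k\eps}{k}$ using $\wt\tau\ge\sigma_{k}\ge(1-2\eps)\norm{\mc P_{V_{k}}^{\perp}H\mc P_{V_{k}}^{\perp}}$ (\pref{rem:eig-lower-bound}); hence $\norm{W^{*}}\le(1-4k\eps)/k$, so $\bar W^{*}$ is feasible for \pref{eq:psimp} and $\norm{W^{*}-\bar W^{*}}_{*}=\lvert 1-\Tr(W^{*})\rvert\le 4k\eps$. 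The explicit shrink factors $(1-2\eps),(1-4k\eps)$ and the $(1-\eps)$ inside the root-finding are calibrated precisely so that feasibility survives all the approximation errors.

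\textbf{Step (ii): the suboptimality bound (the heart).} I would interpose two perturbed base matrices, $\wt A:=\sum_{i=1}^{k}\sigma_{i}v_{i}v_{i}^{\top}+\mc P_{V_{k}}^{\perp}H\mc P_{V_{k}}^{\perp}$ and $\wt A'':=\sum_{i=1}^{k}\sigma_{i}v_{i}v_{i}^{\top}+(1-2\eps)\mc P_{V_{k}}^{\perp}H\mc P_{V_{k}}^{\perp}$, both positive definite. From \pref{thm:main-svd-guarantee} we get $(1-\eps)^{k}\wt A\preceq H\preceq(1+\eps)^{k}\wt A$, and operator monotonicity of $\log$ (\pref{fact:log-op-monotone}) then yields $\norm{G-\log\wt A}=O(k\eps)$; likewise the elementary sandwich $(1-2\eps)\wt A\preceq\wt A''\preceq\wt A$ gives $\norm{\log\wt A-\log\wt A''}=O(\eps)$. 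Since $\QRE(W,P)-\QRE(W,P')=\inp{W}{\log P'-\log P}$, it follows that $\lvert\QRE(W,H)-\QRE(W,\wt A'')\rvert=O(k\eps)$ for every feasible $W$. The point of passing to $\wt A''$ is that $\bar W^{*}$ is \emph{exactly} the normalized capping of $\wt A''$ at threshold $\wt\tau$: the tail eigenvalues of $\wt A''$ are all $\le(1-2\eps)\norm{\mc P_{V_{k}}^{\perp}H\mc P_{V_{k}}^{\perp}}\le\sigma_{k}\le\wt\tau$, so they are untouched by $\min(\cdot,\wt\tau)$, and normalizing a matrix is insensitive to the particular scalar $\tfrac{1-4k\eps}{k\wt\tau}$ by which the algorithm first divides. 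By \pref{lem:psimpeval}, the \emph{optimal} feasible point for $\QRE(\cdot,\wt A'')$ is instead the normalized capping of $\wt A''$ at $\nu^{*}(\wt A'')$, and both $\wt\tau$ and $\nu^{*}(\wt A'')$ are the unique roots in $[\sigma_{k},\infty)$ of $kt=c+\sum_{i}\min(\sigma_{i},t)$ for tail-masses $c$ that are $(1\pm O(\eps))$-multiples of each other (because $(1-\eps)T$ is a $(1\pm O(\eps))$-multiple of $\Tr(H)$); a short argument shows this forces $\wt\tau/\nu^{*}(\wt A'')=1\pm O(\eps)$. A one-dimensional computation of the scalar map $t\mapsto\QRE(\text{normalized capping of }\wt A''\text{ at }t,\ \wt A'')$, which is minimized at $t=\nu^{*}(\wt A'')$ and Lipschitz in $t$ over the relevant interval, then bounds $\QRE(\bar W^{*},\wt A'')-\QRE(p_{\wt A''},\wt A'')=O(\eps)$, where $p_{\wt A''}$ is that minimizer. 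Since $p_{\wt A''}$ is optimal for $\QRE(\cdot,\wt A'')$ we have $\QRE(p_{\wt A''},\wt A'')\le\QRE(p(G),\wt A'')$, and chaining the three estimates through $H\to\wt A''\to H$ gives
\[
  \QRE(\bar W^{*},H)\ \le\ \QRE(p(G),H)+O(k\eps).
\]

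\textbf{Step (iii): conclusion.} Finally, $\QRE(\cdot,H)$ is $1$-strongly convex in the trace norm on $\trosp$, which contains the feasible set (\pref{fact:vndiv-conc}, equivalently \pref{fac:vnconv}), and $p(G)$ minimizes it over the convex feasible set; combining the first-order optimality condition at $p(G)$ with strong convexity gives $\tfrac12\norm{\bar W^{*}-p(G)}_{*}^{2}\le\QRE(\bar W^{*},H)-\QRE(p(G),H)=O(k\eps)$, so $\norm{\bar W^{*}-p(G)}_{*}=O(\sqrt{k\eps})$. Adding $\norm{W^{*}-\bar W^{*}}_{*}\le 4k\eps$ from Step (i) and tracking constants throughout gives $\norm{p(G)-W^{*}}_{*}\le 4\sqrt{k\eps}+9k\eps$.

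\textbf{Main obstacle.} The difficulty is entirely in Step (ii): funneling the five independent sources of $\eps$-error — the approximate top-$k$ subspace, the approximate eigenvalues $\sigma_{i}$, the $(1\pm\eps)$ trace estimate, the inexact root $\wt\tau$, and the explicit $(1-2\eps),(1-4k\eps)$ shrink factors — into one clean $O(k\eps)$ bound on the objective value, while neither $\vNE$ nor $\QRE$ is Lipschitz or smooth near the boundary of the feasible set. The operator-monotonicity-of-$\log$ trick, which upgrades a multiplicative PSD sandwich on the \emph{base} matrix into additive operator-norm control on $G$, together with the $\QRE$ reformulation, are exactly what tame this; matching the precise constants $4$ and $9$ is then routine bookkeeping.
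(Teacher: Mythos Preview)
Your overall framework—recasting the problem as minimizing $\QRE(\cdot,H)$, using operator monotonicity of $\log$ to pass between $H$ and its PCA approximation, and invoking strong convexity of $\QRE$—matches the paper's. Steps (i) and (iii) are fine. But Step (ii) has a genuine gap.

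The problem is the claim that $\QRE(\bar W^{*},\wt A'')-\QRE(p_{\wt A''},\wt A'')=O(\eps)$ via a ``one-dimensional Lipschitz computation''. The map $t\mapsto\QRE(W_t,\wt A'')$ (where $W_t$ is the normalized capping at $t$) is \emph{not} Lipschitz with a constant independent of the eigenvalue spread of $\wt A''$. With eigenvalues $\lambda_1\ge\cdots$ and $S=\{i:\lambda_i>t\}$, a direct calculation gives
\[
  \frac{d}{dt}\,\QRE(W_t,\wt A'')\;=\;-\,\frac{Z_t-|S|\,t}{Z_t^{2}}\sum_{i\in S}\log\frac{\lambda_i}{t},
\]
so the derivative near $t=\nu^*$ has magnitude of order $\tfrac{1}{k\nu^*}\sum_{i\in S}\log(\lambda_i/\nu^*)$. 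Since $|\wt\tau-\nu^*|$ is of order $k\eps\cdot\nu^*$ (not $O(\eps)$ as you wrote—this is a minor slip, cf.\ the proof of \pref{lem:closeness-trace-norm}), the suboptimality is of order $\eps\sum_{i\in S}\log(\lambda_i/\nu^*)$, which scales with $\|G\|$ and is unbounded. Concretely, take $k=2$, $m=3$, eigenvalues of $H=\exp(G)$ equal to $(e^R,1,1)$: then $\nu^*=2$, $\wt\tau\approx 2-O(\eps)$, and one checks that the first eigenvalue of $\bar W^{*}$ is $\tfrac12-\Theta(\eps)$, whence $\QRE(\bar W^{*},H)-\QRE(p(G),H)\asymp\eps R$. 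Strong convexity then only yields $\|\bar W^{*}-p(G)\|_*\lesssim\sqrt{\eps R}$, not $O(\sqrt{k\eps})$.

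The paper sidesteps this by splitting the error differently. It introduces the same intermediate matrix $A_1=\wt A''$, but applies $\QRE$+strong convexity only to compare the two \emph{exact} minimizers $p(G)$ and $p(\log A_1)$: because $p(\log A_1)$ is genuinely optimal for $\QRE(\cdot,A_1)$, the chain
\[
\QRE(p(\log A_1),H)\le\QRE(p(\log A_1),A_1)+O(k\eps)\le\QRE(p(G),A_1)+O(k\eps)\le\QRE(p(G),H)+O(k\eps)
\]
closes with no residual one-dimensional term, yielding the $4\sqrt{k\eps}$. The remaining piece $\|p(\log A_1)-W^{*}\|_*$ is handled by a direct trace-norm computation on the truncation operator (\pref{lem:closeness-trace-norm}), which gives $O(k\eps)$ regardless of the eigenvalue spread—this is where the $9k\eps$ comes from. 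Your argument can be repaired by replacing the one-dimensional $\QRE$ estimate with this direct trace-norm bound; the rest of your machinery then goes through.
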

The full statement of the above, which states some more technical properties of $W^*$ can be found in \pref{thm:main-thm-simp-proj}.

\subsection{Closeness in trace norm I: projections of spectrally similar matrices}    \label{sec:tn-closeness-1}
In this section, let $A$ and $\wt{A}$ be positive semidefinite matrices such that
\[
    (1-\eps)\wt{A}\psdle A\psdle (1+\eps)\wt{A}.
\]
for some $0<\eps<1/2$.\footnote{$\wt{A}$ will be a matrix obtained via the power iteration based PCA algorithm}
\begin{lemma}   \label{lem:logs-close}
    $\|\log \wt{A} - \log A\| \le 4\eps$.
\end{lemma}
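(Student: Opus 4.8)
The plan is to reduce the operator-norm bound on \(\log\widetilde A-\log A\) to two one-sided PSD inequalities and then invoke operator monotonicity of \(\log\). First I would note that from the hypothesis \((1-\epsilon)\widetilde A\psdle A\psdle(1+\epsilon)\widetilde A\), Fact~\ref{fact:log-op-monotone} (operator monotonicity of \(\log\)) applied to each inequality gives
\[
\log\big((1-\epsilon)\widetilde A\big)\psdle \log A\psdle \log\big((1+\epsilon)\widetilde A\big).
\]
Then I would use the fact that for a scalar \(c>0\), \(\log(c\widetilde A)=\log\widetilde A+(\log c)\,\Id\) (this is immediate from the spectral definition, since \(c\widetilde A\) and \(\widetilde A\) share eigenvectors and each eigenvalue is scaled by \(c\)). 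Substituting, this yields
\[
\log\widetilde A + \log(1-\epsilon)\,\Id \psdle \log A \psdle \log\widetilde A + \log(1+\epsilon)\,\Id,
\]
equivalently \(\log(1-\epsilon)\,\Id\psdle \log A-\log\widetilde A\psdle \log(1+\epsilon)\,\Id\).

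From here the operator-norm bound follows: a symmetric matrix \(S\) with \(a\,\Id\psdle S\psdle b\,\Id\) has \(\|S\|\le\max(|a|,|b|)\). Here \(a=\log(1-\epsilon)\) and \(b=\log(1+\epsilon)\). Since \(0<\epsilon<1/2\), we have \(|\log(1+\epsilon)|=\log(1+\epsilon)\le\epsilon\le 2\epsilon\) and \(|\log(1-\epsilon)|=-\log(1-\epsilon)=\log\frac{1}{1-\epsilon}\le 2\epsilon\) (using that \(-\log(1-x)\le 2x\) for \(x\in(0,1/2]\), which follows from \(-\log(1-x)=\sum_{j\ge1}x^j/j\le x\sum_{j\ge0}x^j = \frac{x}{1-x}\le 2x\)). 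Therefore \(\|\log\widetilde A-\log A\|\le 2\epsilon\), which in particular gives the claimed bound \(\le 4\epsilon\) with room to spare.

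One subtlety to address is that \(\log\) is only defined on positive definite matrices, while the hypotheses only give positive semidefiniteness; if \(A\) or \(\widetilde A\) is singular the relevant logarithms are \(-\infty\) on the shared kernel. I would handle this exactly as elsewhere in the paper: restrict attention to the common range of \(A\) and \(\widetilde A\) (the two-sided spectral sandwich forces \(\ker A=\ker\widetilde A\)), on which both matrices are positive definite, and interpret all identities and inequalities on that subspace. The only genuinely load-bearing external input is Fact~\ref{fact:log-op-monotone}; everything else is the elementary scalar estimate for \(\log(1\pm\epsilon)\) and the translation between PSD sandwiching and operator-norm bounds, so there is no real obstacle — the lemma is essentially a direct corollary of operator monotonicity of the logarithm.
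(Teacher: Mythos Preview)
Your proposal is correct and follows essentially the same approach as the paper: apply operator monotonicity of $\log$ (Fact~\ref{fact:log-op-monotone}) to the spectral sandwich, use $\log(c\widetilde A)=\log\widetilde A+(\log c)\Id$, and then bound $|\log(1\pm\eps)|$ for $\eps<1/2$. The only minor difference is that the paper bounds $\|\log\widetilde A-\log A\|$ by the sum $|\log(1+\eps)|+|\log(1-\eps)|\le 4\eps$, whereas you use the sharper $\max(|\log(1+\eps)|,|\log(1-\eps)|)\le 2\eps$; both are valid and the argument is otherwise identical.
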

\begin{proof}
From \pref{fact:log-op-monotone}, we can conclude that
\[
    \log\wt{A} + \log(1-\eps)\cdot\Id \psdle \log A \psdle \log\wt{A} + \log(1+\eps)\cdot\Id
\]
which means
\[
    \|\log\wt{A}-\log A\| \le |\log(1+\eps)| + |\log(1-\eps)| \le 4\eps.
\]
where the last inequality follows from $\eps < 1/2$.
\end{proof}
\noindent For the rest of this section, let $\displaystyle M\coloneqq\arg\min_{\substack{X\psdge 0\\ \Tr(X)=1\\ \|X\|\le1/k}}\QRL(X,A)$ and let $\displaystyle \wt{M}\coloneqq\arg\min_{\substack{X\psdge 0\\ \Tr(X)=1\\ \|X\|\le1/k}}\QRE(X,\wt{A})$.  In the language of \pref{rem:proj-def-p}, $M = p(\log A)$.
\begin{lemma}   \label{lem:QRE-close}
    $\QRE(\wt{M},A) \le \QRE(M,A)+8\eps$.
\end{lemma}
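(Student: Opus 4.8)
\textbf{Proof proposal for \pref{lem:QRE-close}.}

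The plan is to reduce everything to the spectral estimate $\|\log A - \log \wt{A}\| \le 4\eps$ from \pref{lem:logs-close}, together with the optimality of $\wt{M}$ for $\QRE(\cdot,\wt{A})$. The key elementary identity is that changing the second argument of the quantum relative entropy only changes it by a \emph{linear} functional: for any $X$,
\[
    \QRE(X,A) = \langle X,\log X\rangle - \langle X,\log A\rangle = \QRE(X,\wt{A}) + \langle X,\log\wt{A}-\log A\rangle .
\]
First I would apply this with $X=\wt{M}$ to write $\QRE(\wt{M},A) = \QRE(\wt{M},\wt{A}) + \langle \wt{M},\log\wt{A}-\log A\rangle$. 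Since $M$ lies in the same feasible set $\{X\psdge 0 : \Tr(X)=1,\ \|X\|\le 1/k\}$ over which $\wt{M}$ minimizes $\QRE(\cdot,\wt{A})$, we have $\QRE(\wt{M},\wt{A}) \le \QRE(M,\wt{A})$. Substituting and then applying the identity in reverse for $X=M$ gives
\[
    \QRE(\wt{M},A) \le \QRE(M,\wt{A}) + \langle \wt{M},\log\wt{A}-\log A\rangle = \QRE(M,A) + \langle M,\log A-\log\wt{A}\rangle + \langle \wt{M},\log\wt{A}-\log A\rangle .
\]

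Finally I would bound the two correction terms. Both $M$ and $\wt{M}$ are positive semidefinite with $\Tr(M)=\Tr(\wt{M})=1$, so their trace norms equal $1$; hence by the matrix H\"older inequality each inner product has absolute value at most $\|\log A - \log \wt{A}\| \le 4\eps$, where the last step is \pref{lem:logs-close}. Adding the two contributions yields $\QRE(\wt{M},A) \le \QRE(M,A) + 8\eps$, as claimed.

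There is essentially no serious obstacle here; the only points requiring a bit of care are verifying that $M$ is feasible for the program defining $\wt{M}$ (so that the optimality comparison $\QRE(\wt{M},\wt{A})\le\QRE(M,\wt{A})$ is legitimate — this holds since the two programs have identical constraint sets), and that $\log A$ and $\log \wt{A}$ are well defined, which is fine in the regime where the surrounding projection algorithm invokes this lemma (there $A$, and hence $\wt{A}$, is positive definite).
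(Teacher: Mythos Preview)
Your proof is correct and follows essentially the same approach as the paper: the identity $\QRE(X,A)=\QRE(X,\wt{A})+\langle X,\log\wt{A}-\log A\rangle$, the optimality inequality $\QRE(\wt{M},\wt{A})\le\QRE(M,\wt{A})$, and two applications of \pref{lem:logs-close} via matrix H\"older. The only cosmetic difference is that the paper bounds each $4\eps$ correction term as it arises rather than carrying both to the end.
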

\begin{proof}
    We prove our claim with the following chain of inequalities:
    \begin{align*}
        \QRE(\wt{M},A) &= \QRE(\wt{M},\wt{A}) + \langle \wt{M}, \log \wt{A} - \log A\rangle \\
        &\le \QRE(\wt{M},\wt{A})+4\eps &\text{(from \pref{lem:logs-close})} \\
        &\le \QRE(M,\wt{A})+4\eps \\
        &= \QRE(M,A)+\langle M, \log A-\log \wt{A}\rangle+4\eps \\
        &\le \QRE(M,A)+8\eps &\text{(from \pref{lem:logs-close}).}
    \end{align*}
\end{proof}
\noindent Finally we prove:
\begin{lemma}   \label{lem:closeness-trace}
    $\|M-\wt{M}\|_* \le 4\sqrt{\eps}$.
\end{lemma}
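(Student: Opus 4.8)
The goal is to bound $\|M - \wt M\|_*$ where $M = \arg\min_X \QRE(X,A)$ and $\wt M = \arg\min_X \QRE(X,\wt A)$, both minimizations over the set $\calS' = \{X \psdge 0 : \Tr(X) = 1, \|X\| \le 1/k\}$. The key structural fact is \pref{fact:vndiv-conc}: for any fixed PSD matrix $P$, the function $X \mapsto \QRE(X,P)$ is $1$-strongly convex with respect to the trace norm on $\trosp$ (hence on the smaller convex set $\calS'$). The plan is to combine strong convexity with the near-optimality statement of \pref{lem:QRE-close}.

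Here is the chain of reasoning I would carry out. Since $M$ is the exact minimizer of $f(X) := \QRE(X,A)$ over the convex set $\calS'$, and $f$ is $1$-strongly convex in the trace norm, the first-order optimality condition gives, for \emph{any} $Y \in \calS'$,
\[
    f(Y) \ge f(M) + \frac{1}{2}\|Y - M\|_*^2,
\]
because the gradient term $\grad f(M)^\top(Y-M) \ge 0$ by optimality of $M$ over the convex set. Now apply this with $Y = \wt M \in \calS'$:
\[
    \QRE(\wt M, A) \ge \QRE(M,A) + \frac{1}{2}\|\wt M - M\|_*^2.
\]
Combining with \pref{lem:QRE-close}, which states $\QRE(\wt M, A) \le \QRE(M,A) + 8\eps$, we get
\[
    \QRE(M,A) + \frac{1}{2}\|\wt M - M\|_*^2 \le \QRE(M,A) + 8\eps,
\]
so $\|\wt M - M\|_*^2 \le 16\eps$, i.e. $\|M - \wt M\|_* \le 4\sqrt{\eps}$, as desired.

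**Main obstacle.** The only delicate point is justifying the first-order inequality $f(Y) \ge f(M) + \frac12\|Y - M\|_*^2$ at the minimizer over a convex set, since the trace norm is nonsmooth and $\QRE(\cdot, A)$ has a singularity as eigenvalues approach $0$; one must confirm that the strong-convexity definition quoted (which is stated via a gradient inequality) applies, or else argue via the midpoint/subgradient form of strong convexity together with the variational inequality $\langle \grad f(M), Y - M\rangle \ge 0$ valid for constrained minimizers on a convex set. This is routine but should be spelled out: either invoke that $f$ restricted to the relative interior is differentiable and $M$ has full support (which follows from the explicit form of the minimizer in \pref{lem:psimpeval}, since $\nu^* > 0$ forces all eigenvalues $\min(\lambda_i,\nu^*)/Z_2 > 0$ when $A = \exp(G)$ is full rank — and one can handle the rank-deficient case by a limiting argument), or use the standard fact that $\alpha$-strong convexity of $f$ plus convexity of the feasible region implies $f(Y) - f(M) \ge \frac{\alpha}{2}\|Y-M\|^2$ at the constrained minimizer directly. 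I would present the argument in the second form to avoid any smoothness bookkeeping.
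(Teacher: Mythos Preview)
Your proposal is correct and follows essentially the same argument as the paper: apply $1$-strong convexity of $f(X)=\QRE(X,A)$ together with the first-order optimality condition $\langle\grad f(M),\wt{M}-M\rangle\ge 0$ to get $f(\wt{M})-f(M)\ge\tfrac{1}{2}\|\wt{M}-M\|_*^2$, then invoke \pref{lem:QRE-close} and take a square root. Your discussion of the smoothness caveat is more careful than the paper's, which simply uses the gradient inequality without comment.
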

\begin{proof}
    Define $f(X)\coloneqq \QRE(X,A)$.  From \pref{fact:vndiv-conc} $f$ is $1$-strongly convex and
    \[
        f(\wt{M}) \ge f(M) + \langle\grad f(M),\wt{M}-M\rangle+\frac{1}{2}\|\wt{M}-M\|_*^2.
    \]
    Since $M$ is the minimizer of $f$ in $\{X\in\psd[m]:\Tr(X)=1,\|X\|\le1/k\}$ and $f$ is $1$-strongly convex, $\langle\grad f(M),\wt{M}-M\rangle\ge 0$ and this implies
    \[
        f(\wt{M})-f(M)\ge\frac{1}{2}\|\wt{M}-M\|_*^2.
    \]
    From \pref{lem:QRE-close}, $f(\wt{M})-f(M) \le 8\eps$ and hence
    \[
        \frac{1}{2}\|\wt{M}-M\|_*^2 \le 16\eps
    \]
    and consequently
    \[
        \|M-\wt{M}\|_* \le 4\sqrt{\eps}.
    \]
\end{proof}

\subsection{Closeness in trace norm II: robustness to trace}    \label{sec:tn-closeness-2}
In this subsection, let $M$ be a $m\times m$ positive definite matrix with eigenvalues $\lambda_1\ge\lambda_2\ge\cdots\ge\lambda_{m} > 0$ and corresponding eigenvectors $v_1\dots,v_{m}$.  Let $k$ be an integer less than $m$, and let $T$ denote $\sum_{i=k+1}^m\lambda_i$.  We wish to show that all pairs in a certain set of matrices are close in trace norm.  Before we describe these matrices, we will need the following technical statement.
\begin{proposition} \label{prop:threshold-lipschitz}
    Let $f_1(t) = kt$, let $f_2(t) = \sum_{i=1}^m \min\{t,\lambda_i\}$.  $f_1(t)=f_2(t)+\Delta$ has a unique solution $\tau_{\Delta}$ on $[\lambda_k,\infty)$ for any $\Delta\in[-\eps T,\eps T]$.  Further, $|\tau_0-\tau_{\Delta}|\le|\Delta|$.
\end{proposition}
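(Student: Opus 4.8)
The plan is to first establish existence and uniqueness of $\tau_\Delta$ by studying the function $h_\Delta(t) \coloneqq f_1(t) - f_2(t) - \Delta = kt - \sum_{i=1}^m \min\{t,\lambda_i\} - \Delta$ on the interval $[\lambda_k,\infty)$. The key observations are: (i) $h_\Delta$ is continuous and piecewise linear; (ii) on $[\lambda_k,\infty)$ every term $\min\{t,\lambda_i\}$ with $i \ge k$ has already saturated to $\lambda_i$ (since $\lambda_i \le \lambda_k \le t$), while for $i < k$ we have $\min\{t,\lambda_i\}$ which is either $t$ or $\lambda_i$; (iii) consequently the slope of $f_1(t) - f_2(t)$ on this interval is $k - \#\{i < k : \lambda_i > t\} \ge k - (k-1) = 1 > 0$, so $h_\Delta$ is strictly increasing on $[\lambda_k,\infty)$ with slope at least $1$ everywhere. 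Strict monotonicity gives uniqueness immediately. For existence I would check the endpoints: at $t = \lambda_k$ we have $f_2(\lambda_k) = \sum_{i=1}^m \min\{\lambda_k,\lambda_i\} = \sum_{i \le k}\lambda_k \cdot 0 + \ldots$ — more carefully, $f_1(\lambda_k) - f_2(\lambda_k) = k\lambda_k - \sum_{i=1}^m \min\{\lambda_k,\lambda_i\} = \sum_{i=1}^k (\lambda_k - \min\{\lambda_k,\lambda_i\}) - \sum_{i=k+1}^m \lambda_i = -T + \sum_{i<k}(\lambda_k - \lambda_i) \cdot[\ldots]$; in any case I need that $h_\Delta(\lambda_k) \le 0$ for all $\Delta \ge -\eps T$, which amounts to $f_1(\lambda_k) - f_2(\lambda_k) \le -\eps T$, and since $f_1(\lambda_k)-f_2(\lambda_k) = -T + (\text{nonneg terms from }i<k)$... here I should double-check the exact sign; the cleanest route is to note $f_1(\lambda_k) - f_2(\lambda_k) \le k\lambda_k - \sum_{i=1}^m \min\{\lambda_k,\lambda_i\} \le k\lambda_k - \sum_{i=1}^k \lambda_i - \sum_{i=k+1}^m\lambda_i \le -T$ (using $\lambda_i \ge \lambda_k$ for $i \le k$... wait, that gives $\sum_{i\le k}\lambda_i \ge k\lambda_k$), hence $f_1(\lambda_k) - f_2(\lambda_k) \le -T \le -\eps T$, so $h_\Delta(\lambda_k) \le -\eps T - \Delta \le 0$. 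As $t \to \infty$, $h_\Delta(t) \to \infty$ since the slope is $\ge 1$. By the intermediate value theorem a solution exists on $[\lambda_k,\infty)$.

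For the Lipschitz bound $|\tau_0 - \tau_\Delta| \le |\Delta|$, I would use the slope bound directly. We have $h_0(\tau_0) = 0$ and $h_0(\tau_\Delta) = f_1(\tau_\Delta) - f_2(\tau_\Delta) = \Delta$ (since $\tau_\Delta$ solves $f_1 = f_2 + \Delta$). Since $h_0$ is strictly increasing with slope at least $1$ on $[\lambda_k,\infty)$, for any $s,t$ in that interval we have $|h_0(s) - h_0(t)| \ge |s - t|$. Applying this with $s = \tau_0$, $t = \tau_\Delta$ gives $|\tau_0 - \tau_\Delta| \le |h_0(\tau_0) - h_0(\tau_\Delta)| = |0 - \Delta| = |\Delta|$, as desired.

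The main obstacle I anticipate is not any deep step but getting the endpoint inequality at $t = \lambda_k$ exactly right — in particular confirming that $f_1(\lambda_k) - f_2(\lambda_k) \le -\eps T$ so that $\tau_\Delta$ genuinely lands in $[\lambda_k,\infty)$ for the full range $\Delta \in [-\eps T, \eps T]$, and handling degenerate cases (e.g. ties among the $\lambda_i$ near index $k$, or $\lambda_k = \lambda_{k+1}$, which affect where exactly the slope changes). These are routine case checks but need care. The slope-at-least-$1$ claim is the real engine of both parts and should be stated as a small internal lemma before both the existence/uniqueness argument and the Lipschitz argument.
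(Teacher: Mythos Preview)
Your proposal is correct and follows essentially the same approach as the paper: both arguments hinge on the observation that on $[\lambda_k,\infty)$ the function $f_1-f_2$ has slope at least $1$ (the paper phrases this via $f_2=\min_{0\le i\le k-1}g_i$ with $g_i(t)=it+\sum_{j>i}\lambda_j$, you via counting unsaturated terms), which immediately yields both uniqueness and the Lipschitz bound $|\tau_0-\tau_\Delta|\le|\Delta|$. Your endpoint computation, once cleaned up, is exactly $f_1(\lambda_k)-f_2(\lambda_k)=-T$, matching the paper's claim $f_2(\lambda_k)-\Delta>f_1(\lambda_k)$.
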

\begin{proof}
    Define functions $\{g_i\}_{i=0}^{k-1}$ defined on $[\lambda_k,\infty)$ where $g_i(t)=it+\sum_{j=i+1}^m\lambda_j$.  Observe that $f_2(t)$ is equal to $\min_{i\in[0,k-1]}g_i(t)$ on $[\lambda_k,\infty)$.  Thus its right-hand side derivatives must be bounded by $k-1$.  Since (i) $f_2(\lambda_k)-\Delta >f_1(\lambda_k)$, (ii) the right-hand derivative of $f_1$ is $k$ everywhere, and (iii) the right-hand derivative of $f_2$ at any point in $[\lambda_k,\infty)$ is at most $k-1$, there must be a unique $\tau_{\Delta}$ such that $f_1(\tau_{\Delta})=f_2(\tau_{\Delta})$.  The right-hand derivative of $f_1-f_2$ is at least $1$ on $[\lambda_k,\infty)$ and thus $\tau_{\Delta}$ must be contained in $[\tau_0-|\Delta|,\tau_0+|\Delta|]$ and thus $|\tau_0-\tau_{\Delta}|<|\Delta|$.
\end{proof}
\begin{definition}  \label{def:noisy-trunc}
We now define the \emph{noisy truncation operator}:
\[
    \Xi(M,C) \coloneqq \frac{1}{k\tau_C}\sum_{i=1}^m\min\{\tau_C,\lambda_i\}v_iv_i^{\top}
\]
where $C$ must be in range $[-\eps T, \eps T]$ and $\tau_C$ is as defined in the statement of \pref{prop:threshold-lipschitz}.
\end{definition}

\begin{lemma}   \label{lem:closeness-trace-norm}
    For every $C\in[-\eps T, \eps T]$, $\displaystyle\|\Xi(M,0)-\Xi(M,C)\|_* \le 2k\eps$.
\end{lemma}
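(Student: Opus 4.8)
The plan is to exploit the fact that $\Xi(M,0)$ and $\Xi(M,C)$ are simultaneously diagonalized in the eigenbasis $v_1,\dots,v_m$ of $M$, so that the trace norm of their difference collapses to a scalar sum:
\[
    \|\Xi(M,0)-\Xi(M,C)\|_* = \sum_{i=1}^m\left|\frac{\min\{\tau_0,\lambda_i\}}{k\tau_0}-\frac{\min\{\tau_C,\lambda_i\}}{k\tau_C}\right|,
\]
where $\tau_0$ is $\tau_\Delta$ for $\Delta=0$ and $\tau_C$ is as in \cref{prop:threshold-lipschitz}. The two structural ingredients are: (i) the Lipschitz estimate $|\tau_0-\tau_C|\le|C|\le\eps T$ from \cref{prop:threshold-lipschitz}; and (ii) lower bounds $\tau_0\ge T/k$ and $\tau_C\ge(1-\eps)T/k$ on the thresholds. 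The second I would derive from the defining identities $k\tau_0=\sum_i\min\{\tau_0,\lambda_i\}$ and $k\tau_C=\sum_i\min\{\tau_C,\lambda_i\}+C$: at most $k-1$ eigenvalues can exceed $\tau_0$ (otherwise the right-hand side would exceed $k\tau_0$, using $M\succ0$ and $k<m$), so $\lambda_k<\tau_0$ and hence $k\tau_0\ge\sum_{i\ge k}\lambda_i\ge T$; the analogous computation for $\tau_C$, keeping the $+C$ term with $|C|\le\eps T$, gives $k\tau_C\ge T-|C|\ge(1-\eps)T$. Note also that, since $f_1-f_2$ is increasing, $C\ge0\iff\tau_C\ge\tau_0$, which I will use to split into sign regimes.

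With these in hand I would partition the indices into three groups according to where $\lambda_i$ sits relative to the two thresholds. If $\lambda_i\ge\tau_0\vee\tau_C$, both truncations return the cap, so the $i$-th term is $\bigl|\tfrac{\tau_0}{k\tau_0}-\tfrac{\tau_C}{k\tau_C}\bigr|=0$. If $\lambda_i\le\tau_0\wedge\tau_C$, both truncations return $\lambda_i$, so the $i$-th term is $\tfrac{\lambda_i}{k}\bigl|\tfrac1{\tau_0}-\tfrac1{\tau_C}\bigr|$; summing over this group and using $\sum_i\min\{\tau_0\wedge\tau_C,\lambda_i\}\le k(\tau_0\vee\tau_C)$ together with $|\tau_0-\tau_C|\le\eps T$ and the lower bound on $\tau_0\vee\tau_C$ bounds this contribution by $O(k\eps)$. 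Finally, the at most $k-1$ indices with $\tau_0\wedge\tau_C<\lambda_i<\tau_0\vee\tau_C$ straddle $\lambda_i$; a direct computation shows each such term is at most $\tfrac{|\tau_0-\tau_C|}{k\,(\tau_0\wedge\tau_C)}$, so this group also contributes $O(k\eps)$. Adding the three contributions and tracking the $(1-\eps)$ factors carefully yields the claimed bound $\le 2k\eps$.

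I expect the only real obstacle to be bookkeeping: getting the absolute constant to come out as exactly $2k$ rather than something larger. In the regime $C\ge0$ (so $\tau_C\ge\tau_0$), the straddling group is $\{\,i:\tau_0<\lambda_i<\tau_C\,\}$ and one gets $\le2k\eps$ essentially immediately using only $\tau_0\ge T/k$; in the regime $C\le0$ one needs the additional lower bound $\tau_C\ge(1-\eps)T/k$ and must also dispose of a lower-order $\eps^2k^2$-type term (which is negligible for the $\eps$ used by the solver). Everything else — the reduction to a scalar $\ell_1$ sum, the Lipschitz bound on $\tau$, and the threshold lower bounds — is elementary.
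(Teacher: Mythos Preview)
Your proposal is correct and rests on the same ingredients as the paper: the reduction to a scalar $\ell_1$ sum in the eigenbasis, the Lipschitz bound $|\tau_0-\tau_C|\le\eps T$ from \cref{prop:threshold-lipschitz}, and the lower bound $k\tau_0\ge T$ (which the paper phrases as $f_1(\tau_0)=f_2(\tau_0)\ge T$). The execution differs: rather than your three-way partition on the position of $\lambda_i$ relative to $\{\tau_0,\tau_C\}$, the paper writes $\tau_C=\gamma\tau_0$ with $|\gamma-1|\le k\eps$, rescales so that $\dfrac{\min\{\tau_C,\lambda_i\}}{k\tau_C}=\dfrac{1}{k\tau_0}\min\!\bigl\{\tau_0,\lambda_i/\gamma\bigr\}$, and then applies the uniform pointwise bound $\bigl|\min\{\tau_0,\lambda_i\}-\min\{\tau_0,\lambda_i/\gamma\}\bigr|\le|1-1/\gamma|\,\min\{\tau_0,\lambda_i\}$. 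Summing against $\sum_i\min\{\tau_0,\lambda_i\}=k\tau_0$ gives $|1-1/\gamma|\le 2k\eps$ in one line. This handles both signs of $C$ simultaneously and avoids your separate lower bound on $\tau_C$ and the trailing $O(k\eps^2)$ cleanup in the $C\le 0$ case; that said, the paper's final inequality $|1-1/\gamma|\le 2k\eps$ also tacitly uses $k\eps<1/2$, so the constant-tracking issue you flag is present in both arguments.
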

\begin{proof}
Note that
\[
    \|\Xi(M,0)-\Xi(M,C)\|_* = \sum_{i=1}^m \left|\frac{\min\{\tau_0,\lambda_i\}}{k\tau_0} - \frac{\min\{\tau_C,\lambda_i\}}{k\tau_C}\right|
\]
First observe that $|\tau_0-\tau_C|\le |C| \le \eps T$.  Since $\tau_0\ge\lambda_k$, we have $f_1(\tau_0)=f_2(\tau_0)\ge T$, which implies $|\tau_0-\tau_C|\le \eps f_1(\tau_0) = \eps k\tau_0$.  This means $\tau_C=\gamma\tau_0$ for some $\gamma\in1\pm k\eps$.  As a result
\begin{align*}
    \|\Xi(M,0)-\Xi(M,C)\|_* &= \frac{1}{k\tau_0}\sum_{i=1}^m \left|\min\{\tau_0,\lambda_i\} - \min\left\{\tau_0,\frac{\lambda_i}{\gamma}\right\}\right|\\
    &\le \frac{1}{k\tau_0}\left(1-\frac{1}{\gamma}\right)\sum_{i=1}^m\min\{\tau_0,\lambda_i\}\\
    &= \left(1-\frac{1}{\gamma}\right)\frac{1}{f_1(\tau_0)}\cdot f_2(\tau_0)\\
    &= \left(1-\frac{1}{\gamma}\right)\\
    &\le 2k\eps.
\end{align*}
\end{proof}

\begin{remark}  \label{rem:approx-trace}
    We observe that $\Tr(\Xi(M,C)) = \frac{f_2(\tau_C)}{f_1(\tau_C)}$.  Since $f_1-f_2$ is increasing on $[\lambda_k,\infty)$, it follows that $\tau_C\le\tau_0$ when $C\le 0$, and consequently $f_1(\tau_C)\le f_2(\tau_C)$, which means $\Tr(\Xi(M,C))\ge 1$.
\end{remark}

\begin{remark}  \label{rem:spec-norm}
    $\|\Xi(M,C)\|$ is always at most $\frac{1}{k}$ by construction.
\end{remark}

\begin{remark}  \label{rem:noisy-trunc-is-projection}
    $\Xi(M,0) = p(\log M)$ where $p$ is the function from \pref{rem:proj-def-p}.
\end{remark}

\subsection{Closeness in trace norm III: wrap-up}   \label{sec:tn-closeness-3}
We will use the results of \pref{sec:tn-closeness-1}, \pref{sec:tn-closeness-2} and \pref{sec:pow} to prove guarantees of the output of \pref{alg:projApp}.  Given an input matrix $G$, we perform a sequence of transformations described below to get a matrix $\wt{p}(G)$.  Our goal is to prove that $\wt{p}(G)$ is close to $p(G)$, where $p$ is as defined in \pref{rem:proj-def-p}.
\begin{enumerate}
    \item Let $A_0$ be a matrix such that $(1-\eps)A_0\psdle \exp(G)\psdle (1+\eps)A_0$.
    \item We perform a $k$-PCA on $A_0$ and obtain vectors $v_1,\dots,v_k$ as output along with numbers $\wt{\lambda}_1,\dots,\wt{\lambda}_k$ where $\wt{\lambda}_i = v_i^{\top}A_0v_i$.
    \item Define $H$ as
    \[
        H \coloneqq (1-2\eps)\left(\Id-\sum_{i=1}^kv_iv_i^{\top}\right)A_0\left(\Id-\sum_{i=1}^kv_iv_i^{\top}\right)
    \]
    \item We define $A_1$ as
    \[
        A_1 \coloneqq \sum_{i=1}^k\wt{\lambda_i} v_iv_i^{\top} +  H.
    \]
    \item We run a $(1\pm\eps)$-approximate trace estimation algorithm on $H$ and obtain number $\wt{T}$.
    \item We solve for $t$ in the following equation and call the solution $\wt{\tau}$.
    \begin{align*}
        kt = (1-\eps)\wt{T} + \sum_{i=1}^k \min\{\wt{\lambda}_i,t\} & & t\in[\wt{\lambda}_k,\infty).
    \end{align*}
    \item We define $A_2$ as
    \[
        A_2 \coloneqq \sum_{i=1}^k\min\{\wt{\lambda}_i,\wt{\tau}\}v_iv_i^{\top} + H.
    \]
    \item Finally, we define $\wt{p}(W)$ as
    \[
        \wt{p}(W) \coloneqq \frac{(1-4k\eps)}{k\wt{\tau}}A_2.
    \]
\end{enumerate}
By a combination of \pref{thm:main-svd-guarantee} and the fact that
\[
    \frac{H}{1-2\eps}\psdle A_0,
\]
we know
\[
    (1-2\eps)^{k+2}A_1 \psdle \exp(G) \psdle (1+2\eps)^{k+2}A_1
\]
except with probability at most $O(k\delta)$.  Via \pref{lem:closeness-trace}, a consequence of the above is that for $\eps<\frac{1}{k^2}$:
\[
    \|p(G)-p(\log A_1)\|_* \le 4\sqrt{k\eps} \numberthis \label{eq:bound-from-psd}
\]
Now, we analyze closeness of $p(\log A_1)$ and $\wt{p}(G)$.  Let $T  = \Tr(H)$.  Then $(1-\eps)\wt{T} = T+C$ for some $C$ in the range $[-2\eps T, 0]$.  We now recall the noisy truncation operator $\Xi$ from \pref{def:noisy-trunc}.  By \pref{rem:eig-lower-bound} $\wt{\lambda}_1,\dots,\wt{\lambda}_k$ are the top $k$ eigenvalues of $A_1$ and thus $\wt{p}(G)$ is equal to the matrix $(1-4k\eps)\Xi(A_1,C)$.  First, from \pref{lem:closeness-trace-norm}:
\begin{align}
    \|\Xi(A_1,C)-\Xi(A_1,0)\|_*\le 4k\eps   \label{eq:close-trace-trunc}
\end{align}
Next, by \pref{rem:approx-trace}, $\Tr(\Xi(A_1,0))=1$ and thus by triangle inequality
\begin{align}
    \Tr(\Xi(A_1,C))\le 1+4k\eps \label{eq:small-trace}
\end{align}
Finally, we have
\begin{align*}
    \|p(\log A_1)-\wt{p}(G)\|_* &\le \|\Xi(A_1,C)-\Xi(A_1,0)\|_* + 4k\eps\|\Xi(A_1,C)\|_*\\
    &\le 9k\eps & \text{by \pref{eq:close-trace-trunc} and \pref{eq:small-trace}.}
\end{align*}
Combining the above with \pref{eq:bound-from-psd} via triangle inequality gives us:
\[
    \|p(G)-\wt{p}(G)\|_* \le 4\sqrt{k\eps} + 9k\eps.
\]
Finally, note that by \pref{rem:approx-trace}, $\Tr(\Xi(A_1,C))\ge 1$ and by \pref{rem:spec-norm},
\[
    \|\Xi(A_1,C)\|\le\frac{1}{k}\le\frac{\Tr(\Xi(A_1,C))}{k}.
\]
Multiplying the above inequality by $(1-4k\eps)$ lets us conclude that:
\[
    \|\wt{p}(G)\|\le\frac{\Tr(\wt{p}(G))}{k}
\]
and multiplying \pref{eq:small-trace} with $(1-4k\eps)$ lets us conclude
\[
    \Tr(\wt{p}(G)) \le 1.
\]
Thus, we have the following theorem about \pref{alg:projApp}.
\begin{theorem} \label{thm:main-thm-simp-proj}
    \pref{alg:projApp} takes in $G$ as input, and outputs a matrix $\wt{p}(W)$ such that except with probability $O(k\delta)$ the following three conditions hold:
    \begin{enumerate}
        \item $\|p(G)-\wt{p}(G)\|_* \le 4\sqrt{k\eps} + 9k\eps$.
        \item $\|\wt{p}(G)\|\le\frac{\Tr(\wt{p}(G))}{k}$.
        \item $\Tr(\wt{p}(G)) \le 1.$
    \end{enumerate}
\end{theorem}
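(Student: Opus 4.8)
The plan is to march through the chain of matrices built inside \pref{alg:projApp} --- write $A_0$ for the spectral approximation of $\exp(G)$, $A_1 = \sum_{i=1}^k \wt{\lambda}_i v_iv_i^\top + H$ for the matrix assembled from the $k$-PCA output, $A_2 = \sum_{i=1}^k \min\{\wt{\lambda}_i,\wt{\tau}\}v_iv_i^\top + H$, and $\wt{p}(G) = \frac{1-4k\eps}{k\wt{\tau}}A_2$ for the final output --- and bound the trace-norm error accrued at each transition, then assemble the three conclusions via the triangle inequality together with the structural facts about the noisy truncation operator $\Xi$ of \pref{def:noisy-trunc}.

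First I would pin down that $A_1$ is spectrally close to $\exp(G)$. By construction $(1-\eps)A_0 \psdle \exp(G) \psdle (1+\eps)A_0$, and $A_1$ is --- up to the harmless $(1-2\eps)$ rescaling of the orthogonal block hidden in $H$, which only tightens the one-sided PSD inequality we need --- exactly the matrix $\wt{A}$ output by running \pref{alg:pca} on $A_0$ with $\ell = k$. Hence \pref{thm:main-svd-guarantee}, combined with $H/(1-2\eps) \psdle A_0$, yields $(1-2\eps)^{k+2}A_1 \psdle \exp(G) \psdle (1+2\eps)^{k+2}A_1$ except with probability $O(k\delta)$, which is an $O(k\eps)$-spectral approximation once $\eps < 1/k^2$. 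Feeding this into \pref{lem:closeness-trace} with $A = \exp(G)$, $\wt{A} = A_1$ --- and observing that $p(G) = p(\log \exp(G))$ is exactly the minimizer of $\QRE(X,\exp(G))$ over $\{X \psdge 0 : \Tr X = 1,\ \|X\| \le 1/k\}$ --- gives $\|p(G) - p(\log A_1)\|_* \le 4\sqrt{k\eps}$.

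Next I would account for the approximate trace estimation. With $T = \Tr(H)$, the call $\Tr_\eps(H)$ (guaranteed by \pref{cor:esttr}) returns $\wt{T}$ with $(1-\eps)\wt{T} = T + C$ for some $C \in [-2\eps T, 0]$, where the one-sidedness $C \le 0$ is forced by the $(1-\eps)$ prefactor. Since, by \pref{rem:eig-lower-bound}, the $\wt{\lambda}_i$ are the top $k$ eigenvalues of $A_1$, the equation solved for $\wt{\tau}$ is exactly $f_1(\wt{\tau}) = f_2(\wt{\tau}) + C$ in the notation of \pref{prop:threshold-lipschitz}, so $\wt{p}(G) = (1-4k\eps)\,\Xi(A_1,C)$, while $\Xi(A_1,0) = p(\log A_1)$ by \pref{rem:noisy-trunc-is-projection}. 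Then \pref{lem:closeness-trace-norm} gives $\|\Xi(A_1,C) - \Xi(A_1,0)\|_* \le 4k\eps$, \pref{rem:approx-trace} gives $\Tr(\Xi(A_1,C)) \ge 1$ and, via $\Tr(\Xi(A_1,0)) = 1$ and the triangle inequality, $\Tr(\Xi(A_1,C)) \le 1 + 4k\eps$, and \pref{rem:spec-norm} gives $\|\Xi(A_1,C)\| \le 1/k$. Conclusion (1) then follows from $\|p(\log A_1) - \wt{p}(G)\|_* \le \|\Xi(A_1,C)-\Xi(A_1,0)\|_* + 4k\eps\,\|\Xi(A_1,C)\|_* \le 9k\eps$ together with the previous paragraph and the triangle inequality; conclusions (2) and (3) follow by multiplying $\|\Xi(A_1,C)\| \le 1/k \le \Tr(\Xi(A_1,C))/k$ and $\Tr(\Xi(A_1,C)) \le 1 + 4k\eps$ through by $(1-4k\eps)$.

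I expect the only genuine obstacle to be the first step: converting \emph{approximate} eigenvectors and eigenvalues into a true two-sided PSD approximation of $\exp(G)$ is precisely what the refined power-method analysis \pref{thm:main-svd-guarantee} buys, and the usual Rayleigh-quotient guarantees would not suffice; everything downstream is routine triangle-inequality bookkeeping once that input is in hand. The one place to be careful is the regime $\eps < 1/k^2$, needed both so that the $O(k\eps)$ spectral parameter passed into \pref{lem:closeness-trace} produces $\Theta(\sqrt{k\eps})$ (rather than a constant-order) trace distance and so that the $4\sqrt{k\eps}$ term dominates the additive $9k\eps$.
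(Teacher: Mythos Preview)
Your proposal is correct and follows essentially the same approach as the paper's proof: the same chain $A_0 \to A_1 \to A_2 \to \wt{p}(G)$, the same invocation of \pref{thm:main-svd-guarantee} together with $H/(1-2\eps)\psdle A_0$ to get the spectral sandwich on $A_1$, the same use of \pref{lem:closeness-trace} for the $4\sqrt{k\eps}$ term, and the same identification $\wt{p}(G)=(1-4k\eps)\Xi(A_1,C)$ followed by \pref{lem:closeness-trace-norm} and \pref{rem:approx-trace}, \pref{rem:spec-norm}, \pref{rem:noisy-trunc-is-projection} to handle the remaining $9k\eps$ and conclusions (2)--(3). Your bookkeeping on the sign of $C$ and the regime $\eps<1/k^2$ matches the paper as well.
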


\subsection{Full Approximate Projection}
In this section, we describe a fast algorithm to produce an approximate solution to the optimization problem \pref{eq:pcomp-orig}.  In particular, given $F,G\psdge 0$ let:
\[
    (M^*,W^*) = \argmax_{(M,W)\in\calS}\langle F, M\rangle + \langle G, W\rangle - \langle W, \log W\rangle - \langle M, \log M\rangle.
\]
\begin{remark}  \label{rem:def-q-full-proj}
    We say $q_1(F) = M^*$ and $q_2(G)=W^*$ and we use $(\wt{q}_1(F),\wt{q}_2(W))$ to refer to the output of \pref{alg:projAppComp}.
\end{remark}
Our goal is to bound the trace norm distance between $q_1(F)$ and $\wt{q}_1(F)$, and between $q_2(F)$ and $\wt{q}_2(F)$.

\begin{algorithm}[H]
\SetAlgoLined


\KwIn{Gain Matrices $F, G$}
\KwOut{$(M^*, W^*) \approx_{\eps} \argmax_{(M, W) \in \mathcal{S}} \inp{F}{M} + \inp{G}{W} + \vNE (M) + \vNE (W)$}

$\wh{W},\wt{\tau}, (v_i,\wt{\sigma}_i)_{i=1}^k \gets \text{SimpleApproximateProjection}(G)$\;

$Q \gets \exp (F)$\;

$\wt{Z}_1 = \Tr_{\eps}(Q),\ \wt{Z}_2 = k\wt{\tau}$\;
$\wh{M} = \wt{Z}_1^{-1} Q$\;

$\wt{\gamma} = \log \wt{Z}_1,\ \wt{\zeta} = \log \wt{Z}_2 + k^{-1} \sum_{i = 1}^k (\log \sigma_i - \log \min (\sigma_i, \wt{\tau}))$\;

$(M^*, W^*) = \left(\frac{e^{\wt{\gamma}}}{e^{\wt{\gamma}} + e^{\wt{\zeta}}} \wh{M}, \frac{e^{\wt{\zeta}}}{e^{\wt{\gamma}} + e^{\wt{\zeta}}} \wh{W}\right)$\;

\KwRet{Main output: $(M^*, W^*)$, Ancillary output: $\wt{\gamma},\wt{\zeta},\wt{\tau},(v_i,\sigma_i)_{i=1}^k$}
\caption{FullApproximateProjection}
\label{alg:projAppComp}
\end{algorithm}

We now prove that the output $(\wt{q}_1(F),\wt{q}_2(G))$ of \pref{alg:projAppComp} on input $F$ and $G$ is close to $(q_1(F),q_2(G))$ in trace norm.
\begin{theorem} \label{thm:approx-guarantees-comp-proj}
    We have the following guarantees:
    \begin{enumerate}
    \item $\|\wt{q}_1(F)-q_1(F)\|_* \le O(k\eps)$.
    \item $\|\wt{q}_2(G)-q_2(G)\|_* \le O(\sqrt{k\eps})$.
    \end{enumerate}
\end{theorem}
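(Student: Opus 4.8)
The plan is to read the exact optimizer off \pref{lem:pcomp} and show that \pref{alg:projAppComp} reproduces each of its ingredients up to a controlled error. Writing $\alpha := e^{\gamma}/(e^{\gamma}+e^{\zeta})$, \pref{lem:pcomp} gives $q_1(F) = \alpha\wh M$ and $q_2(G) = (1-\alpha)\wh W$ with $\wh M = \exp(F)/\Tr\exp(F)$, $\wh W = p(G)$, $\gamma = \log\Tr\exp(F)$, and $\zeta = \log Z_2 + k^{-1}\sum_{i=1}^k(\log\lambda_i - \log\min(\lambda_i,\nu^*))$ (so that, at the crossover, $Z_2 = k\nu^*$), while \pref{alg:projAppComp} outputs the same expressions with $\wh M$ replaced by $Q/\wt Z_1$ for $\wt Z_1 = \Tr_\eps(Q)$, with $\wh W$ replaced by the output $\wt p(G)$ of \textsf{SimpleApproximateProjection}, and with $(\gamma,\zeta)$ replaced by $(\wt\gamma,\wt\zeta)$ and $\alpha$ by $\wt\alpha := e^{\wt\gamma}/(e^{\wt\gamma}+e^{\wt\zeta})$. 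I would thus reduce the theorem to the four estimates
\begin{equation*}
\|Q/\wt Z_1 - \wh M\|_* = O(\eps), \qquad \|\wt p(G) - p(G)\|_* = O(\sqrt{k\eps}), \qquad |\wt\gamma - \gamma| = O(\eps), \qquad |\wt\zeta - \zeta| = O(k\eps).
\end{equation*}
Since $s\mapsto e^s/(e^s+1)$ is $\tfrac14$-Lipschitz, the first and last give $|\wt\alpha - \alpha| = O(k\eps)$, and then, using $\|\wh M\|_* = 1$, $\|Q/\wt Z_1\|_* \le 2$, $\|\wt p(G)\|_* = \Tr\wt p(G) \le 1$ (\pref{thm:main-thm-simp-proj}), and $\alpha, 1-\alpha \le 1$, a triangle-inequality split
\begin{align*}
\|\wt q_1(F) - q_1(F)\|_* &\le |\wt\alpha - \alpha|\cdot\|Q/\wt Z_1\|_* + \alpha\cdot\|Q/\wt Z_1 - \wh M\|_*, \\
\|\wt q_2(G) - q_2(G)\|_* &\le |\wt\alpha - \alpha|\cdot\|\wt p(G)\|_* + (1-\alpha)\cdot\|\wt p(G) - p(G)\|_*
\end{align*}
yields the claimed $O(k\eps)$ and $O(\sqrt{k\eps})$ bounds; the asymmetry comes solely from the $\sqrt{k\eps}$ term of \pref{thm:main-thm-simp-proj} entering only the $W$-side.

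The second estimate is exactly \pref{thm:main-thm-simp-proj}. The first and third are immediate from $\wt Z_1 = (1\pm\eps)\Tr(Q)$: $\|Q/\wt Z_1 - Q/\Tr Q\|_* = \Tr(Q)\,|\wt Z_1^{-1} - (\Tr Q)^{-1}| \le \eps/(1-\eps)$ and $|\wt\gamma - \gamma| = |\log(\wt Z_1/\Tr Q)| \le 2\eps$.

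The hard part is the last estimate, $|\wt\zeta - \zeta| = O(k\eps)$, and here I would proceed in two steps via the intermediate matrix $A_1$ of \pref{sec:tn-closeness-3}. First I would record the variational identity — implicit in the proof of \pref{lem:pcomp} — that $\zeta$ is the optimal value of \pref{eq:psimp}, i.e.\ $\zeta(G) = \max_{\Tr W = 1,\ \|W\| \le 1/k} \inp{G}{W} + \vNE(W)$; as a pointwise maximum of functions affine in $G$ whose maximizer has trace $1$, $\zeta(\cdot)$ is then $1$-Lipschitz in operator norm by matrix H\"older. Next, since by \pref{rem:eig-lower-bound} the approximate top eigenvalues $\sigma_i$ are the exact top-$k$ eigenvalues of $A_1 = \sum_{i=1}^k \sigma_i v_iv_i^\top + H$, whose tail trace is $\Tr(H)$, and since $Z_2$ and $\nu^*$ depend on a PSD matrix only through its top-$k$ eigenvalues and tail trace, the exact value $\zeta(\log A_1)$ is given by the very formula defining $\wt\zeta$ but with $\wt\tau$ replaced by the exact threshold $\nu'$ of $A_1$ (the $\tau_0$ of \pref{prop:threshold-lipschitz}). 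The algorithm instead solves for $\wt\tau$ with $(1-\eps)\wt T$ in place of $\Tr(H)$, and $(1-\eps)\wt T = \Tr(H) + C$ with $|C| \le O(\eps)\Tr(H) \le O(\eps k\nu')$; so \pref{prop:threshold-lipschitz} gives $|\wt\tau - \nu'| \le O(\eps k\nu')$, and plugging $\wt\tau = (1 \pm O(k\eps))\nu'$ into the formula for $\wt\zeta$ (checking that $\min(\sigma_i,\wt\tau)$ and $\min(\sigma_i,\nu')$ then differ by at most the same multiplicative factor) gives $|\wt\zeta - \zeta(\log A_1)| = O(k\eps)$. Finally, \pref{thm:main-svd-guarantee} applied to the PCA inside \textsf{SimpleApproximateProjection}, together with $H/(1-2\eps) \preccurlyeq A_0$, gives $(1-2\eps)^{k+2}A_1 \preccurlyeq \exp(G) \preccurlyeq (1+2\eps)^{k+2}A_1$, hence $\|\log A_1 - G\| = O(k\eps)$ by \pref{fact:log-op-monotone} (as in \pref{lem:logs-close}), and so $|\zeta(\log A_1) - \zeta(G)| = O(k\eps)$ by the Lipschitz bound; combining the two steps finishes it.

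The main obstacle is exactly this last estimate: one has to first reinterpret the scalar $\zeta$ as a partition-function-type optimal value so that the spectral closeness of $A_1$ and $\exp(G)$ can be converted into closeness of $\zeta$, and then keep separate track of the two distinct error sources that feed into the threshold $\wt\tau$ — the $O(k\eps)$ multiplicative spectral distortion from the power-method PCA and the $O(\eps)$ multiplicative error of the trace estimator — both of which get absorbed through the stability statement \pref{prop:threshold-lipschitz}. The remaining pieces are routine bookkeeping.
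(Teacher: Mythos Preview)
Your proposal is correct and follows the same overall skeleton as the paper's proof: identify the exact optimizer via \pref{lem:pcomp}, show that each scalar $\wt\gamma,\wt\zeta,\wt\tau$ produced by \pref{alg:projAppComp} is close to its exact counterpart, and then propagate through the softmax split $\alpha = e^{\gamma}/(e^{\gamma}+e^{\zeta})$ together with \pref{thm:main-thm-simp-proj} for the $W$-side.

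The one place where you diverge from the paper is in the estimate $|\wt\zeta-\zeta|=O(k\eps)$. The paper argues this by direct multiplicative error-tracking: the computed eigenvalues $\sigma_i$ are within $(1\pm O(k\eps))$ of the true top eigenvalues of $\exp(G)$, and the trace estimate is within $(1\pm\eps)$, so (via the proof of \pref{lem:closeness-trace-norm}) $\wt\tau$ is within $(1\pm O(k\eps))$ of $\tau^*$, and then $\wt\zeta$ is close to $\zeta$. You instead observe that $\zeta(G)$ equals the optimal value of \pref{eq:psimp}, which as a pointwise maximum of trace-$1$-weighted affine functions is $1$-Lipschitz in $\|\cdot\|$; you then route through the intermediate matrix $A_1$ of \pref{sec:tn-closeness-3}, for which the computed $\sigma_i$ are \emph{exact} top eigenvalues, so that $\wt\zeta$ differs from $\zeta(\log A_1)$ only through the trace-estimation perturbation handled by \pref{prop:threshold-lipschitz}, and $\zeta(\log A_1)$ differs from $\zeta(G)$ by the spectral gap $\|\log A_1 - G\|=O(k\eps)$ coming from \pref{thm:main-svd-guarantee}. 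This buys you a clean \emph{additive} $O(k\eps)$ bound on $|\wt\zeta-\zeta|$ without needing the ambient bound on $\|G\|$ that the paper's multiplicative statement implicitly relies on, and it separates the two error sources (PCA distortion versus trace-estimation noise) more transparently. Both routes are valid; yours is somewhat more self-contained.
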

\begin{proof}
    \pref{alg:projI} computes $q_1$ and $q_2$ exactly.  We note that all trace estimates in \pref{alg:projAppComp} are up to a multiplicative $(1\pm\eps)$ factor.  All eigenvalue computations are also correct up to a multiplicative $(1\pm4k\eps)$.  As a consequence of the approximation guarantees on trace and eigenvalues, and the proof of \pref{lem:closeness-trace-norm}, $\wt{\tau}$ as computed in \pref{alg:projAppComp} is within a multiplicative $1\pm O(k\eps)$ factor of $\tau^*$ from \pref{alg:projI}.  Hence, $\wt{\zeta}$ and $\wt{\gamma}$ from the output of \pref{lem:closeness-trace-norm} must be within a multiplicative $1\pm O(k\eps)$ of $\zeta$ and $\gamma$ from the output of \pref{alg:projAppComp}.
    
    As a consequence, $\|\wt{q}_1(F)-q_1(F)\|_*\le O(k\eps)$.  The inequality $\|\wt{q}_2(G)-q_2(G)\|_*\le O(\sqrt{k\eps})$ follows from the above discussion combined with \pref{thm:main-thm-simp-proj}.
\end{proof}

\subsection{Implementation}
Now we describe the representation of the input and output, and give a runtime guarantee on \pref{alg:projAppComp}.  We are given $\ell\times\ell$ matrix $F$ and $m\times m$ matrix $G$ via the following oracles:
\begin{enumerate}
    \item An oracle that takes in $\ell$-dimensional vectors $v$ and outputs $Fv$ in time $t_F$.  Note that by \pref{lem:expapx} we can also implement an algorithm to compute $A_Fv$  in time $O(t_F\lambda_{\max}(F)\log(2\eps^{-1}))$ where $A_F$ is some matrix satisfying:
    \[
        (1-\eps)\exp(F) \psdle A_F \psdle (1+\eps)\exp(F).
    \]
    \item An oracle that takes in $m$-dimensional vectors $v$ and outputs $Gv$ in time $t_G$.  Note that by \pref{lem:expapx} we can also implement an algorithm to compute $A_Gv$  in time $O(t_G\lambda_{\max}(G)\log(2\eps^{-1}))$ where $A_G$ is some matrix satisfying:
    \[
        (1-\eps)\exp(G) \psdle A_G \psdle (1+\eps)\exp(G).
    \]
\end{enumerate}
\begin{observation} \label{obs:only-ancillary-needed}
    Given the oracle corresponding to input $G$ and the ancillary output of \pref{alg:projApp}, it is possible to implement an oracle that takes in $m$-dimensional vectors $v$ as queries and outputs $\wt{p}(G)v$ in $t_G+O(km)$ time.
\end{observation}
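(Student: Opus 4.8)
The plan is to extract from \pref{alg:projApp} a closed form for $\wt p(G)$ as a scaled sum of a rank-at-most-$k$ matrix and a conjugation of $\exp(G)$ by the projector $\mc{P}_{V_k}^{\perp}$, and then to evaluate the action of $\wt p(G)$ on a query vector $v$ by handling each summand separately, using only the ancillary data $\wt\tau,(v_i,\sigma_i)_{i=1}^k$ together with a single call to the matrix-exponential oracle built from the $G$-oracle via \pref{lem:expapx}.

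Concretely, recall that \pref{alg:projApp} outputs
\begin{gather*}
    \wt p(G) = \frac{1-4k\eps}{k\wt\tau}\lprp{\sum_{i=1}^k \min\{\sigma_i,\wt\tau\}\,v_iv_i^\top + H}, \\
    H = (1-2\eps)\,\mc{P}_{V_k}^{\perp}\exp(G)\,\mc{P}_{V_k}^{\perp},
\end{gather*}
where $V_k=[v_1,\dots,v_k]$ and $\mc{P}_{V_k}^{\perp}=\Id-\sum_{i=1}^k v_iv_i^\top$, and where $\wt\tau$ and the pairs $(v_i,\sigma_i)_{i=1}^k$ — hence also the scalar $(1-4k\eps)/(k\wt\tau)$ and the thresholds $\min\{\sigma_i,\wt\tau\}$ — are precisely what \pref{alg:projApp} returns as ancillary output, so they cost nothing to reuse. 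Given a query $v\in\R^m$, I would first compute the $k$ inner products $a_i \coloneqq \langle v_i,v\rangle$ in $O(km)$ arithmetic; from these, $\mc{P}_{V_k}^{\perp}v = v - \sum_{i=1}^k a_i v_i$ is formed in a further $O(km)$ steps, and the rank-$k$ contribution $\sum_{i=1}^k \min\{\sigma_i,\wt\tau\}\,a_i\,v_i$ in $O(km)$ steps. I would then invoke the exponential oracle once to produce $w \coloneqq \exp(G)\bigl(\mc{P}_{V_k}^{\perp}v\bigr)$ — this single matrix-exponential multiply, applied (through \pref{lem:expapx}) to the already-projected vector, is the $t_G$ term. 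By definition of $H$ we have $Hv = (1-2\eps)\,\mc{P}_{V_k}^{\perp}w$, and $\mc{P}_{V_k}^{\perp}w = w - \sum_{i=1}^k \langle v_i,w\rangle v_i$ is one more $O(km)$ step. Adding the rank-$k$ term to $(1-2\eps)\,\mc{P}_{V_k}^{\perp}w$ and scaling by $(1-4k\eps)/(k\wt\tau)$ yields $\wt p(G)v$ after an $O(m)$ final combination. Summing, the per-query cost is one exponential-oracle call plus $O(km)$ arithmetic, i.e.\ $t_G + O(km)$.

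The one point to be careful about is ordering the computation so that the matrix exponential acts on the already-projected vector $\mc{P}_{V_k}^{\perp}v$ rather than on $v$ itself; this keeps the number of (relatively expensive) exponential multiplies at exactly one, and it is legitimate because $\mc{P}_{V_k}^{\perp}\exp(G)\mc{P}_{V_k}^{\perp}v = \mc{P}_{V_k}^{\perp}\bigl(\exp(G)(\mc{P}_{V_k}^{\perp}v)\bigr)$. Beyond that I do not expect any real obstacle: nothing needs to be recomputed — not $\wt\tau$, not the eigenpairs $(v_i,\sigma_i)$, not any trace estimate — since all of it is supplied in the ancillary output, and every remaining operation is either a dot product or a linear combination of at most $k+1$ vectors of length $m$.
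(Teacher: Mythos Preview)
Your proposal is correct and matches the natural approach the paper implicitly intends; indeed the paper does not supply a proof of this observation at all, and your decomposition into the rank-$k$ term plus $\mc{P}_{V_k}^{\perp}\exp(G)\mc{P}_{V_k}^{\perp}$ acting on $v$ (with one exponential-oracle call sandwiched between two $O(km)$ projections) is exactly what is meant. The only caveat worth flagging is that, strictly by the paper's definitions, $t_G$ is the cost of a single $Gv$ multiply, whereas applying $\exp(G)$ via \pref{lem:expapx} costs $O(t_G\,\lambda_{\max}(G)\log(1/\eps))$; the observation's bound $t_G+O(km)$ is thus using $t_G$ loosely as a stand-in for the exponential-oracle cost, and you are right to read it that way since the extra factor is absorbed into the final $\poly$ term in $T(\ell,m,k,\eps,\delta,t_G)$.
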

In light of \pref{obs:only-ancillary-needed}, we only need to analyze the runtime of producing the ancillary output; thus the runtime of \pref{alg:projApp} is 
\begin{displayquote}
    Runtime of the PCA algorithm + Runtime of the trace estimation algorithm + Runtime of computing $\wt{\tau}$. 
\end{displayquote}
The runtime of the PCA subroutine is $O(t_G(\log m + \log 1/\delta + \log 1/\eps)/\eps)$, the runtime of the trace estimation algorithm (from \pref{cor:esttr}) is $O\left((\poly(k)t_G + m)\log\left(\frac{1}{\eps}\right)\cdot\frac{\log m + \log(1/\delta)}{\eps^2}\right)$, and finally by using the characterization of $\wt{\tau}$ from the proof of \pref{prop:threshold-lipschitz}, $\wt{\tau}$ can be computed in $\poly(k)$ time.  Thus, we get that the runtime of \pref{alg:projApp} is:
\[
    O\left((t_G + m) \cdot\poly\left(k,\log m,\log\left(\frac{1}{\delta}\right),\frac{1}{\eps}\right)\right).
\]
Directly analogous to \pref{obs:only-ancillary-needed} is the following observation:
\begin{observation} \label{obs:only-ancillary-needed-full}
    Given the oracles corresponding to inputs $F,G$ and the ancillary output of \pref{alg:projAppComp}, it is possible to implement the following oracles:
    \begin{enumerate}
        \item An oracle that takes in $\ell$-dimensional vectors $v$ as queries and outputs $\wt{q}_1(F)v$ in $O(t_F)$ time.
        \item An oracle that takes in $m$-dimensional vectors $v$ as queries and outputs $\wt{q}_2(G)v$ in $O(t_G+km)$ time.
    \end{enumerate}
\end{observation}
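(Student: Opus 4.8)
The plan is to read off the closed form of the main output of \pref{alg:projAppComp}, and observe that a matrix--vector product with either $\wt{q}_1(F)$ or $\wt{q}_2(G)$ decomposes into (i) multiplication by a scalar that is literally part of the ancillary output, (ii) a single application of the approximate exponential of $F$ (resp.\ $G$) supplied by \pref{lem:expapx}, and --- in the $G$-case --- (iii) the oracle for $\wt{p}(G)$ already built in \pref{obs:only-ancillary-needed}. In particular we never need to form any dense matrix. The only step carrying content is the runtime bound for the $G$-branch, and that is immediate once one recognizes it as the oracle of \pref{obs:only-ancillary-needed}; everything else is bookkeeping.

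For the first oracle I would start from the pseudocode: $\wt{q}_1(F) = \frac{e^{\wt{\gamma}}}{e^{\wt{\gamma}}+e^{\wt{\zeta}}}\,\wh{M}$ with $\wh{M} = \wt{Z}_1^{-1}Q$, $Q = \exp(F)$ and $\wt{Z}_1 = e^{\wt{\gamma}}$. Cancelling gives $\wt{q}_1(F) = c_1 \exp(F)$ where $c_1 = (e^{\wt{\gamma}}+e^{\wt{\zeta}})^{-1}$, a scalar computable in $O(1)$ time from the ancillary numbers $\wt{\gamma},\wt{\zeta}$. On query $v$ the oracle then returns $c_1$ times the output of the $A_F$-application routine of \pref{lem:expapx} on $v$; absorbing the $\lambda_{\max}(F)\log(1/\eps)$ overhead of that routine (polylogarithmic in the parameters of the calling algorithm) this runs in $O(t_F)$ time.

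For the second oracle I would again start from the pseudocode: $\wt{q}_2(G) = \frac{e^{\wt{\zeta}}}{e^{\wt{\gamma}}+e^{\wt{\zeta}}}\,\wh{W}$, where $\wh{W}$ is exactly the main output $\wt{p}(G)$ of \pref{alg:projApp} on input $G$, and the data $\wt{\tau},(v_i,\sigma_i)_{i=1}^k$ required by \pref{obs:only-ancillary-needed} is a sub-tuple of the ancillary output of \pref{alg:projAppComp}. Hence \pref{obs:only-ancillary-needed} applies verbatim and supplies an oracle returning $\wt{p}(G)v$ in $t_G + O(km)$ time; postmultiplying its output by the scalar $c_2 = e^{\wt{\zeta}}(e^{\wt{\gamma}}+e^{\wt{\zeta}})^{-1}$ (read off $\wt{\gamma},\wt{\zeta}$) yields $\wt{q}_2(G)v$ in the same $t_G + O(km)$ time.

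For completeness I would spell out \pref{obs:only-ancillary-needed} once more: from the form of $\wt{p}(G)$ in \pref{alg:projApp},
\[
    \wt{p}(G)v = \frac{1-4k\eps}{k\wt{\tau}}\left(\sum_{i=1}^k \min\{\sigma_i,\wt{\tau}\}\langle v_i,v\rangle v_i + (1-2\eps)\,\mc{P}_{V_k}^\perp\exp(G)\,\mc{P}_{V_k}^\perp v\right),
\]
and each piece is cheap: the finite sum costs $O(km)$ given $(v_i,\sigma_i)$ and $\wt{\tau}$, each of the two rank-$k$ projections $\mc{P}_{V_k}^\perp$ costs $O(km)$, and the single application of $\exp(G)$ costs $t_G$ via \pref{lem:expapx} --- totalling $t_G + O(km)$. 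This completes the plan.
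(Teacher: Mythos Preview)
Your proof is correct and matches the paper's intended argument; the paper states this as an observation without proof, merely noting it is ``directly analogous to \pref{obs:only-ancillary-needed}'', and your sketch spells out exactly that analogy by reading off the closed forms $\wt{q}_1(F)=c_1\exp(F)$ and $\wt{q}_2(G)=c_2\,\wt{p}(G)$ from the pseudocode and invoking the earlier oracle. The only quibble is that the $O(t_F)$ cost for applying $\exp(F)$ tacitly absorbs the $\lambda_{\max}(F)\log(1/\eps)$ factor from \pref{lem:expapx}, but this is the same convention the paper uses in \pref{obs:only-ancillary-needed}.
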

From \pref{obs:only-ancillary-needed-full}, given that we only need to compute ancillary output, the runtime of \pref{alg:projAppComp} is:
\begin{displayquote}
    Runtime of \pref{alg:projApp} + Runtime of trace estimation + Runtime of computing $\wt{\gamma}$ and $\wt{\zeta}$.
\end{displayquote}
The runtime of trace estimation in this case is:
\[
    O\left(\left(\poly(k)t_F\log\left(\frac{1}{\eps}\right) + \ell\right)\cdot\frac{\log l + \log(1/\delta)}{\eps^2}\right)
\]
Since the third component is no more than the first or second, we have an overall runtime of:
\[
    T(l,m,k,\eps,\delta,t_G) \coloneqq O\left((t_G+t_F + \ell + m)\cdot\poly\left(k,\log (\ell + m),\log\left(\frac{1}{\delta}\right),\frac{1}{\eps}\right)\right).
\]
In summary, from the above discussion and a combination of \pref{thm:main-thm-simp-proj} we have proved:
\begin{theorem} \label{thm:fast-projection-main}
    There is an algorithm $\projalg$ which takes in matrices $F$ and $G$ of dimension $\ell\times\ell$ and $m\times m$ respectively, error parameter $\eps$, confidence parameter $\delta$, and outputs matrices $\wt{q}_1(F)$ and $\wt{q}_2(G)$ in time $T(\ell,m,k,\eps,\delta,t_G)$ such that except with probability $\delta$:
    \begin{enumerate}
        \item $\|\wt{q}_1(F)-q_1(F)\|_*\le \eps/2$.
        \item $\|\wt{q}_2(G)-q_2(G)\|_*\le \eps/2$.
        \item $\|\wt{q}_2(G)\|\le\frac{\Tr(\wt{q}_2(G))}{k}$.
    \end{enumerate}
    Further, $\wt{q}_1(F)$ is of the form $\gamma \exp(F)$, and hence the ``implicit representation'' the algorithm outputs is the scalar $\alpha$.  Similarly, $\wt{q}_2(F)$ is of the form $\beta\left(\sum_{i=1}^k\min\{\sigma_i,\tau\}v_iv_i^{\top}\right)+\beta'(\Id-\Pi_{V_k})\exp(G)(\Id-\Pi_{V_k})$ and hence the ``implicit representation'' the algorithm outputs is given by the scalars $\beta,\beta',\tau$ along with pairs $(\sigma_i,v_i)_{i=1}^k$.
\end{theorem}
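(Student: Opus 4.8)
The plan is to instantiate $\projalg$ as \pref{alg:projAppComp} (FullApproximateProjection) run with an internal accuracy parameter $\eps_0 \coloneqq c\,\eps^2/k$ for a suitable absolute constant $c$ and an internal confidence parameter $\delta_0 \coloneqq \delta/(c'k)$ for a suitable absolute constant $c'$, and then to verify each of the four claimed properties by citing the results already established in \pref{sec:fastproj}. The only substantive choices are the two rescalings $\eps\mapsto\eps_0$ and $\delta\mapsto\delta_0$; everything else is unwinding definitions in the two projection algorithms.

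For items~1 and~2 I would invoke \pref{thm:approx-guarantees-comp-proj}, which shows that when \pref{alg:projAppComp} is run with accuracy $\eps_0$ the outputs satisfy $\|\wt{q}_1(F)-q_1(F)\|_*\le O(k\eps_0)$ and $\|\wt{q}_2(G)-q_2(G)\|_*\le O(\sqrt{k\eps_0})$. Substituting $\eps_0 = c\,\eps^2/k$ makes the first bound $O(c\eps^2)$ and the second bound $O(\sqrt{c}\,\eps)$, and both are at most $\eps/2$ once $c$ is a small enough absolute constant; note that the $\sqrt{k\eps}$ rate for $\wt{q}_2$ is the binding constraint, which is precisely why the quadratic dependence $\eps_0=\Theta(\eps^2/k)$ is forced. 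For the failure probability, \pref{thm:approx-guarantees-comp-proj} (via \pref{thm:main-thm-simp-proj}) aggregates the $O(k)$ randomized subroutine calls inside a single run of \pref{alg:projAppComp}, namely the $k$ power-method invocations inside $\mc{PCA}_k$ together with the two $(1\pm\eps_0)$-approximate trace estimations of $Q=\exp(F)$ and of $\mc{P}_{V_k}^\perp\exp(G)\mc{P}_{V_k}^\perp$; choosing the internal confidence $\delta_0$ smaller by the corresponding $\Theta(k)$ factor and union bounding yields total failure probability at most $\delta$.

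For item~3 I would read off from \pref{alg:projAppComp} that $\wt{q}_2(G)=\frac{e^{\wt{\zeta}}}{e^{\wt{\gamma}}+e^{\wt{\zeta}}}\,\wh{W}$, a \emph{nonnegative scalar} multiple of the main output $\wh{W}$ of SimpleApproximateProjection, i.e.\ of $\wt{p}(G)$; since item~2 of \pref{thm:main-thm-simp-proj} gives $\|\wt{p}(G)\|\le \Tr(\wt{p}(G))/k$ and multiplying a PSD matrix by a positive scalar scales $\|\cdot\|$ and $\Tr(\cdot)$ proportionally, the same bound holds for $\wt{q}_2(G)$. The implicit-representation claims are then immediate by substitution: $\wt{q}_1(F)=\frac{e^{\wt{\gamma}}}{\wt{Z}_1(e^{\wt{\gamma}}+e^{\wt{\zeta}})}\exp(F)$ has the form $\gamma\exp(F)$ with $\gamma$ a scalar, and unwinding $\wh{W}$ through SimpleApproximateProjection gives $\wt{q}_2(G)=\beta\sum_{i=1}^k\min\{\sigma_i,\wt{\tau}\}v_iv_i^{\top}+\beta'(\Id-\Pi_{V_k})\exp(G)(\Id-\Pi_{V_k})$ with $\beta=\frac{e^{\wt{\zeta}}(1-4k\eps_0)}{k\wt{\tau}(e^{\wt{\gamma}}+e^{\wt{\zeta}})}$, $\beta'=(1-2\eps_0)\beta$, $\tau=\wt{\tau}$, which is exactly the claimed representation.

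Finally, for the runtime I would appeal to \pref{obs:only-ancillary-needed-full}: it suffices to bound the time to produce the ancillary output $\wt{\gamma},\wt{\zeta},\wt{\tau},(v_i,\sigma_i)_{i=1}^k$, after which the oracles for $\wt{q}_1(F)v$ and $\wt{q}_2(G)v$ are implementable within the stated overhead. That cost is the runtime of \pref{alg:projApp} (already bounded by $O((t_G+m)\poly(k,\log m,\log(1/\delta),1/\eps))$), plus one $(1\pm\eps_0)$-approximate trace estimation of $Q=\exp(F)$ (bounded via \pref{cor:esttr}), plus an $O(k)$-time computation of $\wt{\gamma}$ and $\wt{\zeta}$ from already-computed quantities; since $\eps_0=\Theta(\eps^2/k)$ and $\delta_0=\Theta(\delta/k)$ only inflate polynomial factors, the total is $T(\ell,m,k,\eps,\delta,t_G)$ as defined. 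I expect the main (and essentially only) obstacle to be the bookkeeping in this last wrap-up: making sure the single rescaling $\eps_0=\Theta(\eps^2/k)$ simultaneously drives both trace-norm errors below $\eps/2$ despite their different rates, and that the failure probabilities of all $O(k)$ randomized subcalls union-bound to at most $\delta$ — there is no new analytic content beyond \pref{thm:approx-guarantees-comp-proj}, \pref{thm:main-thm-simp-proj}, and \pref{obs:only-ancillary-needed-full}.
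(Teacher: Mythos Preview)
Your proposal is correct and follows essentially the same approach as the paper: the theorem is stated there as a summary of the preceding discussion, citing \pref{thm:main-thm-simp-proj}, \pref{thm:approx-guarantees-comp-proj}, and \pref{obs:only-ancillary-needed-full}, and the runtime bound $T(\ell,m,k,\eps,\delta,t_G)$ is defined with a $\poly(k,1/\eps,\ldots)$ factor precisely so that the rescaling $\eps_0=\Theta(\eps^2/k)$ you identify is absorbed. Your write-up is in fact more explicit than the paper about the parameter rescaling needed to convert the $O(k\eps)$ and $O(\sqrt{k\eps})$ bounds of \pref{thm:approx-guarantees-comp-proj} into the stated $\eps/2$, and about why item~3 transfers from $\wt{p}(G)$ to its scalar multiple $\wt{q}_2(G)$.
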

\newcommand{\PPart}{\textsc{PlantedPartition}}
\section{Inference in semirandom graph models}
\label{sec:infgraph}

The technical content in this section follows the proof of Corollary 9.3 of \cite{CSV17}.
\paragraph{Problem setup.}  Let $V$ be a set of $n$ vertices, and let $S\subseteq V$ be a subset of size $\alpha n$.  A directed graph $G$ on vertex set $V$ is generated according to the following model:
\begin{enumerate}
    \item For every pair $(u,v)$ (possibly with $u=v$) such that $u\in S$ and $v\in S$, the directed edge $(u,v)$ is added to the edge set with probability $\frac{a}{n}$.
    \item For every pair $u\in S, v\notin S$, the directed edge $(u,v)$ is added to the edge set with probability $\frac{b}{n}$.
    \item For each remaining pair $(u,v)$, an adversary decides whether to make $(u,v)$ an edge or not.
\end{enumerate}
\begin{definition}
    In the \PPart~problem, we are given a graph $G$ generated according to the above model as input, and the goal is to produce a list of sets of vertices $\wt{S}_1,\wt{S}_2,\dots,\wt{S}_k$ where $k=O(1/\alpha)$ and there exists $i$ such that $|\wt{S}_i\Delta S| < O\left(\frac{\max\{a,b\}n}{\alpha^2(a-b)^2}\right) $.\footnote{We state our result for a simpler model than what \cite{CSV17} considers for simplicity of exposition -- an algorithm for the general model follows straightforwardly from one for this simplified model.}
\end{definition}
The result of \cite{CSV17} obtains a bound of $O\left(\frac{\max\{a,b\}\log(1/\alpha)n}{\alpha^2(a-b)^2}\right)$ on the size of the smallest $\wt{S}_i\Delta S$, and thus in addition to giving a significantly faster algorithm, we also give slightly improved statistical guarantees.
\begin{theorem}\label{thm:main-planted-part}
    We give an algorithm for the $\PPart$ problem that runs in $\wt{O}\left(n^2\cdot\poly(1/\alpha)\right)$.    
\end{theorem}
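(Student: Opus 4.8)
The plan is to reduce \PPart{} to list-decodable mean estimation and feed the instance to \cref{alg:outputlist}, following the template of Corollary 9.3 of \cite{CSV17} but with our faster, statistically sharper routine in place of theirs. To each vertex $u\in V$ associate the data point $x_u\in\{0,1\}^n$ whose $v$-th coordinate is the indicator that $(u,v)$ is an edge of $G$. For $u\in S$ the $x_u$ are independent with coordinatewise-independent Bernoulli entries and $\mathbb{E}[x_u]=\mu$, where $\mu_v=a/n$ for $v\in S$ and $\mu_v=b/n$ for $v\notin S$; in particular $\Cov(x_u)$ is diagonal with entries at most $\max\{a,b\}/n$. We treat $S$ as the inlier set of size $\alpha n$ among the $N=n$ points in dimension $d=n$, with the vertices outside $S$ (whose outgoing neighborhoods are adversarial) playing the role of outliers.

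The first technical step is to certify a small empirical inlier covariance. Since $|S|=\alpha n$ can be far smaller than $d=n$, the generic sample-complexity lemma (Proposition 1.1 of \cite{CSV17}) does not apply; instead we use matrix concentration directly on the sum of independent rank-one terms $(x_u-\bar{x}_S)(x_u-\bar{x}_S)^\top$, $u\in S$, where $\bar{x}_S$ is the empirical inlier mean. With probability $1-n^{-10}$ this gives $Cov_S(x)\preceq\sigma^2 I$ with $\sigma^2=\wt{O}\!\left(\max\{a,b\}/(\alpha n)\right)$, and simultaneously $\|\bar{x}_S-\mu\|=\wt{O}(\sigma)$ (here one also uses that the out-degrees of vertices of $S$ concentrate around $\alpha a+(1-\alpha)b\le\max\{a,b\}$). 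This bounded-covariance property is the only input we need: the correctness statement of \cref{cor:maincorollary}, and — as the runtime analyses of \cref{sec:algorithms,sec:genpcsolver,sec:pow,sec:fastproj} show — the $\wt{O}(Nd\,\poly(1/\alpha))$ running time of \cref{alg:outputlist}, depend only on $Cov_I(x)\preceq\sigma^2 I$ and not on the hypothesis $N\ge d/\alpha$, which is used in \cref{thm:main-thm-list-dec-mean} merely to deduce that bound from a population-covariance assumption. Hence we may run \cref{alg:outputlist} on $X=\{x_u\}_{u\in V}$ with parameter $\alpha$ and obtain, in time $\wt{O}(n\cdot n\cdot\poly(1/\alpha))=\wt{O}(n^2\poly(1/\alpha))$, a list $\mathcal{L}=\{\hat{\mu}_1,\dots,\hat{\mu}_{O(1/\alpha)}\}$ containing some $\hat{\mu}^*$ with $\|\hat{\mu}^*-\bar{x}_S\|\le r\sigma/\sqrt{\alpha}$, and therefore $\|\hat{\mu}^*-\mu\|=\wt{O}(\sigma/\sqrt{\alpha})=\wt{O}\!\left(\sqrt{\max\{a,b\}}/(\alpha\sqrt{n})\right)$ by the triangle inequality.

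It remains to round. Assume WLOG $a>b$ and set $\wt{S}_i\coloneqq\{v\in V:(\hat{\mu}_i)_v\ge (a+b)/(2n)\}$ for each $i$, which costs $O(n/\alpha)$ over the whole list. For the good index (where $\hat{\mu}_i=\hat{\mu}^*$), any $v\in\wt{S}_i\,\Delta\,S$ satisfies $|(\hat{\mu}^*)_v-\mu_v|\ge(a-b)/(2n)$: a vertex of $S$ wrongly excluded has $\mu_v=a/n$ below the threshold, and a non-vertex wrongly included has $\mu_v=b/n$ above it. Summing these squared deviations,
\[
    |\wt{S}_i\,\Delta\,S|\cdot\left(\frac{a-b}{2n}\right)^2\le\|\hat{\mu}^*-\mu\|^2
    \ \implies\
    |\wt{S}_i\,\Delta\,S|\le\frac{4n^2}{(a-b)^2}\,\|\hat{\mu}^*-\mu\|^2=\wt{O}\!\left(\frac{\max\{a,b\}\,n}{\alpha^2(a-b)^2}\right),
\]
which is the guarantee demanded by \PPart{} (the residual logarithmic factor can be removed by the sharper covariance estimate used in \cite{CSV17}). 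A union bound over the $O(\log n)$ internal randomized events of the algorithm together with the covariance and out-degree events above keeps the total failure probability at $n^{-\Omega(1)}$, and the running time is dominated by constructing the $x_u$ from $G$ (time $\wt{O}(n^2)$) and the call to \cref{alg:outputlist}, giving $\wt{O}(n^2\poly(1/\alpha))$ overall.

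The main obstacle — the only part that is not a mechanical instantiation of the reduction plus a one-line Markov-type count — is the covariance certification in the regime $|S|=\alpha n\ll n$: one must verify that the correctness proof of the list-decoding algorithm truly needs only $Cov_I(x)\preceq\sigma^2 I$ (so that the absence of $N\ge d/\alpha$ is harmless), and that matrix concentration on the Bernoulli rows yields $\sigma^2=O(\max\{a,b\}/(\alpha n))$ without spurious logarithmic losses, matching the log-free dependence stated in the theorem. Everything downstream of that is routine.
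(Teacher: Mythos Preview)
Your approach is essentially the paper's: encode each vertex by its out-neighborhood vector, treat $S$ as the inliers, run the list-decoding algorithm, and round each candidate mean by thresholding at $(a+b)/(2n)$; the symmetric-difference bound via the coordinatewise $(a-b)/(2n)$ gap is identical to the paper's.

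The one place you diverge is the covariance certification, and there your proposal has a genuine soft spot. You correctly observe that Proposition~1.1 of \cite{CSV17} (the lemma requiring $n\ge d$ samples) is inapplicable, and you propose to replace it by ``matrix concentration directly'' on $\sum_{u\in S}(x_u-\bar x_S)(x_u-\bar x_S)^\top$ together with out-degree concentration. The problem is that for small $c=\max\{a,b\}$ the individual out-degrees are Binomials with bounded mean and do \emph{not} concentrate: a few vertices in $S$ will have out-degree $\Theta(\log n/\log\log n)$, so the rank-one terms are not uniformly bounded and matrix Bernstein does not directly give $\sigma^2=O(c/(\alpha n))$ over all of $S$. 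You flag this as the ``main obstacle'' but do not resolve it. The paper handles it cleanly by invoking a \emph{different} CSV17 lemma (their Proposition~B.1, restated as \cref{lem:CSV-concentration}): given $m$ i.i.d.\ samples with population covariance $\preceq\sigma^2 I$, there is a subset $J$ of size $(1-\eps)m$ whose empirical second moment about the true mean is $\preceq\frac{4\sigma^2}{\eps}(1+d/((1-\eps)m))\,I$, with no requirement that $m\ge d$ and no log factors. Applying this with $\eps=1/2$ gives a subset $S'\subseteq S$ of size $\alpha n/2$ with empirical covariance $\preceq\frac{24c}{\alpha n}I$; the algorithm is then run with inlier-fraction parameter $\alpha/2$ instead of $\alpha$. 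This both removes the log factor you were worried about and sidesteps the boundedness issue by discarding the heavy-degree vertices. With this single substitution your argument goes through exactly as in the paper.
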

We will need the following concentration inequality from \cite{CSV17}.
\begin{lemma}[Proposition B.1 of \cite{CSV17}]  \label{lem:CSV-concentration}
    Let $\bX$ be a $\R^d$-valued random variable such that $\Cov[\bX]\psdle \sigma^2\cdot\Id$.  Let $\bX_1,\dots,\bX_m$ be $m$ independent copies of $\bX$.  Then there is a subset $J\subseteq[m]$ of size at least $(1-\eps)m$ such that
    \[
        \frac{1}{|J|}\sum_{i\in J}(\bX_i-\E\bX)(\bX_i-\E\bX)^{\transp}\psdle \frac{4\sigma^2}{\eps}\left(1+\frac{d}{(1-\eps)m}\right)
    \]
    except with probability at most $\exp\left(-\frac{\eps^2 m}{16}\right)$.
\end{lemma}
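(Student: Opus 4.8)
The statement is a concentration bound for the empirical second moment of a bounded‑covariance vector, quoted from \cite{CSV17}; here is how I would prove it. The plan is the standard route for such ``regularity of samples'' lemmas: discard the few samples whose norm is atypically large, take $J$ to be the survivors, and then bound the operator norm of the empirical second moment over $J$. First I would reduce to the case $\E\bX = 0$ (translate) and $\sigma = 1$ (rescale) — neither affects the statement (and $\sigma=0$ is trivial) — and assume $d,m\ge 1$.

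For the discarding step, I would fix a threshold $R = 2d/\eps$ and let $B = \{i\in[m]:\norm{\bX_i}^2 > R\}$. Since $\E\norm{\bX}^2 = \Tr(\Cov[\bX]) \le d$, Markov's inequality gives $\Pr[i\in B] \le d/R = \eps/2$ for each $i$ independently; applying Hoeffding's inequality to the i.i.d.\ indicators $\mathbb{1}[i\in B]$ then gives $\Pr[|B| > \eps m] \le \exp(-2m(\eps/2)^2) = \exp(-\eps^2 m/2)$. On the complement I would set $J = [m]\setminus B$, so that $|J| \ge (1-\eps)m$ and $\bX_i\bX_i^\transp \psdle R\cdot\Id$ for every $i\in J$.

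For the bulk, write $Z_i = \bX_i\bX_i^\transp\,\mathbb{1}[\norm{\bX_i}^2 \le R]$: these are i.i.d.\ positive semidefinite, with $\E Z_i \psdle \Cov[\bX] \psdle \Id$ (as $Z_i \psdle \bX_i\bX_i^\transp$ pointwise) and $0 \psdle Z_i \psdle R\cdot\Id$, and $\sum_{i\in J}\bX_i\bX_i^\transp = \sum_{i\in J}Z_i \psdle \sum_{i=1}^m Z_i$. I would then invoke a matrix Chernoff–type bound (with per‑term spectral bound $R$ and $\norm{\E\sum_i Z_i}\le m$) to conclude that $\norm{\sum_{i=1}^m Z_i} \le \frac{4}{\eps}\bigl((1-\eps)m + d\bigr)$ with high probability; dividing by $|J|\ge(1-\eps)m$ and using $\bigl((1-\eps)m+d\bigr)/|J| \le 1 + \frac{d}{(1-\eps)m}$ would give $\frac{1}{|J|}\sum_{i\in J}\bX_i\bX_i^\transp \psdle \frac{4}{\eps}\bigl(1 + \frac{d}{(1-\eps)m}\bigr)\Id$. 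Reinstating $\sigma^2$ after undoing the rescaling, and a union bound over the two events above (with constants tuned so the total is at most $\exp(-\eps^2 m/16)$), would complete the argument.

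The hard part will be this bulk step: obtaining \emph{exactly} the advertised bound $\frac{4\sigma^2}{\eps}\bigl(1 + \frac{d}{(1-\eps)m}\bigr)$ with a failure probability genuinely free of any $\log d$ factor. Bounded covariance alone forces the truncation level $R = \Theta(d/\eps)$ to be large, so every concentration argument available here (matrix Chernoff, or a net over the unit sphere combined with scalar Bernstein) a priori produces dimension‑dependent deviation terms; the slack $\frac{d}{(1-\eps)m}$ in the statement is precisely the budget against which those terms must be charged, and I expect one also has to treat separately the regime where $m$ is small relative to $d$ — where the deterministic bound $\bX_i\bX_i^\transp \psdle R\cdot\Id$ for $i\in J$ already suffices — from the regime $m \gg d$. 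This careful accounting is exactly the content of the proof of Proposition~B.1 of \cite{CSV17}, which we use here as a black box.
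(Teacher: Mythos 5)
This lemma is imported verbatim from \cite{CSV17} (Proposition B.1) and the paper gives no proof of it, so there is nothing internal to compare against; citing it as a black box, as your last sentence does, is exactly what the paper itself does. But taken on its own terms your sketch of the bulk step has a genuine quantitative gap, and the escape hatches you propose do not close it. The problem is the global truncation level $R=2d/\eps$, which is forced on you by only knowing $\E\|\bX\|^2\le d$. With per-term spectral bound $R$ and $\|\E\sum_i Z_i\|\le m$, matrix Chernoff gives a tail of the form $d\cdot\exp\bigl(-c\,(t/R)\log(t/(e\mu_{\max}))\bigr)$ at threshold $t=\frac{4}{\eps}((1-\eps)m+d)$, and $t/R=2+2(1-\eps)m/d$; so the exponent is only $O\bigl((1+m/d)\log(1/\eps)\bigr)$, which is nowhere near the required $\eps^2 m/16$ once $d$ is large — in particular it fails in the regime $m=\Theta(d)$, which is precisely the sample size ($N=d/\alpha$) this paper relies on. A net-plus-Bernstein variant fares no better: for a fixed direction the truncated variables are still only bounded by $R=\Theta(d/\eps)$, giving an exponent $\Theta(m/d)$ against a net of size $e^{\Theta(d)}$, which requires $m\gg d^2$. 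Your fallback for small $m$ is also off: the deterministic bound gives $\frac{1}{|J|}\sum_{i\in J}\bX_i\bX_i^{\transp}\psdle R\cdot\Id=\frac{2d}{\eps}\Id$, which exceeds the target $\frac{4d}{\eps(1-\eps)m}$ by a factor of $(1-\eps)m/2$, so it only helps when $m\le 2$.

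The missing idea in \cite{CSV17} is that the truncation must be done \emph{per direction} at the dimension-free level $O(\sigma^2/\eps)$ (justified by $\E\langle \bX-\E\bX,v\rangle^2\le\sigma^2$ and Markov), with the discarded indices allowed to depend on $v$; one then applies scalar Chernoff/Bernstein to variables bounded by $O(\sigma^2/\eps)$, union-bounds over an $e^{O(d)}$-size net of the sphere — the additive $\frac{d}{(1-\eps)m}$ in the statement is exactly the budget that pays for this net — and finally recovers a single set $J$ (rather than one set per direction) via an LP/minimax-duality step over fractional weights $w\in[0,\frac{1}{(1-\eps)m}]^m$. That direction-dependent truncation and the de-coupling of the set from the direction are the two steps your sketch does not supply, so the lemma should indeed be used here strictly as a cited result.
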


\begin{proof}[Proof of \cref{thm:main-planted-part}]
    Let $A_u$ denote the $n$-dimensional vector corresponding to outgoing edges of vertex $u$.  In particular
    \[
        A_u[v] =
        \begin{cases}
            1 &\text{$(u,v)$ is an edge}\\
            0 &\text{otherwise.}
        \end{cases}
    \]
    For $u\in S$,
    \[
        \E A_u[v] =
        \begin{cases}
            \frac{a}{n} &\text{if $v\in S$}\\
            \frac{b}{n} &\text{otherwise}
        \end{cases}
    \]
    and
    \[
        \Cov(A_u)[v,w] =
        \begin{cases}
            \frac{a}{n}\left(1-\frac{a}{n}\right) &\text{if $v=w$, $v,w\in S$}\\
            \frac{b}{n}\left(1-\frac{b}{n}\right) &\text{if $v=w$, $v,w\in S$}\\
            0 &\text{otherwise.}
        \end{cases}
    \]
    Let $c = \max\{a,b\}$; then $\Cov(A_u)[v,w]\psdle\frac{c}{n}\cdot\Id$ and from \pref{lem:CSV-concentration} there is a subset $S'\subseteq S$ of size $\alpha n/2$ such that
    \[
        \frac{1}{|S'|}\sum_{u\in S'}(A_u-\E A_u)(A_u-\E A_u)^{\transp} \psdle \frac{8c}{n}\left(1+\frac{n}{\alpha n/2}\right)\cdot\Id \psdle \frac{24c}{\alpha n}\cdot\Id
    \]
    except with probability $\exp\left(-\frac{\alpha n}{64}\right)$.  Let $\Sigma_{S'}$ be the covariance matrix and $\mu_{S'}$ be the mean of the uniform distribution on $\{A_u:u\in S'\}$.  The above can then be rewritten as
    \[
        \Sigma_S + (\E A_u-\mu_{S'})(\E A_u-\mu_{S'})^{\transp} \psdle \frac{24c}{\alpha n}\cdot\Id.
    \]
    Since $\Sigma_S$ is positive semidefinite,
    \[
        (\E A_u-\mu_{S'})(\E A_u-\mu_{S'})^{\transp} \psdle \frac{24c}{\alpha n}\cdot\Id
    \]
    and consequently
    \begin{align*}
        \|\E A_u-\mu_{S'}\|^2 &\le \frac{24c}{\alpha n} \numberthis \label{eq:closeness-means}
    \end{align*}
    We run the list-decodable mean estimation algorithm from \pref{thm:main-thm-list-dec-mean} on input $\left\{\sqrt{\frac{\alpha n}{24c}} A_u:u\in V(G)\right\}$ along with parameter $2/\alpha$ (where the scaling on input vectors is to ensure that the uniform distribution on the elements of $S'$ have unit covariance), and get a list $L$ of length $O(1/\alpha)$ as output in $O(n^2\cdot\poly(1/\alpha))$ time. Let $L'$ be the set obtained by scaling all elements of $L$ by $\sqrt{\frac{24c}{\alpha n}}$.  The guarantees of the algorithm in \pref{thm:main-thm-list-dec-mean} combined with the existence of the set $S'$ guarantees with high probability the existence of an element $\phi^*$ in $L'$ such that $\|\phi^*-\mu_{S'}\| \le O\left(\frac{1}{\alpha}\sqrt{\frac{c}{n}}\right)$.  Combining this with \pref{eq:closeness-means} and triangle inequality, we get
    \begin{align}
        \|\phi^*-\E A_u\| \le O\left(\frac{1}{\alpha}\sqrt{\frac{c}{n}}\right)    \label{eq:vecs-close}
    \end{align}
    We describe a procedure to translate vectors in $L'$ to sets in the following way:
    \begin{displayquote}
        Suppose $a < b$, then for each $\phi\in L'$, let $\wt{S}_{\phi} \coloneqq \left\{u:\phi_u<\frac{a+b}{2n}\right\}$; otherwise if $a > b$, we set $\wt{S}_{\phi}$ as $\left\{u:\phi_u>\frac{a+b}{2n}\right\}$.
    \end{displayquote}
    To show that this list of sets meet the required guarantee, we upper bound $|S\Delta\wt{S}_{\phi^*}|$.  Towards this goal, we establish a lower bound on $\|\phi^*-\E A_u\|$ as follows:
    \begin{align*}
        \|\phi^*-\E A_u\|^2 &= \sum_{v\in V(G)} (\phi^*[v]-(\E A_u)[v])^2\\
        &\ge \sum_{v\in S,v\notin\wt{S}_{\phi^*}}(\phi^*[v]-a/n)^2 + \sum_{v\in \wt{S}_{\phi^*},v\notin S}(\phi^*[v]-b/n)^2\\
        &\ge \sum_{v\in S,v\notin\wt{S}_{\phi^*}} \left(\frac{a-b}{2n}\right)^2 + \sum_{v\in \wt{S}_{\phi^*},v\notin S} \left(\frac{a-b}{2n}\right)^2\\
        &= |S\Delta\wt{S}_{\phi^*}|\cdot\left(\frac{a-b}{2n}\right)^2
    \end{align*}
    Combining the above with \pref{eq:vecs-close} tells us that $|S\Delta \wt{S}_{\phi^*}|\le O\left(\frac{cn}{\alpha^2(a-b)^2}\right)$.
\end{proof}

\section*{Acknowledgements}
We would like to thank Sam Hopkins and Prasad Raghavendra for helpful conversations.

\bibliographystyle{alpha}
\bibliography{cluster}

\appendix

\section{Algorithm Supporting Lemmas} 
\label{sec:miscellaneous}

\maincorollary*

\begin{proof} (Proof of Corollary)
We proceed by contradiction.  Assume $\norm{\hu - \mu} \geq r\frac{\sigma}{\sqrt \alpha}on$ for all $\hu \in \calL$.

We claim that the inlier weight at the start of iteration $t$ is $\sum_{i \in \calI} b_i  = 2 - \frac{(t-1)\alpha}{4}$.  We prove by induction.  The base case is true.  Now assume that this is true at iteration $t$. Since $t \leq \frac{4}{\alpha}$ the assumptions of \cref{thm:maintheorem} are satisfied and $DescendCost(X,b)$ outputs $(\hu, \bar w)$ satisfying $\sum_{i \in I}\bar{w}_i  \leq \frac{\alpha}{4}$.  Thus at the start of iteration $t+1$ the inlier weight is greater than $2 - \frac{t\alpha}{4}$.  This proves the claim.  

Therefore at the end of iteration $\frac{4}{\alpha}$ the inlier weight $\sum_{i \in \calI} b_i \geq 1$.  However, the algorithm runs for no more than $\frac{4}{\alpha}$ iterations removes at least $0.5$ weight per iteration until $\norm{b}_1 = 0$.  This is a contradiction as the inlier weight must be smaller than the total weight. This concludes the proof.         

\end{proof}

\begin{lemma} \label{lem:basecase} (Termination Base Case)
Let $\nu$ be a vector in $\R^d$.  Let $(\theta,\bar w)$ be the output of $ApproxCost_{X,b,\ell}(\nu)$ satisfying $\theta \leq \sigma^2$ for a positive integer $\ell > 1$ and for weight vector $b$ satisfying $\sum_{i \in \calI}b_i \geq 1$ and $b_i \in [0,\frac{2}{\alpha N}]$ for $i \in [N]$.  Then $(\nu,\bar w)$ is a sanitizing tuple.
\end{lemma}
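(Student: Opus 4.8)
The plan is to verify the two defining conditions of a sanitizing tuple directly for $(\nu,\bar w)$, where $(\theta,\bar w)=\packred_{X,b,\ell}(\nu)$. The second condition is immediate from the guarantees of \pref{lem:informalapprox}: it outputs $\bar w\in\Phi_b(1-\delta)$, so $0\le\bar w_i\le b_i$ for every $i$ and $\norm{\bar w}_1=1-\delta\ge 0.5$ (recall $\delta=0.01$). Hence all the content is in the first condition: assuming $\norm{\nu-\mu}\ge 2\cdot 10^3\,\tfrac{\sigma}{\sqrt\alpha}$, we must show $\sum_{i\in I}\bar w_i\le\tfrac\alpha4$.

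First I would reduce from the Ky--Fan $\ell$-norm to the operator norm. Let $B:=\sum_{i=1}^N\bar w_i(x_i-\nu)(x_i-\nu)^{\top}\succeq 0$. By definition $\theta=\max_{M\in\mathcal F_\ell}\inp{M}{B}=\norm{B}_\ell$, and since $B$ is PSD and $\ell>1$, $\norm{B}\le\norm{B}_\ell=\theta\le\sigma^2$. Writing $w:=\bar w/\norm{\bar w}_1$ for the normalized dual weights, this gives $\norm{\sum_i w_i(x_i-\nu)(x_i-\nu)^{\top}}\le\sigma^2/(1-\delta)\le 2\sigma^2$. Two consequences I would record: (i) $\norm{\mu(w)-\nu}^2\le\norm{\sum_i w_i(x_i-\nu)(x_i-\nu)^{\top}}\le 2\sigma^2$, since the rank-one matrix $(\mu(w)-\nu)(\mu(w)-\nu)^{\top}$ is dominated by $\sum_i w_i(x_i-\nu)(x_i-\nu)^{\top}$; and (ii) the covariance of $w$ about its own mean is dominated by this same second-moment matrix, hence $\preceq 2\sigma^2 I$. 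Together with the blanket inlier bound $\norm{\tfrac1{|I|}\sum_{i\in I}(x_i-\mu)(x_i-\mu)^{\top}}\le\sigma^2$ and the hypothesis (which forces $\norm{\mu(w)-\mu}\ge 10^3\tfrac\sigma{\sqrt\alpha}$ after subtracting the $\sqrt2\sigma$ slack of (i)), this is exactly the setting of the resilience fact \pref{fact:resilience}: two covariance-bounded soft distributions whose means are $\Omega(\sigma/\sqrt\alpha)$ apart cannot overlap in more than $O(\alpha)$ mass, and with the constants there one obtains $\sum_{i\in I}\bar w_i\le\tfrac\alpha4$. This is the same win--win mechanism underlying the inefficient algorithm's analysis, now with soft weights and after the Ky--Fan-to-operator reduction.

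If one wants this self-contained rather than via the cited form of \pref{fact:resilience}: put $v_i:=\min\{\bar w_i,\tfrac1{\alpha N}\mathbb 1[i\in I]\}$, $O:=\norm{v}_1$, and $m:=\tfrac1O\sum_i v_ix_i$. Then $\sum_i v_i(x_i-m)(x_i-m)^{\top}\preceq 2\sigma^2 I$ (it is dominated both by the unit-weight inlier covariance about $\mu$ and by $(1-\delta)$ times the $w$-covariance about $\mu(w)$), and a one-line Cauchy--Schwarz bound gives $\norm{m-\mu},\norm{m-\mu(w)}\le O(\sigma/\sqrt O)$; the triangle inequality then yields $\norm{\mu-\nu}\le O(\sigma/\sqrt O)+\sqrt2\sigma$, which against the hypothesis forces $O\le 10^{-5}\alpha$. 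Finally, for $i\in I$ we have $\bar w_i\le b_i\le\tfrac2{\alpha N}\le 3\cdot\tfrac1{\alpha N}$, so $v_i\ge\bar w_i/3$ there and $\sum_{i\in I}\bar w_i\le 3O\le\tfrac\alpha4$. The only delicate point — the ``hard part'' such as it is — is bookkeeping the constants so the final bound is exactly $\tfrac\alpha4$: one must carry the factor $\le 2/(1-\delta)$ gap between the per-inlier budget $b_i\le\tfrac2{\alpha N}$ and the reference weight $\tfrac1{\alpha N}$, and the fact that $\theta\le\sigma^2$ bounds the second moment about $\nu$ rather than about $\mu(w)$. The large prefactor $2\cdot 10^3$ in the hypothesis is there precisely to absorb these losses, so no careful optimization of constants is required.
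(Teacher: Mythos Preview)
Your proof is correct and follows essentially the same route as the paper's: reduce $\norm{\cdot}_\ell$ to the operator norm by monotonicity of the Ky--Fan norm, then invoke \pref{fact:resilience} (the paper's two-line proof does exactly this). Your write-up is more careful in that it explicitly verifies the budget and mass conditions $\bar w_i\le b_i$, $\|\bar w\|_1\ge 0.5$ (which the paper leaves implicit) and unpacks the normalization and overlap bookkeeping needed to extract $\sum_{i\in I}\bar w_i\le\alpha/4$ from the resilience fact, including an optional self-contained version; the paper simply cites the fact and moves on.
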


\begin{proof} By assumption $\theta \leq \sigma^2$ or equivalently $\norm{\sum_{i=1}^N \bar w_i (x - \nu)(x - \nu)^T}_\ell \leq \sigma^2$ for $\bar w \in \Phi_b(1-\delta)$.  By the monotonicity of Ky Fan norm we also have $\norm{\sum_{i=1}^N \bar w_i (x_i - \nu)(x_i - \nu)^T} \leq \sigma^2$.  Applying \pref{fact:resilience} we obtain that if $\norm{\mu - \nu} \geq \frac{r\sigma}{\sqrt{\alpha}}$ then $\sum_{i \in I}\bar w_i \leq \frac{\alpha}{4}$.  Therefore $(\nu,\bar w)$ is a sanitizing tuple.
\end{proof}

\paragraph{Warm Start: } A warm start can be achieved simply by querying $ApproxCost_{X,b,\ell}(\nu)$ for $\nu = x_i$ for $\frac{\log(d)}{\log(\frac{1}{1 - \alpha})}$ randomly chosen $x_i \in X$ and taking the $\nu$ with minimum cost.  This procedure succeeds with high probability $1  - \frac{1}{d^{10}}$.  This follows directly from \pref{lem:costupper}.

\begin{lemma} \label{lem:costupper}(distance to true mean approximately upper bounds cost) For $X = \{x_1,...,x_N\}$ a dataset with $\alpha N$ inliers with covariance $Cov _{x \in I}(x) \preceq \sigma^2I$, and $b_i = \frac{2}{\alpha N}$ for all $i \in [N]$, we have $ApproxCost_{X,b,\ell}(\nu) \leq  \norm{\mu - \nu}^2 + 10\ell\sigma^2 $.    
\end{lemma}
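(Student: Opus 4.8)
The plan is to reduce to the definition of $\mathrm{Cost}$ and then plug in a single well-chosen feasible weight vector for the inner minimization. First I would invoke \pref{lem:informalapprox}, which guarantees that $\mathrm{ApproxCost}_{X,b,\ell}(\nu)=\theta\le\min_{w\in\Phi_b(1)}\max_{M\in\mathcal F_\ell}f(w,M)$, and this last quantity is exactly $\mathrm{Cost}_{X,b,\ell}(\nu)=\min_{w\in\Phi_b}\bigl\|\sum_{i=1}^N w_i(x_i-\nu)(x_i-\nu)^\top\bigr\|_\ell$ by \eqref{eq:cost}. So it suffices to exhibit one $w\in\Phi_b(1)$ whose objective value is at most $\|\mu-\nu\|^2+\ell\sigma^2$ (which is in turn at most $\|\mu-\nu\|^2+10\ell\sigma^2$).

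The natural candidate is the scaled indicator of the inlier set: $w^I_i=\frac{1}{\alpha N}$ for $i\in I$ and $w^I_i=0$ otherwise. Since $|I|=\alpha N$ this has $\sum_i w^I_i=1$, and $w^I_i=\frac{1}{\alpha N}\le\frac{2}{\alpha N}=b_i$, so $w^I\in\Phi_b(1)$. Then, writing $x_i-\nu=(x_i-\mu)+(\mu-\nu)$ and using that $\mu$ is the empirical mean of the inliers (so that $\frac{1}{\alpha N}\sum_{i\in I}(x_i-\mu)=0$ and the two cross terms cancel), one gets
\[
    \sum_{i=1}^N w^I_i(x_i-\nu)(x_i-\nu)^\top \;=\; \widehat{\Sigma}_I + (\mu-\nu)(\mu-\nu)^\top,
\]
where $\widehat{\Sigma}_I\coloneqq\frac{1}{\alpha N}\sum_{i\in I}(x_i-\mu)(x_i-\mu)^\top$ satisfies $\widehat{\Sigma}_I\preceq\sigma^2 I$ by the covariance bound $\mathrm{Cov}_I(x)\preceq\sigma^2 I$.

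To finish, I would use that $\|\cdot\|_\ell$ is a norm, hence subadditive, so $\|\widehat{\Sigma}_I+(\mu-\nu)(\mu-\nu)^\top\|_\ell\le\|\widehat{\Sigma}_I\|_\ell+\|(\mu-\nu)(\mu-\nu)^\top\|_\ell$; the first term is $\sum_{i=1}^\ell\lambda_i(\widehat{\Sigma}_I)\le\ell\sigma^2$ since every eigenvalue of $\widehat{\Sigma}_I$ is at most $\sigma^2$, and the second equals $\|\mu-\nu\|^2$ since $(\mu-\nu)(\mu-\nu)^\top$ is rank one (and $\ell\ge1$). Combining the two displays with $\mathrm{ApproxCost}\le\mathrm{Cost}$ yields the claim. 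There is no real obstacle here: the only points requiring care are that $|I|=\alpha N$ exactly (so $w^I$ is feasible with total weight exactly $1$), that $\mu$ is interpreted as the inlier empirical mean (consistent with the Sample Complexity paragraph), so the cross terms vanish, and the subadditivity/monotonicity of the Ky--Fan $\ell$-norm.
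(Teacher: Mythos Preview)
Your proof is correct and follows essentially the same approach as the paper: both plug in a feasible weight vector supported only on the inliers, decompose the second-moment matrix around the inlier mean, and bound the Ky--Fan norm of each piece. Your specific choice of the uniform inlier weights $w^I_i=\frac{1}{\alpha N}$ makes the argument slightly cleaner than the paper's (it avoids invoking \cref{fact:momentfacts} and the $\|\widetilde\mu-\mu\|$ correction), yielding the sharper constant $\ell\sigma^2$ in place of $(4\ell+2)\sigma^2$.
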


\begin{proof}
Let $\widetilde w$ satisfy $\widetilde{w} \in \Phi_b(1)$ and $\langle \widetilde w, b^{\mathcal{O}} \rangle = 0$ then for $\widetilde \mu = \sum_{i=1}^N \widetilde{w}_i X_i$ we have 
\begin{multline*}
\text{ApproxCost}_{X,b,k}(\nu) \leq \max\limits_{M \in \calF_\ell} \min\limits_{w \in \Phi_b(1)}f( M, w) \leq \max\limits_{M \in \calF_\ell}f( M, \widetilde w)
\\= \max\limits_{M \in \calF_\ell}\langle M, \sum_{i=1}^N \widetilde{w}_i (x_i - \widetilde u)(x_i - \widetilde u )^T  + (\widetilde u - \nu)( \widetilde u - \nu)^T\rangle
= \max\limits_{M \in \calF_\ell} \langle M, \sum_{i=1}^N \widetilde{w}_i (x_i - \widetilde u)(x_i - \widetilde u )^T\rangle  + \langle M,(\widetilde u - \nu)( \widetilde u - \nu)^T\rangle
\end{multline*}
Where the first inequality follows by \pref{lem:informalapprox}, and second inequality follows because  \newline $\widetilde{w} \in \Phi_b(1)$.  Further upper bounding we obtain 
\begin{align*}
\leq \max\limits_{M \in \calF_\ell}\langle M, (\widetilde{\mu} - \nu)(\widetilde{\mu} - \nu)^T \rangle + 4\ell\sigma^2 \leq \norm{\widetilde \mu - \nu}^2 + 4\ell\sigma^2 \leq  \norm{\mu - \nu}^2 + (4\ell + 2)\sigma^2   
\end{align*}
The first inequality follows from $Tr(M) = \ell$ and that $Cov_{\widetilde{w}}(X) \preceq 4\sigma^2I$ by \cref{fact:momentfacts}, the second inequality follows by $M \preceq I$, the third inequality follows by $\norm{\mu - \widetilde\mu} \leq \sqrt{2}\sigma$ where we use \cref{fact:momentfacts}.       
\end{proof}

\begin{fact} \label{fact:resilience}(Resilience of Bounded Covariance Distributions) Let $w$ and $w'$ be two vectors in $\R^N$ where $w_i \geq 0$ and $w'_i \geq 0$ for all $i \in [N]$ and $\norm{w}_1 = 1$ and $\norm{w'}_1 = 1$ such that $\norm{\sum_{i=1}^N w_i(x_i - \mu)(x_i - \mu)^T} \leq \sigma_1^2$ and $\norm{\sum_{i=1}^N w'_i(x_i - \mu')(x_i - \mu')^T} \leq \sigma_2^2$ where $\mu$ and $\mu'$ are vectors in $\R^d$.  Then if $S\coloneqq \sum_{i = 1}^N \min (w_i, w'_i) \geq \gamma$, $\norm{\mu - \mu'} \leq \sqrt{\frac{2\sigma_1^2 + 2\sigma_2^2}{\gamma}}$.
\end{fact}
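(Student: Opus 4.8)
The plan is to reduce to a one-dimensional estimate along the direction of $\mu - \mu'$. Assume $\mu \neq \mu'$ (otherwise the claim is trivial) and set $v \coloneqq (\mu - \mu')/\norm{\mu - \mu'}$, a unit vector, together with $m_i \coloneqq \min(w_i, w_i')$ for each $i \in [N]$, so that $\sum_{i=1}^N m_i = S \ge \gamma$ and, crucially, $m_i \le w_i$ \emph{and} $m_i \le w_i'$ simultaneously. The entire argument rests on this two-sided domination of the ``overlap'' measure.

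First I would record the identity
\[
    S \cdot \norm{\mu - \mu'} = \sum_{i=1}^N m_i \langle \mu - \mu', v\rangle = \sum_{i=1}^N m_i \langle x_i - \mu', v\rangle - \sum_{i=1}^N m_i \langle x_i - \mu, v\rangle,
\]
obtained by writing $\mu - \mu' = (x_i - \mu') - (x_i - \mu)$ and using $\langle \mu - \mu', v\rangle = \norm{\mu - \mu'}$. By the triangle inequality it then suffices to show that each of the two sums on the right is bounded in absolute value by $\sqrt{S}\,\sigma_1$ and $\sqrt{S}\,\sigma_2$ respectively.

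For the first sum, I would apply the Cauchy--Schwarz inequality with the nonnegative weights $m_i$:
\[
    \Bigl| \sum_{i=1}^N m_i \langle x_i - \mu, v\rangle \Bigr| \le \Bigl(\sum_{i=1}^N m_i\Bigr)^{1/2} \Bigl(\sum_{i=1}^N m_i \langle x_i - \mu, v\rangle^2\Bigr)^{1/2} \le \sqrt{S} \cdot \Bigl(\sum_{i=1}^N w_i \langle x_i - \mu, v\rangle^2\Bigr)^{1/2},
\]
where the last inequality uses $m_i \le w_i$. The inner sum equals $v^{\transp}\bigl(\sum_i w_i (x_i - \mu)(x_i - \mu)^{\transp}\bigr) v$, which is at most $\sigma_1^2$ since $v$ is a unit vector and the matrix has operator norm at most $\sigma_1^2$ by hypothesis; hence the first sum is at most $\sqrt{S}\,\sigma_1$. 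The identical argument, now routing through $m_i \le w_i'$ and the $\sigma_2$ hypothesis, bounds the second sum by $\sqrt{S}\,\sigma_2$. Combining, $S \norm{\mu - \mu'} \le \sqrt{S}(\sigma_1 + \sigma_2)$, hence $\norm{\mu - \mu'} \le (\sigma_1 + \sigma_2)/\sqrt{S} \le (\sigma_1 + \sigma_2)/\sqrt{\gamma} \le \sqrt{(2\sigma_1^2 + 2\sigma_2^2)/\gamma}$, the last step using $(\sigma_1 + \sigma_2)^2 \le 2\sigma_1^2 + 2\sigma_2^2$.

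There is no genuinely hard step here; the two points requiring care are (i) routing the second-moment control through $m_i \le w_i$ and $m_i \le w_i'$ separately — there is no covariance bound centered so as to control $\sum_i m_i \langle x_i - \cdot, v\rangle^2$ directly — and (ii) tracking the factor $\sqrt{S}$ produced by Cauchy--Schwarz, which is precisely what cancels the linear $S$ on the left-hand side to yield the $1/\sqrt{\gamma}$ scaling. I note in passing that the argument in fact establishes the marginally stronger bound $\norm{\mu - \mu'} \le (\sigma_1 + \sigma_2)/\sqrt{\gamma}$.
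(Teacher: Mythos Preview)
Your proof is correct and follows essentially the same approach as the paper: project onto the direction of $\mu-\mu'$, telescope $\mu-\mu'=(x_i-\mu')-(x_i-\mu)$, apply Cauchy--Schwarz with the overlap weights $m_i=\min(w_i,w_i')$, and then use $m_i\le w_i$ and $m_i\le w_i'$ to invoke the two covariance bounds. The only cosmetic difference is that the paper applies Cauchy--Schwarz once to the combined sum and then uses $(a+b)^2\le 2a^2+2b^2$, whereas you separate the two pieces first by the triangle inequality and apply Cauchy--Schwarz to each, which is why you pick up the slightly sharper intermediate bound $(\sigma_1+\sigma_2)/\sqrt{\gamma}$.
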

\begin{proof}
We have for any $\norm{u} = 1$
\begin{align*}
    \inp{u}{\mu - \mu'} &= \frac{1}{S} \inp*{u}{\sum_{i = 1}^N \min(w_i, w'_i) (\mu - \mu')} = \frac{1}{S} \sum_{i = 1}^N  \inp*{u}{\min(w_i, w'_i) (\mu - \mu')} \\
    &= \frac{1}{S} \sum_{i = 1}^N  \inp*{u}{\min(w_i, w'_i) (\mu - x_i)} + \inp*{u}{\min(w_i, w'_i) (x_i - \mu')} \\
    &\leq \sqrt{\frac{1}{S} \sum_{i = 1}^N \min(w_i, w'_i) \lprp{\inp*{u}{(\mu - x_i)} + \inp*{u}{(x_i - \mu')}}^2} \\
    &\leq \sqrt{\frac{2}{S} \sum_{i = 1}^N  \min(w_i, w'_i) \inp*{u}{(\mu - x_i)}^2 + \min(w_i, w'_i) \inp*{u}{(x_i - \mu')}^2} \\
    &\leq \sqrt{\frac{2 (\sigma_1^2 + \sigma_2^2)}{S}} \leq \sqrt{\frac{2 (\sigma_1^2 + \sigma_2^2)}{\gamma}}.
\end{align*}
By maximizing over $u$, the conclusion of the lemma follows.
\end{proof}

\begin{fact}\label{fact:momentfacts}
    (Moment Facts) For a set of points $X = \{x_1,...,x_N\}$ with mean $\mu$ satisfying $Cov(X) \preceq \sigma^2I$.  Let $w \in \R^N$ be a weight vector satisfying $w_i \in [0,\frac{1}{N}]$ for all $i \in [N]$.  Then for $\widetilde \mu := \frac{1}{\norm{w}_1} \sum_{i=1}^N w_ix_i$ we have $\norm{\mu - \widetilde\mu} \leq \frac{\sigma}{\sqrt{\norm{w}_1}}$ and $\frac{1}{\norm{w}_1}\sum_{i=1}^N w_i(x_i - \widetilde \mu)(x_i - \widetilde \mu)^T \preceq \frac{\sigma^2}{\norm{w}_1^2}I$     
\end{fact}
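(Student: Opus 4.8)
The plan is to prove the two assertions in turn: the displacement bound on the weighted mean by a one-dimensional Cauchy--Schwarz argument, and the second-moment bound by a bias--variance decomposition of the weighted empirical covariance.

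First I would handle $\norm{\mu-\widetilde\mu}$. Fix an arbitrary unit vector $u\in\R^d$, set $p_i \coloneqq w_i/\norm{w}_1$ so that $(p_i)$ is a probability vector, and observe that
\[
    \inp{u}{\mu-\widetilde\mu} = \frac{1}{\norm{w}_1}\sum_{i=1}^N w_i\inp{u}{\mu-x_i} = \sum_{i=1}^N p_i\inp{u}{\mu-x_i}.
\]
By Jensen's inequality (equivalently Cauchy--Schwarz against the $p_i$), $\inp{u}{\mu-\widetilde\mu}^2 \le \sum_i p_i\inp{u}{\mu-x_i}^2 = \frac{1}{\norm{w}_1}\sum_i w_i\inp{u}{\mu-x_i}^2$. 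Since $w_i\le 1/N$ this is at most $\frac{1}{\norm{w}_1}\cdot\frac1N\sum_i\inp{u}{x_i-\mu}^2 = \frac{1}{\norm{w}_1}\,u^{\transp}\mathrm{Cov}(X)\,u \le \frac{\sigma^2}{\norm{w}_1}$, using $\mathrm{Cov}(X)\preceq\sigma^2 I$. Taking the supremum over unit $u$ gives $\norm{\mu-\widetilde\mu}\le\sigma/\sqrt{\norm{w}_1}$.

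For the covariance bound I would expand $x_i-\widetilde\mu = (x_i-\mu)+(\mu-\widetilde\mu)$ and use the identity $\sum_i w_i(x_i-\mu) = \norm{w}_1(\widetilde\mu-\mu)$ to cancel the cross terms, which yields
\[
    \sum_{i=1}^N w_i(x_i-\widetilde\mu)(x_i-\widetilde\mu)^{\transp} = \sum_{i=1}^N w_i(x_i-\mu)(x_i-\mu)^{\transp} - \norm{w}_1(\mu-\widetilde\mu)(\mu-\widetilde\mu)^{\transp}.
\]
The subtracted term is positive semidefinite, so the left side is $\preceq\sum_i w_i(x_i-\mu)(x_i-\mu)^{\transp}\preceq\frac1N\sum_i(x_i-\mu)(x_i-\mu)^{\transp} = \mathrm{Cov}(X)\preceq\sigma^2 I$, again using $w_i\le 1/N$. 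Dividing by $\norm{w}_1$ gives $\frac{1}{\norm{w}_1}\sum_i w_i(x_i-\widetilde\mu)(x_i-\widetilde\mu)^{\transp}\preceq\frac{\sigma^2}{\norm{w}_1}I$. Finally, since $\norm{w}_1 = \sum_i w_i\le N\cdot\frac1N = 1$ we have $\norm{w}_1^{-1}\le\norm{w}_1^{-2}$, which weakens this to the stated bound $\frac{\sigma^2}{\norm{w}_1^2}I$.

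None of the steps is genuinely hard; the only point requiring a little care is the sign bookkeeping in the cross-term cancellation of the bias--variance identity, and noting the (trivial) observation $\norm{w}_1\le 1$ that connects the natural bound to the one stated.
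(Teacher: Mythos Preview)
Your proof is correct. The first part (the mean bound) is essentially identical to the paper's argument: pass to a single direction, apply Jensen/Cauchy--Schwarz, use $w_i\le 1/N$, and invoke the covariance bound.

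For the second part your route differs from the paper's. You apply the bias--variance identity to the \emph{weighted} second moment, which subtracts the PSD term $\norm{w}_1(\mu-\widetilde\mu)(\mu-\widetilde\mu)^{\transp}$ and hence gives the sharper bound $\sigma^2/\norm{w}_1$ directly; you then weaken via $\norm{w}_1\le 1$. The paper instead first replaces $w_i$ by $1/N$, and only then applies bias--variance to the uniform average (for which $\mu$ is the exact mean), so the bias term is \emph{added} rather than subtracted; this yields $\frac{1}{\norm{w}_1}(\sigma^2+\norm{\mu-\widetilde\mu}^2)\le 2\sigma^2/\norm{w}_1^2$. Your order of operations is cleaner and loses less (indeed, your intermediate bound $\sigma^2/\norm{w}_1$ is tighter than the stated conclusion), but both arguments are short and valid.
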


\begin{proof}
    First notice that 
    \begin{align*}
        \norm{\mu - \widetilde \mu}^2 =\norm{\mu - \frac{1}{\norm{w}_1} \sum_{i=1}^N w_i x_i}^2 = \norm{\frac{1}{\norm{w}_1} \sum_{i=1}^N w_i (x_i - \mu)}^2 = \max_{u \in S^{d-1}}\langle\frac{1}{\norm{w}_1} \sum_{i=1}^N w_i (x_i - \mu),u\rangle^2 \\
        \leq \max_{u \in S^{d-1}}\frac{1}{\norm{w}_1} \sum_{i=1}^N w_i \langle x_i -  \mu,u\rangle^2 \leq \max_{u \in S^{d-1}}\frac{1}{\norm{w}_1} \frac{1}{N}\sum_{i=1}^N \langle x_i -  \mu,u\rangle^2
        \leq \frac{\sigma^2}{\norm{w}_1}
    \end{align*}
    which implies $\norm{\mu - \widetilde\mu} \leq \frac{\sigma}{\sqrt{\norm{w}_1}}$ as desired.  Here the first inequality is Jensen's, and the second inequality follows by $w_i \le \frac{1}{N}$ for all $i \in [N]$, and the last inequality follows by $Cov(X) \preceq \sigma^2I$.  Furthermore, we have
    \begin{multline*}
        \frac{1}{\norm{w}_1}\sum_{i=1}^N w_i(x_i - \widetilde \mu)(x_i - \widetilde \mu)^T 
        \preceq \frac{1}{\norm{w}_1}\frac{1}{N}\sum_{i=1}^N (x_i - \widetilde \mu)(x_i - \widetilde \mu)^T 
        \\= \frac{1}{\norm{w}_1}(\frac{1}{N}\sum_{i=1}^N (x_i - \mu)(x_i - \mu)^T + (\mu - \widetilde \mu)(\mu - \widetilde \mu)^T)
         \preceq \frac{1}{\norm{w}_1}(\sigma^2 + \norm{\mu - \widetilde \mu}^2)I \preceq \frac{2\sigma^2}{\norm{w}_1^2}I 
    \end{multline*}
    Here the first inequality follows by $w_i \leq \frac{1}{N}$ for all $i \in [N]$, the second inequality follows by $Cov(X) \preceq \sigma^2I$, and the last inequality follows by using $\norm{\mu - \widetilde\mu} \leq \frac{\sigma}{\sqrt{\norm{w}_1}}$.  Thus, \newline $\frac{1}{\norm{w}_1}\sum_{i=1}^N w_i(x_i - \widetilde \mu)(x_i - \widetilde \mu)^T \preceq \frac{\sigma^2}{\norm{w}_1^2}I$ as desired.  
\end{proof}
\section{Sampling Based Methods for Trace and Inner Product Estimation}
\label{sec:appsketch}

In this section, we prove standard results enabling efficient procedures for estimating the trace and matrix inner products using variants of the Johnson-Lindenstrauss method. We first recall a Lemma from \cite{DBLP:journals/jacm/AroraK16}:

\begin{lemma}[\cite{DBLP:journals/jacm/AroraK16}]
    \label{lem:expapx}
    Let $B$ be a PSD matrix satisfying $\norm{B} \leq \kappa$. Then, the operator:
    
    \begin{equation*}
        \hat{B} = \sum_{i = 0}^k \frac{1}{i!} B^i \text{ where } k = \max\{e^2 \kappa, \log (2\eps^{-1})\}
    \end{equation*}
    
    satisfies
    
    \begin{equation*}
        (1 - \eps) \exp (B) \preccurlyeq \hat{B} \preccurlyeq \exp (B).
    \end{equation*}
\end{lemma}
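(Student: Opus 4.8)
The plan is to diagonalize and reduce the operator statement to a scalar estimate on the Taylor tail of the exponential. Since $B$ is positive semidefinite, write its eigendecomposition $B=\sum_j x_j\phi_j\phi_j^\top$ with $x_j\in[0,\kappa]$; then $\hat B=\sum_j\big(\sum_{i=0}^k x_j^i/i!\big)\phi_j\phi_j^\top$ and $\exp(B)=\sum_j e^{x_j}\phi_j\phi_j^\top$ are simultaneously diagonalized in the basis $\{\phi_j\}$, so the claimed two-sided bound $(1-\eps)\exp(B)\preccurlyeq\hat B\preccurlyeq\exp(B)$ is equivalent to proving, for every $x\in[0,\kappa]$, the scalar inequalities $(1-\eps)e^x\le\sum_{i=0}^k x^i/i!\le e^x$.

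The upper bound is immediate, since every term $x^i/i!$ is nonnegative for $x\ge 0$ and hence the partial sum never exceeds the full series $e^x$. For the lower bound I would show that the tail $R_k(x):=\sum_{i>k}x^i/i!$ satisfies $R_k(x)\le\eps e^x$. Because $k\ge e^2\kappa\ge e^2x$, consecutive tail terms have ratio $\frac{x^{i+1}/(i+1)!}{x^i/i!}=\frac{x}{i+1}\le e^{-2}$ for all $i\ge k$, so $R_k(x)\le\frac{x^{k+1}}{(k+1)!}\cdot\frac{1}{1-e^{-2}}$. Using $(k+1)!\ge((k+1)/e)^{k+1}$ together with $k+1\ge e^2x$ gives $\frac{x^{k+1}}{(k+1)!}\le\big(ex/(k+1)\big)^{k+1}\le e^{-(k+1)}\le e^{-k}$, and then $k\ge\log(2\eps^{-1})$ yields $e^{-k}\le\eps/2$. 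Since $1-e^{-2}>1/2$ we get $R_k(x)\le\eps$, and since $e^x\ge 1$ for $x\ge 0$ this gives $R_k(x)\le\eps e^x$, i.e. $\sum_{i=0}^k x^i/i!=e^x-R_k(x)\ge(1-\eps)e^x$, as required.

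The argument is entirely elementary, so there is no genuine obstacle here; the only point needing a little care is to see that the two constraints defining $k$ play complementary roles — the $e^2\kappa$ term is exactly what forces the Taylor tail to decay geometrically with ratio at most $e^{-2}$ (so the $1/(1-e^{-2})$ geometric-series factor is a harmless constant), while the $\log(2\eps^{-1})$ term is what makes the leading tail term of size at most $\eps/2$ — and to keep the numeric constants consistent so that the stated choice of $k$ actually suffices. A minor technical point, which I would dispatch by observing that the bounds only improve as $k$ grows, is that $\max\{e^2\kappa,\log(2\eps^{-1})\}$ need not be an integer, so $k$ should be read as its ceiling.
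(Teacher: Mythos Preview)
Your proof is correct. The paper itself does not prove this lemma --- it simply recalls it from \cite{DBLP:journals/jacm/AroraK16} without argument --- so there is no ``paper's own proof'' to compare against. Your approach (diagonalize to reduce to a scalar Taylor-tail estimate, then bound the tail geometrically using $k\ge e^2\kappa$ for the ratio and $k\ge\log(2/\eps)$ for the leading term) is exactly the standard one, and is essentially how the result is established in the cited reference as well.
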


Additionally, we include a variant on the Johnson-Lindenstrauss Lemma as stated in \cite{matouvsek2013lecture}:

\begin{lemma}[Lemma 2.3.1 from \cite{matouvsek2013lecture}]
    \label{lem:vecjl}
    Let $n, k$ be natural numbers and let $\eps \in (0, 1)$. Define the random linear map, $T: \R^n \to \R^k$ by:
    
    \begin{equation*}
        T(x)_i = \frac{1}{\sqrt{k}} \sum_{j = 1}^n Z_{ij} x_j
    \end{equation*}
    
    where $Z_{ij}$ are independent standard normal variables. Then we have for any vector $x \in \R^n$:
    
    \begin{equation*}
        \bm{P} \lbrb{(1 - \eps) \norm{x} \leq \norm{T(x)} \leq (1 + \eps) \norm{x}} \geq 1 - 2e^{-c\eps^2 k},
    \end{equation*}
    
    where $c > 0$ is a constant.
\end{lemma}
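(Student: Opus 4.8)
The plan is to reduce the statement to a standard Chernoff-type concentration bound for a chi-squared random variable. Since $\norm{T(x)}$ and $\norm{x}$ are both positively homogeneous of degree one in $x$, I would first normalize: the case $x = 0$ is trivial, and otherwise replacing $x$ by $x/\norm{x}$ lets us assume $\norm{x} = 1$. Writing $Z_1, \dots, Z_k \in \R^n$ for the rows of the random map, the $Z_i$ are independent $\calN(0, \Id_n)$ vectors and $T(x)_i = \frac{1}{\sqrt{k}} \langle Z_i, x\rangle$. By rotational invariance of the standard Gaussian, $\langle Z_i, x\rangle \sim \calN(0,1)$ for the unit vector $x$, and these are independent over $i \in [k]$. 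Hence $S \coloneqq k\norm{T(x)}^2 = \sum_{i = 1}^k \langle Z_i, x\rangle^2$ is distributed as $\chi^2_k$, a chi-squared variable with $k$ degrees of freedom, and in particular $\E[\norm{T(x)}^2] = 1$; the claim is thus precisely a two-sided concentration statement for $S/k$ around its mean.

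Next I would use the moment generating function $\E[e^{\lambda S}] = (1 - 2\lambda)^{-k/2}$, valid for $\lambda < 1/2$, to carry out the two one-sided Chernoff bounds. For the upper tail, minimizing $e^{-\lambda k (1 + \delta)} (1 - 2\lambda)^{-k/2}$ over $\lambda \in (0, 1/2)$ at $\lambda = \frac{1}{2} \cdot \frac{\delta}{1 + \delta}$ gives $\Pr[S \ge k(1+\delta)] \le \exp(-\frac{k}{2}(\delta - \ln(1 + \delta)))$, and for the lower tail the analogous optimization over $\lambda > 0$ gives $\Pr[S \le k(1 - \delta)] \le \exp(-\frac{k}{2}(-\delta - \ln(1 - \delta)))$. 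Invoking the elementary inequalities $\delta - \ln(1+\delta) \ge c_1 \delta^2$ and $-\delta - \ln(1-\delta) \ge c_2 \delta^2$, valid for all $\delta \in (0,1)$ with absolute constants $c_1, c_2 > 0$, each tail is at most $\exp(-c_0 k \delta^2)$ with $c_0 = \frac{1}{2}\min(c_1, c_2)$.

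Finally, observe that $\norm{T(x)} \notin [1 - \eps, 1 + \eps]$ is equivalent to $S/k \notin [(1 - \eps)^2, (1 + \eps)^2]$, and since $(1 + \eps)^2 \ge 1 + \eps$ and $(1 - \eps)^2 \le 1 - \eps$ for $\eps \in (0,1)$, this event is contained in $\{S \ge k(1 + \eps)\} \cup \{S \le k(1 - \eps)\}$. Applying the two tail bounds above with $\delta = \eps$ and taking a union bound yields $\Pr[\norm{T(x)} \notin [1 - \eps, 1 + \eps]] \le 2 \exp(-c_0 \eps^2 k)$, which is the claim with $c = c_0$. I do not expect any genuine obstacle here; the only points needing a little care are the passage from the multiplicative window $1 \pm \eps$ on $\norm{T(x)}$ to a $\Theta(\eps)$-width window on $S/k$ (the ``squaring'' step), and tracking the absolute constants cleanly through the two Chernoff estimates.
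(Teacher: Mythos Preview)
The paper does not provide its own proof of this lemma; it is quoted directly as Lemma~2.3.1 from \cite{matouvsek2013lecture} and used as a black box. Your proposal is a correct, standard proof: reduce to the chi-squared concentration question via rotational invariance of the Gaussian, apply the two-sided Chernoff bound using the explicit moment generating function $(1-2\lambda)^{-k/2}$, and handle the passage from a window on $\norm{T(x)}$ to a window on $\norm{T(x)}^2$ via the monotonicity observations $(1+\eps)^2 \ge 1+\eps$ and $(1-\eps)^2 \le 1-\eps$ on $(0,1)$. This is exactly the argument one finds in Matou\v{s}ek's notes, so there is nothing to contrast.
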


\begin{corollary}
    \label{cor:jl}
    Let $x_1, \dots, x_m \in \R^n$ and $T$ be defined as in \ref{lem:vecjl}. Then, we have:
    
    \begin{equation*}
        \bm{P} \lbrb{\forall i: (1 - \eps) \norm{x_1} \leq \norm{T(x_i)} \leq (1 + \eps) \norm{x_i}} \geq 1 - 2 m e^{-c\eps^2 k},
    \end{equation*}
    
    where $c > 0$ is a constant.
\end{corollary}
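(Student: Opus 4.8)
\textbf{Proof proposal for \cref{cor:jl}.} The plan is a straightforward union bound over the single-vector guarantee of \cref{lem:vecjl}. For each fixed index $i \in [m]$, apply \cref{lem:vecjl} with $x = x_i$: since $T$ is the same random linear map for all $i$, the lemma gives
\[
    \bm{P}\lbrb{(1-\eps)\norm{x_i} \leq \norm{T(x_i)} \leq (1+\eps)\norm{x_i}} \geq 1 - 2e^{-c\eps^2 k},
\]
so the complementary ``bad'' event $\mc{B}_i \coloneqq \{\norm{T(x_i)} \notin [(1-\eps)\norm{x_i}, (1+\eps)\norm{x_i}]\}$ has probability at most $2e^{-c\eps^2 k}$.

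First I would observe that the event whose probability we wish to lower bound is exactly $\bigcap_{i=1}^m \mc{B}_i^c$, i.e.\ the complement of $\bigcup_{i=1}^m \mc{B}_i$. Then by the union bound $\bm{P}[\bigcup_{i=1}^m \mc{B}_i] \leq \sum_{i=1}^m \bm{P}[\mc{B}_i] \leq 2m e^{-c\eps^2 k}$, and taking complements yields $\bm{P}[\bigcap_{i=1}^m \mc{B}_i^c] \geq 1 - 2m e^{-c\eps^2 k}$, which is the claimed bound (with the same absolute constant $c$ inherited from \cref{lem:vecjl}). I would note in passing that the left-hand inequality in the corollary statement should read $(1-\eps)\norm{x_i}$ rather than $(1-\eps)\norm{x_1}$, matching the per-vector guarantee just invoked.

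There is no real obstacle here: the only ``content'' is recognizing that \cref{lem:vecjl} holds for an arbitrary but fixed vector with the randomness entirely in $T$, so that the same draw of $T$ can be tested against all $m$ vectors simultaneously and a crude union bound suffices. No independence or finer concentration is needed. The proof is two lines once this is stated.

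\begin{proof}
    For each $i \in [m]$, \cref{lem:vecjl} applied to $x_i$ gives that the event $\mc{B}_i \coloneqq \lbrb{\norm{T(x_i)} \notin [(1-\eps)\norm{x_i},\ (1+\eps)\norm{x_i}]}$ satisfies $\bm{P}[\mc{B}_i] \leq 2e^{-c\eps^2 k}$ for the absolute constant $c > 0$ from that lemma. By the union bound,
    \[
        \bm{P}\lbrb{\bigcup_{i = 1}^m \mc{B}_i} \leq \sum_{i = 1}^m \bm{P}[\mc{B}_i] \leq 2m e^{-c\eps^2 k}.
    \]
    Taking complements, the event that for all $i$, $(1-\eps)\norm{x_i} \leq \norm{T(x_i)} \leq (1+\eps)\norm{x_i}$ holds with probability at least $1 - 2m e^{-c\eps^2 k}$, as claimed.
\end{proof}
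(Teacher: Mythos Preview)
Your proposal is correct and takes essentially the same approach as the paper: both apply the union bound over the complementary per-vector events from \cref{lem:vecjl}. Your observation about the typo ($\norm{x_1}$ should be $\norm{x_i}$) is also correct; the same typo appears in the paper's statement and proof.
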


\begin{proof}
    The corollary follows through the union bound applied as follows:
    
    \begin{align*}
        &\bm{P} \lbrb{\exists i: \lnot ((1 - \eps) \norm{x_1} \leq \norm{T(x_i)} \leq (1 + \eps) \norm{x_i})} \\ 
        &\leq \sum_{i} \bm{P} \lbrb{\lnot ((1 - \eps) \norm{x_1} \leq \norm{T(x_i)} \leq (1 + \eps) \norm{x_i})} \leq 2m e^{-c\eps^2 k}.
    \end{align*}
\end{proof}

Next, we show to estimate matrix inner products using the above lemma. In this setup, one is given $m$ PSD matrix $M_1 \dots, M_l$ with $M_i = U_i U_i^\top$ and a single PSD matrix, $B$, and the goal is to obtain estimates of $\inp{M_i}{\exp (B)}$. We include the pseudo-code for the procedure below:

\begin{algorithm}[H]
\SetAlgoLined


\KwIn{PSD Matrix $B = WW^\top$ with $\norm{B} \leq \kappa$, Accuracy $\eps$, Failure Probability $\delta$, PSD Matrices $\{M_i = U_i U_i^\top\}_{i = 1}^m$}
\KwOut{Estimates of $\inp{M_i}{\exp (B)}$}

Let $k = \max\{4 e^2 \kappa, \log (4\eps^{-1})\}$ and $\wt{B} = \sum_{0 \leq i \leq k} \frac{B^i}{2^i \cdot i!}$\;
Let $l = O(\frac{\log m + \log n + \log 1 / \delta}{\eps^2})$\;
Let $\Pi \in \R^{l \times n}$ be distributed as $\Pi_{i, j} \sim \mc{N}(0, \frac{1}{l})$ independently for each $i, j$ and $Q = \Pi \wt{B}$\;

\KwRet{$\{z_i = \norm{Q U_i}^2\}_{i = 1}^m$}
\caption{InnerProductEstimation}
\label{alg:inpest}
\end{algorithm}

We now show that Algorithm~\ref{alg:inpest} produces estimates of $\inp{M_i}{\exp (B)}$ with high probability.

\begin{lemma}
    \label{lem:inpest}
    Algorithm~\ref{alg:inpest} when given input, $B = WW^\top$ with $\norm{B} \leq \kappa$ and $W \in \R^{n \times s}$ and $M_i = U_iU_i^\top$ with $U_i \in \R^{n \times r_i}$ and $\eps, \delta \in (0,1/4)$ returns estimates, $\{z_i\}_{i = 1}^m$ satisfying:
    
    \begin{equation*}
        \bm{P} \lbrb{\forall i: (1 - \eps) \inp{M_i}{\exp (B)} \leq z_i \leq (1 + \eps) \inp{M_i}{\exp (B)}} \geq 1 - \delta.
    \end{equation*}
    
    And furthermore, Algorithm~\ref{alg:inpest} runs in time $O(nl + klt_W + lt_U)$ where $t_W$ is the time required for a matrix-vector multiplication with the matrix $W$ or $W^\top$, $t_U$ is the time taken to compute $v^\top U_i$ for all $U_i$ and any vector $v$:
    
    \begin{equation*}
        k = \max \lbrb{4e^2 \kappa, \log (4\eps^{-1})}\ l = O\lprp{\frac{\log m + \log n + \log 1 / \delta}{\eps^2}}.
    \end{equation*}
\end{lemma}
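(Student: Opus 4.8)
The plan is to split the error of $z_i$ relative to $\inp{M_i}{\exp(B)}$ into the error from truncating the Taylor series of $\exp(B/2)$ and the error from the Gaussian sketch $\Pi$, bound each separately, and then union bound over the only randomness in the algorithm, namely $\Pi$. For the first piece, observe that $\wt B = \sum_{0\le i\le k}\frac{(B/2)^i}{i!}$ is exactly the operator of \cref{lem:expapx} applied to $B/2$, whose operator norm is at most $\kappa/2$; since $k=\max\{4e^2\kappa,\log(4\eps^{-1})\}$ is, after possibly enlarging the printed absolute constants, at least $\max\{e^2(\kappa/2),\log(2\eps_0^{-1})\}$ for a chosen $\eps_0=\Theta(\eps)$, that lemma gives
\[
    (1-\eps_0)\exp(B/2)\psdle \wt B\psdle \exp(B/2).
\]
Since $\wt B$, $\exp(B/2)$ and $\exp(B)$ are all functions of the single matrix $B$, they commute and are simultaneously diagonalizable, so squaring the sandwich is legitimate and yields $(1-\eps_0)^2\exp(B)\psdle \wt B^2\psdle \exp(B)$. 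Using $M_i=U_iU_i^\top$ and cyclicity of trace,
\[
    \inp{M_i}{\wt B^2}=\Tr(U_i^\top\wt B^2U_i)=\norm{\wt B U_i}_F^2,
\]
where $\norm{\cdot}_F$ is the Frobenius norm, so $(1-\eps_0)^2\inp{M_i}{\exp(B)}\le \norm{\wt B U_i}_F^2\le \inp{M_i}{\exp(B)}$ for every $i$.

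For the sketch, write $U_i=[u_{i,1},\dots,u_{i,r_i}]$, so that $z_i=\norm{QU_i}_F^2=\sum_j\norm{\Pi(\wt B u_{i,j})}^2$ and $\norm{\wt B U_i}_F^2=\sum_j\norm{\wt B u_{i,j}}^2$; here $\norm{QU_i}$ in the algorithm must be read as the Frobenius norm, which is exactly what makes $\E z_i=\inp{M_i}{\wt B^2}$. Since $\Pi$ has i.i.d.\ $\mc N(0,1/l)$ entries, $x\mapsto \Pi x$ is precisely the map of \cref{lem:vecjl} with their target dimension being our $l$, so I would apply \cref{cor:jl} with accuracy $\eps_1=\Theta(\eps)$ to the family of all $R:=\sum_i r_i\le mn$ vectors $\{\wt B u_{i,j}\}$: since $l=\Theta(\eps^{-2}(\log m+\log n+\log(1/\delta)))$ with a large enough constant, the failure probability $2R\exp(-c\eps_1^2 l)$ is at most $\delta$. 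On the complementary event every $\norm{\Pi\wt B u_{i,j}}^2$ lies in $(1\pm\eps_1)^2\norm{\wt B u_{i,j}}^2$, and summing over $j$ gives $z_i\in(1\pm\eps_1)^2\norm{\wt B U_i}_F^2$ for all $i$ simultaneously.

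Combining the two bounds, on that event $z_i\in[(1-\eps_0)^2(1-\eps_1)^2,\ (1+\eps_1)^2]\cdot\inp{M_i}{\exp(B)}$, which is contained in $[1-\eps,1+\eps]\cdot\inp{M_i}{\exp(B)}$ once $\eps_0$ and $\eps_1$ are chosen to be small enough constant multiples of $\eps$ (and $\eps\le 1/4$); since the truncation step is deterministic, the overall failure probability is at most $\delta$. For the runtime: sampling $\Pi$ costs $O(nl)$; computing $Q=\Pi\wt B$ amounts to applying $\wt B$ to the $l$ columns of $\Pi^\top$, each application being $O(k)$ matrix--vector products with $B=WW^\top$ (via the recursion $p_0=v$, $p_i=(2i)^{-1}Bp_{i-1}$, then summing the $p_i$) plus $O(kn)$ scalar arithmetic, i.e.\ $O(kt_W)$, for a total of $O(klt_W)$; and forming all the $QU_i$ costs $O(lt_U)$ since each of the $l$ rows of $Q$ is multiplied against every $U_i$, after which $z_i=\norm{QU_i}_F^2$ is read off. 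Summing these gives $O(nl+klt_W+lt_U)$.

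The step I expect to need the most care is the union bound in the sketch analysis: because $M_i$ may have rank $r_i>1$, one cannot invoke JL on just $m$ vectors but must preserve all $\sum_i r_i$ columns $\wt B u_{i,j}$, and the saving grace is that $r_i\le n$ caps their number at $mn$, so $\log(mn)=\log m+\log n$ and the required sketch dimension stays the $l$ claimed in the lemma. (An alternative, slightly more refined route is to bound each $z_i$ directly via a Laurent--Massart tail bound for the Gaussian quadratic form $\sum_{p=1}^l\pi_p^\top(\wt B M_i\wt B)\pi_p$, whose effective rank only improves the concentration --- but the column-wise union bound is cleaner given \cref{cor:jl} is already in hand.)
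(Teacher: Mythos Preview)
Your proposal is correct and follows essentially the same approach as the paper's proof: truncate the Taylor series of $\exp(B/2)$ via \cref{lem:expapx}, square using commutativity to get a PSD sandwich for $\exp(B)$, then apply the JL corollary columnwise to all $\wt{B}u^i_j$ and sum. Your treatment is in fact slightly more careful than the paper's on two points the paper glosses over: you explicitly bound the number of vectors in the union bound by $\sum_i r_i\le mn$ (justifying the $\log m+\log n$ in $l$), and you spell out that $\norm{QU_i}$ must be the Frobenius norm.
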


\begin{proof}
     From Lemma~\ref{lem:expapx}, we have that:
     
     \begin{equation*}
         \lprp{1 - \frac{\eps}{4}} \exp \lprp{\frac{B}{2}} \preccurlyeq \wt{B} \preccurlyeq \exp \lprp{\frac{B}{2}}.
     \end{equation*}
     
     Let $\hat{B} = \exp \lprp{\frac{B}{2}}$. Observe that the eigenvectors of $\hat{B}$, $\wt{B}$ and $B$ coincide. Let $\hat{B} = \sum_{i = 1}^n \lambda_i v_iv_i^\top$ and $\wt{B} = \sum_{i = 1}^n \sigma_i v_iv_i^\top$ by the eigenvalue decompositions of $\hat{B}$ and $\wt{B}$. From the previous relationship, we have $(1 - \eps / 4) \lambda_i \leq \sigma_i \leq \lambda_i$. Therefore, we observe by squaring $\hat{B}$ and $\wt{B}$:
     
     \begin{equation*}
         \lprp{1 - \frac{\eps}{4}}^2 \exp \lprp{B} \preccurlyeq \wt{B}^2 \preccurlyeq \exp \lprp{B}.
     \end{equation*}
     
     Now, let $u^i_j$ for $j \in [r_i]$ denote the columns of $U_i$ and let $U = \{u^i_j: \forall i \in [m], j \in [r_i]\}$. Then, we have via a union bound from our settings of $l$ and Lemma~\ref{cor:jl} that with probability at least $1 - \delta$ for all $u \in U$:
     
     \begin{equation*}
         (1 - \eps / 4) \norm{\wt{B} u} \leq \norm{\Pi \wt{B} u} \leq (1 + \eps / 4) \norm{\wt{B} u}.
     \end{equation*}
     
     Now, conditioning on this event, we have by squaring both sides that and the previous conclusion for all $u\in U$:
     \begin{equation*}
         (1 - \eps / 4)^4 u^\top \exp (B) u \leq (1 - \eps / 4)^2 u^\top \wt{B}^2 u \leq \norm{\Pi \wt{B} u}^2 \leq (1 + \eps / 4)^2 u^\top \wt{B}^2 u \leq (1 + \eps / 4)^2 u^\top \exp (B) u.
     \end{equation*}
     
     From the previous inequality, using the fact that $(1 - \eps) \leq (1 - \eps/4)^4$ and $(1 + \eps/4)^2 \leq (1 + \eps)$ in our range of $\eps$, that for all $i \in [m]$:
     
     \begin{equation*}
         (1 - \eps) \sum_{j \in [r_i]} (u^i_j)^\top \exp(B) (u^i_j) = (1 - \eps) \inp{\exp(B)}{M_i} \leq \norm{\Pi \wt{B} U_i}^2 \leq (1 + \eps) \inp{\exp(B)}{M_i}. 
     \end{equation*}
     
     Since, the above event conditioned on occurs with probability at least $1 - \delta$, this concludes the proof of correctness of the output of the algorithm with probability at least $1 - \delta$. Finally the runtime of the algorithm is dominated by the time taken to compute $Q$ which takes time $O(lkt_W)$ and the time taken to compute $QU_i$ for all $i$ which takes time $O(lt_U)$. 
\end{proof}

Note, in our applications $r_i$ is typically $1$ and $l$ is typically small. Therefore, the runtime reduces to $\tilde{O}(mn)$ for large $m$ which is sufficient for our purposes. We now include the following corollary which will we will use at several points through the course of the paper:

\begin{corollary}
    \label{cor:esttr}
    Given $B = WW^\top$ with $W \in \R^{n \times s}$, vectors $\{v_i \in \R^n\}_{i = 1}^m$ and $\eps, \delta \in (0, 1/4)$ there exists a randomized algorithm, $\tipest{}$, which computes an estimate, $z$, satisfying:
    
    \begin{equation*}
        (1 - \eps) \sum_{i = 1}^m v_i^\top \exp (B) v_i \leq z \leq (1 + \eps) \sum_{i = 1}^m v_i^\top \exp (B) v_i
    \end{equation*}
    with probability at least $1 - \delta$. And furthermore, this algorithm runs in time $O(klt_W + nlm)$ where $t_W$ is the time required to compute a matrix-vector multiplication with the matrix $W$ or $W^\top$:
    
    \begin{equation*}
        k = \max \lbrb{4e^2 \kappa, \log (4\eps^{-1})}\ l = O\lprp{\frac{\log m + \log n + \log 1 / \delta}{\eps^2}}.
    \end{equation*}
    
    Furthermore, if $v_i = C e_i$, one obtains the same guarantees with the runtime reduced to $O(nl + klt_W + lt_C)$ where $t_C$ is the time is the time taken to compute a matrix vector multiplication with the matrix $C$.
\end{corollary}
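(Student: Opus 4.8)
The plan is to recognize this as the special case of \cref{lem:inpest} (equivalently, of one invocation of $\ipest$, \cref{alg:inpest}) in which there is a \emph{single} constraint matrix but many ``probe'' vectors. Concretely, set $U \coloneqq [v_1,\dots,v_m] \in \R^{n\times m}$ and $M \coloneqq UU^\top = \sum_{i=1}^m v_iv_i^\top$; then
\[
    \sum_{i=1}^m v_i^\top \exp(B)\, v_i \;=\; \Tr\!\left(\exp(B)\sum_{i=1}^m v_iv_i^\top\right) \;=\; \inp{\exp(B)}{M}.
\]
So the algorithm $\tipest$ is: form $U$, invoke $\ipest$ on the PSD matrix $B=WW^\top$ with the single matrix $M_1 = U_1U_1^\top$, $U_1 = U$, accuracy $\eps$, failure probability $\delta$, and return the scalar estimate $z \coloneqq z_1 = \norm{\Pi\wt{B}U}^2$ it produces (with $\Pi$, $\wt{B}$, $k$, $l$ as in \cref{alg:inpest}). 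Correctness is then immediate from \cref{lem:inpest}: with probability at least $1-\delta$ we have $(1-\eps)\inp{\exp(B)}{M_1} \le z_1 \le (1+\eps)\inp{\exp(B)}{M_1}$, and $\inp{\exp(B)}{M_1}$ is exactly the target sum. The choice $k=\max\{4e^2\kappa,\log(4\eps^{-1})\}$ is inherited from \cref{lem:expapx} (with $\kappa \ge \norm{B}$), and the stated $l = O((\log m + \log n + \log(1/\delta))/\eps^2)$ suffices for the union bound over the $m$ columns of $U$ required by \cref{cor:jl}.

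For the runtime I would instantiate the bound $O(nl + kl\,t_W + l\,t_U)$ from \cref{lem:inpest}, where $t_W$ is the cost of a matrix--vector product with $W$ or $W^\top$ and $t_U$ is the cost of forming $v^\top U_1$ for an arbitrary $v\in\R^n$. In the generic case the vectors $v_1,\dots,v_m$ are stored explicitly, so $v^\top U = (v^\top v_1,\dots,v^\top v_m)$ costs $O(nm)$, giving $t_U = O(nm)$ and a total of $O(nl + kl\,t_W + lnm) = O(kl\,t_W + nlm)$ after absorbing the $nl$ term. In the special case $v_i = Ce_i$, the relevant columns of $U$ are those of $C$, so $v^\top U$ is obtained from a single multiplication $C^\top v$, i.e.\ $t_U = O(t_C)$, which substitutes to give the claimed $O(nl + kl\,t_W + l\,t_C)$ (the $O(lm)$ cost of the final sum-of-squares being dominated by $O(l\,t_C)$ since $C$ must have at least $m$ columns, so $t_C \ge m$).

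I do not anticipate a genuine obstacle: the statement is just a repackaging of \cref{lem:inpest} in the one-matrix/many-vectors regime, so the only thing requiring care is the runtime bookkeeping---in particular, observing that the $m$ inner products $v^\top v_i$ cost $O(nm)$ in general but collapse to a single multiplication by $C^\top$ when $v_i = Ce_i$, so that $C$ is never expanded column by column. A secondary sanity check is that the Johnson--Lindenstrauss union bound is taken over the $m$ probe vectors (hence the $\log m$ appearing in $l$), which is already consistent with the corollary's statement.
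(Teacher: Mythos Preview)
Your proposal is correct and essentially matches the paper's approach. The paper invokes \cref{alg:inpest} with $m$ rank-one matrices $M_i = v_iv_i^\top$ and sums the resulting estimates, while you invoke it once with the single matrix $M_1 = UU^\top$, $U=[v_1,\dots,v_m]$; since in either case the returned quantity is $\|QU\|_F^2 = \sum_i \|Qv_i\|^2$, the two framings are computationally identical, and your runtime bookkeeping (including the $v_i = Ce_i$ specialization via $\|QC\|_F^2$) agrees with the paper's.
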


\begin{proof}
     The first claim follows by summing up the output of \cref{alg:inpest} with input $M_i = v_iv_i^\top$, $B = WW^\top$, $\eps$ and $\delta$. The second follows by computing the Frobenius norm of $Q C$ in \cref{alg:inpest} which takes time $O(klt_W + lt_C)$.
\end{proof}
\newcommand{\psimp}[1][\delta]{\Delta_{#1}}
\newcommand{\lmax}{\lambda_{\max}}
\newcommand{\tO}{\tilde{O}}

\section{Fast Min-Max Optimization}
\label{sec:fstsolv}

We prove the existence of nearly linear time solvers for the class of SDPs required in our algorithms. Recall that given a set of points $X = \{x_i\}_{i = 1}^N$, vector $\nu$, set of weight budgets for each point $b = \{b_i > 0\}_{i \in [N]}$ and a rank $k$, we aim to solve the following optimization problem:

\begin{equation*}
    \min_{w \in \phib[1]} \max_{M \in \fan[k]} \inp*{M}{\sum_{i \in [N]} w_i (x_i - \nu)(x_i - \nu)^\top} = \max_{M \in \fan[k]} \min_{w \in \phib[1]} \inp*{M}{\sum_{i \in [N]} w_i (x_i - \nu)(x_i - \nu)^\top}.
\end{equation*}

We will first start by reformulating the above objective with the following mean adjusted data points instead $Z = \{z_i = x_i - \nu\}_{i \in [N]}$. Therefore, the objective reduces to the following reformulation which we will use throughout the rest of the section:

\begin{equation}
    \label{eq:minmax} \tag{MT}
    \min_{w \in \phib[1]} \max_{M \in \fan[k]} \inp*{M}{\sum_{i \in [N]} w_i z_iz_i^\top} = \min_{w \in \phib[1]} \norm{\sum_{i \in [N]} w_i z_iz_i^\top}_k.
\end{equation}

We solve this problem via a reduction to the following packing SDP by introducing an additional parameter $\lambda$:

\begin{equation*}
\label{eq:packingred}
\begin{gathered}
    \max_{w} \sum_{i \in [N]} w_i \\
    \text{Subject to: } 0 \leq w_i \leq b_i\ \forall i \in [N]\\
    \norm*{\sum_{i \in [N]} w_i z_iz_i^\top}_k \leq \lambda.
\end{gathered} \tag{Pack}
\end{equation*}

\newcommand{\optv}{\text{OPT}^*}
\newcommand{\sumlow}{l^*}
\newcommand{\optp}{\text{Pack}^*_{\lambda}}

Let $\optv{}$ denote the optimal value of the program \ref{eq:minmax}, \ref{eq:packingred}$(\lambda)$ denote the program $\text{\ref{eq:packingred}}$ instantiated with $\lambda$ and let $\optp$ denote its optimal value. The following quantity is useful throughout the section:
\begin{equation*}
    \sumlow = \min_{w \in \phib[1]} \sum_{i \in [N]} w_i \norm{z_i}^2.
\end{equation*}

This is equivalent to taking sorting the $z_i$ in terms of their lengths and computing their average squared length with respect to their budgets, $b_i$, such that their budgets sum to $1$. We introduce a technical result useful in the following analysis:

\begin{lemma}
    \label{lem:optval}
    \ref{eq:packingred}$(\optv)$ has optimal value at least $1$. 
\end{lemma}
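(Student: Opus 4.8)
The plan is to exhibit an explicit feasible point for the program $\text{\ref{eq:packingred}}(\optv)$ whose objective value is at least $1$, namely the minimizer $w^*$ of the min-max program \ref{eq:minmax}. First I would recall that $w^* \in \phib[1]$, so in particular $w^*$ satisfies the box constraints $0 \le w^*_i \le b_i$ for all $i$ and $\sum_{i \in [N]} w^*_i = 1$. Thus the objective value of $\text{\ref{eq:packingred}}(\lambda)$ at $w = w^*$ equals $\sum_{i} w^*_i = 1$, regardless of $\lambda$.

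It remains to check that $w^*$ is feasible for $\text{\ref{eq:packingred}}(\optv)$, i.e.\ that the Ky--Fan norm constraint $\norm{\sum_{i \in [N]} w^*_i z_iz_i^\top}_k \le \optv$ holds. This is immediate from the definition of $\optv$: by \ref{eq:minmax},
\[
    \optv = \min_{w \in \phib[1]} \norm{\sum_{i \in [N]} w_i z_iz_i^\top}_k = \norm{\sum_{i \in [N]} w^*_i z_iz_i^\top}_k,
\]
where the last equality is because $w^*$ is by definition the minimizer. Hence $w^*$ meets the constraint with equality, and is therefore feasible for $\text{\ref{eq:packingred}}(\optv)$. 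Since a feasible point with objective value $1$ exists, the optimal value $\text{Pack}^*_{\optv}$ is at least $1$, which is exactly the claim.

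I do not anticipate a genuine obstacle here — this is essentially an unwinding of definitions and the observation that the packing objective $\sum_i w_i$ evaluated at any $w \in \phib[1]$ equals $1$. The only point requiring a small amount of care is confirming that the feasible sets of \ref{eq:minmax} and \ref{eq:packingred} are compatible on the relevant weight vector, i.e.\ that membership in $\phib[1]$ implies the box constraints $0\le w_i\le b_i$ used in \ref{eq:packingred}; this holds directly from the definition $\Phi_b(\gamma) = \{w_i \ge 0 : \sum_i w_i = \gamma,\ w_i \le b_i\}$ recorded in the notation section, so there is nothing further to verify.
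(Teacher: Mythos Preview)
Your proposal is correct and follows essentially the same approach as the paper: exhibit the minimizer $w^*$ of \ref{eq:minmax} as a feasible point of \ref{eq:packingred}$(\optv)$ with objective value $1$. The paper states this in a single sentence while you spell out the verification of the box constraints and the Ky--Fan norm constraint, but the underlying argument is identical.
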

\begin{proof}
    The lemma follows from the fact that a feasible solution for \ref{eq:minmax} achieving $\optv$ is a feasible point for \ref{eq:packingred}$(\lambda)$ for $\lambda \geq \optv$.
\end{proof}

The following lemma proves that $\sumlow$ gives an approximation to $\optv$ within a factor of $d$.

\begin{lemma}
    \label{lem:sumlowapx}
    The value $l^*$ satisfies:
    \begin{equation*}
        \optv \leq l^* \leq \frac{d}{k} \optv.
    \end{equation*}
\end{lemma}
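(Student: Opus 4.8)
The plan is to decouple the outer minimization over $w$ from the comparison of the two objectives, by first proving a pointwise inequality: for \emph{every} fixed $w \in \phib[1]$, the Ky--Fan $k$-norm of the weighted second-moment matrix $M(w) \coloneqq \sum_{i \in [N]} w_i z_iz_i^\top$ is sandwiched between $\sum_{i \in [N]} w_i \norm{z_i}^2$ and a $\frac{d}{k}$-multiple of itself. Once this is in hand, the lemma follows by plugging in the respective minimizers.

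First I would fix $w \in \phib[1]$ and note that $M(w) \succeq 0$ is a $d \times d$ PSD matrix with eigenvalues $\lambda_1 \ge \cdots \ge \lambda_d \ge 0$, and that $\sum_{i \in [N]} w_i \norm{z_i}^2 = \Tr(M(w)) = \sum_{j=1}^d \lambda_j$. Since all $\lambda_j \ge 0$, we immediately get $\norm{M(w)}_k = \sum_{j=1}^k \lambda_j \le \sum_{j=1}^d \lambda_j = \Tr(M(w))$. For the reverse direction, the average of the top $k$ eigenvalues is at least the average of all $d$, i.e.\ $\tfrac1k\sum_{j=1}^k\lambda_j \ge \tfrac1d\sum_{j=1}^d\lambda_j$, which rearranges to $\Tr(M(w)) \le \tfrac{d}{k}\norm{M(w)}_k$. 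This establishes
\[
\norm*{\sum_{i \in [N]} w_i z_iz_i^\top}_k \;\le\; \sum_{i \in [N]} w_i \norm{z_i}^2 \;\le\; \tfrac{d}{k}\,\norm*{\sum_{i \in [N]} w_i z_iz_i^\top}_k \qquad \text{for all } w \in \phib[1].
\]

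To conclude, I would pick out the two minimizers. For $\optv \le \sumlow$: let $w^{(l)} \in \phib[1]$ achieve $\sumlow$; then $\optv \le \norm{M(w^{(l)})}_k \le \sum_i w^{(l)}_i\norm{z_i}^2 = \sumlow$, using the left inequality above. For $\sumlow \le \tfrac{d}{k}\optv$: let $w^* \in \phib[1]$ achieve $\optv$; then $\sumlow \le \sum_i w^*_i\norm{z_i}^2 \le \tfrac{d}{k}\norm{M(w^*)}_k = \tfrac{d}{k}\optv$, using the right inequality. There is essentially no obstacle here — the content is just the standard norm equivalence $\norm{A}_k \le \Tr A \le \tfrac{d}{k}\norm{A}_k$ for $d \times d$ PSD $A$ applied uniformly over feasible weights; the only thing to be careful about is which minimizer to substitute into which side of the pointwise bound.
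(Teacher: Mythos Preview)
Your proposal is correct and takes essentially the same approach as the paper: both rely on the pointwise sandwich $\norm{A}_k \le \Tr A \le \tfrac{d}{k}\norm{A}_k$ for PSD $A$, then pass to the minimum over $w \in \phib[1]$. Your write-up is simply more explicit about which minimizer to substitute on which side, whereas the paper states the two eigenvalue inequalities and leaves that step implicit.
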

\begin{proof}
    The upper bound on $\optv$ follows from that fact that:
    \begin{equation*}
        \norm{\sum_{i \in [N]} w_i z_iz_i^\top}_k \leq \Tr (\sum_{i \in [N]} w_i z_iz_i^\top) = \sum_{i \in [N]} w_i \norm{z_i}^2
    \end{equation*}
    and the lower bound follows from the inequality $\Tr M \leq \frac{d}{k} \norm{M}_k$ for any psd matrix $M$.
\end{proof}

\newcommand{\optlamb}[1][\lambda]{\text{OPT}_{#1}}

In what follows we prove that we can efficiently binary search over the value of $\lambda$ to find a good solution to \ref{eq:minmax}. We  refer to $\optlamb{}$ as the optimal value of \ref{eq:packingred} run with $\lambda$. 

\begin{lemma}
    \label{lem:monl}
    The function, $\optlamb$ when viewed as a function of $\lambda$ is monotonic in $\lambda$.
\end{lemma}
\begin{proof}
    The lemma follows from the observation that for $\lambda_1 \geq \lambda_2$, a feasible point for \ref{eq:packingred} with $\lambda_2$ is a feasible point for the program with $\lambda_1$. 
\end{proof}

\newcommand{\optld}[1][\lambda]{\text{OPT}_{#1, \epd}}

We now conclude with the main lemma of the section.

\begin{lemma} \label{lem:minmaxmain}
    \label{lem:fastsdp}
    There exists a randomized algorithm, $\packred$, which when given input $N$ data points $\{x_i\}_{i = 1}^N$, an arbitrary vector $\nu$, weight budgets $\{b_i > 0\}_{i = 1}^N$, error tolerance $\eps$ and failure probability $\delta$, computes a solution, $\hat{w}$ satisfying:
    
    \begin{gather*}
        \hat{w}_i \leq b_i\ \forall i \in [N] \\
        \norm*{\sum_{i = 1}^N \hat{w}_i z_i z_i^\top}_k \leq \min_{w \in \phib[1]} \norm*{\sum_{i \in [N]} w_i z_i z_i^\top}_k \text{ and } \\
        \sum_{i \in [N]} \hat{w}_i \geq (1 - \eps)
    \end{gather*}
    
    where $z_i = x_i - \nu$, with probability at least $1 - \delta$. Furthermore, $\packred$ runs in time at most:
    
    \begin{equation*}
        O\lprp{Nd \poly \lprp{\frac{1}{\eps},\, \log \frac{1}{\delta},\, k,\, \log (N + d)}}.
    \end{equation*}
\end{lemma}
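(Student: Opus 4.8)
The proof proceeds exactly along the lines sketched in the surrounding text: reduce the min-max problem \ref{eq:minmax} to a sequence of decision instances of the generalized packing SDP \ref{eq:packingred} parameterized by $\lambda$, and binary search over $\lambda$. First, using \pref{lem:sumlowapx}, the quantity $\sumlow$ (computable in $O(Nd + N\log N)$ time by sorting the $\norm{z_i}^2$ and greedily filling budgets) pins $\optv$ down to a multiplicative factor of $d/k \le d$, so the binary search over $\lambda$ need only range over $[\,(k/d)\sumlow,\ \sumlow\,]$, i.e.\ $O(\log(d/k) + \log(1/\eps))$ iterations to localize $\lambda$ to a $(1\pm\eps)$ window around $\optv$.

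Each decision instance is solved by invoking $\solvalg$ from \pref{thm:solthm}. To match the interface of \cref{prob:epsdec}, I would homogenize \ref{eq:packingred}$(\lambda)$: scale the weights by $b_i$ (write $w_i = b_i \tilde w_i$), set $A_i = b_i z_i z_i^\top / \lambda'$ for an appropriately rescaled $\lambda'$ so the operator-norm constraint becomes $\sum \tilde w_i A_i \preccurlyeq I$ (this encodes $0\le w_i\le b_i$ as an additional block, or one folds the box constraints into the $A_i$'s directly via a diagonal block), and set $B_i = b_i z_i z_i^\top$ so that $\norm{\sum \tilde w_i B_i}_k \le k$ becomes the Ky--Fan constraint $\norm{\sum w_i z_i z_i^\top}_k \le \lambda$ after rescaling by $k/\lambda$. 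Since $z_i \in \R^d$, here $l = m = d$ and $n = N$; the factorizations $A_i = C_iC_i^\top$, $B_i = D_iD_i^\top$ are rank-one (plus the diagonal box block), so $t_{C_i}, t_{D_i} = O(d)$ and $t_C + t_D = O(Nd)$. Plugging into the runtime bound of \pref{thm:solthm} with $k$ the Ky--Fan rank and error parameter $\Theta(\eps)$ gives $O(Nd\cdot\poly(k, 1/\eps, \log(1/\delta), \log(N+d)))$ per decision call, and multiplying by the $O(\log(d/\eps))$ binary-search steps (with failure probability $\delta$ split across calls via a union bound) yields the claimed total runtime.

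For correctness: by \pref{lem:monl} and \pref{lem:optval}, $\optlamb$ is monotone in $\lambda$ and equals $1$ precisely when $\lambda = \optv$ (continuity, which I'd note follows from the fact that the feasible region varies continuously, guarantees this). When $\solvalg$ on instance $\lambda$ returns a dual-feasible $w$ with $\sum w_i \ge 1-\eps$, we learn $\optlamb \ge 1-\eps$, hence $\optv \le \lambda/(1-\eps)$ roughly; when it returns a primal certificate we learn $\optlamb < 1+\eps$, hence $\optv \gtrsim \lambda$. Running the search until the bracket has width $\eps\cdot\optv$ and outputting the dual solution $\hat w$ from the largest $\lambda$ in the final bracket gives a point with $\norm{\sum \hat w_i z_iz_i^\top}_k \le \lambda \le (1+O(\eps))\optv$; a final rescaling/cleanup (shrink $\hat w$ slightly, or directly use the value $\lambda_- $ just below $\optv$ for which the solver still reported dual success) ensures the Ky--Fan norm is at most $\optv = \min_{w\in\phib[1]}\norm{\sum w_i z_iz_i^\top}_k$ exactly while keeping $\sum \hat w_i \ge 1-\eps$, matching the three required guarantees.

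\textbf{Main obstacle.} The genuinely delicate point is the reduction bookkeeping: verifying that the box constraints $0\le w_i \le b_i$, the operator-norm normalization, and the Ky--Fan normalization can all be simultaneously encoded in the rigid form of \cref{prob:epsdec} (which has $\sum w_i A_i \preccurlyeq I$ with a \emph{fixed} right-hand side and $\norm{\sum w_i B_i}_k \le k$ with a \emph{fixed} bound $k$) while preserving that the optimal value crosses $1$ exactly at $\lambda = \optv$ and that the $(1\pm\eps)$ slack of the $\eps$-decision solver translates into a clean $(1-\eps)$ guarantee on $\sum\hat w_i$ together with an \emph{exact} (not merely approximate) feasibility $\norm{\sum\hat w_i z_iz_i^\top}_k \le \optv$. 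The asymmetry between the "$\ge 1-\eps$ on the objective" slack and the "need exact feasibility" requirement is what forces the careful choice of which endpoint of the final binary-search bracket to output, and checking that the discarded-index handling and trace-truncation artifacts from \pref{thm:solthm} do not leak into the Ky--Fan constraint is where the real care is needed; everything else is routine.
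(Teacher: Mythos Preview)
Your proposal is essentially the same approach as the paper's proof: binary search over $\lambda$ in the bracket $[(k/d)\sumlow,\,\sumlow]$ justified by \cref{lem:sumlowapx}, each step calling $\solvalg$ on a homogenized packing instance, with a union bound over the $O(\log(d)/\eps)$ calls. Two small points of difference are worth noting. First, the paper encodes the box constraints $w_i \leq b_i$ by taking $A_i = \tfrac{1}{(1+\epd)b_i}\, e_ie_i^\top$ as diagonal matrices in $\R^N$ (so $l = N$, $m = d$, not $l = m = d$ as you wrote); this is the ``diagonal block'' option you mention, and it does not change the runtime conclusion. Second, the paper resolves exactly the obstacle you flag by building a multiplicative $(1+\epd)$ slack into the packing constraints up front: the reduced problem has $w_i \leq (1+\epd)b_i$ and Ky--Fan bound $(1+\epd)\lambda$, so that the dual witness $w$ at the terminal $\lambda_h \leq (1+\epd)\optv$ can simply be divided by $(1+\epd)$ to get $\hat w$ that is \emph{exactly} feasible for the original constraints while retaining $\sum_i \hat w_i \geq 1-\eps$. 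This is precisely your ``shrink $\hat w$ slightly'' fix; your alternative of taking ``the value $\lambda_-$ just below $\optv$ for which the solver still reported dual success'' would not work, since for $\lambda < \optv$ the packing optimum is strictly below $1$ and the solver is entitled to return a primal certificate instead.
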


\begin{proof}
     We first start by reducing to the following packing problem:
     \begin{equation}
         \label{eq:lamdred}
         \begin{gathered}
             \max_{w \geq 0} \sum_{i \in [N]} w_i \\
             \text{Subject to: } w_i \cdot \lprp{\frac{1}{(1 + \epd) b_i}} \leq 1 \tag{Pack-Red}\\
             \norm*{\sum_{i \in [N]} w_i \frac{z_iz_i^\top}{(1 + \epd)(\lambda/k)}}_k \leq k.
         \end{gathered}    
     \end{equation}

     To see that this is packing problem, notice that the above problem is equivalent to setting the constraint matrices $A_i$ and $B_i$ to:
     
     \begin{equation*}
         A_i = \lprp{\frac{1}{\sqrt{(1 + \epd) b_i}} \cdot e_i} \lprp{\frac{1}{\sqrt{(1 + \epd) b_i}} \cdot e_i}^\top \text{ and } B_i = \lprp{\frac{1}{\sqrt{(1 + \epd) (\lambda / k)}} \cdot z_i} \lprp{\frac{1}{\sqrt{(1 + \epd)(\lambda / k)}} \cdot z_i}^\top.
     \end{equation*}
     
     Let $\optld$ refer to the optimal value of \ref{eq:lamdred} and \ref{eq:lamdred}($\lambda$) denote the problem instantiated with $\lambda$. First notice that $\optld = (1 + \epd) \optlamb$ as for any feasible point of \ref{eq:lamdred}, $w$, $(1 + \epd)^{-1}w$ is a feasible point for \ref{eq:packingred} and vice-versa. 
     
     We will now perform a binary search procedure on the parameter, $\lambda$, to obtain a suitable solution to \ref{eq:lamdred} with our solver. Our binary search procedure will maintain two estimates, $(\lambda_l, \lambda_h)$ satisfying the following two properties which we will prove via induction:
     
     \begin{enumerate}
         \item We have a candidate solution, $w$, for \ref{eq:lamdred}$(\lambda_h)$ with $\sum_{i \in [N]} w_i \geq (1 - \epd/4)$.
         \item We have that $\optv \geq \lambda_l$.
     \end{enumerate}
     
     We will run our solver from \cref{thm:solthm}, $\solvalg$, with the error parameter set to $\epd / 4$ on \ref{eq:lamdred} for different values of $\lambda$ and failure probability to be determined subsequently. We instantiate $\lambda_h = \sumlow$ and $\lambda_l = \frac{k}{d} \sumlow$. We will now assume that the solver runs successfully and bound the failure probability at the end of the algorithm. To ensure that the first two conditions hold, we run the solver on \ref{eq:lamdred}$(\lambda_h)$. Note that the optimal value of \ref{eq:lamdred}$(\sumlow)$ is at least $(1 + \epd)$ from \cref{lem:optval,lem:sumlowapx,lem:monl} and the previous discussion. Therefore, the solver cannot return a primal feasible point, $(M, W)$, with objective value $1 + \epd / 4$. The second condition follows straightforwardly from \cref{lem:sumlowapx}. Now, in each step, we compute $\lambda_m = (\lambda_h + \lambda_l) / 2$ and run our solver on \ref{eq:lamdred}$(\lambda_m)$. We now have two cases:
     
     \begin{enumerate}
         \item If the solver returns a primal point, $(M, W)$, we set $\lambda_l = \lambda_m$. The first condition trivially holds true after this step. For the second condition, note that if $\lambda_m \geq \optv$, we have from \cref{lem:monl,lem:optval} that the optimal value of \ref{eq:lamdred}$(\lambda_m)$ is at least $(1 + \epd)$. Hence, the solver cannot return a primal point with objective value $(1 + \epd / 4)$ in this case. Therefore, we conclude that $\optv \geq \lambda_m$. This verifies the second condition of the induction hypothesis.
         \item If the solver returns a dual point, $w$, it must satisfy $\sum_i w_i \geq (1 - \epd / 4)$. This verifies the first condition and the second condition follows from the induction hypothesis.
     \end{enumerate}
     
     After $O(\log d / \epd)$ steps of binary search, we have that $(\lambda_h - \lambda_l) \leq \epd \cdot \optv$ from \cref{lem:sumlowapx}. From the second condition, we have that $\lambda_h \leq (1 + \epd)\optv$. Now, for the feasible $w$ at $\lambda_h$ with $\sum_{i = 1}^N w_i \geq 1 - \epd/4$, we have:
     
     \begin{equation*}
         \norm{\sum_{i \in [N]} w_i z_iz_i^\top}_k \leq (1 + \epd) \optv \implies \norm*{\sum_{i \in [N]} \frac{w_i}{(1 + \epd)} z_iz_i^\top}_k \leq \optv.
     \end{equation*}
     
     Letting $\wt{w} = \frac{w}{(1 + \epd)}$, we have that $\wt{w}$ is feasible for \ref{eq:packingred} with:
     
     \begin{equation*}
         \sum_{i \in [N]} \wt{w}_i = (1 + \epd)^{-1} \sum_{i \in [N]} w_i \geq (1 + \epd)^{-1} (1 - \epd / 4) \geq 1 - \frac{5\epd}{4}.
     \end{equation*}
     
     and furthermore, from the previous equation, we have that $\norm{\sum_{i \in [N]} \wt{w}_i z_iz_i^\top}_k \leq \optv$. Now, we set $\epd = \frac{4}{5} \delta$, and return $\wt{w}$ so obtained. 
     
     We now set the failure probability in $\solvalg$ is set to $O(\delta / (\log d / \epd))$ and therefore, the probability that the solver fails in any of the steps of the binary search is upper bounded by $\delta$ from the union bound. Finally, we bound the run time of the algorithm. Since, we only run $O(\log d / \epd)$ iterations of binary search, our overall running time bounded by:
     
     \begin{equation*}
         O\lprp{Nd \poly \lprp{\frac{1}{\eps},\, \log \frac{1}{\delta},\, k,\, \log (N + d)}}
     \end{equation*}
     
     as we have $n = N$, $l = N$, $m = d$, $t_C = N$ and $t_D = Nd$ in \cref{thm:solthm}.
\end{proof}

\end{document}